\renewcommand{\b}[1]{\bm{\mathrm{#1}}} 
\newcommand{\rr}{\mathrm} 
\renewcommand{\cal}{\mathcal} 
\newcommand{\scr}{\mathscr} 
\newcommand{\fra}{\mathfrak} 
\newcommand{\ul}[1]{\underline{#1} \!\,} 
\newcommand{\ol}[1]{\overline{#1} \!\,} 
\newcommand{\wh}{\widehat}
\newcommand{\wt}{\widetilde}
\newcommand{\ee}{\mathrm{e}}
\newcommand{\ii}{\mathrm{i}}
\newcommand{\dd}{\mathrm{d}}
\newcommand{\col}{\mathrel{\mathop:}}
\newcommand{\st}{\,\col\,}
\newcommand{\deq}{\mathrel{\mathop:}=}
\newcommand{\eqd}{=\mathrel{\mathop:}}
\newcommand{\umat}{\mathbbmss{1}} 
\renewcommand{\leq}{\leqslant}
\renewcommand{\geq}{\geqslant}
\newcommand{\ind}[1]{\b 1 (#1)}
\newcommand{\indb}[1]{\b 1 \pb{#1}}
\newcommand{\indB}[1]{\b 1 \pB{#1}}
\renewcommand{\epsilon}{\varepsilon}
\renewcommand{\P}{\mathbb{P}}
\newcommand{\E}{\mathbb{E}}
\newcommand{\R}{\mathbb{R}}
\newcommand{\C}{\mathbb{C}}
\newcommand{\N}{\mathbb{N}}
\newcommand{\Z}{\mathbb{Z}}
\newcommand{\p}[1]{({#1})}
\newcommand{\pb}[1]{\bigl({#1}\bigr)}
\newcommand{\pB}[1]{\Bigl({#1}\Bigr)}
\newcommand{\pbb}[1]{\biggl({#1}\biggr)}
\newcommand{\pBB}[1]{\Biggl({#1}\Biggr)}
\newcommand{\qB}[1]{\Bigl[{#1}\Bigr]}
\newcommand{\qbb}[1]{\biggl[{#1}\biggr]}
\newcommand{\qBB}[1]{\Biggl[{#1}\Biggr]}
\newcommand{\h}[1]{\{{#1}\}}
\newcommand{\hb}[1]{\bigl\{{#1}\bigr\}}
\newcommand{\hB}[1]{\Bigl\{{#1}\Bigr\}}
\newcommand{\hbb}[1]{\biggl\{{#1}\biggr\}}
\newcommand{\abs}[1]{\lvert #1 \rvert}
\newcommand{\absb}[1]{\bigl\lvert #1 \bigr\rvert}
\newcommand{\absB}[1]{\Bigl\lvert #1 \Bigr\rvert}
\newcommand{\absbb}[1]{\biggl\lvert #1 \biggr\rvert}
\newcommand{\norm}[1]{\lVert #1 \rVert}
\newcommand{\normb}[1]{\bigl\lVert #1 \bigr\rVert}
\newcommand{\avg}[1]{\langle #1 \rangle}
\newcommand{\scalar}[2]{\langle{#1} \mspace{2mu}, {#2}\rangle}
\newcommand{\scalarb}[2]{\bigl\langle{#1} \mspace{2mu}, {#2}\bigr\rangle}
\newcommand{\scalarB}[2]{\Bigl\langle{#1} \,\mspace{2mu},\, {#2}\Bigr\rangle}
\newcommand{\comb}[2]{\bigl[{#1} \mspace{2mu}, {#2}\bigr]}
\DeclareMathOperator{\tr}{Tr}
\DeclareMathOperator{\im}{Im}
\theoremstyle{plain} 
\newtheorem{theorem}{Theorem}[section]
\newtheorem*{theorem*}{Theorem}
\newtheorem{lemma}[theorem]{Lemma}
\newtheorem*{lemma*}{Lemma}
\newtheorem*{corollary*}{Corollary}
\newtheorem{proposition}[theorem]{Proposition}
\newtheorem*{proposition*}{Proposition}
\newtheorem{definition}[theorem]{Definition}
\newtheorem*{definition*}{Definition}
\newtheorem*{example*}{Example}
\newtheorem{remark}[theorem]{Remark}
\newtheorem*{remark*}{Remark}
\newtheorem*{remarks*}{Remarks}
\numberwithin{equation}{section}
\numberwithin{figure}{section}
\newcommand{\nitem}[2]{#1\> \parbox[t]{10.4cm}{#2}\\[1ex]} 
\begin{document}

\author{
L\'aszl\'o Erd\H os${}^1$\thanks{Partially supported
by SFB-TR 12 Grant of the German Research Council.} \quad Antti Knowles${}^2$\thanks{Partially supported by U.S.\ 
National Science Foundation Grant DMS-0757425.} \\ \\
Institute of Mathematics, University of Munich, \\
Theresienstr.\ 39, D-80333 Munich, Germany \\ lerdos@math.lmu.de ${}^1$ \\ \\
Department of Mathematics, Harvard University\\
Cambridge MA 02138, USA \\  knowles@math.harvard.edu ${}^2$ \\ \\
\\}
\title{Quantum Diffusion and Delocalization for Band Matrices with General Distribution}

\date{7 March 2011}

\maketitle

\begin{abstract}
We consider Hermitian and symmetric random band matrices $H$ in $d \geq 1$ dimensions.
The matrix elements $H_{xy}$, indexed by $x,y \in \Lambda \subset \Z^d$, are
independent and their variances satisfy $\sigma_{xy}^2:=\E \abs{H_{xy}}^2 = W^{-d} f((x - y)/W)$ for some probability 
density $f$. We assume that the law of each matrix element $H_{xy}$ is symmetric and exhibits subexponential decay.
We prove that the time evolution of a quantum particle
subject to the Hamiltonian $H$ is diffusive on time scales $t\ll W^{d/3}$. We also show that the localization length of 
the eigenvectors of $H$ is larger than a factor $W^{d/6}$ times the band width $W$.
All results are uniform in the size $\abs{\Lambda}$ of the matrix. This extends our recent result \cite{erdosknowles} to 
general band matrices. As another consequence of our proof we show that, for a larger class of random matrices 
satisfying $\sum_x\sigma_{xy}^2=1$ for all $y$, the largest eigenvalue of $H$ is bounded with high probability by $2 + 
M^{-2/3 + \epsilon}$ for any $\epsilon > 0$, where $M \deq 1 / (\max_{x,y} \sigma_{xy}^2)$.
\end{abstract}

\vspace{1cm}
{\bf AMS Subject Classification:} 15B52, 82B44, 82C44

\medskip

{\it Keywords:} Random band matrix, renormalization, localization length.

\newpage

\section{Introduction}

We proved recently  \cite{erdosknowles}
that the quantum time evolution $\ee^{-\ii tH /2}$ generated by a band matrix $H$ with band width $W$ is diffusive on 
time scales $t \ll W^{d/3}$, where $d=1,2,3, \ldots$ is the number of spatial
 dimensions. As a consequence, we showed that typical eigenvectors are \emph{delocalized} on a scale at least $W^{1 + 
d/6}$,
i.e.\ the localization length is much larger than the band width.
A key assumption in \cite{erdosknowles} was that the matrix entries $H_{xy}$ satisfy
\begin{equation} \label{deterministic norm}
\abs{H_{xy}}^2 \;=\; \frac{1}{M} \ind{1 \leq \abs{x - y} \leq W}\,, \qquad x, y\in \Lambda,
\end{equation}
where $\Lambda$ is a large finite box in $\Z^d$ and $M\sim W^d$ is a normalization to ensure that $\sum_y |H_{xy}|^2=1$.
  For the physical significance of this result in connection with the extended states conjecture for random 
Schr\"odinger operators,
see the introduction of \cite{erdosknowles}, where we also presented an overview of
related results and references.

The goal of this paper is  to replace
the rather restrictive \emph{deterministic} condition \eqref{deterministic norm} on
the matrix elements with a natural general class of
random variables.  We consider symmetric or Hermitian random band matrices $H = (H_{xy})$
such that $\E H_{xy}=0$ and the  variances
 $\sigma_{xy}^2 \deq \E \abs{H_{xy}^2}$ are given by $\sigma_{xy}^2 = W^{-d} f\pb{(x - y)/W}$, 
where $f$ is a nonnegative function satisfying 
$\int_{\R^d} \dd x \, f(x) = 1$. Thus, $f$ describes the shape of a band of width $W$.
 The matrix entries are assumed to have an even law with subexponential decay. Under these
 assumptions we show that 
all results of \cite{erdosknowles} remain valid.

The proof of quantum diffusion for general band matrices is considerably more involved than for matrices satisfying 
\eqref{deterministic norm}.  Our proofs are based on an expansion
 in so-called nonbacktracking 
powers of $H$.  As observed by Feldheim and Sodin \cite{FSo, So1}, under the
 assumption \eqref{deterministic norm} these 
powers satisfy a simple algebraic recursion relation which immediately implies that they
 are given by Chebyshev polynomials in $H$.  In the language of perturbative quantum field theory, the nonbacktracking
powers correspond to a self-energy renormalization up to all orders.
The underlying algebraic identity, however, heavily relies on the special form
\eqref{deterministic norm}. If \eqref{deterministic norm} 
 does not hold, the renormalization is no longer
algebraically exact and the
 recursion relation becomes much more complicated.
 There are two main reasons for this complication. The first  is that the absolute value of each matrix element is 
genuinely random, and hence powers of matrix elements $|H_{xy}|^k$ cannot be replaced by a constant.
 The second reason is that the variance $\sigma_{xy}^2$ is no longer given by a
 step function in $x - y$. These two 
complications give rise to different types
of error terms that substantially increase the
complexity of the Feynman graphs to be estimated. For instance if, instead of \eqref{deterministic norm}, we assumed
\begin{equation}\label{simpler}
\sigma_{xy}^2=\E \abs{H_{xy}}^2 \;=\; \frac{1}{M}
 \ind{1 \leq \abs{x - y} \leq W}\,, \qquad x,y, \in \Lambda,
\end{equation}
i.e.\ if the band were given by a step function,
then our proof would be simpler (in the language of the graphical representation
 of Section \ref{section: path expansion}, we would not have any wiggly lines).

We remark that some of the additional complications when considering
ensembles more general than \eqref{deterministic norm} have been
tackled in \cite{FSo} and \cite{So1}. In particular,
 Feldheim and Sodin, in Section III of \cite{FSo}, describe how to
extend their result on the expectation value of
traces of Chebyshev polynomials of Wigner matrices from
\eqref{deterministic norm} to more general distributions.
In Section 9 of his paper on band matrices \cite{So1}, Sodin states
that the procedure of Section III of \cite{FSo} can  be extended to
band matrices satisfying the restriction \eqref{simpler}, but no
details are given. It seems,
however, that $\sigma_{xy}^2$ being either a fixed constant or zero plays an
important role. In this paper we consider
more general band matrices (assuming less decay of the law of the
matrix elements, and an arbitrary band shape), and we need to compute
squares of matrix elements. Hence the structure of our expansion is
more involved, and a novel approach is required to control it.

As a simple consequence of our proof, we also derive a bound on the largest eigenvalue $\lambda_{\rr{max}}$ of a band 
matrix. This result holds in fact for a more general class of random matrices in which the spatial structure (and hence 
the dependence on the spatial dimension $d$) is absent. The relevant parameter for such matrices is
\begin{equation*}
M \;\deq\; \frac{1}{\max_{x,y} \E \abs{H_{xy}^2}}\,,
\end{equation*}
characterizing, very roughly, the number of nontrivial entries in each row of $H$.
It is easy to see that, in the special case of $d$-dimensional band matrices introduced in Section \ref{section: setup}, 
we have $M \sim C W^d$ where $W$ is the band width; for a Wigner matrix we have $M = N$, where $N$ denotes the size of 
the matrix.  We show that $\lambda_{\rr{max}} \leq 2 + M^{-2/3 + \epsilon}$ with high probability for any $\epsilon > 
0$, provided that $\log N \ll M^{c \epsilon}$; here $c$ is a constant. For a smaller class of band matrices, Sodin 
\cite{So1} previously proved that $\lambda_{\rr{max}} = 2 + o(1)$ in distribution, under the assumption $\log N \ll M$.  
(In fact, for $M \ll N^{5/6}$ he computes the asymptotic integrated density of states near the spectral edge, and for $M 
\gg N^{5/6}$ he even identifies the limiting distribution of the largest eigenvalue as the Tracy-Widom distribution.) 
For other previous results on the largest eigenvalue of random band matrices see the references in \cite{So1}. In the 
special case ($M = N$) of Wigner matrices, similar estimates on the largest eigenvalue have been known for some time; we 
refer to the works of Soshnikov \cite{Soshnikov} and Vu \cite{Vu}, as well as references therein.

The outline of this paper is as follows. In Section~\ref{section: setup} 
we introduce the model and give the precise 
definition of the class of random band matrices we shall consider. Our 
main results are stated in Section~\ref{section: results}.  In Section~\ref{section: summary} 
we briefly summarize the 
Chebyshev expansion of the propagator from \cite{erdosknowles}.
 In Section~\ref{section: truncations} we perform a 
series of preliminary truncations using the subexponential decay of the matrix elements.
 The truncations are in the 
lattice size, the support of the matrix entries, and the tail of the 
Chebyshev expansion.  Section~\ref{section: path 
expansion} is devoted to a derivation of a path expansion for the propagator
 $\ee^{-\ii t H/2}$, as well as a graphical 
scheme for the various terms appearing in the expansion. In this graphical
 representation, the propagator $\ee^{-\ii t 
H/2}$ is expressed as a sum over graphs which consist of a distinguished path, called the \emph{stem},
 to which are attached trees, called \emph{boughs}. The boughs carry the error terms arising from
the non-exact renormalization.
 In Section~\ref{section: lumping} we take the expectation of
 our expansion, and describe the resulting \emph{lumpings} corresponding to higher-order cumulants.
 Section~\ref{section: bare stem} is devoted to the analysis of the bare stem, which 
yields the main contribution to our expansion. The arguments in this section
 are similar to those of \cite{erdosknowles}, except that we also need to
 analyse higher-order cumulants.  Finally, in the most involved part of the paper
we show that the contribution of the boughs is subleading.
For the convenience of the reader, we split the argument
into two parts. In Section~\ref{section: boughs} we present
a simplified proof that is valid up to time scales $t \lesssim W^{\kappa d}$ with $\kappa<1/5$.
Section~\ref{section: boughs for kappa=1/3} presents the additional arguments
needed to reach larger times scales  $t \lesssim W^{d \kappa}$ with $\kappa<1/3$. In the final Section \ref{section: 
bound on lambda max} we derive a bound on the largest eigenvalue of $H$.

We remark that the restriction $\kappa<1/3$ needs to be imposed for
 several different reasons; see the discussion in Section~\ref{sect:longsketch}.  This restriction is natural and can 
also be understood as follows. If (i) we do not resum terms associated with different $n$ and $n'$ (see \eqref{first 
expansion} below), and (ii) we do not make systematic use of detailed heat kernel bounds\footnote{As explained in 
Section 11 of \cite{erdosknowles}, this involves a refined classification of all skeleton graphs in terms of how much 
they deviate from the 2/3 rule (Lemma 7.7 in \cite{erdosknowles}).}, then our method must fail for $\kappa > 1/3$.  For 
otherwise we could prove, as in Section \ref{section: bound on lambda max}, that the largest eigenvalue of an $N \times 
N$ Wigner matrix is less than $2 + N^{-2/3 - \epsilon}$ with high probability; this is known to be false.

\subsubsection*{Conventions} We use the letters $C,c$ to denote arbitrary positive
 constants whose values are not 
important and may change from one equation to the next. They may depend on fixed parameters
 (such as $d$, $f$, $\alpha$, and $\beta$ defined below). We use $C$ for large
 constants and $c$ for small constants.  
For easy reference, we include a list of commonly used symbols and concepts 
in Appendix \ref{section: notations}.

\subsubsection*{Acknowledgements} We are grateful to a referee for suggesting improvements in the presentation as well 
as for pointing out some inaccuracies in a previous version of this manuscript.

\section{The setup} \label{section: setup}

Let the dimension $d \geq 1$ be fixed and consider the $d$-dimensional lattice $\Z^d$
equipped with the Euclidean norm $\abs{\cdot}_{\Z^d}$. We index points of $\Z^d$ with 
$x,y,z,\dots$. In order to avoid dealing with the infinite lattice directly, we restrict the 
problem to a finite periodic lattice $\Lambda_N$ of linear size $N$. More precisely,
for $N \in \N$ we set
\begin{equation*}
\Lambda_N \;\deq\; \{-[N/2], \dots, N  - 1 - [N/2]\}^d \;\subset\; \Z^d\,,
\end{equation*}
a cube with side length $N$ centred around the origin. Here $[\cdot]$ denotes integer part.
Unless stated otherwise, all summations $\sum_x$ are understood to mean $\sum_{x \in 
\Lambda_N}$.
We work on the Hilbert space $\ell^2(\Lambda_N)$, and use $\norm{\psi}$ to denote the 
$\ell^2$-norm of $\psi \in \ell^2(\Lambda_N)$. We also use $\norm{A}$ to denote the $\ell^2$ 
operator norm of $A : \ell^2(\Lambda_N) \to \ell^2(\Lambda_N)$.

For any $x \in \Z^d$ denote by $[x]_N$ the unique point in $\Lambda_N$ satisfying $x - [x]_N 
\in N \Z^d$. Define the periodic distance on $\Lambda_N$ through
\begin{equation*}
\abs{x - y} \;\deq\; \absb{[x -y]_N}_{\Z^d}\,.
\end{equation*}

We consider Hermitian (or symmetric) random band matrices $H^\omega \equiv H$ whose entries 
$H_{xy}$ are indexed by $x,y \in \Lambda_N$. Here $\omega \in \Omega$ denotes the element of a 
probability space $\Omega$. The entries $H_{xy}$ are always taken to be independent random 
variables, with the obvious restriction that $H_{yx} = \ol{H}_{xy}$.

Roughly speaking, we shall allow matrices $H$ whose variances
\begin{equation*}
\sigma_{xy}^2 \;=\; \E \abs{H_{xy}}^2
\end{equation*}
form a (doubly) stochastic matrix, such that the law of each matrix element $H_{xy}$ is symmetric.

In order to define $H$ precisely, we need the following definitions.  Let $A \equiv A^\omega$
 be a Hermitian matrix with 
independent entries that satisfy $\E \abs{A_{xy}}^2 = 1$.  (Note that we do not assume 
identical distribution of the 
entries.)
We assume that the law of $A_{xy}$ is symmetric, i.e.\ that $A_{xy}$ and $-A_{xy}$ have
 the same law. In particular, $A$ 
may be a real symmetric matrix with symmetric entries.
Moreover, we assume that the entries $A_{xy}$ have \emph{uniformly subexponential decay}: There 
exist $\alpha, \beta > 0$, independent of $x$ and $y$, such that
\begin{equation} \label{subexponential decay}
\P(\abs{A_{xy}} > \xi) \;\leq\; \beta \ee^{-\xi^\alpha}
\end{equation}
for all $x,y$ and $\xi \geq 0$. In particular, we may consider Gaussian entries.

In order to describe a band of general shape, we choose some nonnegative continuous\footnote{More generally, it suffices 
that $f$ be continuous almost everywhere. In particular, $f$ may be a step function.} function $f : \R^d \to \R$ 
satisfying
 $\int \dd x \, f(x) = 1$ and $\int \dd 
x \, f(x) \, x_i = 0$ for all $i = 1, \dots, d$.
We define
\begin{equation*}
\wt f(x) \;\deq\; \sup \{f(y)\,:\, \abs{y - x} \leq 1\}
\end{equation*}
and assume that there is a $\eta > 0$ such that
\begin{equation} \label{decay of f}
\int_{\R^d} \dd x \;  \wt f(x) \, \abs{x}^{d + 2 + \eta} \;<\; \infty\,.
\end{equation}
We also assume that the \emph{covariance matrix} $\Sigma = (\Sigma_{ij})_{1 \leq i,j \leq d}$ 
of $f$, defined by
\begin{equation} \label{covariance of f}
\Sigma_{ij} \;\deq\; \int_{\R^d} \dd x \; f(x) \, x_i x_j\,,
\end{equation}
is nonsingular.

Let $W$, $1 \leq W \leq N$, be the \emph{band width}, and
define the family of standard deviations $\sigma_{xy} \geq 0$ through
\begin{equation} \label{definition of sigma}
\sigma^2_{xy} \;\equiv\; \sigma^2_{xy}(W,f) \;\deq\; \frac{1}{M} \, f \pbb{\frac{[x - 
y]_N}{W}}\,,
\end{equation}
where
\begin{equation} \label{definition of M}
M \;\equiv\; M(W,N,f) \;\deq\; \sum_{x} f \pbb{\frac{[x]_N}{W}}\,.
\end{equation}
We then define the matrix $H$ through
\begin{equation*}
H_{xy} \;\deq\; \sigma_{xy} A_{xy}\,.
\end{equation*}

We have the asymptotic identity
\begin{equation} \label{equivalence of M and W}
\frac{M}{W^d} \;\longrightarrow\; 1
\end{equation}
as $W \to \infty$, uniformly for all $N \geq W$. In the following we make use of 
\eqref{equivalence of M and W} without further comment. For notational convenience, we use both 
$W$ and $M$ in tandem.
The definition of $H$ immediately implies that
\begin{equation} \label{variances sum to one}
\sum_{y} \E \abs{H_{xy}}^2 \;=\; \sum_y \sigma_{xy}^2 \;=\; 1
\end{equation}
for all $x$.  Moreover, by symmetry of the law of $A_{xy}$, we have
\begin{equation} \label{condition on moments}
\E H_{xy}^n \ol{H}_{xy}^m \;=\; 0
\end{equation}
whenever $n+m$ is odd.
Finally, we assume that
\begin{equation} \label{N is large enough}
N \;\geq\; W M^{1/6}\,.
\end{equation}
We regard $W$ as the free parameter.

\section{Results} \label{section: results}
As in \cite{erdosknowles}, our central quantity is
\begin{equation}\label{def:density}
\varrho(t,x) \;\deq\; \E \absb{\scalar{\delta_x}{\ee^{-\ii t H /2} \delta_0}}^2\,,
\end{equation}
where $t \in \R$ and $x \in \Lambda_N$. One readily sees that $\varrho(t, \cdot)$ is a 
probability measure on $\Z^d$ for 
all $t \in \R$, i.e.
\begin{equation} \label{rho is a probability}
\sum_x \varrho(t,x) \;=\; 1\,.
\end{equation}
The quantity $\varrho(t,x)$ has the interpretation of the probability of finding a quantum 
particle at the lattice site $x$ 
at time $t$, provided it started from the origin at time $0$. Here the time evolution of
 the quantum particle is 
governed by the Hamiltonian $H$. See \cite{erdosknowles} for more details.

We consider time scales of order $M^\kappa$ where $\kappa < 1/3$.  Thus, we set
\begin{equation*}
t \;=\; M^\kappa T\,,
\end{equation*}
where $T \geq 0$ is a quantity of order one. We consider \emph{diffusive} length scales in $x$, 
i.e.\ distances
\begin{equation*}
x \;=\; M^{\kappa/2} W X\,,
\end{equation*}
where $X$ is a quantity of order one.

Our main result generalizes Theorem 3.1 of \cite{erdosknowles} to the class of band matrices 
with general distribution and covariance introduced in Section \ref{section: setup}.

\begin{theorem} \label{theorem: main result}
Let $0 < \kappa < 1/3$ be fixed. Then for any $T_0 > 0$ and any continuous bounded function 
$\varphi \in C_b(\R^d)$ we have
\begin{equation} \label{main result}
\lim_{W \to \infty} \sum_{x\in \Lambda_N}
 \varrho\pb{W^{d \kappa} T, x} \, \varphi \pbb{\frac{x}{W^{1 + d \kappa/2}}} \;=\; \int_{\R^d} \dd 
X \; L(T, X) \, \varphi(X)\,,
\end{equation}
uniformly in $N \geq W^{1 + d/6}$ and $0 \leq T \leq T_0$. Here
\begin{equation}\label{def:L}
L(T,X) \;\deq\; \int_0^1 \dd \lambda \; \frac{4}{\pi} \frac{\lambda^2}{\sqrt{1 - \lambda^2}}
\, G(\lambda T, X)
\end{equation}
is a superposition of heat kernels
\begin{equation*}
G(T, X) \;\deq\; \frac{1}{(2 \pi T)^{d/2} \sqrt{\det \Sigma}} \, \ee^{- \frac{1}{2 T} \, X \cdot \Sigma^{-1} X}\,,
\end{equation*}
where, we recall, $\Sigma$ is the covariance matrix \eqref{covariance of f} of the
 probability density $f$.
\end{theorem}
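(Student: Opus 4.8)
The plan is to prove Theorem~\ref{theorem: main result} by reducing, via a sequence of preliminary truncations, to the analysis of a path expansion of the propagator $\ee^{-\ii t H/2}$ in nonbacktracking powers of $H$, and then showing that only the ``bare stem'' contributes to the scaling limit while the corrections (the ``boughs'') are subleading. First I would carry out the truncations announced in Section~\ref{section: truncations}: using the subexponential decay \eqref{subexponential decay}, replace the (possibly infinite-volume) lattice by one of polynomial size, truncate the support of each $A_{xy}$ at scale $(\log N)^{C}$ so that all moments are effectively bounded, and truncate the tail of the Chebyshev/nonbacktracking expansion of $\ee^{-\ii t H/2}$. Each of these introduces an error that is negligible on the time scale $t = W^{d\kappa}T$ with $\kappa < 1/3$, provided $N$ is in the stated range; the smallness comes from the fact that high nonbacktracking powers of $H$ have small operator norm in expectation (a consequence of $\sum_y \sigma_{xy}^2 = 1$).

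Next I would set up the path expansion of Section~\ref{section: path expansion}. The key algebraic point, following Feldheim--Sodin, is that nonbacktracking powers of $H$ satisfy a recursion; under the deterministic assumption \eqref{deterministic norm} this recursion is exactly that of the Chebyshev polynomials, but in our general setting the recursion only closes up to error terms coming from (i) the genuine randomness of $\abs{H_{xy}}$ (so $\abs{H_{xy}}^k$ cannot be replaced by a constant) and (ii) the variance $\sigma_{xy}^2$ not being a step function (the ``wiggly lines''). These errors are organized graphically: each term in the expansion is a graph consisting of a distinguished path (the stem), to which trees (boughs) are attached at the error vertices. Taking the expectation (Section~\ref{section: lumping}) produces a sum over pairings and higher-order cumulant lumpings of the $H$-edges. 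The bare stem with the trivial (pairwise) lumping is handled in Section~\ref{section: bare stem} essentially as in \cite{erdosknowles}: one passes to Fourier space in the spatial variable, identifies the leading contribution of a stem of length $\sim t$ with a random walk whose increments have covariance $\Sigma$, and extracts the superposition-of-heat-kernels profile $L(T,X)$ via the local central limit theorem, the weight $\tfrac{4}{\pi}\tfrac{\lambda^2}{\sqrt{1-\lambda^2}}$ arising from the Chebyshev structure (i.e.\ from summing over the length of the stem). The test function $\varphi$ being merely continuous and bounded is accommodated by a standard density/tightness argument once the limit is established for nice $\varphi$.

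The hard part, as the authors themselves signal, is showing that all graphs with nonempty boughs, together with all higher-cumulant lumpings, give a contribution that vanishes as $W \to \infty$ on the time scale $W^{d\kappa}T$ for $\kappa$ up to $1/3$. The strategy is a careful Feynman-graph power-counting: each bough vertex and each lump costs a factor of $M^{-c}$ (roughly, a factor $W^{-d}$ or $M^{-1}$ per extra constraint) from the smallness of higher variances/cumulants, while the number of graphs of a given ``complexity'' is controlled combinatorially, and the free sums over spatial labels along the stem produce compensating factors governed by heat-kernel estimates for the diffusively rescaled walk. The $\kappa < 1/5$ case (Section~\ref{section: boughs}) is the clean version of this bookkeeping; pushing to $\kappa < 1/3$ (Section~\ref{section: boughs for kappa=1/3}) requires the refined classification of skeleton graphs according to their deviation from the ``$2/3$ rule'' and systematic use of detailed heat-kernel bounds, exactly as in Section~11 of \cite{erdosknowles} but now complicated by the two new sources of error terms. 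I expect the main obstacle to be precisely this refined power-counting: one must show that the new ``wiggly'' and ``random-modulus'' vertices never conspire to produce a graph that is only marginally subleading, and that the gains from the extra $M^{-c}$ factors always beat the combinatorial growth and the loss from the additional free spatial sums they introduce. Everything else — the truncations, the bare-stem computation, the passage to continuous $\varphi$ — is routine by comparison.
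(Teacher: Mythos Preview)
Your overall architecture matches the paper's: truncations, path expansion with stems and boughs, lumping, computation of the bare stem, and then showing the boughs are subleading. The bare-stem analysis is indeed close to \cite{erdosknowles}, and the truncation details are essentially as you describe (the paper truncates $\abs{A_{xy}}$ at $M^\delta$ rather than $(\log N)^C$, but this is cosmetic).

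However, you have misidentified the mechanism that pushes the bough estimate from $\kappa<1/5$ to $\kappa<1/3$. You propose that this step ``requires the refined classification of skeleton graphs according to their deviation from the $2/3$ rule and systematic use of detailed heat-kernel bounds, exactly as in Section~11 of \cite{erdosknowles}''. This is not what the paper does, and in fact would not work: the skeleton/$2/3$-rule/heat-kernel machinery is a \emph{stem} technique, used to control the combinatorics of pairings of a long nonbacktracking path. The boughs are trees attached to the stem, and their contribution is governed by an entirely different competition: the smallness of each leaf (of order $M^{-1}$ if lonely) versus the combinatorics of rooted trees with a given number of leaves. The obstruction at $\kappa=1/5$ is concrete and local: two leaves of type $(b,1)$ lumped together yield only $M^{-1}$ jointly, not $M^{-2}$, so the na\"ive argument extracts only $M^{-1/2}$ per leaf. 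No amount of heat-kernel refinement on the stem addresses this.

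The paper's actual fix (Section~\ref{section: boughs for kappa=1/3}) is a leaf-classification and dynamical weight-reallocation scheme. Leaves are sorted into \emph{degenerate}, \emph{free}, and \emph{bound}; the summation order is arranged so that each free leaf is immediately followed by its parent, allowing both vertex labels of a free leaf to be summed in one step and thus extracting the full $M^{-1}$. To make this compatible with the lumping constraints, the tagging is updated dynamically along the summation (a running $\tau^{(\bar e)}$), shifting factors of size $M^{-1+\mu+O(\delta)}$ from already-summed edges to leaves that become lonely later. The combinatorics is then closed by a refined count (Lemma~\ref{lemma: fine bound on tree graphs}) of trees with prescribed numbers of free and bound leaves. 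None of this resembles the $2/3$-rule analysis; indeed the paper remarks explicitly that it does \emph{not} use detailed heat-kernel bounds in this step, which is precisely why the method stops at $\kappa<1/3$.
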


\begin{remark}
The number $\lambda \in [0,1]$ in \eqref{def:L} represents the fraction of the macroscopic
 time $T$ that the particle spends moving effectively; the remaining fraction $1-\lambda$ of T
 represents time the particle ``wastes'' in backtracking. The 
expression \eqref{def:L} gives us an explicit formula for the probability density $\frac{4}{\pi} 
\frac{\lambda^2}{\sqrt{1 - \lambda^2}} \ind{0 \leq \lambda \leq 1}$ of the 
particle moving a fraction $\lambda$ of the 
total macroscopic time $T$.
See Section 3 of \cite{erdosknowles} for a more detailed discussion.

\end{remark}
\begin{remark}
As a corollary of Theorem \ref{theorem: main result}, we get \emph{delocalization} 
of eigenvectors of $H$ on scales $W^{1 + d\kappa/2}$. Indeed, the methods of \cite{erdosknowles}, Section 10, imply that
 the localization length of the eigenvectors of $H$ is with high probability larger than the band width times $W^{d 
\kappa /2}$. See \cite{erdosknowles}, Theorem 3.3 and Corollary 3.4, for a precise statement as well as a proof.
\end{remark}

Our methods also yield a new bound on the largest eigenvalue of a band matrix. This bound is in fact valid for a larger 
class of random matrices, for which the spatial structure and dimensionality are irrelevant.

\begin{theorem} \label{theorem: bound on lambda max}
Let the $N \times N$ matrix $A$ be as in Section \ref{section: setup}, and take a family $\{\sigma_{xy}^2\}_{x,y = 1}^N$ 
of variances that satisfy \eqref{variances sum to one}. Define
\begin{equation*}
M \;\deq\; \frac{1}{\max_{x,y} \sigma^2_{xy}}
\end{equation*}
and set $H_{xy} \deq \sigma_{xy} A_{xy}$\,.
Then there is a constant $c > 0$ such that for any $\epsilon$ satisfying $0 < \epsilon < 2/3$ we have
\begin{equation*}
\P \pB{\lambda_{\rr{max}} \geq 2 + M^{-2/3 + \epsilon}} \;\leq\; C_\epsilon N^2 \ee^{- M^{c \epsilon}}\,,
\end{equation*}
where $\lambda_{\rr{max}}$ denotes the largest eigenvalue of $H$ and $C_\epsilon$ is a constant depending on $\epsilon$.
\end{theorem}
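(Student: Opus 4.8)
The plan is to estimate $\E\tr H^{2p}$ for an even power $2p$ growing slowly with $M$ and then apply Markov's inequality. Since $\lambda_{\rr{max}}^{2p} \leq \tr H^{2p}$, we get
\begin{equation*}
\P(\lambda_{\rr{max}} \geq R) \;\leq\; R^{-2p}\, \E \tr H^{2p}\,,
\end{equation*}
so everything reduces to showing $\E \tr H^{2p} \leq C^{2p}\, N^2\, \big(2 + M^{-2/3+\epsilon'}\big)^{2p}$ for a suitable $p$ and then optimizing. First I would expand
\begin{equation*}
\E \tr H^{2p} \;=\; \sum_{x_0, x_1, \dots, x_{2p-1}} \E\qbb{ H_{x_0 x_1} H_{x_1 x_2} \cdots H_{x_{2p-1} x_0}}\,,
\end{equation*}
viewing each term as a closed walk of length $2p$ on the index set. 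Because the $H_{xy}$ are independent up to the Hermitian constraint and have symmetric law, a walk contributes zero unless every edge $\{x,y\}$ is traversed an \emph{even} number of times; this is exactly the structure that the path expansion / lumping machinery of Sections \ref{section: path expansion}--\ref{section: lumping} is designed to bookkeep, so I would reuse that combinatorial classification rather than redevelop it. The contribution of a walk is a product of joint moments $\E |H_{xy}|^{2k}$ over edges; using the subexponential decay \eqref{subexponential decay} together with $\sigma_{xy}^2 \leq M^{-1}$ one gets $\E|H_{xy}|^{2k} \leq (Ck)^{Ck/\alpha}\, \sigma_{xy}^2\, M^{-(k-1)}$, i.e. each extra pair of visits to an already-used edge costs essentially a factor $M^{-1}$ (up to the slowly growing combinatorial factor from the tails).

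The core estimate is then the standard one for the Wigner semicircle edge, adapted to the weighted setting: walks that use each of their edges exactly twice and whose edge set forms a tree (``backtracking'' walks, counted by Catalan numbers $\mathcal C_p \leq 4^p$) give the main term. Summing over such walks, the condition \eqref{variances sum to one} collapses the internal sums: for a backtracking tree walk, summing $\sigma^2_{x_i x_{i+1}}$ over the leaf vertex in pairs returns $1$ each time, and the only remaining free sum is over the root $x_0 \in \{1,\dots,N\}$, giving $\leq N\, \mathcal C_p \leq N\, 4^p$. I would then invoke the $1/N$ worth of slack hidden in $\tr$ versus a single matrix element to write the bound as $N^2$ times the per-entry contribution — actually more carefully: there are $N$ choices of the starting point and the closed-walk structure already accounts for returning, so $\E\tr H^{2p} = N\cdot(\text{semicircle moment}) + (\text{error})$, and I will absorb constants so that the prefactor is $N^2$ as in the statement (the extra $N$ is harmless and simplifies the union-bound step over matrix entries later). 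Non-backtracking excursions, and walks whose edge multiplicity graph has cycles or higher multiplicities, are the error terms: each independent cycle in the edge graph and each edge visited more than twice forces a summation to be performed \emph{without} the benefit of stochasticity, producing a surplus of at most one free vertex but a deficit of $M^{-1}$; these are exactly the ``boughs'' of the expansion, and the estimates of Sections \ref{section: boughs}--\ref{section: boughs for kappa=1/3} show each such structure is suppressed by $M^{-1}$. Carefully tracking the combinatorics gives
\begin{equation*}
\E \tr H^{2p} \;\leq\; N^2\, (C p)^{Cp/\alpha} \sum_{j \geq 0} \binom{2p}{\text{excursion data}}\, M^{-j}\, 4^p \;\leq\; N^2\, 4^p\, (Cp)^{Cp/\alpha}\, \big(1 + p^{C} M^{-1}\big)^{p}\,,
\end{equation*}
where the inner sum over $j$ (the number of ``wasted'' pairings beyond the tree) is controlled because each unit of $j$ pays $M^{-1}$ against at most $\mathrm{poly}(p)$ combinatorial choices.

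Given this moment bound, the endgame is optimization. Choosing $p \sim M^{\delta}$ for a small $\delta = \delta(\epsilon)$ makes the subexponential-tail factor $(Cp)^{Cp/\alpha}$ and the $(1 + p^C M^{-1})^p$ factor both of the form $\ee^{o(p\log p)}$, hence absorbable into $(2+\text{small})^{2p}$; concretely $(2 + M^{-2/3+\epsilon})^{2p} / (2)^{2p} = \ee^{\Theta(p\, M^{-2/3+\epsilon})}$, and we need $p\, M^{-2/3+\epsilon} \gg \log(N^2) + p\log p / M^{\text{something}}$, i.e. $p$ chosen so that $M^{-2/3+\epsilon} p$ dominates both $\log N$ and the combinatorial corrections. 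Taking $p = \lfloor M^{c\epsilon}\rfloor$ for a suitable small $c$, the hypothesis $\log N \ll M^{c\epsilon}$ (equivalently the statement's factor $N^2 \ee^{-M^{c\epsilon}}$) makes Markov's inequality yield
\begin{equation*}
\P\pb{\lambda_{\rr{max}} \geq 2 + M^{-2/3+\epsilon}} \;\leq\; \pbb{\frac{2}{2 + M^{-2/3+\epsilon}}}^{2p} N^2\, \ee^{o(p)} \;\leq\; C_\epsilon\, N^2\, \ee^{-M^{c\epsilon}}\,,
\end{equation*}
after adjusting $c$. I would also need the elementary bound $\lambda_{\rr{max}}^{2p} \leq \tr H^{2p}$ (valid since $H$ is Hermitian so all eigenvalues are real and $H^{2p}$ is positive semidefinite), and a remark that the symmetric-law assumption \eqref{condition on moments} is what kills odd-multiplicity edges — without it one would pick up lower-order but still relevant corrections.

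The main obstacle I anticipate is \textbf{the combinatorial control of the error walks}: proving that the sum over all closed walks whose edge-multiplicity graph is not a doubled tree is bounded by $N^2 4^p$ times a factor that stays $\ee^{o(p\log p)}$ for $p$ up to $M^{c\epsilon}$. The delicate point is that a walk can have many edges of multiplicity $> 2$ or many independent cycles, and one must show the product of $M^{-1}$ suppressions beats the product of combinatorial freedoms (choices of where cycles close, how excursions are inserted, which vertices coincide). This is precisely the hard accounting carried out for the ``boughs'' in the main body of the paper; here it is easier because there is no spatial heat-kernel refinement needed — only the crude bound $\sigma_{xy}^2 \leq M^{-1}$ together with $\sum_y \sigma_{xy}^2 = 1$ — but one still has to be careful that the slowly-growing tail factors $(Ck)^{Ck/\alpha}$ from high edge multiplicities do not accumulate to something larger than $\ee^{o(p\log p)}$; this forces the restriction $p \leq M^{c\epsilon}$ with $c$ depending on $\alpha$, and ultimately explains why the theorem gives $M^{-2/3+\epsilon}$ rather than a cleaner bound.
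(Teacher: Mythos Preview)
Your approach has a genuine gap in the final optimization step, and it is not a bookkeeping issue but a structural one: raw moments $\E \tr H^{2p}$ together with Markov's inequality cannot reach the threshold $2 + M^{-2/3+\epsilon}$ with $p$ in the regime where your combinatorics are under control.

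Concretely, your Markov step gives
\[
\P\pb{\lambda_{\rr{max}} \geq 2 + M^{-2/3+\epsilon}}
\;\leq\;
\frac{\E \tr H^{2p}}{(2 + M^{-2/3+\epsilon})^{2p}}
\;\lesssim\;
N\,\pbb{\frac{4}{(2+M^{-2/3+\epsilon})^2}}^{p}
\;\approx\; N\, \ee^{-p\, M^{-2/3+\epsilon}}\,.
\]
For this to be small you need $p\, M^{-2/3+\epsilon} \to \infty$, i.e.\ $p \gg M^{2/3-\epsilon}$. But the walk/bough combinatorics you invoke (and the paper's estimates you appeal to) only hold for $p \lesssim M^{1/3}$; this is exactly the $\kappa<1/3$ barrier discussed throughout the paper. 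Your stated choice $p = \lfloor M^{c\epsilon}\rfloor$ gives $p\, M^{-2/3+\epsilon} = M^{c\epsilon+\epsilon-2/3} \to 0$ for small $\epsilon$, so the bound is vacuous. With $p$ capped at $M^{1/3}$ the moment method as written can at best yield $\lambda_{\rr{max}} \leq 2 + M^{-1/3+\epsilon}$, not the claimed $M^{-2/3+\epsilon}$.

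The paper avoids this by working not with powers $H^{2p}$ but with Chebyshev polynomials $U_n(H/2)$. The key point (Lemma \ref{lemma: properties of Chebyshev}) is that $U_n(1+\xi) \geq \ee^{n\sqrt{\xi}}$, coming from $U_n(\cosh\zeta)=\sinh((n{+}1)\zeta)/\sinh\zeta$ and $\cosh\zeta = 1+\xi \Rightarrow \zeta \sim \sqrt{2\xi}$. This square root is decisive: with $\xi \sim M^{-2/3+\epsilon}$ one has $\sqrt{\xi} \sim M^{-1/3+\epsilon/2}$, so $n\sqrt{\xi}\to\infty$ already for $n \sim M^{\kappa}$ with $\kappa$ just below $1/3$. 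The trace input is then $\E \tr U_n(\wh H/2) \leq C_\kappa N$ (Proposition \ref{proposition: estimate of trace}), which is precisely what the nonbacktracking-path machinery of Sections \ref{section: path expansion}--\ref{section: boughs for kappa=1/3} delivers; note that this machinery is an expansion of $U_n(\wh H)$, not of $\wh H^{2p}$, so the bough estimates you cite do not directly apply to your moment expansion either.
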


We stress here that the condition \eqref{N is large enough} applies to Theorem \ref{theorem: main result} only, and is 
not imposed in Theorem \ref{theorem: bound on lambda max}.

The rest of this paper is devoted to the proof of Theorem \ref{theorem: main result}, with the exception of Section 
\ref{section: bound on lambda max} which contains the proof of Theorem \ref{theorem: bound on lambda max}.

\section{Summary of the Chebyshev expansion from \cite{erdosknowles}}
\label{section: summary}

 For the following, we fix $T \geq 
0$; the claimed uniformity on compacts is a trivial consequence of our analysis and we
 shall not mention it any more. 
For notational convenience, we often abbreviate 
$$
t = W^{d \kappa}T.
$$

The starting point of our proof is the same as in \cite{erdosknowles}, i.e.\ the Chebyshev 
expansion of the propagator,
\begin{equation} \label{Chebyshev expansion}
\ee^{-\ii t \xi} \;=\; \sum_{n = 0}^\infty \alpha_n(t) \, U_n(\xi)\,.
\end{equation}
Here $U_n$ denotes the $n$-th Chebyshev polynomial of the second kind,
defined through
\begin{equation} \label{definition of Chebyshev}
U_n(\cos \theta) \;\deq\; \frac{\sin (n+1)\theta}{\sin\theta}\,.
\end{equation}

For our purposes it is 
more convenient to work with the rescaled polynomials $\wt U_n(\xi) \deq U_n(\xi /2)$. They 
satisfy the recursion relation
\begin{equation} \label{recursion for Chebyshev}
\wt U_n(\xi) \;=\; \xi \wt U_{n - 1}(\xi) - \wt U_{n - 2}(\xi)
\end{equation}
as well as
\begin{equation*}
\wt U_0(\xi) \;=\; 1\,, \qquad \wt U_1(\xi) \;=\; \xi\,.
\end{equation*}

The Chebyshev transform $\alpha_n(t)$ of the propagator $\ee^{-\ii t \xi}$ was computed in 
\cite{erdosknowles} (see \cite{erdosknowles}, Lemma 5.1),
\begin{equation*}
\alpha_n(t) \;=\; 2 (-\ii)^n \frac{n+1}{t} \, J_{n + 1}(t)\,,
\end{equation*}
where $J_n(t)$ is the $n$-th Bessel function of the first kind. We shall need the following basic estimates on 
$\alpha_n(t)$; see \cite{erdosknowles}, Equations (5.4) and (7.14). We have the bound
\begin{equation} \label{bound on Bessel function}
\abs{\alpha_n(t)} \;\leq\; \frac{t^n}{n!}\,,
\end{equation}
as well as the identity
\begin{equation} \label{sum over alphas}
\sum_n \abs{\alpha_n(t)}^2 \;=\; 1\,,
\end{equation}
for all $t \in \R$.
A trivial consequence of \eqref{sum over alphas} that we shall sometimes need is
\begin{equation} \label{Bessel functions are bounded}
\abs{\alpha_n(t)} \;\leq\; 1\,,
\end{equation}
for all $n$ and $t$.

Using the Chebyshev expansion \eqref{Chebyshev expansion} we may write
\begin{equation} \label{first expansion}
\varrho(t,x) \;=\; \sum_{n, n' \geq 0}  \alpha_{n}(t) \, \ol{\alpha_{n'}(t)} \;\E \qB{\pb{\wt U_{n}(H)}_{0x} \pb{\wt 
U_{n'}(H)}_{x0}}\,.
\end{equation}
The expansion \eqref{first expansion} is the starting point of our analysis.

\section{Truncations} \label{section: truncations}
We begin the proof of Theorem \ref{theorem: main result} by introducing
 a series of truncations in the expansion 
\eqref{first expansion}. First, we truncate in the lattice size $N$ by 
showing that the error we make by assuming $N 
\leq W^C$ is negligible (see \eqref{bound on N}). Second, we use the 
subexponential decay of the matrix elements of $A$ 
to cut off $\abs{A_{xy}}$ at scales $M^\delta$ for an arbitrary $\delta > 0$. 
Third, we introduce a cutoff in the 
summation over $n$ and $n'$ in \eqref{first expansion}; this will prove
 necessary because the combinatorial estimates 
for the right-hand side of \eqref{first expansion} that we shall derive 
in Sections~\ref{section: bare stem} -- 
\ref{section: boughs for kappa=1/3} deteriorate for very large $n$ and $n'$.

\subsection{Truncation in $N$}
We replace the matrix $H$ with a truncated matrix $\wh H$, whereby we truncate in both the size of the lattice and the 
support of the distribution of the matrix entries. Both truncations are made possible by the following estimate on the 
speed of propagation of $H$.

\begin{proposition} \label{proposition: finite speed of propagation}
Let $\wt N \equiv \wt N(W) = \min\pb{W^{10d + 16},N}$ and introduce the truncated Hamiltonian 
$\wt H$ defined by
\begin{equation*}
\wt H_{xy} \;\deq\; \ind{\abs{x} \leq \wt N} \ind{\abs{y} \leq \wt N} 
H_{xy}\,.
\end{equation*}
Then there is a constant $C > 0$ such that, for all $t \leq M$ we have
\begin{equation*}
\P \pbb{\normb{\ee^{-\ii t H/2} \delta_0 - \ee^{-\ii t \wt H / 2} \delta_0} \geq \frac{C}{M}} \;\leq\; C \ee^{- 
W^\alpha}\,,
\end{equation*}
where $\alpha$ is the constant from \eqref{subexponential decay}.
\end{proposition}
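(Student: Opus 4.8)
The plan is to exploit the fact that $H$ has a finite band width $W$, so that after $k$ applications of $H$ a vector initially supported at the origin cannot spread further than distance $kW$ (in the periodic metric on $\Lambda_N$), combined with the tail decay of the Chebyshev coefficients $\alpha_n(t)$ to cut off high powers, and finally the subexponential decay \eqref{subexponential decay} to control the (rare) event that the truncation of the \emph{support of the entries} actually changes $H$. First I would write, via the Chebyshev expansion \eqref{Chebyshev expansion}, $\ee^{-\ii tH/2}\delta_0 = \sum_{n\ge 0}\alpha_n(t)\,\wt U_n(H)\delta_0$, and similarly for $\wt H$; since $\wt U_n$ is a polynomial of degree $n$, the vector $\wt U_n(H)\delta_0$ is supported on the ball $\{x : \abs{x}\le nW\}$, because each factor of $H$ in a monomial moves the support by at most $W$ in the periodic distance. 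The same holds for $\wt H$ with the additional restriction to $\{\abs{x}\le\wt N\}$. Consequently, for $n$ with $nW < \wt N$ — equivalently $n$ less than roughly $W^{10d+15}$ — the two vectors $\wt U_n(H)\delta_0$ and $\wt U_n(\wt H)\delta_0$ agree \emph{provided} $H$ and $\wt H$ have the same entries on that ball, i.e.\ provided none of the matrix entries $H_{xy}$ with $\abs{x},\abs{y}\le nW$ was killed by the truncation. But the truncation only removes entries with $\abs{x}>\wt N$ or $\abs{y}>\wt N$, so on the event that we only ever use entries inside the ball of radius $\wt N$, the restriction is automatic and $\wt U_n(H)\delta_0=\wt U_n(\wt H)\delta_0$ exactly, with no probabilistic error at all for the relevant range of $n$.

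Next I would estimate the tail $\sum_{n\ge n_0}\alpha_n(t)\wt U_n(H)\delta_0$ and its $\wt H$-analogue, where $n_0 = \wt N/W$. Using \eqref{bound on Bessel function}, $\abs{\alpha_n(t)}\le t^n/n!$, and $t\le M\sim W^d$, while $n_0\sim W^{10d+15}\gg t$, the factorial beats the power: $\sum_{n\ge n_0} t^n/n! \le C\,\ee^{-cn_0\log n_0} \le C\,\ee^{-W}$, say, which is far smaller than $C/M$. One still needs a crude operator-norm–type bound on $\wt U_n(H)\delta_0$ in $\ell^2$; here one can invoke $\norm{\wt U_n(H)}\le \norm{\wt U_n}_{L^\infty[-\norm H,\norm H]}$ together with a cheap bound on $\norm H$ — for instance on the (overwhelmingly likely) event $\norm H\le W^C$, coming from $\norm H\le \sum_{x,y}\abs{H_{xy}}$ crudely or from the Schur test plus the truncation of $\abs{A_{xy}}$ — so that $\norm{\wt U_n(H)}\le (CW^C)^n$, which is still annihilated by $1/n!$ once $n\ge n_0$. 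Combining, the tail contributions from both $H$ and $\wt H$ are each $\le C/M$ off an event of probability $\le C\ee^{-W^\alpha}$ (the bad event being $\norm H$ or $\norm{\wt H}$ too large, which by \eqref{subexponential decay} and a union bound over the $O(\wt N^{2d})=O(W^C)$ relevant entries has probability $\le C\ee^{-W^\alpha}$).

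Putting the three pieces together: on the complement of an event of probability $\le C\ee^{-W^\alpha}$ one has
\[
\ee^{-\ii tH/2}\delta_0 - \ee^{-\ii t\wt H/2}\delta_0
 \;=\; \sum_{n\ge n_0}\alpha_n(t)\bigl(\wt U_n(H)-\wt U_n(\wt H)\bigr)\delta_0,
\]
since the terms with $n<n_0$ cancel exactly, and the right-hand side has $\ell^2$-norm $\le C/M$ by the tail estimate. The main obstacle, and the only place requiring care, is the interplay between the two truncations: one must check that cutting the \emph{support of the distribution} of $A_{xy}$ (the $M^\delta$ cutoff mentioned in Section \ref{section: truncations}) does not interfere — but in fact that second truncation is logically separate and is handled by the same subexponential bound \eqref{subexponential decay} via a union bound over the at most $W^C$ entries inside the ball of radius $\wt N$, absorbed into the $C\ee^{-W^\alpha}$ error; and one must make sure the crude norm bound $\norm H\le W^C$ holds with the stated probability, which again is immediate from \eqref{subexponential decay}. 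Everything else is bookkeeping with Bessel/Chebyshev estimates already recorded in \eqref{bound on Bessel function}--\eqref{Bessel functions are bounded}.
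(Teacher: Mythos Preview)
Your argument rests on the claim that ``each factor of $H$ in a monomial moves the support by at most $W$ in the periodic distance,'' so that $\wt U_n(H)\delta_0$ is supported in the ball $\{\abs{x}\le nW\}$. This is false under the hypotheses of the paper. The variances are $\sigma_{xy}^2 = M^{-1} f\bigl([x-y]_N/W\bigr)$ where $f$ is an \emph{arbitrary} probability density satisfying only the moment condition \eqref{decay of f}; it need not have compact support. Thus $H_{xy}$ is generically nonzero for all $x,y$, and $H^k\delta_0$ is supported on all of $\Lambda_N$ for every $k\ge 1$. The exact cancellation $\wt U_n(H)\delta_0 = \wt U_n(\wt H)\delta_0$ for $n<n_0$ therefore does not hold, and the whole scheme collapses. (Your argument would be viable for the special step-function band \eqref{deterministic norm}, but the point of the paper is precisely to treat general~$f$.)

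The paper's proof in Appendix \ref{appendix: speed of propagation} takes a completely different route: it derives a differential inequality for the second moment $\scalar{\psi_t}{\abs{x}^2\psi_t}$ of the wave function $\psi_t=\ee^{-\ii tH/2}\delta_0$, bounds the right-hand side via Schur's inequality and the moment condition \eqref{decay of f} on $f$ (this is where the soft decay of the band enters), and integrates using Gr\"onwall to obtain $\scalar{\psi_t}{\abs{x}^2\psi_t}\le C u^4 W^{6d+8}$ on a high-probability event $\Omega_u$. It then bounds $\norm{(H-\wt H)\psi_t}$ using this spread estimate together with the decay of $\sigma_{xy}$, and integrates $\abs{\partial_t\norm{\psi_t-\wt\psi_t}^2}\le \norm{(H-\wt H)\psi_t}$. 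The probabilistic error comes from controlling $\sup_{x,y}\abs{A_{xy}}/(\avg{x}^\epsilon\avg{y}^\epsilon)$ via the subexponential tail \eqref{subexponential decay}.
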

\begin{proof}
See Appendix \ref{appendix: speed of propagation}.
\end{proof}

In a first step we truncate the lattice size $N$. Defining
\begin{equation*}
\wt \varrho(t,x) \;\deq\; \E \absb{\scalar{\delta_x}{\ee^{-\ii t \wt H /2} \delta_0}}^2\,,
\end{equation*}
we therefore need to estimate
\begin{equation} \label{error from lattice cutoff}
\sum_x \varphi \pbb{\frac{x}{W^{1 + d \kappa / 2}}} \pB{\wt \varrho(t,x) - \varrho(t,x)}
\end{equation}
for any $\varphi \in C_b(\R^d)$ and $t = W^{d \kappa} T$. Define the diagonal matrix $E$ through
\begin{equation*}
E_{xy} \;\deq\; \varphi \pbb{\frac{x}{W^{1 + d \kappa / 2}}} \delta_{xy}\,.
\end{equation*}
Then the absolute value of \eqref{error from lattice cutoff} is equal to
\begin{align*}
&\mspace{-40mu} \absbb{\E \qbb{\scalarb{\ee^{-\ii t \wt H/2} \delta_0}{E \ee^{-\ii t \wt H/2} \delta_0} - 
\scalarb{\ee^{-\ii t H/2} \delta_0}{E \ee^{-\ii t H/2} \delta_0}}}
\\
&\leq\; \E \normb{\ee^{- \ii t \wt H / 2} \delta_0 - \ee^{-\ii t H / 2} \delta_0} \pB{\normb{E \ee^{-\ii t H /2} 
\delta_0}
+
\normb{E \ee^{-\ii t \wt H /2} \delta_0}}
\\
&\leq\; C \, \E \normb{\ee^{- \ii t \wt H / 2} \delta_0 - \ee^{-\ii t H / 2} \delta_0}\,,
\end{align*}
where we used that $H$ and $\wt H$ are Hermitian, and $\norm{E} \leq C$. Using Proposition \ref{proposition: finite 
speed of propagation} we therefore conclude that \eqref{error from lattice cutoff} vanishes as $W \to \infty$, uniformly 
for $t \leq M$. Note that the matrix $a(W,N) \wt H$, where $a(W,N) \deq \frac{M(W,N,f)}{M(W, \wt N, f)}$, satisfies
\eqref{variances sum to one}. Since $\lim_{W\to\infty} a(W, N) = 1$, is is enough to prove Theorem \ref{theorem: main 
result} for the matrix $a(W,N) \wt H$ (it is straightforward to check that replacing $T$ with $a(W,N) T$ in our proof 
has no effect).

We conclude that it is enough to prove Theorem \ref{theorem: main result} for
\begin{equation} \label{bound on N}
N \leq W^{10d + 16}\,.
\end{equation}
We shall always assume \eqref{bound on N} from now on.

\subsection{Truncation in $\abs{A_{xy}}$}
In a second step we truncate the support of the entries of $A$. Let $\delta$ satisfy
\begin{equation} \label{condition on delta}
0 \;<\; 12 \delta \;<\; 1/3 - \kappa
\end{equation}
and define the matrix $\wh A$ through
\begin{equation} \label{truncation of matrix elements}
\wh A_{xy} \;\deq\; A_{xy} \, \ind{\abs{A_{xy}} \leq M^\delta}\,.
\end{equation}

In following we adopt the convention that adding a hat $\wh{(\cdot)}$ to a quantity $(\cdot)$ means that in the 
definition of $(\cdot)$ we replace $A$ with $\wh A$.
In particular, we set
\begin{equation*}
\wh H_{xy}: = \sigma_{xy} \wh A_{xy}\qquad \text{and}
\qquad \wh \varrho(t,x) \;\deq\; \E \absb{\scalar{\delta_x}{\ee^{-\ii t \wh H /2} \delta_0}}^2.
\end{equation*}
By the uniform subexponential decay of the entries \eqref{subexponential decay}, we have
\begin{equation*}
\P (\wh H_{xy} \neq H_{xy}) \;\leq\; 2 \, \P (\abs{A_{xy}} > M^\delta) \;\leq\; 2 \beta \, \ee^{-M^{\alpha \delta}}\,.
\end{equation*}
Therefore
\begin{equation} \label{probability bound for entry cutoff}
\P(\wh H \neq H) \;\leq\; \sum_{x,y} \P(\wh H_{xy} \neq H_{xy}) \;\leq\; 2 \, \beta \, N^{2d} \ee^{-M^{\alpha 
\delta}}\,.
\end{equation}

It is now easy to prove the main result of this subsection.

\begin{proposition} \label{proposition: cutoff in H}
We have
\begin{equation*}
\sum_x \absb{\wh \varrho(t,x) - \varrho(t,x)} \;\leq\; C \ee^{-M^c}\,.
\end{equation*}
\end{proposition}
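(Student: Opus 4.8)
The plan is to bound the difference $\sum_x \abs{\wh\varrho(t,x) - \varrho(t,x)}$ by splitting the probability space according to whether the event $\{\wh H = H\}$ holds, and to control the contributions on the complementary event crudely using unitarity and the probability bound \eqref{probability bound for entry cutoff}. Recall that $\varrho(t,x) = \E\absb{\scalar{\delta_x}{\ee^{-\ii t H/2}\delta_0}}^2$ and similarly for $\wh\varrho$, and that on the event $\{\wh H = H\}$ the two random variables inside the expectations coincide pointwise. Hence the difference $\wh\varrho(t,x) - \varrho(t,x)$ equals $\E\qB{\pb{\absb{\scalar{\delta_x}{\ee^{-\ii t \wh H/2}\delta_0}}^2 - \absb{\scalar{\delta_x}{\ee^{-\ii t H/2}\delta_0}}^2}\ind{\wh H \neq H}}$.

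The first step is to sum in $x$ \emph{inside} the expectation, which is permissible since everything is nonnegative after taking absolute values. For a fixed $\omega$ in the event $\{\wh H \neq H\}$ we have
\begin{equation*}
\sum_x \absbb{\absb{\scalar{\delta_x}{\ee^{-\ii t \wh H/2}\delta_0}}^2 - \absb{\scalar{\delta_x}{\ee^{-\ii t H/2}\delta_0}}^2} \;\leq\; \sum_x \absb{\scalar{\delta_x}{\ee^{-\ii t \wh H/2}\delta_0}}^2 + \sum_x \absb{\scalar{\delta_x}{\ee^{-\ii t H/2}\delta_0}}^2 \;=\; 2\,,
\end{equation*}
since $\ee^{-\ii t \wh H/2}$ and $\ee^{-\ii t H/2}$ are unitary and $\norm{\delta_0} = 1$; here I use \eqref{rho is a probability}, or rather its deterministic counterpart before taking $\E$. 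Therefore
\begin{equation*}
\sum_x \absb{\wh\varrho(t,x) - \varrho(t,x)} \;\leq\; \E\qB{2\,\ind{\wh H \neq H}} \;=\; 2\,\P(\wh H \neq H)\,.
\end{equation*}

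The final step is to insert the bound \eqref{probability bound for entry cutoff}, giving $\sum_x \absb{\wh\varrho(t,x) - \varrho(t,x)} \leq 4\beta N^{2d}\ee^{-M^{\alpha\delta}}$. Since we have already reduced to $N \leq W^{10d+16}$ by \eqref{bound on N}, and $M \sim W^d$ by \eqref{equivalence of M and W}, the prefactor $N^{2d}$ is only polynomial in $M$, so $N^{2d}\ee^{-M^{\alpha\delta}} \leq \ee^{-M^c}$ for a suitable small constant $c > 0$ and $W$ large; this absorbs $\beta$ and the factor $4$ into the estimate $C\ee^{-M^c}$ as claimed. There is essentially no obstacle here: the only thing to be slightly careful about is the interchange of $\sum_x$ and $\E$ (justified by Tonelli, everything being nonnegative once absolute values are taken) and making sure the polynomial $N^{2d}$ really is beaten by $\ee^{-M^{\alpha\delta}}$, which is immediate given \eqref{bound on N} and the choice of $\delta$ in \eqref{condition on delta}. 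The proposition is thus a short and routine consequence of unitarity together with the subexponential tail bound \eqref{subexponential decay}.
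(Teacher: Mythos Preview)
Your argument is correct and follows the same idea as the paper's proof: on the event $\{\wh H = H\}$ the integrands agree, and on the complement you bound crudely using unitarity and \eqref{probability bound for entry cutoff}. Your version is in fact marginally sharper---by summing over $x$ inside the expectation and using $\sum_x \abs{\scalar{\delta_x}{\ee^{-\ii t H/2}\delta_0}}^2 = 1$, you get $2\,\P(\wh H \neq H) \leq 4\beta N^{2d}\ee^{-M^{\alpha\delta}}$, whereas the paper bounds each summand by $2\,\P(\wh H \neq H)$ and then sums trivially over $x$ to obtain $2N^d\,\P(\wh H\neq H)\leq 4\beta N^{3d}\ee^{-M^{\alpha\delta}}$; either way the polynomial prefactor is absorbed by the stretched exponential via \eqref{bound on N}.
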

\begin{proof}
Using the bound $\abs{\varrho(t,x)} \leq 1$, \eqref{probability bound for entry cutoff}, and \eqref{bound on N} we find
\begin{equation*}
\sum_x \absb{\wh \varrho(t,x) - \varrho(t,x)} \;\leq\; 2 N^d \, \P(\wh H \neq H) \;\leq\; 4 \, \beta \, N^{3d} \ee^{-M^{\alpha 
\delta}} \;\leq\; C \ee^{-M^c}\,. \qedhere
\end{equation*}
\end{proof}

Note that, by the definition \eqref{truncation of matrix elements}, the law of $\wh A_{xy}$ is symmetric. In particular, 
$\wh H$ satisfies \eqref{condition on moments}. Moreover, we have the following bounds on the variance of $\wh H_{xy}$.
\begin{lemma} \label{lemma: bounds on truncated variance}
There is a constant $C$ independent of $x$ and $y$ such that
\begin{equation*}
\pb{1 - C \ee^{-M^{\alpha \delta / 2}}} \sigma_{xy}^2 \;\leq\; \E \abs{\wh H_{xy}}^2 \;\leq\; \sigma_{xy}^2\,.
\end{equation*}
\end{lemma}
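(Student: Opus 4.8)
The statement to prove is Lemma~\ref{lemma: bounds on truncated variance}: a two-sided bound comparing $\E|\wh H_{xy}|^2$ with $\sigma_{xy}^2$.

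\medskip

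The plan is to reduce everything to the single-variable estimate comparing $\E[|A_{xy}|^2 \ind{|A_{xy}|\le M^\delta}]$ with $\E|A_{xy}|^2 = 1$, since $\wh H_{xy} = \sigma_{xy}\wh A_{xy}$ and $\sigma_{xy}$ is deterministic, so that $\E|\wh H_{xy}|^2 = \sigma_{xy}^2\, \E|\wh A_{xy}|^2 = \sigma_{xy}^2\, \E[|A_{xy}|^2\ind{|A_{xy}|\le M^\delta}]$. The upper bound is then immediate: $|\wh A_{xy}|^2 \le |A_{xy}|^2$ pointwise, so $\E|\wh H_{xy}|^2 \le \sigma_{xy}^2$. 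For the lower bound I would write
\begin{equation*}
\E|A_{xy}|^2 - \E\bigl[|A_{xy}|^2\ind{|A_{xy}|\le M^\delta}\bigr] \;=\; \E\bigl[|A_{xy}|^2\ind{|A_{xy}| > M^\delta}\bigr]
\end{equation*}
and bound the tail integral using the subexponential decay \eqref{subexponential decay}.

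\medskip

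Concretely, by the layer-cake formula and integration by parts,
\begin{equation*}
\E\bigl[|A_{xy}|^2\ind{|A_{xy}| > M^\delta}\bigr] \;=\; M^{2\delta}\,\P(|A_{xy}| > M^\delta) + \int_{M^\delta}^\infty 2\xi\, \P(|A_{xy}| > \xi)\,\dd\xi\,,
\end{equation*}
and each term is controlled by \eqref{subexponential decay}: the first is at most $\beta M^{2\delta}\ee^{-M^{\alpha\delta}}$, and the integral is at most $\int_{M^\delta}^\infty 2\xi\,\beta\ee^{-\xi^\alpha}\,\dd\xi$, which (after a substitution $u = \xi^\alpha$, or simply by the crude bound $\ee^{-\xi^\alpha}\le \ee^{-M^{\alpha\delta}/2}\ee^{-\xi^\alpha/2}$ for $\xi\ge M^\delta$ together with the convergence of $\int \xi\ee^{-\xi^\alpha/2}\dd\xi$) is bounded by $C\ee^{-M^{\alpha\delta}/2}$. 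Absorbing the polynomial prefactor $M^{2\delta}$ into the exponential (valid since $\ee^{-M^{\alpha\delta}/2}$ decays faster than any inverse power of $M$, uniformly in $W$), we obtain a constant $C$, independent of $x,y$, with $\E[|A_{xy}|^2\ind{|A_{xy}|>M^\delta}] \le C\ee^{-M^{\alpha\delta/2}}$. Hence $\E|\wh A_{xy}|^2 \ge 1 - C\ee^{-M^{\alpha\delta/2}}$, and multiplying by $\sigma_{xy}^2$ gives the claimed lower bound. The uniformity in $x,y$ is automatic because $\alpha,\beta$ in \eqref{subexponential decay} are independent of $x,y$.

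\medskip

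There is essentially no obstacle here; the only mild point requiring care is that the bound must be genuinely uniform in $x$, $y$, $W$, and $N$, which forces one to carry the tail estimate through with constants depending only on $\alpha,\beta$ and to check that the polynomial-in-$M$ corrections are harmless — this is why the exponent $\alpha\delta/2$ rather than $\alpha\delta$ appears in the statement, giving room to swallow the prefactor. Everything else is a routine tail computation.
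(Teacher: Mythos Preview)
Your proof is correct and follows essentially the same approach as the paper: both reduce to bounding the tail second moment $\E\bigl[|A_{xy}|^2\ind{|A_{xy}|>M^\delta}\bigr]$ via the subexponential tail assumption \eqref{subexponential decay}, and then absorb the resulting polynomial prefactor into the exponential to produce the exponent $\alpha\delta/2$. The only cosmetic difference is that the paper writes the layer-cake formula as $\int_0^\infty \P\bigl(|A_{xy}|\ge\max(\sqrt{s},M^\delta)\bigr)\,\dd s$ rather than your integration-by-parts form, which is an equivalent rearrangement.
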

\begin{proof}
The upper bound is obvious from \eqref{truncation of matrix elements}. In order to prove the lower bound, we write
\begin{align*}
\sigma_{xy}^2 - \E \abs{\wh H_{xy}}^2 &\;=\; \sigma_{xy}^2 \, \E \pb{\abs{A_{xy}}^2 - \abs{\wh A_{xy}}^2}
\\
&\;\leq\; \sigma_{xy}^2 \, \E \abs{A_{xy}}^2 \, \ind{\abs{A_{xy}} \geq M^\delta}
\\
&\;=\; \sigma_{xy}^2 \int_0^\infty \dd s \; \P \pb{\abs{A_{xy}} \geq \max(\sqrt{s}, M^\delta)}
\\
&\;\leq\; \sigma_{xy}^2 \,  \beta \int_0^\infty \dd s \; \ee^{- \max(\sqrt{s}, M^\delta)^\alpha}\,,
\end{align*}
which yields the claim.
\end{proof}

\subsection{The tail of the expansion}
Now we control the tail of the expansion
\begin{equation} \label{first expansion cut off}
\wh \varrho(t,x) \;=\;
\E \qB{\absb{\scalar{\delta_x}{\ee^{-\ii t \wh H /2} \delta_0}}^2 }
\;=\;
\sum_{n, n' \geq 0}  \alpha_{n}(t) \, \ol{\alpha_{n'}(t)} \;\E \qB{\pb{\wt U_{n}(\wh H)}_{0x} \pb{\wt U_{n'}(\wh 
H)}_{x0}}\,.
\end{equation}
As observed in \cite{erdosknowles}, the coefficient $\alpha_n(t)$ is very small for $n \gg t$. Thus, we choose a cutoff 
exponent $\mu$ satisfying
\begin{equation} \label{assumption on kappa and mu}
\kappa + 4 \delta \;<\; \mu \;<\; 1/3 - 8 \delta\,.
\end{equation}
The key ingredient for controlling the tail, i.e.\ the terms $n + n' \geq M^\mu$ in \eqref{first expansion cut off}, is 
the following a priori estimate on the norm of $\wh H$.
\begin{proposition} \label{proposition: bound on H}
There are constants $C, \epsilon > 0$, depending on $\delta$, such that
\begin{equation*}
\P \pB{\norm{\wh H} \geq C M^{2 \delta}} \;\leq\; M^{-\epsilon M}
\end{equation*}
for $M$ large enough.
\end{proposition}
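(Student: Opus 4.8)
The plan is to bound $\norm{\wh H}$ by a high moment of the trace, namely via the standard inequality $\norm{\wh H}^{2p} \leq \tr (\wh H^{2p}) = \sum_x (\wh H^{2p})_{xx}$, and then to estimate $\E \tr (\wh H^{2p})$ by expanding into a sum over closed paths of length $2p$ in $\Lambda_N$. Choosing $p$ of order $M^{c'}$ for a small constant $c'$ (in fact one wants $p$ comparable to $\log(N^d)$ up to the desired probabilistic weight, but since $\log N \lesssim \log W$ by \eqref{bound on N} and $M \sim W^d$, one may take $p$ a small power of $M$), Markov's inequality then gives $\P(\norm{\wh H} \geq K) \leq K^{-2p} \E \tr(\wh H^{2p})$, and it remains to show $\E \tr(\wh H^{2p}) \leq (C M^{2\delta})^{2p} N^d$, say, which after absorbing the $N^d$ factor into the exponent yields the claimed $M^{-\epsilon M}$-type bound once $p$ is large enough.

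First I would set up the path expansion: $\E \tr(\wh H^{2p}) = \sum_{x_0, x_1, \dots, x_{2p-1}} \E \qb{\wh H_{x_0 x_1} \wh H_{x_1 x_2} \cdots \wh H_{x_{2p-1} x_0}}$, where the expectation is nonzero only if every unordered pair $\{x_i, x_{i+1}\}$ that appears, appears at least twice (since the entries are independent with mean zero). The key structural input is the variance normalization \eqref{variances sum to one}, $\sum_y \sigma_{xy}^2 = 1$, which lets one sum freely over vertices joined by an edge traversed exactly twice; and the truncation bound $\abs{\wh A_{xy}} \leq M^\delta$ from \eqref{truncation of matrix elements}, which controls the contribution of edges traversed more often. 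Concretely, an edge traversed $k \geq 2$ times contributes a factor $\E \abs{\wh H_{x y}}^k \leq \sigma_{xy}^2 (M^\delta)^{k-2} \cdot \max_{x,y}\sigma_{xy} \cdot(\text{combinatorial bookkeeping})$; more simply, using $\sigma_{xy}^2 \leq 1/M$ and $\abs{\wh H_{xy}} \leq M^\delta \sigma_{xy}$ one gets $\E \abs{\wh H_{xy}}^k \leq M^{(k-2)\delta}\sigma_{xy}^2$ for $k \geq 2$, so that "excess" traversals cost only powers of $M^\delta$.

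The combinatorial heart of the argument is the standard moment-method count: a closed path of length $2p$ on which every edge is used at least twice, viewed up to the choice of the vertex set it visits, corresponds (in the "leading" case where the path is a doubled tree / the associated graph is a tree with each edge doubled) to a Catalan-type structure, giving at most $C^p$ such shapes; each shape has at most $p+1$ vertices, and the free sums over those vertices each produce a factor $\sum_y \sigma_{xy}^2 = 1$ by \eqref{variances sum to one} (the root $x_0$ is summed at the end, producing the overall $N^d$). Paths whose edges are used more than twice, or which create cycles, are fewer in number but might a priori lose factors; here one uses that each extra coincidence either reduces the number of free vertices (saving a sum) at the cost of a bounded-by-$M^{2\delta}$ factor, or is paid for by the $M^\delta$ truncation bound — and crucially one checks that the number of "excess" structures is bounded by $p^{O(1)}$ per excess, which is absorbed since $p$ is a small power of $M$. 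Assembling: $\E\tr(\wh H^{2p}) \leq N^d \cdot C^p \cdot p^{O(p)} \cdot (M^{2\delta})^{p}$ — wait, the $p^{O(p)}$ would be too large, so one must be careful — in fact the correct bookkeeping (as in Füredi–Komlós / Vu) gives $\E\tr(\wh H^{2p}) \leq N^d (C M^{2\delta})^{2p}$ with no superexponential loss, because each excess traversal is charged a factor $M^{\delta}$ and the number of excess-bearing shapes of a path of length $2p$ is bounded by $(Cp)^{O(\text{excess})}$ with the excess small; the precise accounting is routine but tedious, and this is the step I expect to be the main obstacle. With that in hand, taking $K = CM^{2\delta}$ with $C$ slightly larger and $p = c'' M$ (or the largest $p$ for which the combinatorial bounds remain valid, which suffices to beat $N^d \sim W^{Cd}$) gives $\P(\norm{\wh H} \geq CM^{2\delta}) \leq N^d K^{-2p}(CM^{2\delta})^{2p} \cdot (\text{small}) \leq M^{-\epsilon M}$, completing the proof. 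I would point to the analogous computations in \cite{erdosknowles} and to Section \ref{section: bound on lambda max} of the present paper, where a sharper version of this path count (good enough to reach $2 + M^{-2/3+\epsilon}$) is carried out in detail; here only the crude bound $CM^{2\delta}$ is needed, so a more wasteful count suffices.
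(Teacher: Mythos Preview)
Your approach via the trace--moment method is a natural first instinct, but it is genuinely different from the paper's proof, and as written it contains a real gap at precisely the step you flag as ``routine but tedious''.

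The paper (Appendix \ref{section: proof of bound on H}) does \emph{not} expand $\tr(\wh H^{2p})$ into paths.  Instead it partitions $\Lambda_N$ into cubes of side $W$, writes $\wh H$ in $W^d\times W^d$ blocks $\wh H_{AB}$, and bounds $\norm{\wh H}$ via Schur's inequality by $\sup_A\sum_B\norm{\wh H_{AB}}$.  For each block and each pair of unit vectors $\psi_1,\psi_2$ it applies the Marcinkiewicz--Zygmund inequality with $p=\nu M$ to the bilinear form $\scalar{\psi_1}{\wh H_{AB}\psi_2}$ (a genuine sum of independent variables), obtaining $\P(|\scalar{\psi_1}{\wh H_{AB}\psi_2}|\ge M^{2\delta}g(A-B))\le (C\nu/M^{2\delta})^{\nu M/2}$.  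An $\epsilon$-net argument over $\C^{W^d}$ then controls $\norm{\wh H_{AB}}$, and the decay of $g(A)=\sqrt{\wt f([A]_{2L})}$ makes $\sum_A g(A)<\infty$.  The crucial point is that the $\epsilon$-net lives in $\C^{W^d}=\C^M$, so its cardinality is $M^{O(M)}$ and can be beaten by taking $\nu$ large.

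Your argument, by contrast, needs $\E\tr(\wh H^{2p})\le N^d C^{2p}$ (or $N^d (CM^{2\delta})^{2p}$) for $p$ of order $M$: anything smaller gives at best $\P\le M^{-C}$, not $M^{-\epsilon M}$.  But the F\"uredi--Koml\'os / Vu path-counting you invoke is only known to control $\E\tr(\wh H^{2p})$ for $p$ up to a \emph{small power} of $M$ (at most $p\lesssim M^{1/2}$ or $M^{2/3}$, depending on the version).  For $p\sim M$ the excess $r=p+1-v$ can be as large as $p$, and the number of walk-shapes with excess $r$ grows like $(Cp)^{cr}$; with $p\sim M$ this factor $(CM)^{cr}$ is not beaten by the per-shape saving (which, after the spanning-tree summation and the $M^{(k_e-2)\delta}$ bound, is at best $M^{-(1-2\delta)r}$ in the favourable regime and can be as bad as $M^{2r\delta}$ for tree-type walks with many repeated edges).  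Your appeal to Section \ref{section: bound on lambda max} does not help either: Proposition \ref{proposition: estimate of trace} only treats $n\le M^\kappa$ with $\kappa<1/3$, and it relies on the full nonbacktracking/Chebyshev machinery of Sections \ref{section: path expansion}--\ref{section: boughs for kappa=1/3}, not on a raw count of $\tr \wh H^{2p}$.

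In short: the moment method may be salvageable, but the ``routine'' bookkeeping you defer is in fact the entire difficulty at $p\sim M$, and standard references do not cover it.  The paper's block decomposition is designed precisely to sidestep this, replacing global path combinatorics by a local Marcinkiewicz--Zygmund bound whose only combinatorial cost is an $\epsilon$-net over $\C^M$.
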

\begin{proof}
See Appendix \ref{section: proof of bound on H}.
\end{proof}

Defining
\begin{equation*}
\wh \varrho_b(t,x) \;\deq\; \E \qbb{\absb{\scalar{\delta_x}{\ee^{-\ii t \wh H /2} \delta_0}}^2 \, 
\indB{\norm{\wh H} \leq C M^{2 \delta}}}\,,
\end{equation*}
we find, using Proposition \ref{proposition: bound on H}, that
\begin{equation*} \label{difference between rho hat rho_b}
\sum_x \absb{\wh \varrho(t,x) - \wh \varrho_b(t,x)} \;\leq\; N^d \, \P \pB{\norm{\wh H} \geq C M^{2 \delta}} \;\leq\; 
N^d M^{-\epsilon M} \;\leq\; C M^{-c M}\,.
\end{equation*}

Next, write
\begin{equation} \label{expansion of cutoff distribution}
\wh \varrho_b(t,x) \;=\; \sum_{n,n' \geq 0} \alpha_{n}(t) \, \ol{\alpha_{n'}(t)} \;\E \qB{\pb{\wt U_{n}(\wh H)}_{0x} 
\pb{\wt U_{n'}(\wh H)}_{x0} \, \indB{\norm{\wh H} \leq C M^{2 \delta}}}\,.
\end{equation}
Split $\wh \varrho_b(t,x) = \wh \varrho_{b, \leq}(t,x) + \wh \varrho_{b, >}(t,x)$ by splitting the summation over $n,n'$ 
in \eqref{expansion of cutoff distribution} into the parts $n+n' \leq M^\mu$ and $n+n' > M^\mu$.

We now estimate $\sum_{x} \abs{\wh \varrho_{b, >}(W^{d \kappa} T,x)}$. To this end, we use the following rough estimate 
on Chebyshev polynomials.

\begin{lemma} \label{lemma: estimate on Chebyshev}
For any $n \in \N$ and $\xi \in \R$ we have
\begin{equation*}
\abs{\wt U_n(\xi)} \;\leq\; C^n (1 + \abs{\xi})^n\,.
\end{equation*}
\end{lemma}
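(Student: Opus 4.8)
The plan is to prove the bound by a straightforward induction on $n$, using the three-term recurrence \eqref{recursion for Chebyshev}. Concretely, I would fix a constant $C \geq 2$ (the golden ratio $(1+\sqrt 5)/2$ is the optimal choice, but $C=2$ suffices) and show by induction that $\abs{\wt U_n(\xi)} \leq C^n (1 + \abs{\xi})^n$ for all $\xi \in \R$ and all $n \in \N$. The base cases $n = 0$ and $n = 1$ are immediate from $\wt U_0(\xi) = 1$ and $\wt U_1(\xi) = \xi$, since $\abs{\xi} \leq 1 + \abs{\xi} \leq C (1 + \abs{\xi})$.

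For the inductive step, suppose the bound holds for $n-1$ and $n-2$. Then by \eqref{recursion for Chebyshev} and the triangle inequality,
\begin{equation*}
\abs{\wt U_n(\xi)} \;\leq\; \abs{\xi} \, \abs{\wt U_{n-1}(\xi)} + \abs{\wt U_{n-2}(\xi)} \;\leq\; \abs{\xi} \, C^{n-1} (1 + \abs{\xi})^{n-1} + C^{n-2} (1 + \abs{\xi})^{n-2}\,.
\end{equation*}
Bounding $\abs{\xi} \leq 1 + \abs{\xi}$ in the first term and $1 \leq (1+\abs{\xi})^2$ in the second, the right-hand side is at most $(C^{n-1} + C^{n-2})(1 + \abs{\xi})^n$, so it remains to check $C^{n-1} + C^{n-2} \leq C^n$, i.e.\ $C^{-1} + C^{-2} \leq 1$, which holds for any $C \geq 2$. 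This closes the induction.

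There is essentially no serious obstacle here; the only point requiring a moment's care is that the constant $C$ must be chosen once and for all, independently of $n$ and $\xi$, which the computation above confirms. (An alternative, if one prefers a non-inductive argument, is to use the explicit formula: for $\abs{\xi} \leq 2$ write $\xi = 2\cos\theta$ and use $\abs{\wt U_n(\xi)} = \abs{\sin((n+1)\theta)/\sin\theta} \leq n+1$, while for $\abs{\xi} > 2$ write $\xi = \lambda + \lambda^{-1}$ with $\abs{\lambda} > 1$, so that $\abs{\wt U_n(\xi)} = \absb{(\lambda^{n+1} - \lambda^{-(n+1)})/(\lambda - \lambda^{-1})} \leq (n+1)\abs{\lambda}^n \leq (n+1)(1+\abs{\xi})^n$, and then absorb the factor $n+1$ into $C^n$. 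The induction is cleaner and I would use it.)
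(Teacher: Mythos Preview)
Your proof is correct and uses essentially the same idea as the paper---induction based on the recursion \eqref{recursion for Chebyshev}. The only minor difference is in execution: the paper uses the recursion to show that the coefficients of $\wt U_n$ are bounded in absolute value by $2^n$, giving $\abs{\wt U_n(\xi)} \leq (n+1)2^n(1+\abs{\xi})^n$ and then absorbing $(n+1)$ into $C^n$; you instead bound $\abs{\wt U_n(\xi)}$ directly, which is slightly cleaner since it avoids the extraneous factor $(n+1)$ and gives an explicit admissible constant $C=2$.
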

\begin{proof}
The recursion relation \eqref{recursion for Chebyshev} combined with a simple induction argument shows that the 
coefficients of $\wt U_n$ are bounded in absolute value by $2^n$. This implies that
\begin{equation*}
\abs{\wt U_n(\xi)} \;\leq\; (n + 1) 2^n (1 + \abs{\xi})^n\,,
\end{equation*}
and the claim follows.
\end{proof}

Using Lemma \ref{lemma: estimate on Chebyshev} we therefore get
\begin{align*}
\sum_x \absb{\varrho_{b, >}(t,x)} &\;\leq\; \sum_x \sum_{n + n' > M^\mu} \absb{\alpha_{n}(t) \, \alpha_{n'}(t)} \; 
\absbb{\E \qB{\pb{\wt U_{n}(\wh H)}_{0x} \pb{\wt U_{n'}(\wh H)}_{x0} \, \indB{\norm{\wh H} \leq C M^{2 \delta}}}}
\\
&\;\leq\; N^d \sum_{n+n' > M^\mu} \absb{\alpha_{n}(t) \, \alpha_{n'}(t)} \;\absbb{\E \qB{\normb{\wt U_{n}(\wh H)} 
\normb{\wt U_{n'}(\wh H)} \, \indB{\norm{\wh H} \leq C M^{2 \delta}}}}
\\
&\;\leq\; N^d \sum_{n+n' > M^\mu} \absb{\alpha_{n}(t) \, \alpha_{n'}(t)} \; \pb{C M^{2 \delta}}^{n+n'}\,.
\end{align*}
Now from \eqref{bound on Bessel function} we get
\begin{equation*}
\abs{\alpha_{n}(t) \alpha_{n'}(t)} \;\leq\; C \frac{t^{n+n'}}{n!\, n'!} \;\leq\; C \frac{(2t)^{n+n'}}{(n + n')!}\,.
\end{equation*}
Therefore
\begin{multline} \label{tail vanishes}
\sum_x \absb{\varrho_{b, >}(W^{d \kappa} T,x)} \;\leq\; N^d \sum_{n+n' > M^\mu} \pbb{\frac{C W^{d \kappa}T}{n + 
n'}}^{n + n'} \pb{C M^{2 \delta}}^{n+n'}
\\
\leq\; N^d \sum_{n + n' > M^\mu} \pb{CT M^{\kappa + 2 \delta - \mu}}^{n + n'} \;\leq\; N^d \pb{CT M^{\kappa + 2 \delta - 
\mu}}^{M^\mu} \;\leq\; C M^{-c M^\mu}\,.
\end{multline}

Let us now consider the main term $\wh \varrho_{b,\leq}(t,x)$. In order to get a graph expansion scheme from 
\eqref{condition on moments}, we need to get rid of the conditioning on the norm of $\wh H$, i.e.\ recover the 
expression
\begin{equation} \label{definition of rho small}
\wh \varrho_{\leq}(t,x) \;\deq\; \sum_{n + n' \leq M^\mu}  \alpha_{n}(t) \, \ol{\alpha_{n'}(t)} \;\E \qB{\pb{\wt 
U_{n}(\wh H)}_{0x} \pb{\wt U_{n'}(\wh H)}_{x0}}\,.
\end{equation}
Therefore we need to estimate
\begin{multline*}
\sum_x \absb{\wh \varrho_{b,\leq}(W^{d \kappa} T,x) - \wh \varrho_{\leq}(W^{d \kappa} T,x)} \\
\leq\; \sum_x \sum_{n + n' \leq M^\mu} \absb{\alpha_{n}(W^{d \kappa} T) \, \alpha_{n'}(W^{d \kappa} T)}
\, \absbb{\E \qB{\pb{\wt U_{n}(\wh H)}_{0x} \pb{\wt U_{n'}(\wh H)}_{x0} \, \indB{\norm{\wh H} > C M^{2 \delta}}}}\,.
\end{multline*}
The expectation is estimated, using Lemma \ref{lemma: estimate on Chebyshev}, by
\begin{align*}
&\mspace{-40mu}\sum_x \absbb{\E \qB{\pb{\wt U_{n}(\wh H)}_{0x} \pb{\wt U_{n'}(\wh H)}_{x0} \, \indB{\norm{\wh H} > C 
M^{2 \delta}}}}\\
&\leq\; C^{n+n'} \E \qbb{\pb{1 + \norm{\wh H}}^{n + n'} \, \indB{\norm{\wh H} > C M^{2 \delta}}}
\\
&\leq\; C^{n + n'} \, (N^d M^\delta)^{n + n'}  \, \P \pB{\norm{\wh H} > C M^{2 \delta}}\,,
\end{align*}
where in the last step we used the trivial bound
\begin{equation*}
\norm{\wh H} \;\leq\; N^d \, M^\delta\,.
\end{equation*}
Thus, using \eqref{Bessel functions are bounded}, \eqref{bound on N}, and Proposition \ref{proposition: bound on H}, we 
find
\begin{equation} \label{difference between rho b and rho}
\sum_x \absb{\wh \varrho_{b,\leq}(W^{d \kappa} T,x) - \wh \varrho_{\leq}(W^{d \kappa} T,x)}
\;\leq\; M^{C M^\mu} \P \pB{\norm{\wh H} > C M^{2 \delta}} \;\leq\; M^{C M^\mu} M^{-\epsilon M} \;\leq\; M^{- M^c}\,
\end{equation}
as $W \to \infty$.

The following proposition summarizes our results
from this section. It shows that on time scales $t\lesssim W^{d\kappa}$, instead of the original
density $\varrho(x,t)$ defined in \eqref{def:density} it will be sufficient to
deal with the density $\wh \varrho_{\leq}(x,t)$ of the truncated dynamics  defined in \eqref{definition of rho small}. 
In the rest of the paper we shall work with  $\wh \varrho_{\leq}(x,t)$.

\begin{proposition}\label{prop:summary of truncation}
We have
\begin{equation*}
\sum_x \absb{\varrho(W^{d \kappa} T, x) - \wh \varrho_{\leq}(W^{d \kappa} T,x)} \;\leq\; M^{-M^c}
\end{equation*}
for some $c>0$, where $\wh \varrho_{\leq}$ is defined in \eqref{definition of rho small}.
\end{proposition}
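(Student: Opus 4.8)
The plan is to collect all the error estimates already established in this section and combine them via the triangle inequality. Concretely, I would chain together the following bounds, each of which has been proved above: first, the lattice-size truncation (Proposition~\ref{proposition: finite speed of propagation} and the subsequent discussion) shows that passing from $\varrho$ to $\wt\varrho$ costs a vanishing error, and moreover that it suffices to work under the assumption \eqref{bound on N}; second, Proposition~\ref{proposition: cutoff in H} bounds $\sum_x \absb{\wh\varrho(t,x) - \varrho(t,x)}$ by $C\ee^{-M^c}$ (applied with $\wt H$ in place of $H$); third, the a priori norm bound Proposition~\ref{proposition: bound on H} gives $\sum_x \absb{\wh\varrho(t,x) - \wh\varrho_b(t,x)} \leq C M^{-cM}$; fourth, \eqref{tail vanishes} controls the large-$(n,n')$ tail, $\sum_x \absb{\wh\varrho_{b,>}(W^{d\kappa}T,x)} \leq C M^{-cM^\mu}$; and finally \eqref{difference between rho b and rho} removes the conditioning on $\norm{\wh H}$ in the remaining terms, $\sum_x \absb{\wh\varrho_{b,\leq}(W^{d\kappa}T,x) - \wh\varrho_{\leq}(W^{d\kappa}T,x)} \leq M^{-M^c}$.

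The main point is that all five contributions are stretched-exponentially small in a positive power of $M$, so their sum is dominated by the worst of them, which is of the form $M^{-M^c}$ for a suitable $c > 0$ (after possibly shrinking $c$ and absorbing constants). I would write $\wh\varrho_b = \wh\varrho_{b,\leq} + \wh\varrho_{b,>}$ and insert the telescoping decomposition
\begin{equation*}
\varrho - \wh\varrho_{\leq} \;=\; (\varrho - \wt\varrho) \;+\; (\wt\varrho - \wh\varrho) \;+\; (\wh\varrho - \wh\varrho_b) \;+\; \wh\varrho_{b,>} \;+\; (\wh\varrho_{b,\leq} - \wh\varrho_{\leq})\,,
\end{equation*}
where all quantities are evaluated at $(W^{d\kappa}T, x)$, apply the triangle inequality under $\sum_x$, and bound each of the five terms by the corresponding estimate cited above. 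Note that, as remarked after Proposition~\ref{proposition: finite speed of propagation}, one should really carry out this argument for the rescaled matrix $a(W,N)\wt H$ in place of $\wt H$; since $a(W,N) \to 1$ and rescaling $T$ by $a(W,N)$ has no effect on any of the estimates, I would simply note this reduction once and then suppress it, treating $\wt H$ as if it already satisfied \eqref{variances sum to one}.

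There is no real obstacle here: the proposition is a bookkeeping statement that packages the output of the truncation section into a single clean inequality for use in the rest of the paper. The only mild subtlety is to make sure the time scale $t = W^{d\kappa}T$ satisfies the hypotheses of each cited result — in particular $t \leq M$, which holds for $W$ large since $\kappa < 1/3 < 1$ and $M \sim W^d$ — and to check that the various exponents ($\mu$, $\delta$, $\alpha\delta$, etc.) appearing in the five bounds are all strictly positive under the standing assumptions \eqref{condition on delta} and \eqref{assumption on kappa and mu}, so that a uniform $c > 0$ can indeed be extracted.
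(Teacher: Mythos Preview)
Your approach is correct and matches the paper's: the proof is a triangle-inequality telescope through $\wh\varrho$, $\wh\varrho_b$, $\wh\varrho_{b,\leq}$, citing Proposition~\ref{proposition: cutoff in H} and equations \eqref{difference between rho hat rho_b}, \eqref{tail vanishes}, \eqref{difference between rho b and rho}.

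One clarification: the paper does \emph{not} include the lattice-truncation term $(\varrho - \wt\varrho)$ in this telescope. The proposition is stated under the standing assumption \eqref{bound on N}, which was established earlier in the section; under that assumption $\wt N = N$, hence $\wt H = H$ and $\wt\varrho = \varrho$, so your first term vanishes identically. This matters because the lattice-truncation error is only shown to be $O(1/M)$ in the weak sense (tested against $\varphi \in C_b$), not in $\ell^1$, and certainly not of size $M^{-M^c}$. So if you tried to keep it as a genuine term in the telescope without first invoking \eqref{bound on N}, the claimed bound would fail. You do acknowledge the reduction to \eqref{bound on N}, so your argument is sound; just drop that first term from the decomposition rather than listing it among the ``five contributions'' that are ``stretched-exponentially small''.
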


\begin{proof}
Proposition \ref{prop:summary of truncation} is an immediate consequence of 
Proposition~\ref{proposition: cutoff in H} and the
equations 
\eqref{difference between rho hat rho_b}, \eqref{tail vanishes}, and \eqref{difference between rho b and rho}.
\end{proof}

Note moreover that in the definition \eqref{definition of rho small} the sum ranges only over indices $n$ and $n'$ such 
that $n + n'$ is even. This follows from the fact that $U_n$ is odd (even) for odd (even) $n$, and that $\wh H$ 
satisfies the moment condition \eqref{condition on moments}.

\section{The path expansion} \label{section: path expansion}

In this section we develop a graphical expansion to compute the
matrix elements of $\wt U_n(\wh H)$ needed to evaluate  $\wh \varrho_{\leq}(x,t)$; see \eqref{definition of rho small}.  
The result of this expansion is summarized in Proposition \ref{proposition: U_n as a sum over graphs}, which expresses 
$\wt U_n(\wh H)$ as a sum over graphs. The main idea is that, thanks to the special properties of the Chebyshev
polynomials, we can express $\wt U_n(\wh H)$ in terms of nonbacktracking powers
of $\wh H$, up to some error terms.
The nonbacktracking powers
make it easier to identify the main terms and the error terms
in the computation of  the expectation in \eqref{definition of rho small}.
The expectation will be computed in Section \ref{section: lumping}
by introducing an additional structure, the lumping of edges, to
the graphical representation.  Eventually, the main terms will correspond to certain very simple graphs with a trivial 
lumping (ladders) and their contribution
yields the final limiting equation (Section \ref{section: bare stem}).
 The contribution of all other nontrivial graphs or nontrivial lumpings will be negligible in the $W\to \infty$ limit;
the estimate of these error terms constitutes the rest of the paper.

\subsection{Derivation of the expansion}
For $n \in \N$ abbreviate
\begin{equation*}
U_n \;\deq\; \wt U_n(\wh H)\,.
\end{equation*}
(Note that in \eqref{Chebyshev expansion} $U_n=U_n(\xi)$ denoted
the standard Chebyshev polynomials, but for the rest of the paper
we shall use $U_n$ to denote the matrix $\wt U_n(\wh H)$.)
Thus we have
\begin{subequations} \label{defining properties of U_n}
\begin{equation}
U_0 \;=\; \umat \,,\qquad
U_1 \;=\; \wh H\,,\qquad
U_2 \;=\; \wh H^2 - \umat\,,
\end{equation}
as well as
\begin{equation}
U_n \;=\; \wh H U_{n - 1} - U_{n - 2} \qquad (n \geq 2)\,.
\end{equation}
\end{subequations}

Next, for $n \geq 2$ we define $V_n$ as the $n$-th \emph{nonbacktracking power} of $\wh H$, 
i.e.
\begin{equation*}
(V_n)_{x_0 x_n} \;\deq\; \sum_{x_1, \dots, x_{n - 1}} \qBB{\prod_{i = 0}^{n - 2} \ind{x_i \neq 
x_{i+2}}} \, \wh H_{x_0 x_1} \wh H_{x_1 x_2} \cdots \wh H_{x_{n - 1} x_n}\,.
\end{equation*}
We also define $V_0 \deq \umat$, $V_1 \deq \wh H$, and $V_n \deq 0$ for $n < 0$.
In order to derive a recursion relation for $V_n$, we define the matrices $\Phi_2$ and $\Phi_3$ 
through
\begin{subequations} \label{definition of small matrices}
\begin{align} \label{definition of Phi_2}
(\Phi_2)_{xy} &\;\deq\; \delta_{xy} \pbb{\sum_z \abs{\wh H_{xz}}^2 - 1} \;=\; \delta_{xy} 
\sum_z \pbb{\abs{\wh H_{xz}}^2 - \sigma^2_{xz}}\,,
\\
(\Phi_3)_{xy} &\;\deq\; - \abs{\wh H_{xy}}^2 \wh H_{xy}\,,
\end{align}
\end{subequations}
where in \eqref{definition of Phi_2} we used \eqref{variances sum to one}.
Moreover, we introduce the shorthand $\ul{\Phi_3 V_n}$, defined by
\begin{equation} \label{nonbacktracking product}
(\ul{\Phi_3 V_n})_{x_0 x_{n + 1}} \;\deq\; \sum_{x_1, \dots, x_n} \qBB{\prod_{i = 0}^{n-1} 
\ind{x_i \neq x_{i + 2}}} \, (\Phi_3)_{x_0 x_1} \wh H_{x_1 x_2} \cdots \wh H_{x_n x_{n+1}}\,;
\end{equation}
we use the convention that $\ul{\Phi_3 V_0} = \Phi_3$.
\begin{lemma} \label{lemma: defining properties of V_n}
We have that
\begin{equation*}
V_0 \;=\; \umat \,, \qquad V_1 \;=\; \wh H\,, \qquad V_2 \;=\; \wh H^2 - \umat - \Phi_2\,,
\end{equation*}
as well as
\begin{equation*}
V_n \;=\; \wh H V_{n - 1} - V_{n - 2} - \Phi_2 V_{n - 2} - \ul{\Phi_3 V_{n - 3}} \qquad (n \geq 
2)\,.
\end{equation*}
\end{lemma}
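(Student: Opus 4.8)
The plan is to establish the recursion by a direct combinatorial analysis of the paths contributing to $\wh H V_{n-1}$, isolating which of them fail to be nonbacktracking and matching the ``bad'' contributions to the correction terms $V_{n-2}$, $\Phi_2 V_{n-2}$, and $\ul{\Phi_3 V_{n-3}}$. First I would write out, for $n \geq 3$,
\begin{equation*}
(\wh H V_{n-1})_{x_0 x_n} \;=\; \sum_{x_1} \wh H_{x_0 x_1} (V_{n-1})_{x_1 x_n} \;=\; \sum_{x_1, \dots, x_{n-1}} \qBB{\prod_{i=1}^{n-2} \ind{x_i \neq x_{i+2}}} \wh H_{x_0 x_1} \wh H_{x_1 x_2} \cdots \wh H_{x_{n-1} x_n}\,,
\end{equation*}
the point being that the product of indicator functions here enforces nonbacktracking at every step \emph{except} the very first one, i.e.\ it is missing the constraint $\ind{x_0 \neq x_2}$. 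Thus $(\wh H V_{n-1})_{x_0 x_n} = (V_n)_{x_0 x_n} + R_{x_0 x_n}$, where $R$ collects exactly those terms with $x_0 = x_2$ (but still $x_i \neq x_{i+2}$ for $i \geq 1$).

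The heart of the argument is then to decompose $R$ according to whether $x_1 = x_3$ or $x_1 \neq x_3$. If $x_1 = x_3$ is \emph{not} forced (the generic case where we only impose $x_0 = x_2$), we have $\wh H_{x_0 x_1} \wh H_{x_1 x_2} = \wh H_{x_0 x_1}\wh H_{x_1 x_0} = \abs{\wh H_{x_0 x_1}}^2$ using $x_2 = x_0$ and Hermiticity, so summing over $x_1$ contributes $\sum_{x_1}\abs{\wh H_{x_0 x_1}}^2$ multiplying the nonbacktracking walk $\wh H_{x_2 x_3}\cdots\wh H_{x_{n-1}x_n}$ from $x_2 = x_0$; but this walk already has the constraint $\ind{x_1 \neq x_3}$ baked in from the product above, so one must split off the overcounted piece $x_1 = x_3$. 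The clean bookkeeping is: the full sum over $x_1$ with no constraint on $x_3$ gives $\big((\sum_z \abs{\wh H_{x_0 z}}^2) V_{n-2}\big)_{x_0 x_n}$ (a walk of length $n-2$ from $x_0$, which equals $x_2$, to $x_n$); correcting this by $-V_{n-2}$ accounts for the diagonal normalization $\sum_z\abs{\wh H_{x_0 z}}^2 - 1$, yielding the $(\umat + \Phi_2)V_{n-2}$ terms; and the remaining terms — those where $x_1 = x_3$ is forced — contribute $\wh H_{x_0 x_1}\wh H_{x_1 x_0}\wh H_{x_0 x_1} = \abs{\wh H_{x_0 x_1}}^2 \wh H_{x_0 x_1} = -(\Phi_3)_{x_0 x_1}$ times a nonbacktracking walk of length $n-3$ from $x_1 = x_3$ to $x_n$, with the residual constraint $\ind{x_1 \neq x_4}$ carried over, which is exactly the definition of $\ul{\Phi_3 V_{n-3}}$. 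Collecting signs, $R = (\umat + \Phi_2)V_{n-2} + \ul{\Phi_3 V_{n-3}}$, which rearranges to the claimed recursion. The base cases $V_0, V_1, V_2$ are checked by inspection: $(V_2)_{x_0 x_2} = \sum_{x_1}\ind{x_0 \neq x_2}\wh H_{x_0 x_1}\wh H_{x_1 x_2} = (\wh H^2)_{x_0 x_2} - \delta_{x_0 x_2}\sum_{x_1}\abs{\wh H_{x_0 x_1}}^2 = (\wh H^2 - \umat - \Phi_2)_{x_0 x_2}$.

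The main obstacle is the careful handling of the overlapping constraints: when $x_0 = x_2$ is imposed, the pre-existing indicators $\ind{x_1 \neq x_3}$, $\ind{x_2 \neq x_4} = \ind{x_0 \neq x_4}$, etc., interact with the new identification, and one must verify that no constraint is double-counted or dropped — in particular that the walk hanging off $\Phi_3$ starts with precisely the constraint $\ind{x_1 \neq x_4}$ needed to make it a genuine $V_{n-3}$ after the two removed edges, and that the $\Phi_2 V_{n-2}$ term inherits the correct constraint $\ind{x_0 \neq x_4}$ at its first step. A clean way to organize this is to track the ``deficit'' of constraints relative to $V_n$ as a single missing factor and peel it off one edge at a time; the low-index cases $n = 2, 3$ should be treated separately since there $\ul{\Phi_3 V_{n-3}}$ involves the convention $\ul{\Phi_3 V_0} = \Phi_3$ or vanishes. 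Once the combinatorial matching is pinned down, the identity is purely formal and requires no estimates.
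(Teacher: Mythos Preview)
Your approach is correct and essentially identical to the paper's: both expand $(\wh H V_{n-1})_{x_0 x_n}$, split off the missing constraint $\ind{x_0 \neq x_2}$ to isolate $V_n$, and then use $\ind{x_1 \neq x_3} = 1 - \ind{x_1 = x_3}$ on the remainder to produce the $(\umat+\Phi_2)V_{n-2}$ and $\ul{\Phi_3 V_{n-3}}$ pieces. One small slip: the residual constraint carried by the $\Phi_3$ piece is $\ind{x_0 \neq x_4}$ (coming from $\ind{x_2 \neq x_4}$ with $x_2 = x_0$), not $\ind{x_1 \neq x_4}$ as you wrote; this is exactly the $i=0$ constraint in the definition \eqref{nonbacktracking product} of $\ul{\Phi_3 V_{n-3}}$.
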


\begin{proof}
The expressions for $V_0, V_1, V_2$ are easy to derive from the definition of $V_n$.  Moreover, 
for $n \geq 3$ we find
\begin{align*}
(\wh H V_{n - 1})_{x_0 x_n} &\;=\; \sum_{x_1, \dots, x_{n - 1}} \qBB{\prod_{i = 1}^{n - 2} 
\ind{x_i \neq x_{i + 2}}} \, \wh H_{x_0 x_1} \wh H_{x_1 x_2} \cdots \wh H_{x_{n - 1} x_n}
\\
&\;=\; \sum_{x_1, \dots, x_{n - 1}} \qBB{\prod_{i = 0}^{n - 2} \ind{x_i \neq x_{i + 2}}} \, \wh 
H_{x_0 x_1} \wh H_{x_1 x_2} \cdots \wh H_{x_{n - 1} x_n}
\\
&\qquad + \sum_{x_1, \dots, x_{n - 1}} \ind{x_0 = x_2} \qBB{\prod_{i = 1}^{n - 2} \ind{x_i \neq 
x_{i + 2}}} \, \wh H_{x_0 x_1} \wh H_{x_1 x_2} \cdots \wh H_{x_{n - 1} x_n}
\\
&\;=\; (V_n)_{x_0 x_n} + \sum_{x_1, \dots, x_{n - 1}} \ind{x_0 = x_2} \qBB{\prod_{i = 2}^{n - 
2} \ind{x_i \neq x_{i + 2}}} \, \wh H_{x_0 x_1} \wh H_{x_1 x_2} \cdots \wh H_{x_{n - 1} x_n}
\\
&\qquad - \sum_{x_1, \dots, x_{n - 1}} \ind{x_0 = x_2} \ind{x_1 = x_3} \qBB{\prod_{i = 2}^{n - 
2} \ind{x_i \neq x_{i + 2}}} \, \wh H_{x_0 x_1} \wh H_{x_1 x_2} \cdots \wh H_{x_{n - 1} x_n}
\\
&\;=\; (V_n)_{x_0 x_n} + \sum_{x_1} \abs{\wh H_{x_0 x_1}}^2 \, (V_{n - 2})_{x_0 x_n} + 
(\ul{\Phi_3 V_{n - 3}})_{x_0 x_n}\,,
\end{align*}
by \eqref{nonbacktracking product}.
This yields
\begin{equation*}
\wh H V_{n - 1} \;=\; V_n + V_{n - 2} + \Phi_2 V_{n - 2} + \ul{\Phi_3 V_{n - 3}}\,,
\end{equation*}
and the claim follows.
\end{proof}

We may now derive the path expansion of $U_n$. To streamline notation, it is convenient to define $\ul{\Phi_2 V_n} \deq 
\Phi_2 V_n$.

\begin{proposition} \label{proposition: main path expansion}
We have
\begin{equation} \label{main path expansion}
U_n \;=\; \sum_{k \geq 0} \sum_{a \in \{2,3\}^k} \; \sum_{\ell_0 + \cdots + \ell_k = n - \abs{a}} V_{\ell_0} \, 
\ul{\Phi_{a_1} V_{\ell_1}} \cdots \ul{\Phi_{a_k} V_{\ell_k}}\,,
\end{equation}
where the sum ranges over $\ell_i \geq 0$ for $i = 0, \dots, k$. Here we use the abbreviation 
$a = (a_1, \dots, a_k)$ as well as $\abs{a} \deq \sum_{i = 1}^k a_i$.
\end{proposition}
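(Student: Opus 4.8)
The plan is to prove \eqref{main path expansion} by induction on $n$, using the recursion relation for $V_n$ from Lemma \ref{lemma: defining properties of V_n} and the recursion $U_n = \wh H U_{n-1} - U_{n-2}$ from \eqref{defining properties of U_n}. The key conceptual point is that \eqref{main path expansion} expresses $U_n$ as a sum of \emph{strings} of the form $V_{\ell_0}\,\ul{\Phi_{a_1} V_{\ell_1}} \cdots \ul{\Phi_{a_k} V_{\ell_k}}$, where each block $\ul{\Phi_{a_i} V_{\ell_i}}$ consumes $a_i + \ell_i$ ``steps'' (with $a_i \in \{2,3\}$) and the initial factor $V_{\ell_0}$ consumes $\ell_0$ steps, so that the total is $|a| + \sum_i \ell_i = n$. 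Since $U_0 = V_0 = \umat$ and $U_1 = V_1 = \wh H$, the base cases $n = 0, 1$ hold trivially (only the empty string $a$ with $\ell_0 = n$ contributes).

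For the inductive step, I would not try to iterate the $U$-recursion directly; instead, the cleaner route is to define the right-hand side of \eqref{main path expansion} as $\wt U_n$ and verify that $\wt U_n$ satisfies the \emph{same} recursion and initial data as $U_n$, whence $\wt U_n = U_n$ by uniqueness. This requires showing $\wt U_n = \wh H \wt U_{n-1} - \wt U_{n-2}$ for $n \geq 2$. Plugging in the definition, $\wh H \wt U_{n-1}$ is a sum over strings whose total step count is $n-1$, left-multiplied by $\wh H$. The heart of the matter is to resolve $\wh H$ times the \emph{leading} factor: if the leading factor is $V_{\ell_0}$ with $\ell_0 \geq 1$, then $\wh H V_{\ell_0} = V_{\ell_0 + 1} + V_{\ell_0 - 1} + \Phi_2 V_{\ell_0 - 1} + \ul{\Phi_3 V_{\ell_0 - 2}}$ by Lemma \ref{lemma: defining properties of V_n}; the first term extends the leading $V$, the second is cancelled by the $-\wt U_{n-2}$ term, and the last two terms \emph{create a new $\Phi_2$ or $\Phi_3$ block} at the front of the string, correctly increasing $k$ by one and accounting for the $a_i \in \{2,3\}$ bookkeeping. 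If instead $\ell_0 = 0$, so the leading factor is $\ul{\Phi_{a_1} V_{\ell_1}}$ (or the string is empty, i.e.\ $n-1 = 0$), one must treat $\wh H\, \ul{\Phi_{a_1} V_{\ell_1}}$, which by a similar (but slightly more delicate) manipulation of the nonbacktracking constraints in \eqref{nonbacktracking product} produces $\ul{\Phi_{a_1} V_{\ell_1 + 1}}$ plus correction blocks; here one needs the convention $V_n \deq 0$ for $n < 0$ and $\ul{\Phi_3 V_0} = \Phi_3$ to make the edge cases come out right.

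The step I expect to be the main obstacle is the careful accounting when the nonbacktracking product $\ul{\Phi_{a} V_\ell}$ meets $\wh H$ on its left: one has to re-derive, inside the string, the analogue of the computation in the proof of Lemma \ref{lemma: defining properties of V_n}, tracking which coincidences $x_i = x_{i+2}$ are forbidden by the existing $\ind{\cdot}$ factors and which new ones get freed up, and checking that the freed coincidences assemble exactly into a $\Phi_2$ factor (single backtrack, $a=2$) or a $\Phi_3$ factor (double backtrack $x_0 = x_2$, $x_1 = x_3$, contributing $-|\wh H_{\cdot\cdot}|^2 \wh H_{\cdot\cdot}$, $a=3$), with no residual terms. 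Once this bookkeeping is in place, collecting all contributions and matching powers of the step count against $n$, $n-1$, $n-2$ gives $\wt U_n = \wh H \wt U_{n-1} - \wt U_{n-2}$, and the induction closes.
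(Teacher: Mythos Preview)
Your approach is correct and is essentially the paper's own: an induction on $n$ that feeds on the recursion $\wh H V_{m-1} = V_m + V_{m-2} + \Phi_2 V_{m-2} + \ul{\Phi_3 V_{m-3}}$ from Lemma \ref{lemma: defining properties of V_n}. The paper organizes the induction slightly differently, by setting $D_n \deq U_n - V_n$ (so $D_n$ collects the $k\geq 1$ part of the sum) and deriving the inhomogeneous recursion
\[
D_n \;=\; \wh H D_{n-1} - D_{n-2} + \Phi_2 V_{n-2} + \ul{\Phi_3 V_{n-3}}\,,
\]
which makes the appearance of new $\Phi_2,\Phi_3$ blocks explicit as source terms; but this is just a repackaging of the same computation you describe.

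One correction: the ``main obstacle'' you anticipate does not exist. In the expression $V_{\ell_0}\,\ul{\Phi_{a_1}V_{\ell_1}}\cdots\ul{\Phi_{a_k}V_{\ell_k}}$, each factor $\ul{\Phi_{a_i}V_{\ell_i}}$ is a matrix, and the products \emph{between} these factors are ordinary matrix products --- there is no nonbacktracking constraint across block boundaries. Hence when $\ell_0 = 0$ you simply have $\wh H\, V_0\, \ul{\Phi_{a_1}V_{\ell_1}}\cdots = V_1\, \ul{\Phi_{a_1}V_{\ell_1}}\cdots$, which is already a legal string with new leading index $\ell_0 = 1$; there is nothing to ``break open''. (The paper handles exactly this case by pulling out the $\ell_0=0$ terms as a separate line, $\sum V_1\,\ul{\Phi_{a_1}V_{\ell_1}}\cdots$, then merges them back after applying Lemma \ref{lemma: defining properties of V_n} to $\wh H V_{\ell_0+1}$ on the remaining terms.) The genuinely nontrivial input is all contained in Lemma \ref{lemma: defining properties of V_n}, applied at the leftmost $V$ factor; the rest is bookkeeping.
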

\begin{proof}
Define the matrix $D_n$ through
\begin{equation*}
U_n \;=\; V_n + D_n\,.
\end{equation*}
It is easy to see from \eqref{defining properties of U_n} and Lemma \ref{lemma: defining 
properties of V_n} that
\begin{equation*}
D_0 \;=\; 0\,, \qquad D_1 \;=\; 0\,, \qquad D_2 \;=\; \Phi_2\,,
\end{equation*}
as well as
\begin{equation} \label{recursion for D_n}
D_n \;=\; \wh H D_{n - 1} - D_{n - 2} + \Phi_2 V_{n - 2} + \ul{\Phi_3 V_{n - 3}}\,.
\end{equation}

We prove
\begin{equation} \label{path expansion for D}
D_n \;=\; \sum_{k \geq 1} \sum_{a \in \{2,3\}^k} \; \sum_{\ell_0 + \cdots + \ell_k = n - \abs{a}} V_{\ell_0} \, 
\ul{\Phi_{a_1} V_{\ell_1}} \cdots \ul{\Phi_{a_k} V_{\ell_k}}
\end{equation}
using a simple induction argument. The cases $n = 0,1,2$ are trivial.  Assuming the claim holds up to $n - 1$, we get 
from \eqref{recursion for D_n}
\begin{align*}
D_n &\;=\; \sum_{k \geq 1} \sum_{a \in \{2,3\}^k} \; \sum_{\ell_0 + \cdots + \ell_k = n - \abs{a} - 1} \wh H \, 
V_{\ell_0} \, \ul{\Phi_{a_1} V_{\ell_1}} \cdots \ul{\Phi_{a_k} V_{\ell_k}}
\\
&\qquad -  \sum_{k \geq 1} \sum_{a \in \{2,3\}^k} \; \sum_{\ell_0 + \cdots + \ell_k = n - \abs{a} - 2} V_{\ell_0} \, 
\ul{\Phi_{a_1} V_{\ell_1}} \cdots \ul{\Phi_{a_k} V_{\ell_k}}
\\
&\qquad
+ \ul{\Phi_2 V_{n - 2}} + \ul{\Phi_3 V_{n - 3}}
\\
&\;=\; \sum_{k \geq 1} \sum_{a \in \{2,3\}^k} \; \sum_{\ell_0 + \cdots + \ell_k = n - \abs{a} - 2} \wh H \, V_{\ell_0 + 
1} \, \ul{\Phi_{a_1} V_{\ell_1}} \cdots \ul{\Phi_{a_k} V_{\ell_k}}
\\
&\qquad -  \sum_{k \geq 1} \sum_{a \in \{2,3\}^k} \; \sum_{\ell_0 + \cdots + \ell_k = n - \abs{a} - 2} V_{\ell_0} \, 
\ul{\Phi_{a_1} V_{\ell_1}} \cdots \ul{\Phi_{a_k} V_{\ell_k}}
\\
&\qquad
+ \ul{\Phi_2 V_{n - 2}} + \ul{\Phi_3 V_{n - 3}}
+ \sum_{k \geq 1} \sum_{a \in \{2,3\}^k} \; \sum_{\ell_1 + \cdots + \ell_k = n - \abs{a} - 1}  V_1 \, \ul{\Phi_{a_1} 
V_{\ell_1}} \cdots \ul{\Phi_{a_k} V_{\ell_k}}
\\
&\;=\; \sum_{k \geq 1} \sum_{a \in \{2,3\}^k} \; \sum_{\ell_0 + \cdots + \ell_k = n - \abs{a} - 2} V_{\ell_0 + 2} \, 
\ul{\Phi_{a_1} V_{\ell_1}} \cdots \ul{\Phi_{a_k} V_{\ell_k}}
\\
&\qquad + \sum_{k \geq 1} \sum_{a \in \{2,3\}^k} \; \sum_{\ell_0 + \cdots + \ell_k = n - \abs{a} - 2} \pb{\ul{\Phi_2 
V_{\ell_0}} + \ul{\Phi_3 V_{\ell_0 - 1}}} \, \ul{\Phi_{a_1} V_{\ell_1}} \cdots \ul{\Phi_{a_k} V_{\ell_k}}
\\
&\qquad
+ \ul{\Phi_2 V_{n - 2}} + \ul{\Phi_3 V_{n - 3}}
+ \sum_{k \geq 1} \sum_{a \in \{2,3\}^k} \; \sum_{\ell_1 + \cdots + \ell_k = n - \abs{a} - 1}  V_1 \, \ul{\Phi_{a_1} 
V_{\ell_1}} \cdots \ul{\Phi_{a_k} V_{\ell_k}}\,,
\end{align*}
where in the last step we used Lemma \ref{lemma: defining properties of V_n}. Thus \eqref{path expansion for D} is 
proved.

Finally, \eqref{main path expansion} is an immediate consequence of \eqref{path expansion for 
D}.
\end{proof}

\subsection{Graphical representation}
The path expansion \eqref{main path expansion} is the key algebraic identity of our proof. We 
now introduce a graphical representation of \eqref{main path expansion} by associating a rooted tree graph $G$ with each 
summand in \eqref{main path expansion}.

\begin{figure}[ht!]
\begin{center}
\includegraphics{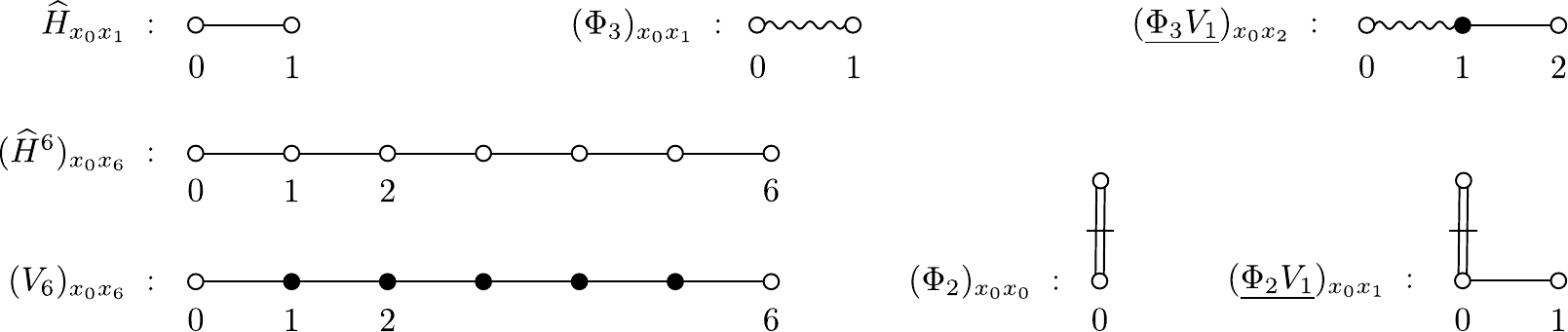}
\end{center}
\caption{The basic graphical units. \label{figure: basic definitions}}
\end{figure}
Before giving a precise definition of our graphs, we outline how they arise from \eqref{main 
path expansion}.
A matrix element $\wh H_{x_0 x_1}$ is represented by two vertices, $0$ and $1$. To each vertex 
$v$ we assign a label $x_v \in \Lambda_N$.  Matrix multiplication is represented by 
concatenating such edges. Thus, $\wh H_{x_0 x_1} \cdots \wh H_{x_{n - 1} x_n}$ is represented 
as a sequence of vertices $0, \dots, n$ joined by $n$ edges. The root is always the leftmost 
vertex, and the edges are directed away from the root. If two neighbouring vertices $u,w$ of a 
vertex $v$ are constrained to have different labels (the nonbacktracking condition), we draw 
$v$ using a black dot; otherwise, we draw $v$ using a white dot.
A factor $\Phi_2$ gives rise to a directed edge, represented by a slashed double line, whose 
final vertex is ``dangling'' in the sense that it has degree one. A factor $\Phi_3$ is 
represented by a wiggly edge.
See Figure \ref{figure: basic definitions} for an illustration of these rules.

Using these graphical building blocks we may conveniently represent any summand of \eqref{main 
path expansion}. See Figure \ref{figure: example path} for an example.
\begin{figure}[ht!]
\begin{center}
\includegraphics{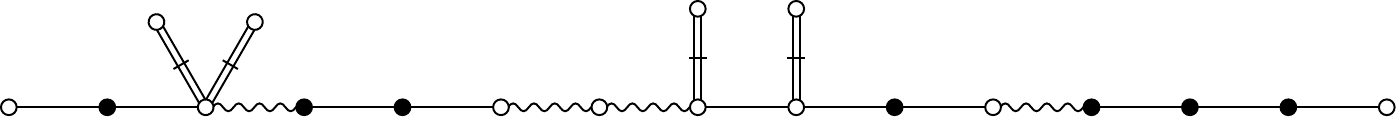}
\end{center}
\caption{Graphical representation of the term $V_2 \, \Phi_2 \, \Phi_2 \, \underline{\Phi_3 
V_2} \, \Phi_3 \, \Phi_3 \, \underline{\Phi_2 V_1} \, \underline{\Phi_2 V_2} \, 
\underline{\Phi_3 V_3}$. \label{figure: example path}}
\end{figure}

\subsection{Definition of graphs} \label{page: graphs}
We now give a precise definition of a set of graphs that is sufficiently general for our 
purposes. Let $G$ be a finite, oriented, unlabelled, rooted tree. We denote by $\cal V(G)$ the 
set of vertices of $G$, by $\cal E(G)$ the set of edges of $G$, and by $a(G) \in \cal V(G)$ the 
root of $G$.
That $G$ is oriented means that $G$ is drawn in the plane, and the edges incident to any vertex 
are ordered. (Thus, each edge $e$ adjacent to a vertex $v$ has a successor, defined as the next edge adjacent to $v$ 
counting anticlockwise from $e$.) In particular, two graphs are considered different even if they are isomorphic in the 
usual graph-theoretical sense but the ordering of the edges at some vertex differs. This notion of orientation can be 
formalized using \emph{Dick paths} (see e.g.\ \cite{AGZ}, Chapter 1). Such a formal definition is not necessary for our 
purposes however.

The choice of a root $a(G)$ implies that we may view $G$ as a directed graph, whereby edges are 
directed away from the root. Thus we shall always regard an edge $e = (v,w)$ as an ordered pair 
of vertices. Given an edge $e = (v,w) \in \cal E(G)$, we denote by $a(e) = v$ the initial 
vertex of $e$ and by $b(e) = w$ the final vertex of $e$.

There is a natural notion of distance between vertices: For $v,w \in \cal V(G)$ we set $d(v,w)$ 
to be equal to the number of edges in the shortest path from $v$ to $w$.
Each vertex $v \neq a(G)$ has a \emph{parent} $w$, defined as the unique vertex adjacent to $v$ 
and satisfying $d(a(G), w) = d(a(G),v) - 1$. If $w$ is the parent of $v$ we also say that $v$ 
is a \emph{child} of $w$. Similarly, if an edge $e$ is not incident to $a(G)$, we call the 
(unique) edge $e'$ satisfying $a(e) = b(e')$ the \emph{parent} of $e$; in this case we also 
call $e$ a \emph{child} of $e'$.

\label{page: stem}
We require that $G$ have an additional distinguished vertex $b(G) \in \cal V(G)$, which need 
not be different from $a(G)$. The path connecting $a(G)$ to $b(G)$ is called the \emph{stem} of 
$G$, and denoted by $\cal S(G)$. When drawing $G$ in the plane, we draw the stem as a 
horizontal path from $a(G)$ at its left edge to $b(G)$ at its right edge. We require that all 
edges not belonging to the stem lie above it (see Figure \ref{figure: bare tree}).
Ultimately, the vertices $a(G)$ and $b(G)$ will receive the fixed labels $x_{a(G)} = x$ and 
$x_{b(G)} = y$ in the graphical expansion of the matrix element $(U_n)_{xy}$.

We denote the set of such graphs by  $\fra W$. We call an edge $e \in \cal E(G)$ a \emph{stem 
edge} if it belongs to $\cal E(\cal S(G))$, and a \emph{bough edge} otherwise.  If $G$ has no 
bough edges, we call it a \emph{bare stem}. A bare stem is uniquely determined by its number of 
edges.

\label{page: boughs}
Thus, a graph $G \in \fra W$ consists of a stem and a collection of rooted trees, called \emph{boughs}. Each bough is 
directed away from its root vertex, which belongs to the stem $\cal S(G)$. We abbreviate with $\cal B(G)$ the subgraph 
of $G$ consisting of all bough edges. We call a bough edge $e \in \cal E (\cal B(G)) = \cal E(G) \setminus \cal E(\cal 
S(G))$ a \emph{leaf} if $b(e)$ has degree one.
See Figure \ref{figure: bare tree} for an example of a graph in $\fra W$.
\begin{figure}[ht!]
\begin{center}
\includegraphics{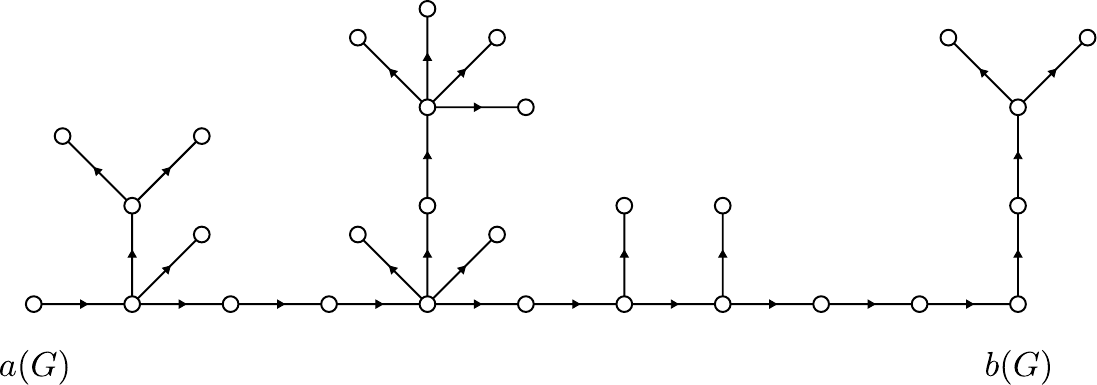}
\end{center}
\caption{A graph in $\fra W$. \label{figure: bare tree}}
\end{figure}

\label{page: tags}
Next, we \emph{decorate} graphs $G \in \fra W$ as follows. First, we \emph{tag} the edges,
 i.e.\ we choose a map 
$\tau_G$ on $\cal E(G)$, called a \emph{tagging}, with values in the \emph{set of tags}
\begin{equation}
\hb{(s, 0), (s, 1), (b,0), (b, 1), (b, 2), (b,3), (b,4)}\,.
\end{equation}
Here $s$ stands for ``stem'' and $b$ for ``bough''. We require that the tag $\tau_G(e)$
 be of the form $(s,i)$ if $e \in 
\cal E(\cal S(G))$ and of the form $(b,i)$ otherwise. The index $i$ (taking values
 in $\{0,1\}$ for stem edges and $\{0, 
\dots, 4\}$ for bough edges) is used to tag different types of edges. Edges whose
 tag is $(s,0)$ or $(b,0)$ are called 
\emph{large}; other edges are called \emph{small}. The reason for this nomenclature
 lies in the magnitude of their 
contribution to the value of the graph after taking the expectation; see Section \ref{section: boughs}.
\label{page: l}
Second, we choose a symmetric map $l_G : \cal V(G)^2 \to \{0,1\}$ which will be used to encode all nonbacktracking 
conditions on $G$. The idea is that $l_{G}(v,w) = 1$ induces a constraint $x_v \neq x_w$ on the labels. We require that 
$l(v,w) = 0$ unless $d(v,w) = 2$. We call the triple $(G, \tau_G, l_G)$ a \emph{decorated graph}, and denote the set of 
decorated graphs by $\fra G$. \label{page: decorated graph}

\label{page: label}
Next, we associate a \emph{value} $\fra V_{xy}(\cal G)$ with each decorated graph $\cal G \in 
\fra G$. The value $\fra V_{xy}(\cal G)$ is a random variable that depends on two labels $x,y 
\in \Lambda_N$. For the following we fix $\cal G = (G, \tau_G, l_G)$. We shall assign a 
\emph{label} $x_v \in \Lambda_N$ to each vertex $v \in \cal V(G)$ in such a way that $x = 
x_{a(G)}$ and $y = x_{b(G)}$. To define $\fra V_{xy}(G)$ we first assign a polynomial in the 
matrix entries to each edge. Let $e \in \cal E(G)$ and abbreviate $x_0 = x_{a(e)}$ and $x_1 = 
x_{b(e)}$.  We associate a polynomial $P_{\tau_G(e)}(\wh H_{x_0 x_1}, \wh H_{x_1 x_0})$, and a 
\emph{degree} $\deg_{\tau_G}(e) \equiv \deg(e)$, with $e$ according to the following table.
\begin{center}
\begin{tabular}{c|c|c} \label{polynomial table}
$\tau_G(e)$ & $P_{\tau_G(e)}(\wh H_{x_0 x_1}, \wh H_{x_1 x_0})$ & $\deg(e)$
\\[0.3em]
\hline
 & &
\\[-0.7em]
$(s, 0)$ & $\wh H_{x_0 x_1}$ & $1$
\\[0.3em]
$(s, 1)$ &  $-\abs{\wh H_{x_0 x_1}}^2 \wh H_{x_0 x_1}$ & $3$
\\[0.3em]
$(b, 0)$  & $\abs{\wh H_{x_0 x_1}}^2$ & $2$
\\[0.3em]
$(b, 1)$ & $\abs{\wh H_{x_0 x_1}}^2 - \sigma^2_{x_0 x_1}$ & $2$
\\[0.3em]
$(b, 2)$, $(b, 3)$  & $-\abs{\wh H_{x_0 x_1}}^4$ & $4$
\\[0.3em]
$(b, 4)$ & $\abs{\wh H_{x_0 x_1}}^6$ & $6$
\end{tabular}
\end{center}
\label{page: degree}
Note that $\deg(e)$ is nothing but the degree of the polynomial $P_{\tau_G(e)}$. The 
\emph{degree} of $\cal G$ is
\begin{equation}
\deg(\cal G) \;\deq\; \sum_{e \in \cal E(G)} \deg(e)\,.
\end{equation}

In order to define $\fra V_{xy}(\cal G)$ it is convenient to abbreviate the family of labels by 
$\b x = \pb{x_v : v \in \cal V(G)}$.
Then we set
\begin{equation} \label{definition of value of graph}
\fra V_{xy}(\cal G) \;\deq\; \sum_{\b x} \delta_{x x_{a(G)}} \delta_{y x_{b(G)}} 
\qBB{\prod_{v,w \in \cal V(G)} \pb{1 - l_G(v,w) \delta_{x_v x_w}}}
\qBB{\prod_{e \in \cal E(G)} P_{\tau_G(e)}\pb{\wh H_{x_{a(e)} x_{b(e)}}, \wh H_{x_{b(e)} 
x_{a(e)}}}}\,.
\end{equation}
The summation over $\b x$ means unrestricted summation for all $x_v \in \Lambda_N$, $v \in \cal 
V(G)$.

We call a stem vertex $v \in \cal V(\cal S(G)) \setminus \{a(G), b(G)\}$ \emph{nonbacktracking} 
if the two stem edges adjacent to $v$, $(u,v)$ and $(v,w)$, satisfy $l_G(u,w) = 1$; according 
to \eqref{definition of value of graph}, this means that we have the constraint $x_u \neq x_w$.  
Otherwise we call $v$ \emph{backtracking}. We call the stem $\cal S(G)$ \emph{completely 
nonbacktracing} if all vertices in $\cal V(\cal S(G)) \setminus \{a(G), b(G)\}$ are 
nonbacktracking.
Decorated graphs $(G, \tau_G, l_G) \in \fra G$ are represented graphically as follows. Each 
edge of $G$ is drawn using a decoration that identifies its tag $\tau_G(e)$; see Figure 
\ref{figure: edge decorations}. (Note that, although Figure \ref{figure: edge decorations} 
suggests that decorated bough edges are double, they are in fact single. This graphical 
representation using double lines is chosen in the light of the graph operations $\cal F_n$, 
$\cal F_c$, and $\cal R$ defined below.) Non-backtracking stem vertices are drawn with a black 
dot; other vertices are drawn with a white dot. Note that using black and white dots to draw the vertices displays only 
partial information about $l_G$: Only nonbacktracking restrictions pertaining to pairs of vertices both in the stem are 
indicated in our graphical representation.
\begin{figure}[ht!]
\begin{center}
\includegraphics{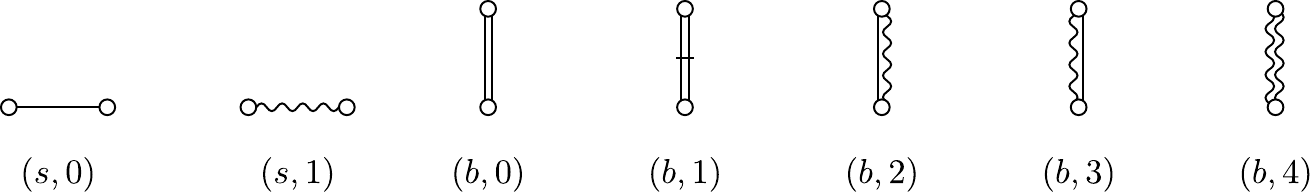}
\end{center}
\caption{The edge decorations along with their associated tags. \label{figure: edge 
decorations}}
\end{figure}
See Figure \ref{figure: decorated tree} for an example of a decorated graph.
\begin{figure}[ht!]
\begin{center}
\includegraphics{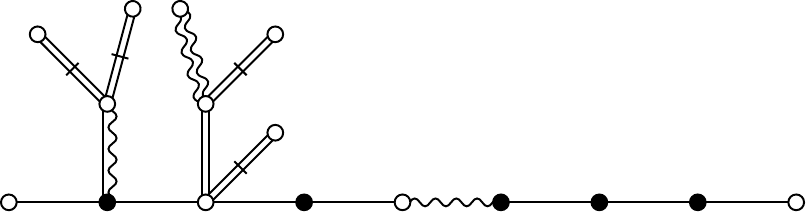}
\end{center}
\caption{A decorated graph in $\fra G$. \label{figure: decorated tree}}
\end{figure}

\subsection{Operations on graphs} \label{section: operations on graphs}
As it turns out, in order to control the graph expansion we shall have to make all stem 
vertices apart from $a(G)$ and $b(G)$ nonbacktracking. To this end, we introduce two 
operations, $\cal F_n$ and $\cal F_c$, on the set of decorated graphs $\fra G$. We shall prove 
that after a finite number of successive applications of either $\cal F_n$ or $\cal F_c$ to an 
arbitrary decorated graph, we always get a graph with a completely nonbacktracking stem. The 
index $n$ stands for ``nonbacktracking'' and $c$ for ``collapsing''.  The idea behind the 
definition of $\cal F_n$ and $\cal F_c$ is to choose the first (in the natural order of $\cal 
S(G)$) backtracking stem vertex $v_1 \in \cal V( \cal S(G)) \setminus \{a(G), b(G)\}$ and 
introduce a splitting in the definition \eqref{definition of value of graph} using
\begin{equation*}
1 \;=\; \ind{x_{v_0} \neq x_{v_2}} + \ind{x_{v_0} = x_{v_2}}\,,
\end{equation*}
where the vertices $v_0, v_2 \in \cal V(\cal S(G))$ are the neighbours of $v_1$ in the stem, 
i.e.\ they satisfy $(v_0,v_1),(v_1,v_2) \in \cal E(\cal S(G))$. 

We now define $\cal F_n$ and $\cal F_c$ more precisely. If $\cal S(G)$ has no backtracking 
vertex, set $\cal F_n(\cal G) \deq \cal G$ and $\cal F_c(\cal G) \deq \emptyset$, where 
$\emptyset$ is the empty graph satisfying $\fra V_{xy}(\emptyset) \deq 0$.

Otherwise, let $v_1$ be the first backtracking vertex in
$\cal V(\cal S(G)) \setminus \{a(G), b(G)\}$ and define $v_0$ and $v_2$ as above. Then we set 
$\cal F_n(\cal G) \deq (G, \tau_G, \wt l_G)$, where
\begin{equation*}
\wt l_G(v,w) \;\deq\; l_G(v,w) + \delta_{v v_0} \delta_{w v_2} + \delta_{w v_0} \delta_{v 
v_2}\,.
\end{equation*}
Thus, the operation $\cal F_n$ simply makes the vertex $v_1$ a nonbacktracking vertex of $\cal 
S(G)$ without changing $G$ or $\tau_G$, i.e.\ it sets $\wt l_G(v_0, v_2) = \wt l_G(v_2, v_0) = 1$ and leaves $\wt l_G$ 
unchanged for any other pair of vertices. \label{page: F_n}

Next, we define $\cal F_c$. Let $v_0,v_1,v_2$ be as above. The operation $\cal F_c$ collapses 
the two nearest stem neighbours, $v_0$ and $v_2$, of $v_1$ into one vertex and fuses the two 
edges $(v_0, v_1)$ and $(v_1, v_2)$ into one edge (see Figure \ref{figure: collapsing 
operation}). This definition is very natural in the light of Figure \ref{figure: collapsing 
operation} and our choice of conventions for drawing bough edges as double lines. Thus, a 
reader who believes his eyes when gazing at pictures like Figure \ref{figure: collapsing 
operation} may safely skip the following two paragraphs.

To define the operation $\cal F_c$ precisely, we identify $v_0$ with $v_2$, i.e.\ introduce the 
equivalence classes
\begin{equation*}
[v] \;\deq\;
\begin{cases}
\{v\} & \text{if } v \notin \{v_0,v_2\}
\\
\{v_0,v_2\} & \text{if } v \in \{v_0,v_2\}\,.
\end{cases}
\end{equation*}
Define the graph $\wt G$ through its vertex set $\cal V (\wt G) = \{[v] \,:\, v \in \cal 
V(G)\}$, and its edge set, which is obtained as follows. Each edge $(v, w) \in \cal E(G) 
\setminus \{(v_1,v_2)\}$ gives rise to the edge $([v], [w]) \in \cal E(\wt G)$. Thus, the edges 
$(v_0, v_1)$ and $(v_1, v_2)$ are fused into a single edge $([v_0], [v_1])$.
The tag $\tau_{\wt G}
\pb{([v],[w])}$ is by definition equal to the tag $\tau_G\pb{(v,w)}$ if $(v,w) \neq (v_0, 
v_1)$; the tag of the edge $([v_0], [v_1])$ is defined by the following table.
\begin{center}
\begin{tabular}{cc|c}
$\tau_G\pb{(v_0, v_1)}$ & $\tau_G\pb{(v_1,v_2)}$ & $\tau_{\wt G}\pb{([v_0],[v_1])}$
\\[0.3em]
\hline
 & &
\\[-0.7em]
$(s, 0)$ & $(s,0)$ & $(b,0)$
\\[0.3em]
$(s, 0)$ & $(s,1)$ & $(b,2)$
\\[0.3em]
$(s, 1)$ & $(s,0)$ & $(b,3)$
\\[0.3em]
$(s, 1)$ & $(s,1)$ & $(b,4)$
\end{tabular}
\end{center}
The initial and final vertices of $\wt G$ are given by $a(\wt G) \deq [a(G)]$ and $b(\wt G) 
\deq [b(G)]$.  The edges of $\wt G$ are oriented in the natural way when drawing $G$ and $\wt 
G$ in the plane; instead of giving a formal definition of the orientation, we refer to Figure 
\ref{figure: collapsing operation}.
\begin{figure}[ht!]
\begin{center}
\includegraphics{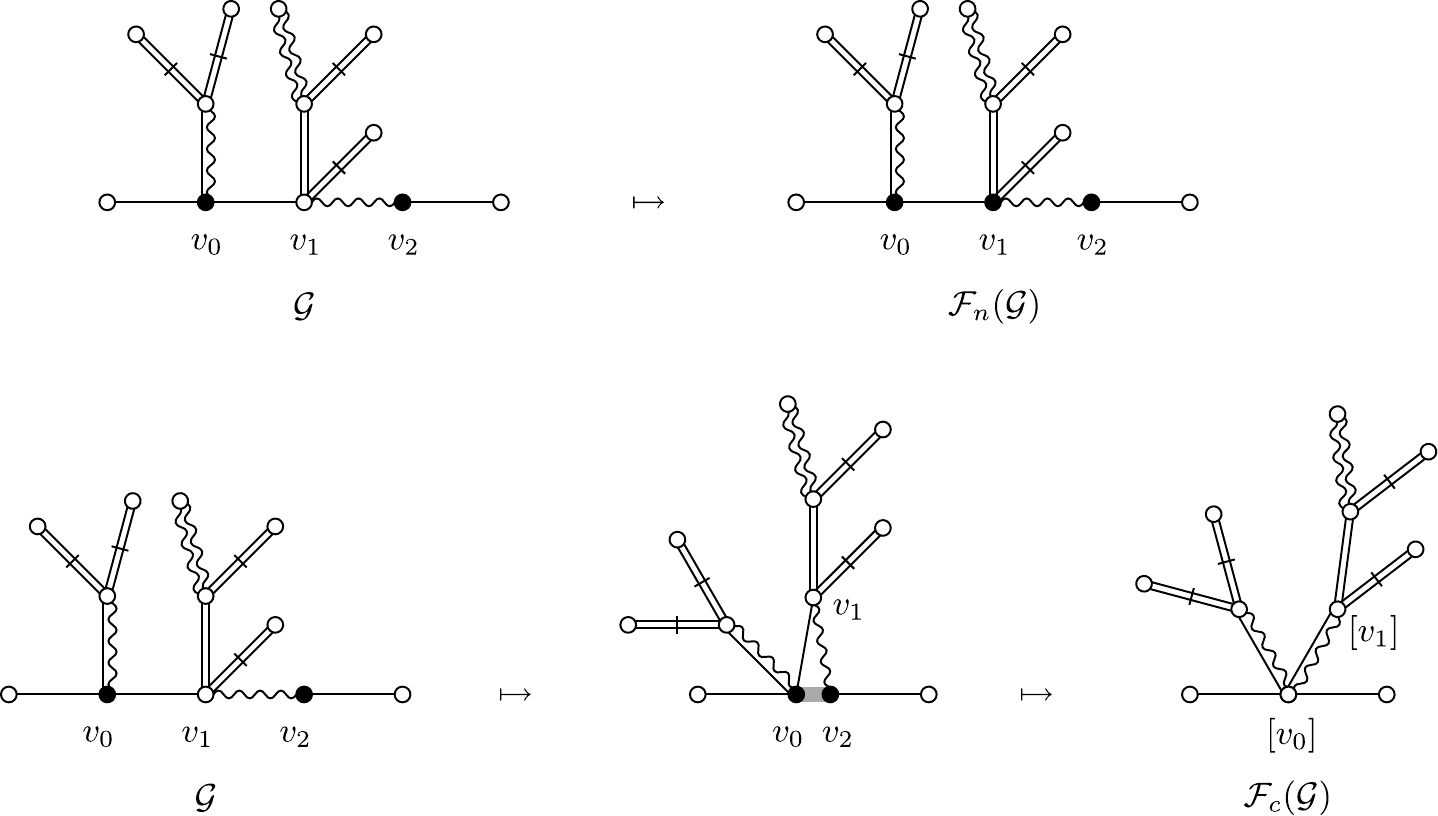}
\end{center}
\caption{Graphical representation of the operations $\cal F_n$ and $\cal F_c$. \label{figure: 
collapsing operation}}
\end{figure}

Finally, we define the map $l_{\wt G}$, which encodes the nonbacktracking information of  $\wt 
G$, through
\begin{equation*}
l_{\wt G}(\wt v, \wt w) \;\deq\;
\begin{cases}
1 & l_G(v,w) = 1 \text{ for some pair of representatives } v \in \wt v, w \in \wt w
\\
0 & \text{otherwise}\,.
\end{cases}
\end{equation*}
Thus, in the graphical representation of $\cal F_c(\cal G) \deq (\wt G, \tau_{\wt G}, l_{\wt 
G})$, the vertex $[v_0] = [v_2]$ is always white (i.e.\ backtracking). Note that if $v_0$ or
$v_2$ was nonbacktracking, this restriction remains encoded in the map $l_{\wt G}$, but is no 
longer visible in the colouring of the vertices.

We summarize the key properties of $\cal F_n$ and $\cal F_c$, which follow immediately from 
their construction.
\begin{lemma} \label{lemma: simple properties of graph operations}
Let $\cal G \in \fra G$. Then $\cal F_n(\cal G), \cal F_c(\cal G) \in \fra G$.  Moreover,
\begin{equation*}
\fra V_{xy}(\cal G) \;=\; \fra V_{xy}(\cal F_n(\cal G)) + \fra V_{xy}(\cal F_c(\cal G))\,,
\end{equation*}
and
\begin{equation*}
\deg(\cal F_n(\cal G)) \;=\; \deg(\cal F_c(\cal G)) \;=\; \deg(\cal G)\,.
\end{equation*}
\end{lemma}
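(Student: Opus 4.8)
The plan is to dispatch the three assertions in increasing order of difficulty, the value identity being the only substantive one. For the degree statement: $\cal F_n$ touches only the map $l_G$, so neither $G$ nor $\tau_G$ nor any $\deg(e)$ changes; for $\cal F_c$ the only edges whose tags are altered are the two stem edges $(v_0,v_1)$ and $(v_1,v_2)$, which are fused into the single bough edge $([v_0],[v_1])$, and the fusion table was chosen precisely so that $\deg(b,0)=1+1$, $\deg(b,2)=\deg(b,3)=1+3$, $\deg(b,4)=3+3$, i.e.\ the fused edge has degree equal to the sum of its two constituents, whence $\deg$ is conserved. For membership in $\fra G$: for $\cal F_n$ the single constraint added, $\wt l_G(v_0,v_2)=\wt l_G(v_2,v_0)=1$, links two vertices at graph distance exactly $2$ and is symmetric, so $(G,\tau_G,\wt l_G)\in\fra G$; for $\cal F_c$ I would check that identifying the distance-$2$ vertices $v_0,v_2$ together with fusing the two edges of the path $v_0-v_1-v_2$ keeps $\wt G$ a finite rooted tree (the identification alone would create exactly the one cycle on $\{[v_0],v_1\}$, which the fusion destroys), that the new stem is the image of the old stem with $v_1$ pushed into a bough, that the fused edge receives a $(b,\cdot)$ tag as befits a bough edge while every other tag is inherited unchanged, and that $l_{\wt G}$ by construction only records constraints between pairs of classes having representatives at distance $2$ in $G$ (collapsing cannot increase distances). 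The degenerate case in which $\cal S(G)$ has no backtracking vertex gives $\cal F_n(\cal G)=\cal G$ and $\cal F_c(\cal G)=\emptyset$, making all three assertions trivial; from now on $v_1$ denotes the first backtracking stem vertex and $v_0,v_2$ its stem neighbours.

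The core is the value identity. Since $v_1$ is backtracking, $l_G(v_0,v_2)=0$, so no factor of \eqref{definition of value of graph} constrains $x_{v_0}$ and $x_{v_2}$; I would insert $1=\ind{x_{v_0}\neq x_{v_2}}+\ind{x_{v_0}=x_{v_2}}$ under the summation over $\b x$, splitting $\fra V_{xy}(\cal G)$ into two pieces. In the first piece the new factor $\ind{x_{v_0}\neq x_{v_2}}=1-\delta_{x_{v_0}x_{v_2}}$ coincides verbatim with the factor $1-\wt l_G(v_0,v_2)\delta_{x_{v_0}x_{v_2}}$ manufactured by $\cal F_n$, and since $\cal F_n$ leaves $G$, $\tau_G$ and all other values of $l_G$ alone, this piece is exactly $\fra V_{xy}(\cal F_n(\cal G))$.

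For the second piece, the factor $\ind{x_{v_0}=x_{v_2}}$ restricts the sum over $\b x$ to labellings constant on $\{v_0,v_2\}$, which is precisely a sum over labellings of the vertex set $\cal V(\wt G)=\{[v]\}$ of $\cal F_c(\cal G)$; under this identification I match the three factors of \eqref{definition of value of graph} one by one. The boundary deltas $\delta_{xx_{a(G)}}\delta_{yx_{b(G)}}$ become $\delta_{x\wt x_{a(\wt G)}}\delta_{y\wt x_{b(\wt G)}}$ because $a(\wt G)=[a(G)]$ and $b(\wt G)=[b(G)]$ (the cases $v_0=a(G)$ or $v_2=b(G)$ are harmless). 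The nonbacktracking product $\prod_{v,w:\,l_G(v,w)=1}(1-\delta_{x_vx_w})$, after merging $v_0$ with $v_2$, becomes the product over pairs of classes carrying a constraint of $(1-\delta)$ — using idempotency $(1-\delta)^2=1-\delta$ to collapse coincidences, and $l_G(v_0,v_2)=0$ to exclude the fatal factor $(1-\delta_{x_{v_0}x_{v_0}})=0$ — which is exactly $\prod_{\wt v,\wt w}(1-l_{\wt G}(\wt v,\wt w)\delta_{\wt x_{\wt v}\wt x_{\wt w}})$. Finally the two fused stem-edge polynomials multiply, using $x_{v_0}=x_{v_2}$ and $\wh H_{x_1x_0}=\ol{\wh H_{x_0x_1}}$ (so that $\abs{\cdot}^2$ is insensitive to order and $\wh H_{x_0x_1}\wh H_{x_1x_0}=\abs{\wh H_{x_0x_1}}^2$), to the polynomial prescribed by the fusion table: e.g.\ $\wh H_{x_0x_1}\cdot\pb{-\abs{\wh H_{x_1x_0}}^2\wh H_{x_1x_0}}=-\abs{\wh H_{x_0x_1}}^4$, matching $(s,0),(s,1)\mapsto(b,2)$, and likewise $\wh H\cdot\wh H\mapsto\abs{\wh H}^2$ for $(b,0)$ and $\pb{-\abs{\wh H}^2\wh H}\cdot\pb{-\abs{\wh H}^2\wh H}\mapsto\abs{\wh H}^6$ for $(b,4)$ (cf.\ Figure \ref{figure: collapsing operation}). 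Hence the second piece equals $\fra V_{xy}(\cal F_c(\cal G))$, and summing the two pieces yields the identity.

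I expect the only point needing more than bookkeeping to be the compatibility of $l_{\wt G}$ both with the ``distance exactly $2$'' requirement and with the collapse of the label sum: one must be sure that no constraint of $\cal G$ silently becomes vacuous, and, conversely, that the collapse does not create a forbidden self-constraint $(1-\delta_{x_vx_v})$. Both are settled by the single observation that $v_0$ and $v_2$ are not themselves joined by a constraint in $\cal G$ (because $v_1$ is backtracking), together with the fact that any $G$-path of length $2$ between a constrained pair that ran through both $v_0$ and $v_2$ would force that pair to be $\{v_0,v_2\}$. Everything else — the tree property of $\wt G$, the tag/degree fusion table, and the polynomial arithmetic — reduces to a finite case check against the polynomial table of Section~\ref{section: path expansion} and the fusion table in Section~\ref{section: operations on graphs}.
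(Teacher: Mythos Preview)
Your proof is correct and follows exactly the approach the paper intends: the paper itself states no proof for this lemma beyond ``follow immediately from their construction'', and the splitting $1=\ind{x_{v_0}\neq x_{v_2}}+\ind{x_{v_0}=x_{v_2}}$ you use is precisely what the paper announces as ``the idea behind the definition of $\cal F_n$ and $\cal F_c$'' in the paragraph preceding their formal definition. Your verification of the polynomial fusion table, the degree bookkeeping, and the compatibility of $l_{\wt G}$ with the distance-$2$ requirement are all accurate; in particular your final paragraph correctly isolates the only nontrivial point (that $l_G(v_0,v_2)=0$ prevents a fatal self-constraint and that the tree structure forces any constrained pair with a length-$2$ path through both $v_0$ and $v_2$ to be $\{v_0,v_2\}$ itself).
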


\subsection{Graphs with completely nonbacktracking stem}
Next, we introduce two special subsets of decorated graphs. We define $\fra G' \subset \fra G$ 
to be the set of decorated graphs corresponding to terms in \eqref{main path expansion}. See 
Figure \ref{figure: example path} for an example. More precisely:

\begin{definition} \label{definition of G prime}
The set $\fra G'$ is the subset of $(G, \tau_G,l_G) \in \fra G$ satisfying
\begin{enumerate}
\item
All boughs of $G$ contain only one edge, whose tag is $(b,1)$;
\item
$l_G(u,w) = 1$ if and only if there is a vertex $v$ that is not the root of a bough, such that 
$(u,v), (v,w) \in \cal E(\cal S(G))$ with $\tau_G\pb{(v,w)} = (s,0)$.
\end{enumerate}
\end{definition}
Property (ii) says that all bough vertices (including the bough roots) are white, and that the 
left vertex of a wiggly edge is white. The remaining vertices (apart from $a(G)$ and $b(G)$) 
are black. It is easy to see that the graphs associated with terms on the right-hand side of 
\eqref{main path expansion} belong to $\fra G'$

Note that, unlike in the case of a general graph $\cal G \in \fra G$, the nonbacktracking 
information of a graph $\cal G \in \fra G'$ is fully encoded in the colouring of its vertices.  
Indeed, $l_G(v,w)$ can only be $1$ if $v,w \in \cal V(\cal S(G))$. Moreover, from (ii) we see 
that $l_G$ is uniquely determined by $G$ and $\tau_G$. Thus, a decorated graph $\cal G = (G, \tau_G, l_G) \in \fra G'$ 
is uniquely determined by its graph and tagging, i.e.\ the pair $(G, \tau_G)$.

The second important subset of decorated graphs is generated from $\fra G'$ by applying the 
operations $\cal F_n, \cal F_c$ to decorated graphs in $\fra G'$ until the stem is completely 
nonbacktracking, i.e.\ all stem vertices (apart from $a(G)$ and $b(G)$) are black.
\begin{definition} \label{definition of B_G}
For $\cal G \in \fra G'$ we define $\scr B_{\cal G}$ as the set of decorated graphs $\wt {\cal 
G} \in \fra G$ whose stem is completely nonbacktracking and that are obtained from $\cal G$ by 
a finite number of operations $\cal F_n$ and $\cal F_c$. Furthermore we set
\begin{equation*}
\fra G_\sharp \;\deq\; \bigcup_{\cal G \in \fra G'} \scr B_{\cal G}\,.
\end{equation*}
\end{definition}
The set $\fra G_\sharp$ is the set of ``good'' graphs that we shall work with in later 
sections. Thus, given a graph $\cal G \in \fra G'$ corresponding to a summand of \eqref{main 
path expansion}, we first transform it into the family $\scr B_{\cal G}$ of graphs in $\fra 
G_\sharp$. The contribution of $\cal G$ to the expansion \eqref{main path expansion} is given 
by the sum of the contributions of all graphs in $\scr B_{\cal G}$ (see \eqref{splitting of 
graph into nonbacktracking graphs} below). We then exploit the fact that we have good estimates 
on the contributions of graphs with completely nonbacktracking stems.

Next, we state and prove the key properties of the set $\fra G_\sharp$ and the operations $\cal 
F_n$ and $\cal F_c$.

\begin{proposition} \label{proposition: properties of graphs}
\begin{enumerate}
\item
If $\cal G = (G, \tau_G, l_G) \in \fra G_\sharp$ then $l_G$ is uniquely determined by the pair 
$(G, \tau_G)$ alone. In other words, there is a function $\ell$ such that $l_G = \ell (G, 
\tau_G)$ for all $(G, \tau_G, l_G) \in \fra G_\sharp$.
\item
If $(G, \tau_G, l_G) \in \fra G_\sharp$ then all leaves of $G$ are small (in $\tau_G$).
\item
If $(G, \tau_G, l_G) \in \fra G_\sharp$ and $e \in \cal E(G)$ has tag $\tau_G(e) = (b,1)$, then $e$ 
is a leaf of $G$.
\item
If $\cal G \neq \cal G' \in \fra G'$ then $\scr B_{\cal G} \cap \scr B_{\cal G'} = \emptyset$.
\item
For any $\cal G \in \fra G'$ and $\wt {\cal G} \in \scr B_{\cal G}$ we have $\deg (\cal G) = 
\deg(\wt {\cal G})$.
\item
For each $\cal G \in \fra G'$ we have
\begin{equation} \label{splitting of graph into nonbacktracking graphs}
\fra V_{xy}(\cal G) \;=\; \sum_{\wt {\cal G} \in \scr B_{\cal G}} \fra V_{xy}(\wt{\cal G})\,.
\end{equation}
\end{enumerate}
\end{proposition}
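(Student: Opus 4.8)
The plan is to prove the six statements largely in order, since the later ones build on the earlier. The main tool is induction on the number of operations $\cal F_n, \cal F_c$ applied to a graph in $\fra G'$, combined with the explicit descriptions of how $\cal F_n$ and $\cal F_c$ act (recorded just before Lemma \ref{lemma: simple properties of graph operations}). The base case is always a graph $\cal G \in \fra G'$, for which properties (i)--(iii) hold by Definition \ref{definition of G prime}: property (i) because $l_G$ is determined by $(G,\tau_G)$ via clause (ii) of that definition; property (ii) because in $\fra G'$ every bough is a single edge tagged $(b,1)$, hence a leaf, and $(b,1)$ is small; and property (iii) for the same reason.

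First I would prove (i), (ii), (iii) simultaneously by induction. For the inductive step, suppose $\wt{\cal G}$ is obtained from $\cal G \in \fra G'$ by $m$ operations, and consider one more operation applied to $\wt{\cal G}$. If the operation is $\cal F_n$, then $G$ and $\tau_G$ are unchanged and $l_G$ only acquires one new constraint $\wt l_G(v_0,v_2)=1$, which is exactly the constraint forced by the rule "$v_1$ becomes nonbacktracking"; since $v_1$ is the \emph{first} backtracking stem vertex and the position of the first backtracking stem vertex is a function of $(G,\tau_G,l_G)$ — and by induction $l_G$ is a function of $(G,\tau_G)$ — the new $l_G$ is again a function of $(G,\tau_G)$, giving (i). Leaves and $(b,1)$-edges are untouched by $\cal F_n$, so (ii), (iii) persist. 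If the operation is $\cal F_c$, one consults the fusion table: the fused stem edge $([v_0],[v_1])$ gets a bough tag in $\{(b,0),(b,2),(b,3),(b,4)\}$ — crucially \emph{never} $(b,1)$ — and the target vertex $[v_1]$ of that new bough edge is the old stem vertex $v_1$, which had stem-children, hence $[v_1]$ has degree $\geq 2$ in $\wt G$ and the new bough edge is \emph{not} a leaf. So no new leaf is created, no existing leaf changes its tag, and no $(b,1)$-edge is created; this preserves (ii) and (iii). For (i) under $\cal F_c$: the new $l_{\wt G}$ is defined by pushing forward $l_G$ through the quotient, and since by induction $l_G = \ell(G,\tau_G)$, the pushed-forward map is a function of $(\wt G, \tau_{\wt G})$ together with the quotient map, which is itself determined by $(G,\tau_G,l_G)$, hence by $(G,\tau_G)$; one checks this descends to a well-defined function on $\fra G_\sharp$.

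Next, (v) is immediate from Lemma \ref{lemma: simple properties of graph operations}, which states $\deg(\cal F_n(\cal G)) = \deg(\cal F_c(\cal G)) = \deg(\cal G)$; iterating along the chain of operations from $\cal G$ to $\wt{\cal G}$ gives $\deg(\cal G) = \deg(\wt{\cal G})$. (One can double-check consistency against the fusion table: e.g.\ $(s,1)$ has degree $3$, so two fused $(s,1)$ edges contribute $6$, matching $\deg (b,4) = 6$.) Statement (vi) is similarly a telescoping of the identity $\fra V_{xy}(\cal G) = \fra V_{xy}(\cal F_n(\cal G)) + \fra V_{xy}(\cal F_c(\cal G))$ from Lemma \ref{lemma: simple properties of graph operations}: repeatedly splitting each graph that still has a backtracking stem vertex produces a binary tree of graphs whose leaves are exactly the elements of $\scr B_{\cal G}$ (the recursion terminates because each $\cal F_n$ or $\cal F_c$ strictly decreases the number of backtracking stem vertices — $\cal F_n$ removes one by blackening it, $\cal F_c$ removes one by collapsing it away and, as one must check, creates no new backtracking stem vertex), and summing over the leaves yields \eqref{splitting of graph into nonbacktracking graphs}. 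The main obstacle here is verifying termination rigorously — i.e.\ that the count of backtracking stem vertices is a strictly decreasing nonnegative integer quantity under both operations — and that $\scr B_{\cal G}$ (a priori the set of \emph{all} graphs with completely nonbacktracking stem reachable from $\cal G$) coincides with the finite set of leaves of this particular deterministic splitting tree; this is where the "first backtracking vertex" convention does the work, since it makes the splitting canonical.

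Finally, (iv) — disjointness of $\scr B_{\cal G}$ and $\scr B_{\cal G'}$ for distinct $\cal G, \cal G' \in \fra G'$ — is the step I expect to be the real obstacle, since it requires showing the operations are "reversible" in an appropriate sense. The plan is to argue that from any $\wt{\cal G} \in \fra G_\sharp$ one can reconstruct the unique $\cal G \in \fra G'$ that generated it. The idea: boughs tagged $(b,0),(b,2),(b,3),(b,4)$ whose initial vertex lies on the stem and whose final vertex has degree $\geq 2$ are precisely the "scars" left by $\cal F_c$ collapses; inverting a collapse means re-expanding such a bough edge into a length-two stem detour through a fresh backtracking vertex, with stem tags read off from the fusion table (which is injective: each bough tag in $\{(b,0),(b,2),(b,3),(b,4)\}$ comes from a unique ordered pair of stem tags in $\{(s,0),(s,1)\}^2$). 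Similarly, black stem vertices that were blackened by $\cal F_n$ must be re-whitened. One shows that performing all such inverse operations — in the reverse order, tracked via the canonical "first backtracking vertex" bookkeeping — returns a well-defined graph in $\fra G'$, independent of $\wt{\cal G}$'s particular derivation; hence the map $\fra G_\sharp \to \fra G'$, $\wt{\cal G} \mapsto \cal G$, is well defined, and $\scr B_{\cal G} \cap \scr B_{\cal G'} \neq \emptyset$ forces $\cal G = \cal G'$. Care is needed to check that the inverse operations only ever act on features that genuinely came from $\cal F_n/\cal F_c$ and not on structure already present in $\fra G'$ — but by (iii), graphs in $\fra G'$ have all their boughs as single $(b,1)$-leaves, so there is no ambiguity about which boughs are "collapse scars."
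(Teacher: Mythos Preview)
Your inductive approach for (iii), (v), and the reconstruction idea for (iv) are sound and close to the paper's argument. However, there are two genuine gaps.

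\textbf{Gap in (i) and (vi).} The induction for (i) does not work as stated. The intermediate graphs produced along the process are not in $\fra G_\sharp$ (their stems are not yet completely nonbacktracking), and for such graphs $l_G$ is \emph{not} determined by $(G,\tau_G)$: applying $\cal F_n$ once to some $\cal G \in \fra G'$ leaves $(G,\tau_G)$ unchanged but strictly enlarges $l_G$. So there is no inductive hypothesis to carry, and your sentence ``one checks this descends to a well-defined function on $\fra G_\sharp$'' is exactly the claim in question. What you actually need is that two \emph{different} sequences $\cal F_{i_1},\dots,\cal F_{i_k}$ applied to a fixed $\cal G' \in \fra G'$ always produce distinct pairs $(G,\tau_G)$. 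This is also what (vi) needs to rule out repeated leaves in the splitting tree --- and it is precisely the obstacle you flag but do not resolve. The paper handles this by defining a global \emph{ripping} operation $\cal R$ (essentially your inversion idea for (iv), carried out all at once: every bough edge with tag $\neq (b,1)$ is re-opened into two stem edges according to the fusion table), showing $\cal R$ is invariant under $\cal F_n$ and $\cal F_c$, and then proving directly --- via a ``time of next return'' argument on the walk around $G$ --- that the descendants of $\cal F_n(\wt{\cal G})$ and $\cal F_c(\wt{\cal G})$ are always distinguishable as bare graphs.

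\textbf{Gap in (ii).} Your claim that $[v_1]$ has degree $\geq 2$ because $v_1$ ``had stem-children'' is false: the stem edge $(v_1,v_2)$ is absorbed into the fused bough edge, so if $v_1$ carried no bough in $G$ then $[v_1]$ has degree $1$ and the new bough edge \emph{is} a leaf. The correct argument requires an extra invariant (the paper's property $(\mathrm P_{\cal G})$): if both $(v_0,v_1)$ and $(v_1,v_2)$ have tag $(s,0)$ then the backtracking vertex $v_1$ must already be a bough root, so $[v_1]$ has degree $\geq 2$ and the new $(b,0)$-edge is not a leaf; otherwise the fused edge carries tag $(b,2),(b,3)$ or $(b,4)$, which is small, so being a leaf is harmless. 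You need to check that $(\mathrm P_{\cal G})$ is itself preserved by $\cal F_n$ and $\cal F_c$.
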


\begin{proof}
The key ingredient of the proof is the following \emph{ripping operation}, denoted by $\cal R$.  
It provides a link between the sets $\fra G_\sharp$ and $\fra G'$, and is essentially the 
converse of multiple applications of $\cal F_n$ and $\cal F_c$. The idea is to take hold of the 
vertices $a(G)$ and $b(G)$ of a given tagged graph $(G, \tau_G)$ and ``pull them apart'', thus 
``ripping open'' all bough edges of $\cal G$ except those of type $(b,1)$. When interpreted 
graphically, the character of each edge (straight or wiggly) is kept unchanged, whereby the 
double edge of a bough edge is split into two single edges.

When defining $\cal R$ it is convenient, in a first step, to ``rip open'' all bough edges 
(including those of type $(b,1)$) of $(G, \tau_G)$; we shall call the resulting tagged graph 
$\cal P(G, \tau_G)$.  In a second step, we undo the ripping of all bough edges of type $(b,1)$, 
which results in the tagged graph $\cal R(G, \tau_G)$.

In order to define $\cal P$, we need one additional tag $(s,2)$ for stem edges, which we draw 
with a single solid line that is slashed. Stem edges of type $(s,2)$ result from the ripping 
open of a bough edge of type $(b,1)$. By walking around $G$, we associate with the tagged graph 
$(G, \tau_G)$ a tagged bare stem $(\wt G, \tau_{\wt G}) \eqd \cal P(G, \tau_G)$.  More 
precisely, we draw $G$ in the plane, and start at the vertex $a(G)$.  At each step, we move 
along one edge of $G$ in such a way that we always remain to the left of $G$; see Figure 
\ref{figure: walk}.  Every stem edge is travelled once, and every bough edge twice. Each time 
we move along an edge $e \in \cal E(G)$, we add an edge $\wt e$ to the stem $\wt G$.  Depending 
on whether we moved along $e$ in the direction of $e$ (denoted by $+$) or against the direction 
of $e$ (denoted by $-$), we associate a tag $\tau_{\wt G}(\wt e)$ with $\wt e$ according to the 
following table.
\begin{center}
\begin{tabular}{cc|c}
$\tau_G(e)$ & direction & $\tau_{\wt G}(\wt e)$
\\[0.3em]
\hline
 & &
\\[-0.7em]
$(s, 0)$ & $+$ & $(s,0)$
\\[0.3em]
$(s, 1)$ & $+$ & $(s,1)$
\\[0.3em]
$(b,0)$ & $\pm$ & $(s,0)$
\\[0.3em]
$(b,1)$ & $\pm$ & $(s,2)$
\\[0.3em]
$(b,2)$ & $+$ & $(s,0)$
\\[0.3em]
$(b,2)$ & $-$ & $(s,1)$
\\[0.3em]
$(b,3)$ & $+$ & $(s,1)$
\\[0.3em]
$(b,3)$ & $-$ & $(s,0)$
\\[0.3em]
$(b,4)$ & $\pm$ & $(s,1)$
\end{tabular}
\hspace*{2cm}
Graphical representation of $(s,2)$: \quad \includegraphics{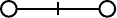}
\end{center}
\begin{figure}[ht!]
\begin{center}
\includegraphics{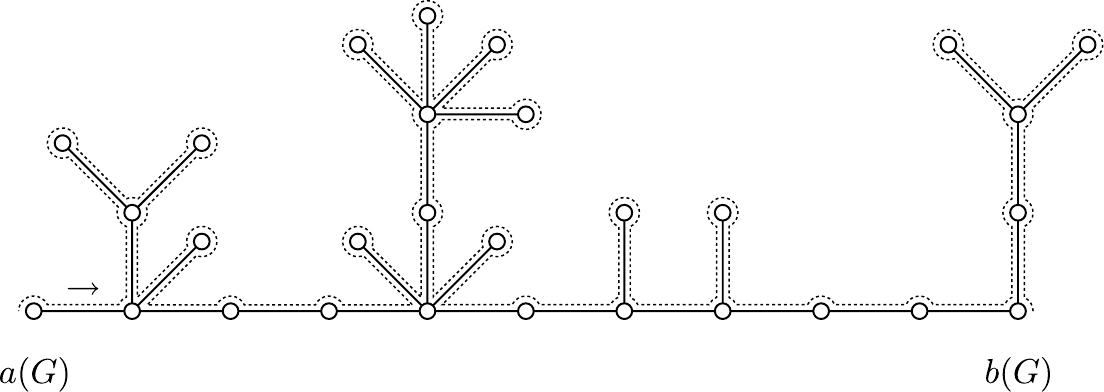}
\end{center}
\caption{The walk around $G$. \label{figure: walk}}
\end{figure}
These rules are made obvious by a glance at Figure \ref{figure: edge decorations}; indeed, a
tagged bough edge is represented with a double line which corresponds exactly to the two single 
lines resulting from ripping the bough edge open.
Figure \ref{figure: completely ripped graph} provides an example of the operation $(G, \tau_G)
\mapsto \cal P (G, \tau_G)$. The map $\cal P$ can also be interpreted as first doubling all 
bough edges according to their tags, and ripping them open successively by pulling the edges 
$a(G)$ and $b(G)$ apart; see Figure \ref{figure: progressive ripping}.

\begin{figure}[ht!]
\begin{center}
\includegraphics{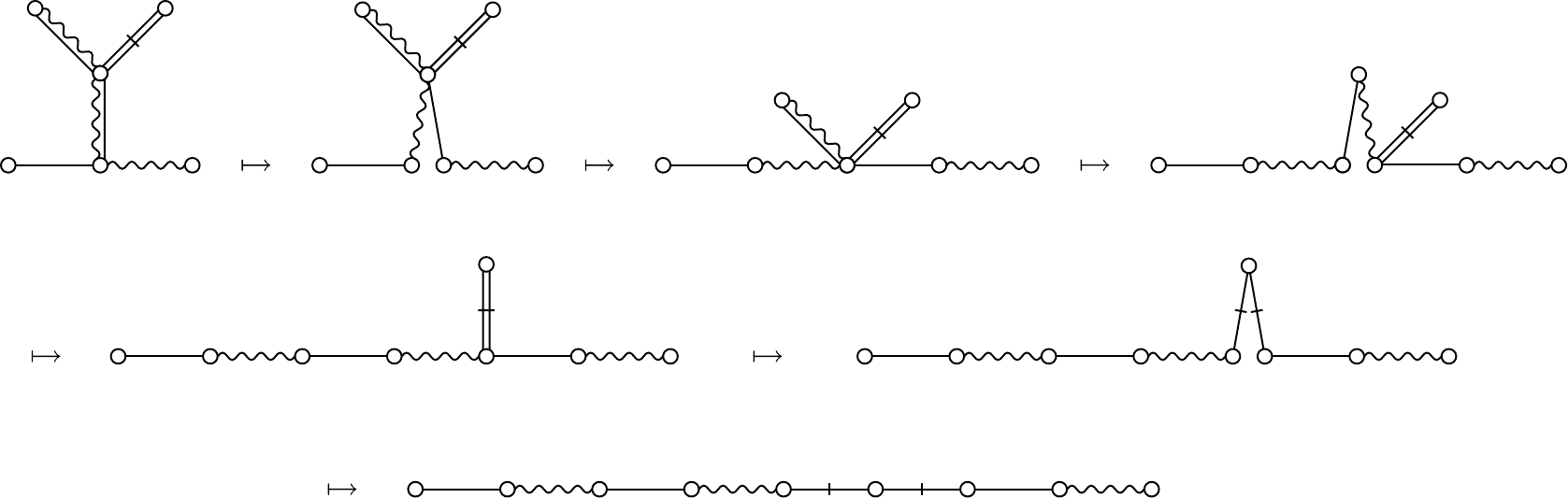}
\end{center}
\caption{The dynamical process of successively ripping open doubled bough edges. Note that the
stem edges with tag $(s,2)$ (drawn with a slashed single line) always occur in consecutive 
pairs.  \label{figure: progressive ripping}}
\end{figure}

We now define $\cal R(G, \tau_G) = \cal G'$ to be the unique decorated graph $\cal G' = (G', 
\tau_{G'}, l_{G'}) \in \fra G'$ that satisfies $\cal P(G, \tau_G) = \cal P(G', \tau_{G'})$; see 
Figure \ref{figure: completely ripped graph}. That there is exactly one such $\cal G' \in \fra 
G'$ follows immediately from the definitions of $\cal P$ and $\fra G'$, as well as the fact 
that $l_{G'}$ is uniquely determined by the pair $(G', \tau_{G'})$ through Definition 
\ref{definition of G prime} (ii). (Thus, the operation $\cal P$ plays only an auxiliary role, 
its sole purpose being to clarify the definition of $\cal R$.)

\begin{figure}[ht!]
\begin{center}
\includegraphics{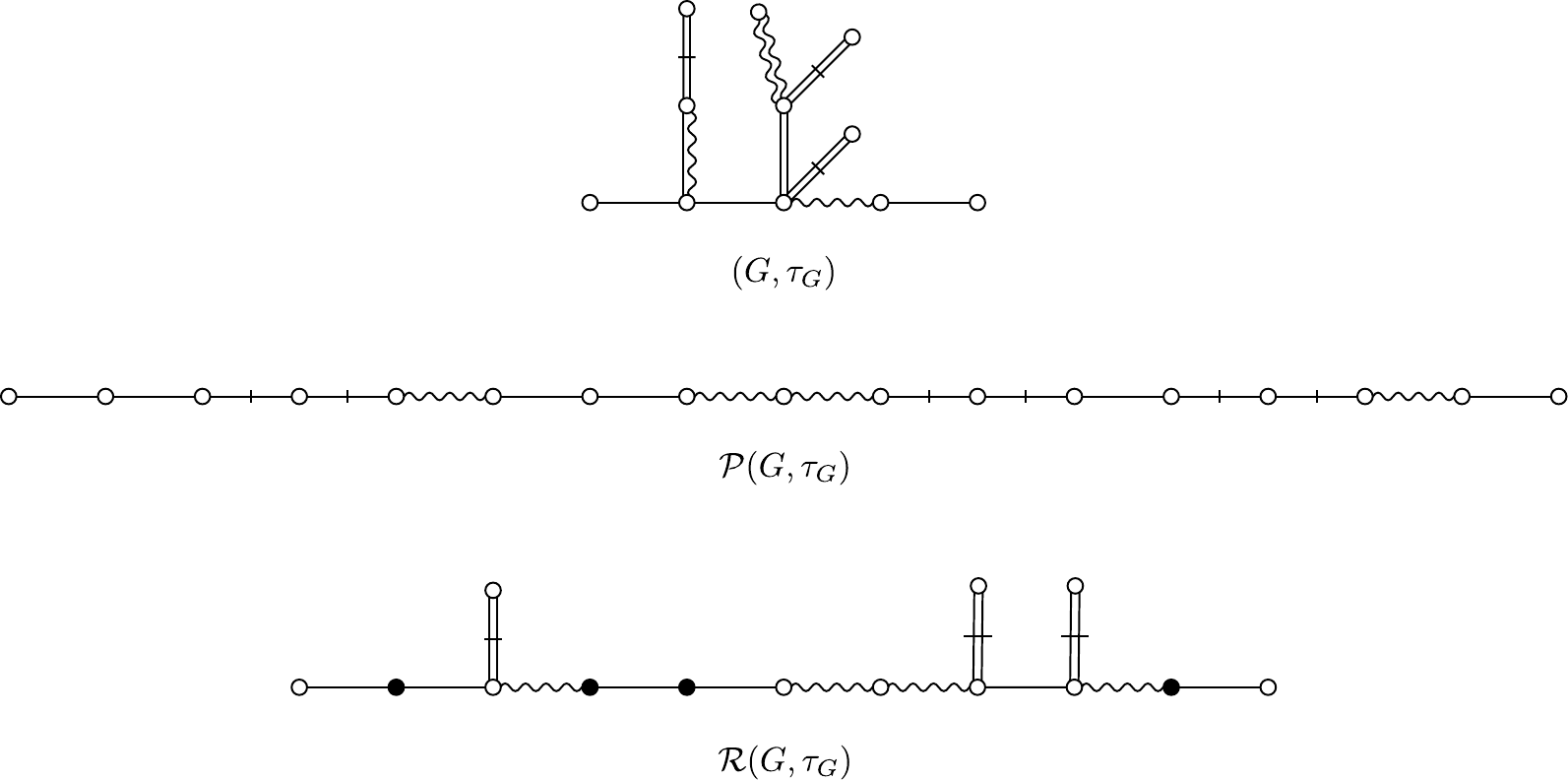}
\end{center}
\caption{The definitions of $\cal P$ and $\cal R$. \label{figure: completely ripped graph}}
\end{figure}

Having defined the ripping operation $\cal R$, we are now ready to prove Claim (i) of the 
Proposition. Before giving the full proof we outline the strategy. First, for any $\cal G  = 
(G, \tau_G, l_G) \in \fra G_\sharp$ we construct the ripped graph $\cal R(G, \tau_G) \in \fra 
G'$, which does not depend on $l_G$. Second, by definition of $\fra G'$, the ripped graph $\cal 
G' = \cal R(G, \tau_G)$ bears a unique nonbacktracing map $l_{G'}$. Third, by definition of 
$\fra G_\sharp$, there is a sequence $i_1, \dots, i_k \in \{n,c\}$ such that $\cal G = (G, 
\tau_G, l_G) = (\cal F_{i_k} \circ \cdots \circ \cal F_{i_1})(\cal R(G, \tau_G))$.  Fourth, we 
prove that this representation is unique. Thus we have expressed $l_G$ as a function of $(G, 
\tau_G)$.

Now to the proof of (i). Let $\cal G = (G, \tau_G, l_G) \in \fra G_\sharp$. By definition of 
$\fra G_\sharp$, there is a decorated graph $\cal G' = (G', \tau_{G'}, l_{G'}) \in \fra G'$ and 
a finite sequence $i_1, \dots, i_k \in \{n,c\}$ such that \begin{equation} 
\label{representation in terms of graphs operations}
\cal G \;=\; \pb{\cal F_{i_k} \circ \cdots \circ \cal F_{i_1}}(\cal G')\,.
\end{equation}
We now claim that both $\cal G'$ and the sequence $i_1, \dots, i_k$ are uniquely determined by 
$(G, \tau_G)$ (under the obvious constraint that no $\cal F_{i_j}$ is allowed to act on a 
decorated graph whose stem is completely nonbacktracking).
Indeed, we must have that $\cal G' = \cal R(G, \tau_G)$. (This follows immediately from the 
fact that $\cal R$ is left invariant under the action of $\cal F_{i}$, $i \in \{n,c\}$; i.e.\ 
$\cal R(G_2, \tau_{G_2}) = \cal R(G_1, \tau_{G_1})$ for any $(G_2, \tau_{G_2}, l_{G_2}) = \cal 
F_i(G_1, \tau_{G_1}, l_{G_1})$ where $(G_1, \tau_{G_1}, l_{G_1}) \in \fra G$.)

That different (under the above constraint) sequences applied to $\cal R(G, \tau_G)$ yield a 
different tagged graph is an immediate consequence of the following general claim.  In order to 
state it, we introduce the set $\scr D_{\wt {\cal G}}$ as the set of decorated graphs obtained 
from $\wt {\cal G} \in \fra G$ by a arbitrary applications of the operations $\cal F_n, \cal 
F_c$. (The set $\scr D_{\wt{\cal G}}$ will be used in the statement and the proof of the
following Claim $(*)$. We remark that the previously defined set $\scr B_{\wt{\cal G}}$ is a subset  of $\scr 
D_{\wt{\cal G}}$ with the additional requirement that the stem is black.)

\begin{itemize}
\item[$(*)$]
Let $\wt {\cal G} =  (\wt G, \tau_{\wt G}, l_{\wt G}) \in \fra G$ be an arbitrary decorated 
graph whose stem is not completely nonbacktracking. Then for any $\cal G_1 = (G_1, \tau_{G_1}, 
l_{G_1}) \in \scr D_{\cal F_n (\wt {\cal G})}$ and $\cal G_2 = (G_2, \tau_{G_2}, l_{G_2}) \in 
\scr D_{\cal F_c (\wt {\cal G})}$ we have $G_1 \neq G_2$.
\end{itemize}

Claim $(*)$ will be used in the following situation. We shall apply
sequences of operations $\cal F_n$ and $\cal F_c$ to a decorated graph $ \cal G' \in \fra G'$.  If two
sequences of such operations differ from each other in at least one step, then the resulting two graphs will be 
different.
In other words, if a decorated graph $\cal G$ can be written in the form \eqref{representation in terms of graphs 
operations}
and $\cal G'$ is known, then the sequence $i_1, i_2,\dots,i_k$ is uniquely determined. Together with the uniqueness of 
$\cal G'= \cal R(G,\tau_G)$
established earlier, this proves the uniqueness of the representation
\eqref{representation in terms of graphs operations}. Thus we can define the map $\ell: (G, \tau_G) \to l_G$ through
\begin{equation*}
\cal G \;=\; (G, \tau_G, l_G) \;=\; \pb{\cal F_{i_k} \circ \cdots \circ \cal F_{i_1}}\pb{\cal R(G, \tau_G)}\,,
\end{equation*}
and hence Claim (i) follows.

We now prove Claim $(*)$. For any graph $\wt G \in \fra W$ and integer $q \in Q_{\wt G} \deq 
\hb{0,1, \dots, \abs{\cal E (\cal S(\wt G))} + 2 \abs{\cal E(\cal B(\wt G))}}$, we define the 
vertex $v_{\wt G}(q) \in \cal V(\wt G)$ as the vertex reached after $q$ steps of the walk around $\wt G$ (see Figure 
\ref{figure: walk}).  For $q \in Q_{\wt G}$ we define the ``time of next return'' $r_{\wt G}(q)$ as the smallest integer 
$q' > q$ such that $v_{\wt G}(q') = v_{\wt G}(q)$; if there is no such $q'$, we set $q' \deq \infty$.

Next, let $v_1 \in \cal V(\wt G) \setminus \{a(\wt G), b(\wt G)\}$ be the first backtracking 
stem vertex of $\wt G$, and denote by $v_0$ its parent vertex (for an example see Figure 
\ref{figure: collapsing operation}). Define $q_0$ as the ``last time we walk across $v_0$'', 
i.e.\ as the largest integer in $Q_{\wt G}$ satisfying $v_{\wt G}(q_0) = v_0$. By definition of 
$q_0$, we have $r_{\wt G}(q_0) = \infty$.  Now define $\cal G_c = (G_c, \tau_{G_c}, l_{G_c}) 
\deq \cal F_{c}(\wt {\cal G})$. Clearly, we have that $r_{G_c}(q_0) < \infty$. Moreover, one 
readily sees that
\begin{equation} \label{difference in walk distance}
r_{G_2}(q_0) \;=\; r_{G_c}(q_0) \;<\; r_{G_1}(q_0)\,,
\end{equation}
for all $\cal G_1 = (G_1, \tau_{G_1}, l_{G_1}) \in \scr D_{\cal F_n (\wt {\cal G})}$ and $\cal 
G_2 = (G_2, \tau_{G_2}, l_{G_2}) \in \scr D_{\cal F_c (\wt {\cal G})}$. The equality expresses 
the fact that $v_0$ and $v_2$ have already been collapsed into one vertex in all $\cal G_2 \in 
\scr D_{\cal F_c(\wt{\cal G})}$. The inequality expresses the fact that, while $v_0$ may be 
collapsed with a stem vertex $v_j$ at some point when constructing $G_1 \in \scr D_{\cal 
F_n(\wt{\cal G})}$, the walk from $v_0$ to $v_j$ is strictly longer than from $v_0$ to $v_2$.  
Claim $(*)$ follows immediately from \eqref{difference in walk distance}.

Next, we prove Claim (ii). If $\cal G \in \fra G'$ then by definition all leaf edges have tag 
$(b,1)$, i.e.\ are small. This also holds for $\cal F_n(\cal G)$ (trivially), as well as for 
$\cal F_c(\cal G)$. In order to see this, define the property $(\mathrm P_{\cal G})$ as 
follows.
\begin{itemize}
\item[($\mathrm{P}_{\cal G}$)]
If a vertex $v \in \cal V(G)$ that is not the root of a bough satisfies $(u,v), (v,w) \in \cal 
E(\cal S(G))$ for some vertices $v,w$ and if the tags of $(u,v)$ and $(v,w)$ are both $(s,0)$, 
then the vertex $v$ is a nonbacktracking stem vertex.
\end{itemize}
Property $(\mathrm{P}_{\cal G})$ for a decorated graph $\cal G$ means that a vertex between two 
straight stem edges is black unless it is the root of a bough.
It is easy to see that the property $(\mathrm P_{\cal G})$ satisfied for all $\cal G \in \fra 
G'$ (see Definition \ref{definition of G prime} (ii)). Moreover, $(\mathrm P_{\cal G})$ is 
invariant under $\cal F_c$ and $\cal F_n$.  Recalling the definition of $\fra G_\sharp$, we see 
that Claim (ii) follows by induction.

Next, Claim (iii) clearly holds if $\cal G = (G, \tau_G, l_G) \in \fra G'$. Moreover, by 
definition of $\cal F_n$ and $\cal F_c$, Claim (iii) holds for $\cal F_n(\cal G)$ and $\cal 
F_c(\cal G)$ if it holds for $\cal G$. Hence Claim (iii) follows from the definition of $\fra 
G_\sharp$.

Claim (iv) is an immediate consequence of the fact that if $\wt {\cal G} \in \scr B_{\cal G}$ 
then $\cal G = \cal R(\wt{\cal G})$.

Claim (v) is an immediate consequence of the fact that, by definition of $\cal F_n$ and $\cal 
F_c$, we have $\deg (\cal G) = \deg(\cal F_n(\cal G)) = \deg(\cal F_c(\cal G))$.

Finally, we prove Claim (vi). Let $\cal G \in \fra G'$. Using Lemma \ref{lemma: simple 
properties of graph operations} repeatedly, we get
\begin{align*}
\fra V_{xy}(\cal G) &\;=\; \fra V_{xy}\pb{\cal F_n(\cal G)} + \fra V_{xy}\pb{\cal F_c(\cal G)}
\\
&\;=\; \fra V_{xy}\pb{\cal F_n(\cal F_n(\cal G))} + \fra V_{xy}\pb{\cal F_c(\cal F_n(\cal G))}
+ \fra V_{xy}\pb{\cal F_n(\cal F_c (\cal G))} + \fra V_{xy}\pb{\cal F_c(\cal F_c(\cal G))}
\\
&\;=\; \dots \;=\; \sum_{i \in I} \fra V_{xy}(\cal G_i)\,,
\end{align*}
where $(\cal G_i)_{i \in I}$ is a finite family of decorated graphs whose stems are completely 
nonbacktracking.  By definition of $\scr B_{\cal G}$, we have $\scr B_{\cal G} \;=\; \h{\cal 
G_i \,:\, i \in I}$.  What remains is to show that each $\wt{\cal G} \in \scr B_{\cal G}$ 
appears only once in $(\cal G_i)_{i \in I}$. But this is an immediate consequence of the 
uniqueness of the sequence $i_1, \dots, i_k$ in the representation $\wt {\cal G} = \pb{\cal 
F_{i_k} \circ \cdots \circ \cal F_{i_1}}(\cal G)$; see the proof of Claim (i) above.
\end{proof}

In view of Proposition \ref{proposition: properties of graphs} (i), we may regard the set $\fra 
G_\sharp$ as a set of tagged graphs $(G, \tau_G)$. We shall consistently adopt this point of 
view from now on.

\begin{proposition} \label{proposition: U_n as a sum over graphs}
We have
\begin{equation} \label{D_n as a sum over graphs}
(U_n)_{xy} \;=\; \sum_{\cal G \in \fra G' \,:\, \deg(\cal G) = n} \fra V_{xy}(G) \;=\; 
\sum_{\cal G \in \fra G_n} \fra V_{xy}(\cal G)\,,
\end{equation}
where $\fra V_{xy}(\cal G)$ is defined in \eqref{definition of value of graph}, and we defined the subset of graphs
\begin{equation*}
\fra G_n \;\deq\; \hb{\cal G \in \fra G_\sharp \,:\, \deg(\cal G) = n}\,.
\end{equation*}
\end{proposition}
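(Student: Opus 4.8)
The plan is to establish the two equalities in \eqref{D_n as a sum over graphs} separately. The first equality, $(U_n)_{xy} = \sum_{\cal G \in \fra G' : \deg(\cal G) = n} \fra V_{xy}(\cal G)$, is essentially a translation of Proposition~\ref{proposition: main path expansion} into the graphical language. First I would unpack the right-hand side of \eqref{main path expansion}: each summand is a product $V_{\ell_0}\,\ul{\Phi_{a_1}V_{\ell_1}}\cdots\ul{\Phi_{a_k}V_{\ell_k}}$ indexed by $k\geq 0$, a string $a\in\{2,3\}^k$, and a composition $\ell_0+\cdots+\ell_k = n-\abs a$. I would then exhibit the bijection between such data and decorated graphs in $\fra G'$: the stem is the concatenation of the $V_{\ell_i}$-blocks (each $V_{\ell_i}$ contributing $\ell_i$ straight stem edges with tag $(s,0)$) together with the ``head'' edges of each $\ul{\Phi_{a_i}V_{\ell_i}}$ factor (a tag $(s,0)$ edge if $a_i=3$ coming from the $\Phi_3$, using \eqref{definition of Phi_3}), while each $\Phi_2$ factor (i.e.\ $a_i=2$) produces one bough edge of tag $(b,1)$ dangling off the stem, matching the defining property in Definition~\ref{definition of G prime}~(i). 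One checks that $\deg(\cal G)$, as defined via the polynomial table, equals the total number of $\wh H$-factors (with $\Phi_3$ counting as $3$ and $\Phi_2$ as $2$), which is exactly $n$ by the constraint $\ell_0+\cdots+\ell_k=n-\abs a$. The nonbacktracking indicators $\prod \ind{x_i\neq x_{i+2}}$ inside $V_{\ell_i}$ and inside $\ul{\Phi_{a_i}V_{\ell_i}}$ (see \eqref{nonbacktracking product}) translate precisely into the map $l_G$ fixed by Definition~\ref{definition of G prime}~(ii) — this is where one uses that $l_G$ is determined by $(G,\tau_G)$ on $\fra G'$, as already noted after Definition~\ref{definition of G prime}. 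Comparing \eqref{definition of value of graph} term-by-term with the matrix product then gives the first equality; this is routine bookkeeping once the dictionary is set up.

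The second equality, $\sum_{\cal G\in\fra G':\deg(\cal G)=n}\fra V_{xy}(\cal G) = \sum_{\cal G\in\fra G_n}\fra V_{xy}(\cal G)$, is where the work of Proposition~\ref{proposition: properties of graphs} is used. For each $\cal G\in\fra G'$ with $\deg(\cal G)=n$, Proposition~\ref{proposition: properties of graphs}~(vi) gives $\fra V_{xy}(\cal G)=\sum_{\wt{\cal G}\in\scr B_{\cal G}}\fra V_{xy}(\wt{\cal G})$, and Proposition~\ref{proposition: properties of graphs}~(v) ensures $\deg(\wt{\cal G})=\deg(\cal G)=n$ for every $\wt{\cal G}\in\scr B_{\cal G}$, so $\scr B_{\cal G}\subset\fra G_n$. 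Summing over $\cal G\in\fra G'$ with $\deg=n$ therefore yields $\sum_{\cal G}\sum_{\wt{\cal G}\in\scr B_{\cal G}}\fra V_{xy}(\wt{\cal G})$, and I must check this equals $\sum_{\wt{\cal G}\in\fra G_n}\fra V_{xy}(\wt{\cal G})$. By definition $\fra G_\sharp=\bigcup_{\cal G\in\fra G'}\scr B_{\cal G}$, so every $\wt{\cal G}\in\fra G_n$ arises from at least one $\cal G\in\fra G'$; by Proposition~\ref{proposition: properties of graphs}~(iv) the sets $\scr B_{\cal G}$ are pairwise disjoint as $\cal G$ ranges over $\fra G'$, so each $\wt{\cal G}$ is counted exactly once. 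One small point: I should note that if $\wt{\cal G}\in\scr B_{\cal G}$ for some $\cal G\in\fra G'$ with $\deg(\cal G)\neq n$, then $\deg(\wt{\cal G})\neq n$ by~(v), so restricting to $\deg=n$ on one side is consistent with restricting on the other. Putting these together gives the second equality.

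The main obstacle is the first equality — specifically, verifying carefully that the combinatorial dictionary between summands of \eqref{main path expansion} and decorated graphs in $\fra G'$ is a bijection that preserves the value. The subtlety is not in any single estimate but in matching the nonbacktracking constraints: in \eqref{main path expansion} the indicator structure inside each $V_{\ell_i}$ block and inside each $\ul{\Phi_{a_i}V_{\ell_i}}$ factor is ``local'' to that factor and does not, for instance, constrain the last vertex of one $V$-block against the first internal vertex of the next factor across a $\Phi_2$ insertion — one has to confirm that Definition~\ref{definition of G prime}~(ii)'s recipe for $l_G$ reproduces exactly this pattern (in particular that bough roots and the left endpoints of wiggly edges are \emph{not} made nonbacktracking, matching the absence of the corresponding indicator when a $\Phi_2$ or the head of a $\Phi_3$ is inserted). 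Once this correspondence is pinned down, identifying $\deg(\cal G)$ with $n$ via the polynomial table is immediate, and \eqref{definition of value of graph} is visibly the expanded matrix product. I would present this as a short lemma-style verification rather than a lengthy case analysis, since the figures (Figures~\ref{figure: basic definitions} and~\ref{figure: example path}) already make the dictionary transparent.
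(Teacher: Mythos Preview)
Your approach is correct and essentially identical to the paper's: the first equality is unpacked from \eqref{main path expansion} via the definition of $\fra G'$, and the second follows by combining parts (iv), (v), and (vi) of Proposition~\ref{proposition: properties of graphs} exactly as you outline. One small slip: in your dictionary the $\Phi_3$ factor should produce a stem edge with tag $(s,1)$ (the wiggly edge, degree $3$), not $(s,0)$; otherwise the bookkeeping is right.
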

\begin{proof}
The first equality of \eqref{D_n as a sum over graphs} follows from \eqref{main path expansion} and the definition of 
$\fra G'$ (see Definition \ref{definition of G prime}); the second from Proposition \ref{proposition: properties of 
graphs} (iv), (v), and (vi).
\end{proof}

\section{Lumping of edges} \label{section: lumping}
Recall that our aim is to compute
\begin{equation*}
\wh \varrho_{\leq}(t,x) \;=\; \sum_{n + n' \leq M^\mu}  \alpha_{n}(t) \, \ol{\alpha_{n'}(t)} \; \E 
\, (U_n)_{0x} (U_{n'})_{x0}\,.
\end{equation*}
By \eqref{D_n as a sum over graphs} we have
\begin{equation} \label{main term as a sum over graphs}
\wh \varrho_{\leq}(t,x) \;=\; \sum_{n + n' \leq M^\mu}  \alpha_{n}(t) \, \ol{\alpha_{n'}(t)} \; 
\sum_{\cal G \in \fra G_n} \sum_{\cal G' \in \fra G_{n'}}  \E \fra V_{0x}(\cal G) \fra 
V_{x0}(\cal G')\,.
\end{equation}
Computing the expectation $\E \fra V_{0x}(G) \fra V_{x0}(G')$ yields a lumping of the edges 
$\cal E(G) \cup \cal E(G')$, which we now describe.

For the following we fix $\cal G = (G, \tau_G) \in \fra G_\sharp$ and $\cal G' = (G', \tau_{G'}) \in \fra G_\sharp$.  
Thus, we also fix the maps $l_G$ and $l_{G'}$; see Proposition \ref{proposition: properties of graphs} (i).  To 
streamline notation, we introduce their union $\cal G \cup \cal G' = (G \cup G', \tau_{G \cup G'})$ defined in the 
obvious way. We also get the map $l_{G\cup G'}$ that we extend by requiring that $l_{G\cup G'}(v,w) = 0$ if $v \in \cal 
V(G)$ and $w \in \cal V(G')$.  We often abbreviate $\tau \equiv \tau_{G \cup G'}$ and $l \equiv l_{G \cup G'}$.

As in the previous section, we abbreviate the family of labels with
\begin{equation*}
\b x \;=\; \pb{x_v \,:\, v \in \cal V(G \cup G')}\,.
\end{equation*}
From \eqref{definition of value of graph} we immediately get
\begin{multline} \label{value of graph as product over edges}
\fra V_{0x}(\cal G) \fra V_{x0}(\cal G') \;=\; \sum_{\b x} \delta_{0 x_{a(G)}} \delta_{x 
x_{b(G)}} \delta_{x x_{a(G')}} \delta_{0 x_{b(G')}}
\qBB{\prod_{v,w \in \cal V(G \cup G')} \pb{1 - l(v,w) \delta_{x_v x_w}}}
\\
\times
\prod_{e \in \cal E(G \cup G')} P_{\tau(e)}\pb{\wh H_{x_{a(e)} x_{b(e)}}, \wh H_{x_{b(e)} x_{a(e)}}}\,.
\end{multline}
Next, for any fixed $\b x$ we assign to each edge $e \in \cal E(G \cup G')$ the unordered pair 
of labels
\begin{equation*}
\varrho_{\b x}(e) \;\deq\; \{x_{a(e)}, x_{b(e)}\}\,.
\end{equation*}
To each label configuration $\b x$ we assign a \emph{lumping} $\Gamma(\b x)$ of the edges $\cal E(G \cup G')$ according 
to the value of the map $\varrho_{\b x}$. We use the word lumping to mean an equivalence relation on $\cal E(G \cup 
G')$, or, equivalently, a partition of $\cal E(G \cup G')$. More precisely, the lumping $\Gamma(\b x)$ is defined as the 
equivalence relation (denoted by $\sim$)
on $\cal E(G \cup G')$ such that $e \sim e'$ if and only if $\varrho_{\b x}(e) = \varrho_{\b x}(e')$.  
We use the notation $\Gamma = \{\gamma\}_{\gamma \in \Gamma}$, where $\gamma \subset \cal E(G 
\cup G')$ is a \emph{lump}, i.e.\ an equivalence class.
Thus, taking the expectation in \eqref{value of graph as product over edges} yields
\begin{multline} \label{expectation factors}
\E \fra V_{0x}(\cal G) \fra V_{x0}(\cal G') \;=\; \sum_{\b x} \delta_{0 x_{a(G)}} \delta_{x 
x_{b(G)}} \delta_{x x_{a(G')}} \delta_{0 x_{b(G')}}
\qBB{\prod_{v,w \in \cal V(G \cup G')} \pb{1 - l(v,w) \delta_{x_v x_w}}}
\\
\times
\prod_{\gamma \in \Gamma(\b x)} \E \prod_{e \in \gamma} P_{\tau(e)}\pb{\wh H_{x_{a(e)} 
x_{b(e)}}, \wh H_{x_{b(e)} x_{a(e)}}}\,,
\end{multline}
where we used that $\wh H_{a(e) b(e)}$ and $\wh H_{a(e') b(e')}$ are independent if $\varrho_{\b 
x}(e) \neq \varrho_{\b x}(e')$.
\label{page: lumping}

Next, we define the indicator function
\begin{equation} \label{definition of Delta}
\Delta_\Gamma(\b x) \;\deq\; \ind{\Gamma(\b x) = \Gamma} \;=\; \qBB{\prod_{\gamma \neq \gamma' 
\in \Gamma} \prod_{e \in \gamma} \prod_{e' \in \gamma'} \indb{\varrho_{\b x}(e) \neq \varrho_{\b 
x}(e')}}
\qBB{\prod_{\gamma \in \Gamma} \prod_{e,e' \in \gamma} \indb{\varrho_{\b x}(e) = \varrho_{\b 
x}(e')}}\,,
\end{equation}
indicating that a labelling $\b x$ is compatible with the equivalence relation $\Gamma$, i.e.\ 
$\varrho_{\b x}(e) = \varrho_{\b x}(e')$ if and only if $e \sim e'$.

By definition, $P_{\tau(e)}$ is an even function whenever $\deg(e)$ is even and an odd function 
whenever $\deg(e)$ is odd. Moreover, the matrix elements of $H$ were truncated in such a way 
that the identity \eqref{condition on moments} remains valid for $\wh H$; see \eqref{truncation 
of matrix elements}. Thus, the expectation \eqref{expectation factors} vanishes unless all 
lumps $\gamma \in \Gamma(\b x)$ are of even degree, whereby the \emph{degree} of a lump 
$\gamma$ is defined as
\begin{equation*}
\deg(\gamma) \;\deq\; \sum_{e \in \gamma} \deg(e)\,.
\end{equation*}

Let $\scr G(G \cup G')$ denote the set of all lumpings of $\cal E(G \cup G')$ whose lumps are 
of even degree. 
Thus \eqref{expectation factors} becomes
\begin{equation} \label{expectation as a sum over lumpings}
\E \fra V_{0x}(\cal G) \fra V_{x0}(\cal G') \;=\; \sum_{\Gamma \in \scr G(G \cup G')} V_x(\cal 
G \cup \cal G', \Gamma)\,,
\end{equation}
where we defined the \emph{value} of the graph $\cal G \cup \cal G'$ with lumping $\Gamma$ as
\begin{multline} \label{value of general lumping}
V_x(\cal G \cup \cal G',\Gamma) \;\deq\; \sum_{\b x} \Delta_\Gamma(\b x)\, \delta_{0 x_{a(G)}} 
\delta_{x x_{b(G)}} \delta_{x x_{a(G')}} \delta_{0 x_{b(G')}}
\qBB{\prod_{v,w \in \cal V(G \cup G')} \pb{1 - l(v,w) \delta_{x_v x_w}}}
\\
\times
\prod_{\gamma \in \Gamma} \E \prod_{e \in \gamma} P_{\tau(e)}\pb{\wh H_{x_{a(e)} x_{b(e)}}, \wh 
H_{x_{b(e)} x_{a(e)}}}\,.
\end{multline}

\label{page: I_n}
Next, let $I_n \in \fra W$ denote the bare stem consisting of $n$ edges. Recall that a bare 
stem is a graph with no bough edges; it is uniquely determined by its number of edges.
Denote by $\cal I_n \in \fra G_n$ the decorated graph obtained from $I_n$ by assigning the tag $(s,0)$ to each edge (in 
particular, the stem $I_n$ is completely nonbacktracking in $\cal I_n$).  Define the subset \label{page: G_n^*}
\begin{equation*}
\fra G_n^* \;\deq\; \fra G_n \setminus \{\cal I_n\}\,.
\end{equation*}
From \eqref{main term as a sum over graphs} and \eqref{expectation as a sum over lumpings} we get the splitting
\begin{multline} \label{rho split into bare and garnished parts}
\wh \varrho_{\leq}(t,x)
\\
\;=\; \sum_{n + n' \leq M^\mu}  \alpha_{n}(t) \, \ol{\alpha_{n'}(t)} \sum_{\Gamma \in \scr G(I_n \cup I_{n'})} V_x(\cal 
I_n \cup \cal I_{n'}, \Gamma)
+
\sum_{n + n' \leq M^\mu}  \alpha_{n}(t) \, \ol{\alpha_{n'}(t)} \sum_{\cal G \in \fra G_n^*} \sum_{\cal G' \in \fra 
G_{n'}^*} \sum_{\Gamma \in \scr G(G \cup G')} V_x(\cal G \cup \cal G', \Gamma)
\\
+
\sum_{n + n' \leq M^\mu}  \alpha_{n}(t) \, \ol{\alpha_{n'}(t)}  \sum_{\cal G' \in \fra G_{n'}^*}
\sum_{\Gamma \in \scr G(I_n \cup G')} V_x(\cal I_n \cup \cal G', \Gamma)
+
\sum_{n + n' \leq M^\mu}  \alpha_{n}(t) \, \ol{\alpha_{n'}(t)} \sum_{\cal G \in \fra G_n^*}
\sum_{\Gamma \in \scr G(G \cup I_{n'})} V_x(\cal G \cup \cal I_{n'}, \Gamma)\,.
\end{multline}

This is our starting point for the remaining sections. The first term on the right-hand side of 
\eqref{rho split into bare and garnished parts} is the leading term, whose contribution is 
computed in Section \ref{section: bare stem}. The remaining three terms on the right-hand side 
of \eqref{rho split into bare and garnished parts} are error terms, and are estimated in 
Sections \ref{section: boughs} and \ref{section: boughs for kappa=1/3}.

\section{The bare stem} \label{section: bare stem}
In this section we analyse the first term on the right-hand side of \eqref{rho split into bare 
and garnished parts} by proving the following result.
\begin{proposition} \label{proposition: value of bare stem}
For any continuous bounded function $\varphi \in C_b(\R^d)$ and $T \geq 0$ we have
\begin{multline} \label{statement of stem convergence}
\lim_{W \to \infty} \sum_x \varphi \pbb{\frac{x}{W^{1 + d \kappa / 2}}} \sum_{n + n' \leq W^\mu} \alpha_n(W^{d \kappa} 
T) \ol{\alpha_{n'}(W^{d \kappa} T)}
\; \sum_{\Gamma \in \scr G(I_n \cup I_{n'})} V_x(\cal I_n \cup \cal I_{n'}, \Gamma)
\\
=\; \int \dd X \; L(T,X) \, \varphi(X)\,,
\end{multline}
where we recall the definition of $ L(T,X)$ from \eqref{def:L}.
\end{proposition}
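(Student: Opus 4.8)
The plan is to show that, among all lumpings $\Gamma\in\scr G(I_n\cup I_{n'})$, only the \emph{ladders} contribute in the limit: these are the lumpings in which each of the $n$ edges of $\cal I_n$ is lumped with exactly one oppositely oriented edge of $\cal I_{n'}$, so that, after the delta constraints and the nonbacktracking conditions are imposed, $n=n'$ and the labelled path $\cal I_{n'}$ is forced to retrace $\cal I_n$ in reverse. The ladder contribution will then be identified with the right-hand side of \eqref{statement of stem convergence} via a local central limit theorem and the Bessel-function asymptotics recalled in Section \ref{section: summary}. Recall that $\fra V_{xy}(\cal I_n)=(V_n)_{xy}$ is the $n$-th nonbacktracking power, so the first term of \eqref{rho split into bare and garnished parts} is $\sum_{n+n'\leq M^\mu}\alpha_n(t)\,\ol{\alpha_{n'}(t)}\;\E\,(V_n)_{0x}(V_{n'})_{x0}$.

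First I would bound the non-ladder lumpings. Since every edge of $\cal I_n$ or $\cal I_{n'}$ has tag $(s,0)$ and hence degree one, each lump $\gamma$ has even size $2k$, and by \eqref{condition on moments} its contribution $\E\prod_{e\in\gamma}\wh H_{x_{a(e)}x_{b(e)}}$ vanishes unless it is of this form; by the truncation \eqref{truncation of matrix elements} and $\sigma^2_{uv}\leq C/M$ it is bounded in absolute value by $C^kM^{-k}M^{2k\delta}$. A power-counting argument — estimating the number of independent label summations compatible with $\Gamma$ and with the nonbacktracking constraints against the number and sizes of the lumps, exactly as in \cite{erdosknowles} but now also tracking lumps of size $\geq4$ — then shows that any lumping that is not a ladder (because it contains a genuine higher cumulant, i.e.\ a lump of size $\geq4$, or because it pairs edges in a crossing or within-path fashion) is suppressed by at least a factor $M^{-c}$ relative to the ladders, uniformly for $n+n'\leq M^\mu$; here the spurious losses $M^{O(\delta)}$ are harmless thanks to \eqref{condition on delta} and \eqref{assumption on kappa and mu}. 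Summing against $\abs{\alpha_n(t)\,\alpha_{n'}(t)}\leq1$ and $\sum_x\abs{\varphi(\cdot)}$ shows that all non-ladder lumpings together contribute $O(M^{-c})$. This is the main difficulty of the proof: in \cite{erdosknowles} the deterministic normalisation \eqref{deterministic norm} made the analogous combinatorics straightforward, whereas here the genuine randomness of $\abs{A_{xy}}$ produces higher cumulants that must be classified and estimated explicitly.

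Next I would analyse the ladders. The ladder pairing together with the nonbacktracking constraints forces $n=n'$ and identifies the internal labels of $\cal I_{n'}$ with those of $\cal I_n$ read backwards (subleading configurations, in which $0$ or $x$ reoccurs internally, are again $O(M^{-c})$), so the ladder contribution equals
\begin{equation*}
\sum_{2n\leq M^\mu}\abs{\alpha_n(t)}^2\sum_{x_1,\dots,x_{n-1}}\pBB{\prod_{i=1}^{n}\E\absb{\wh H_{x_{i-1}x_i}}^2}\,\pBB{\prod_{i=0}^{n-2}\ind{x_i\neq x_{i+2}}}\Big|_{x_0=0,\,x_n=x}\;+\;O(M^{-c})\,.
\end{equation*}
By Lemma \ref{lemma: bounds on truncated variance} one replaces $\E\absb{\wh H_{x_{i-1}x_i}}^2$ by $\sigma^2_{x_{i-1}x_i}$ at negligible cost, and, since the nonbacktracking constraint removes a mass $\sum_{x_i}(\sigma^2_{x_{i-1}x_i})^2=O(1/M)$ per step and $n\leq M^\mu$, the inner sum may be replaced by the ordinary random-walk kernel $P^{(n)}(0,x)$ with step law $y\mapsto\sigma^2_{0y}$ at relative cost $o(1)$; both replacements are carried out as in \cite{erdosknowles}. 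The step law has mean zero by the centring of $f$, covariance $W^2\Sigma(1+o(1))$ by \eqref{covariance of f}, and sufficient decay by \eqref{decay of f} for a local central limit theorem on the diffusive scale.

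Finally I would take the two limits. Writing $t=W^{d\kappa}T$, $n=\lambda t$, and substituting $X=x/W^{1+d\kappa/2}$, the local central limit theorem gives
\begin{equation*}
\sum_x\varphi\pbb{\frac{x}{W^{1+d\kappa/2}}}\,P^{(n)}(0,x)\;\longrightarrow\;\int\dd X\;\varphi(X)\,G(\lambda T,X)
\end{equation*}
uniformly for $\lambda$ in compact subsets of $(0,1)$, the powers of $W$ cancelling because $W^2$ times the number of steps is of order $W^{2+d\kappa}$, the square of the spatial scale. The Bessel asymptotics for $\alpha_n(t)=2(-\ii)^n\frac{n+1}{t}J_{n+1}(t)$ recalled in Section \ref{section: summary} (and proved in \cite{erdosknowles}) show that the probability measure $\sum_n\abs{\alpha_n(t)}^2\,\delta_{n/t}$, of total mass one by \eqref{sum over alphas}, converges weakly on $[0,1]$ to $\frac{4}{\pi}\frac{\lambda^2}{\sqrt{1-\lambda^2}}\,\ind{0\leq\lambda\leq1}\,\dd\lambda$. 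Combining these with dominated convergence yields
\begin{equation*}
\int_0^1\dd\lambda\;\frac{4}{\pi}\frac{\lambda^2}{\sqrt{1-\lambda^2}}\int\dd X\;\varphi(X)\,G(\lambda T,X)\;=\;\int\dd X\;L(T,X)\,\varphi(X)\,,
\end{equation*}
which is \eqref{statement of stem convergence}.
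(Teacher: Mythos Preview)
Your overall strategy --- isolate the ladder pairings, show they yield $L(T,X)$ via a local CLT for the walk with step law $\sigma^2_{0,\cdot}$ combined with the Bessel asymptotics for $\abs{\alpha_n(t)}^2$, and show that everything else vanishes --- is exactly the paper's, and your treatment of the ladders is essentially correct. The problem is in your handling of the non-ladder lumpings.

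Your claim that ``any lumping that is not a ladder \dots\ is suppressed by at least a factor $M^{-c}$'' uniformly in $n+n'\leq M^\mu$, followed by ``Summing against $\abs{\alpha_n(t)\,\alpha_{n'}(t)}\leq1$'', does not close. The number of lumpings of $n+n'$ edges is of factorial order in $n+n'\sim M^\mu$, so a uniform $M^{-c}$ bound per lumping is worthless for the sum; and the trivial bound $\abs{\alpha_n}\leq 1$ cannot absorb the double sum over $n,n'$. What actually makes the sum over non-ladder pairings converge is the \emph{skeleton} organisation of \cite{erdosknowles}: a tagged pairing with skeleton of size $\bar m$ contributes at most $C M^{1/3}\cdot M^{-\bar m/3}$ (up to $M^{O(\delta)}$), while the number of pairings producing a fixed skeleton of size $\bar m$ is controlled by $(Cp)^{\bar m}$ with $p=(n+n')/2$, so the sum over all non-ladders is a geometric series in $p\,M^{-1/3+O(\delta)}$ --- convergent precisely because $\mu<1/3$. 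The sum over $n,n'$ is then handled not by $\abs{\alpha_n}\leq 1$ but by Cauchy--Schwarz against $\sum_n\abs{\alpha_n}^2=1$; see \eqref{summing over alphas} in the paper.

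The second point you gloss over is the genuinely new difficulty of this paper relative to \cite{erdosknowles}: lumps $\gamma$ of size $\abs\gamma\geq 4$ now pick up an extra $M^{\delta(\abs\gamma-2)}$ from the truncated moments, which is \emph{not} harmless by itself (your ``$M^{O(\delta)}$'' remark hides this). The paper compensates these losses by showing (Lemma \ref{lemma: existence of refining pairing}, via a greedy refinement algorithm) that any higher-order lumping $\Gamma$ admits a refining pairing $\Pi$ whose minimum skeleton size satisfies $m(\Pi)\geq\max\{p(\Gamma)/4,2\}$ with $p(\Gamma)=\sum_\gamma(\abs\gamma-2)$. This converts the dangerous $M^{\delta p(\Gamma)}$ into $M^{4\delta m(\Pi)}$, which the skeleton geometric series then swallows (Lemma \ref{lemma: bound of lumps in terms of pairings}). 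This reduction is the substantive content of Section \ref{section: bare stem}; your outline neither states nor replaces it.
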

The rest of this section is devoted to the proof of Proposition \ref{proposition: value of bare 
stem}.
The proof is similar to \cite{erdosknowles}, which we shall frequently refer to in this section for precise definitions 
and proofs. We therefore assume that the reader has some familiarity with \cite{erdosknowles}.

The only complication compared to \cite{erdosknowles} is that controlling higher order lumpings 
(resulting in high moments of $\wh A_{xy}$) requires more effort, since, unlike in 
\cite{erdosknowles}, the matrix elements of $\wh A$ are not bounded by $1$ (but only by 
$M^\delta$). A lump $\gamma$ containing $\abs{\gamma}$ edges carries a weight $M^{\delta 
\abs{\gamma}}$, but this factor can be compensated by the fact that large lumps impose strong 
restrictions on the labelling of the vertices. Technically, we shall deal with these higher 
order lumpings by replacing an arbitrary lumping with a pairing whose contribution is small 
enough to compensate any powers of $M$ resulting from the lumping.  In this way we can directly 
reduce the estimate of general lumpings to pairings. The appropriate pairing will be selected 
by a greedy algorithm defined in Appendix \ref{appendix: higher lumpings}.

We begin by establishing notation and recalling the relevant results from \cite{erdosknowles}.

\subsection{Pairing of edges}
The simple structure of $\cal I_n \cup \cal I_{n'}$ allows for some notational simplifications.
Following \cite{erdosknowles}, we abbreviate $\scr G_{n,n'} \deq \scr G(I_n \cup I_{n'})$ and $V_x(\Gamma) \;\deq\; 
V_x(\cal I_n \cup \cal I_{n'}, \Gamma)$. Thus the left-hand side of \eqref{statement of stem convergence} becomes
\begin{equation} \label{sum of bare trunk}
\lim_{W \to \infty} \sum_x \varphi \pbb{\frac{x}{W^{1 + d \kappa / 2}}} \sum_{n + n' \leq W^\mu} \alpha_n(W^{d \kappa} 
T) \ol{\alpha_{n'}(W^{d \kappa} T)}
\; \sum_{\Gamma \in \scr G_{n,n'}} V_x(\Gamma).
\end{equation}
As in \cite{erdosknowles}, we identify the vertices $a(I_n)$ and $b(I_{n'})$, as well as the vertices $b(I_n)$ and 
$a(I_{n'})$ (this is purely a notational simplification). We label the vertices explicitly according to
\begin{equation*}
\cal V(I_n \cup I_{n'}) \;=\; \{0, \dots, n + n'-1\}\,, \qquad a(I_n) \;=\; b(I_{n'}) \;=\; 
0\,, \qquad b(I_n) \;=\; a(I_{n'}) \;=\; n\,,
\end{equation*}
and write $\b x = (x_0, \dots, x_{n+n' -1})$. See Figure \ref{figure: identifying vertices}.
Recall that the degree of every edge of $I_n \cup I_{n'}$ is odd. Since, by definition of $\scr 
G_{n,n'}$, every lump $\gamma \in \Gamma$ has even degree, we conclude that every lump $\gamma 
\in \Gamma$ has an even number of edges. (Note that no such statement is possible for a general 
lump that also contains bough edges. Indeed, bough edges have even degree, so that the total 
degree of the lump gives no information about the its number of edges.)

\begin{figure}[ht!]
\begin{center}
\includegraphics{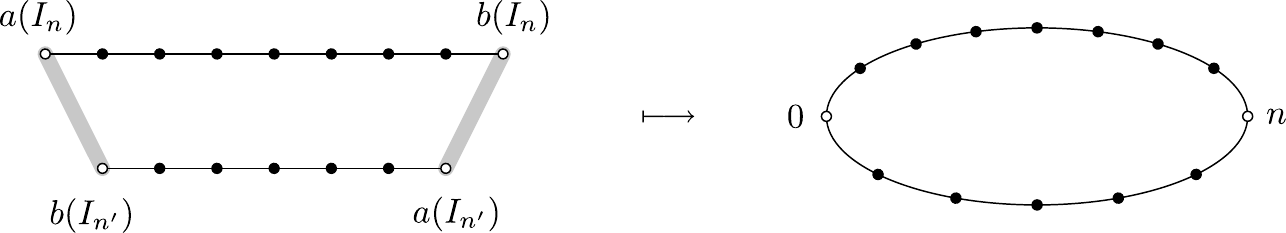}
\end{center}
\caption{Identifying the end vertices of $I_n$ and $I_{n'}$. \label{figure: identifying 
vertices}}
\end{figure}

The expression \eqref{value of general lumping} may also be simplified in the case of the bare 
stem. From \eqref{value of general lumping} we get
\begin{equation} \label{value of lumping for bare stem}
V_x(\Gamma) \;=\; \sum_{\b x} \Delta_\Gamma(\b x)\, Q_x(\b x)
\prod_{\gamma \in \Gamma} \E \prod_{e \in \gamma} \wh H_{x_{a(e)} x_{b(e)}}
\,,
\end{equation}
where we defined the indicator function
\begin{equation*}
Q_x(\b x) \;\deq\; \delta_{0 x_0} \delta_{x x_n} \qBB{\prod_{i = 0}^{n - 2} \ind{x_i \neq 
x_{i+2}}} \qBB{\prod_{i = n}^{n + n' - 2} \ind{x_i \neq x_{i + 2}}}\,.
\end{equation*}
Here we used that all edges of $\cal I_n \cup \cal I_{n'}$ have tag $(s,0)$.

Next, we make the obvious observation that, without loss of generality, we may exclude from 
$\scr G_{n,n'}$ all lumpings $\Gamma$ satisfying $\Delta_\Gamma(\b x) Q_x(\b x) = 0$ for all 
$x$ and $\b x$. In particular, if $\gamma \in \Gamma$ then $\gamma$ cannot contain two adjacent 
edges (since this would contradict the nonbacktracking condition in $Q$).

\label{page: pairing}
We call lumpings $\Gamma = \{\gamma\}$ with $\abs{\gamma} = 2$ for each $\gamma \in \Gamma$ 
\emph{pairings}, and denote the subset of pairings by $\scr P_{n,n'} \subset \scr G_{n,n'}$. We 
shall often use the notation $\Pi = \{\pi\}$ instead of $\Gamma = \{\gamma\}$ to denote a 
pairing. We represent a pair $\pi = \{e, e'\}$ graphically by drawing a line, called a 
\emph{bridge}, that joins the edges $e,e' \in \cal E(I_n \cup I_{n'})$; see Figure \ref{figure: 
pairing and ladder}.

We shall show that the leading order contribution to the left-hand side of \eqref{statement of 
stem convergence} comes from the pairings; all higher order lumpings are subleading. Moreover, 
only the contribution of the so-called \emph{ladder pairing} (see Subsection \ref{subsection: 
ladder} below) survives in the limit $W \to \infty$. In fact, only the ladder whose bridges all 
carry a \emph{straight tag} (see below for the definition of the tagging of bridges) yields a 
nonvanishing contribution to the left-hand side of \eqref{statement of stem convergence}.

\begin{figure}[ht!]
\begin{center}
\includegraphics{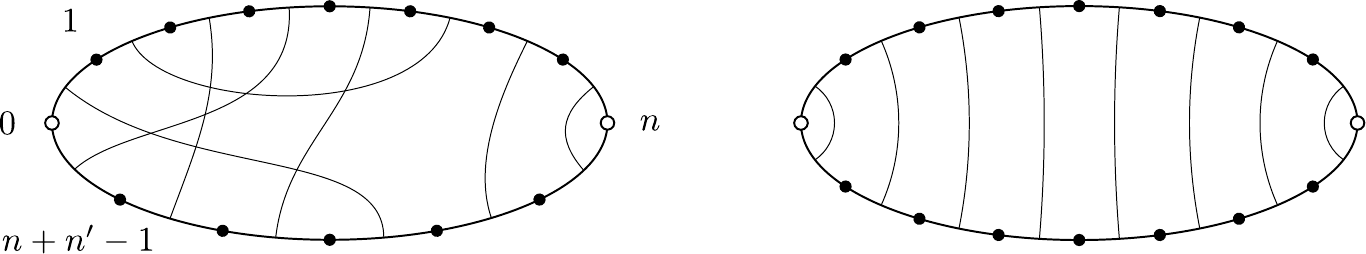}
\end{center}
\caption{A general pairing (left) and a ladder (right). \label{figure: pairing and ladder}}
\end{figure}

If $\Pi \in \scr P_{n,n'}$ is a pairing we get from \eqref{definition of Delta}
\begin{equation} \label{Delta for pairing}
\Delta_\Pi(\b x) \;=\; \qBB{\prod_{\pi \neq \pi'} \prod_{e \in \pi} \prod_{e' \in \pi'} 
\indb{\varrho_{\b x}(e) \neq \varrho_{\b x}(e')}}
\qBB{\prod_{\{e,e'\} \in \Pi} \indb{\varrho_{\b x}(e) = \varrho_{\b x}(e')}}\,.
\end{equation}
\label{page: tagged bridge}
At this point we stress that the indicator function $\indb{\varrho_{\b x}(e) = \varrho_{\b x}(e')}$ 
in \eqref{Delta for pairing} associated with the bridge $\{e,e'\}$ is different from its 
counterpart in \cite{erdosknowles} (Equation (6.3) in \cite{erdosknowles}), where bridges carry 
an orientation. In order to make the link to \cite{erdosknowles}, we \emph{tag}\footnote{To 
avoid confusion we emphasize that these bridge tags have nothing to do with the edge tags of a 
decorated graph. The use of the same word is merely a symptom of a regrettable lack of 
imagination on the authors' part.} bridges (similarly to Section 9 of \cite{erdosknowles}).  In 
other words, we choose a map $\vartheta : \Pi \to \{0,1\}$ and replace the factor 
$\indb{\varrho_{\b x}(e) = \varrho_{\b x}(e')}$ in \eqref{Delta for pairing} with $\Xi_{\b x}(\pi, 
\vartheta(\pi))$, where
\begin{align*}
\Xi_{\b x}(\{e, e'\}, 0) &\;\deq\; \indb{x_{a(e)} = x_{b(e')}} \indb{x_{b(e)} = x_{a(e')}}\,,
\\
\Xi_{\b x}(\{e, e'\}, 1) &\;\deq\; \indb{x_{a(e)} = x_{a(e')}} \indb{x_{b(e)} = x_{b(e')}} 
\indb{x_{a(e)} \neq x_{b(e)}}\,.
\end{align*}
\label{page: twisted and straight bridge}
We call a bridge $\pi$ \emph{straight} if $\vartheta(\pi) = 0$ and \emph{twisted} if 
$\vartheta(\pi) = 1$. See Figure \ref{figure: straight and twisted bridges}.
\begin{figure}[ht!]
\begin{center}
\includegraphics{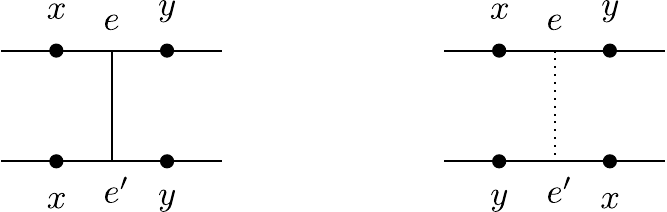}
\end{center}
\caption{A straight bridge (left, drawn with a solid line) and a twisted bridge (right, drawn with a dotted line) 
joining the edges $e$ and $e'$. In each case we indicate how the vertex labels of $x = a(e)$ and $y = b(e)$ determine 
the vertex labels of $a(e')$ and $b(e')$.  \label{figure: straight and twisted bridges}}
\end{figure}
Clearly, we have
\begin{equation} \label{splitting of untagged bridge}
\indb{\varrho_{\b x}(e) = \varrho_{\b x}(e')} \;=\; \Xi_{\b x}(\{e, e'\}, 0) + \Xi_{\b x}(\{e, e'\}, 
1)\,.
\end{equation}
Thus, each untagged bridge may be split into a straight and a twisted one. We define
\begin{equation} \label{tagged Delta}
\Delta_{\Pi, \vartheta}(\b x) \;\deq\; \qBB{\prod_{\pi \neq \pi'} \prod_{e \in \pi} \prod_{e' 
\in \pi'} \indb{\varrho_{\b x}(e) \neq \varrho_{\b x}(e')}}
\qBB{\prod_{\pi \in \Pi} \Xi_{\b x}(\pi, \vartheta(\pi))}\,,
\end{equation}
so that we have
\begin{equation} \label{tag splitting sums up to one}
\sum_{\vartheta \in \{0,1\}^\Pi} \Delta_{\Pi, \vartheta}(\b x) \;=\; \Delta_\Pi(\b x)\,.
\end{equation}
In this manner we may split
\begin{equation*}
V_x(\Pi) = \sum_{\vartheta \in \{0,1\}^\Pi} V_x(\Pi, \vartheta)\,,
\end{equation*}
where
\begin{equation} \label{value of tagged pairing for bare stem}
V_x(\Pi, \vartheta) \;\deq\; \sum_{\b x} \Delta_{\Pi, \vartheta}(\b x)\, Q_x(\b x)
\prod_{\pi \in \Pi} \E \prod_{e \in \pi} \wh H_{x_{a(e)} x_{b(e)}}
\,.
\end{equation}

\subsection{Parallel and antiparallel bridges} \label{page: parallel} \label{page: skeleton}
In \cite{erdosknowles}, the combinatorial complexity of a pairing was measured using the size 
of its \emph{skeleton pairing}. The definition of the skeleton pairing relies on the following 
notion of parallel and antiparallel bridges. We say that $\pi, \pi'$ are \emph{parallel} if 
there exist $i,j \notin \{0, n\}$ such that
\begin{equation*}
\pi \;=\; \hb{(i-1, i), (j,j+1)} \,, \qquad \pi' \;=\; \hb{(i,i+1), (j-1, j)}\,.
\end{equation*}
Similarly, $\pi, \pi'$ are \emph{antiparallel} if there exist $i,j \notin \{0, n\}$ such that
\begin{equation*}
\pi \;=\; \hb{(i-1, i), (j - 1,j)} \,, \qquad \pi' \;=\; \hb{(i,i+1), (j, j + 1)}\,.
\end{equation*}
Note that the notion (anti)parallel is independent of the bridge tags.
See Figure \ref{figure: parallel and antiparallel bridges}. A sequence of bridges $\pi_1, 
\dots, \pi_k$ is called an \emph{(anti)ladder} if $\pi_i$ and $\pi_{i+1}$ are (anti)parallel 
for all $i = 1, \dots, k - 1$.
\begin{figure}[ht!]
\begin{center}
\includegraphics{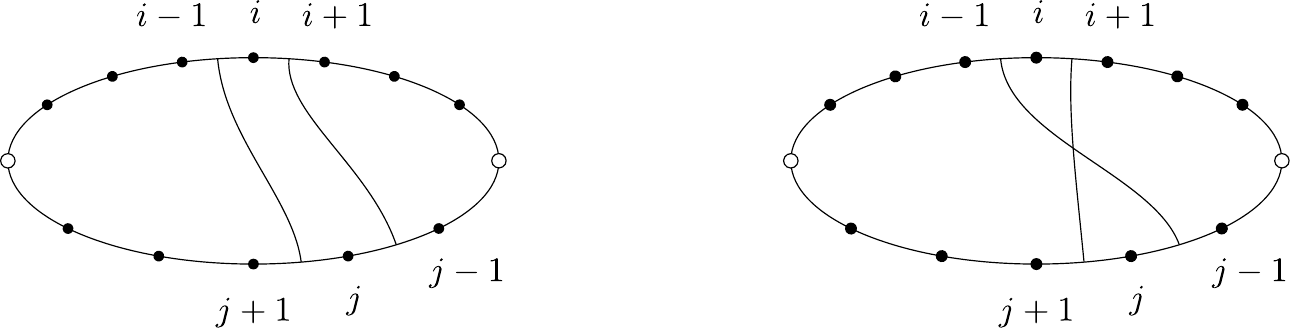}
\end{center}
\caption{Two parallel bridges (left) and two antiparallel bridges (right). \label{figure: 
parallel and antiparallel bridges}}
\end{figure}

Next, we assign to each tagged pairing $(\Pi, \vartheta)$ a skeleton $S(\Pi, \vartheta)$ 
according to the following rules. Every pair of parallel bridges that are both straight is 
replaced by a single straight bridge; every pair of antiparallel bridges that are both twisted 
is replaced by a single twisted bridge. (See \cite{erdosknowles}, Section 7.2, for a precise 
definition of this collapsing of bridges. Each collapsing step removes one bridge -- and hence 
two edges from $I_n \cup I_{n'}$ -- but always retains the vertices $a(I_n), b(I_n), a(I_{n'}), 
b(I_{n'})$.) We repeat this procedure until we reach a tagged pairing, denoted by $S(\Pi, 
\vartheta)$, which contains no parallel straight bridges and no antiparallel twisted bridges. 
The resulting skeleton is independent of the order in which pairs of bridges are collapsed.
We have that $S(\Pi, \vartheta) \in \scr P_{m,m'}$ for some $m \leq n$ and $m' \leq n'$. See 
Figure \ref{figure: creating the skeleton}, and \cite{erdosknowles}, Sections 7 and 9, for full 
details.
\begin{figure}[ht!]
\begin{center}
\includegraphics{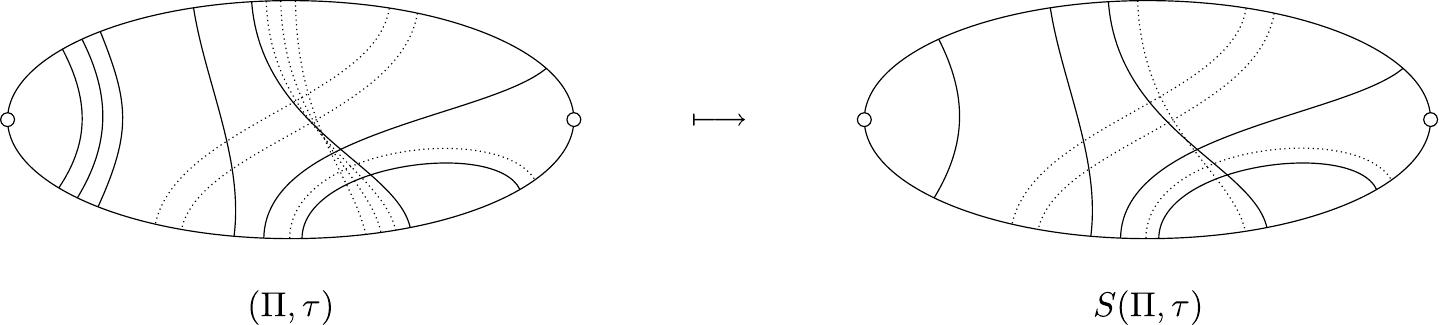}
\end{center}
\caption{A tagged pairing along with its tagged skeleton. We draw straight bridges with solid lines and twisted bridges 
with dotted lines. \label{figure: creating the skeleton}}
\end{figure}

\subsection{The ladder} \label{subsection: ladder} \label{page: ladder}
We now extract the leading order contribution to \eqref{sum of bare trunk}, the (complete) \emph{ladder}.  The ladder of 
degree $n$, denoted by $L_n$, is the pairing given by
\begin{equation}
L_n \;=\; \hB{\hb{(0,1), (2n-1, 0)}, \hb{(1,2), (2n - 2, 2n - 1)}, \dots, \hb{(n - 1, n),(n, 
n+1)}} \;\in\; \scr P_{n,n}\,,
\end{equation}
see Figure \ref{figure: pairing and ladder}. Set $\vartheta_n \equiv 0 \in \{0,1\}^{L_n}$; thus 
$(L_n, \vartheta_n)$ is the ladder whose bridges are all straight. Since all bridges of $(L_n, 
\vartheta_n)$ are straight, we find that the expectation in \eqref{value of tagged pairing for 
bare stem} is equal to $\E \absb{\wh H_{x_{a(e)} x_{b(e)}}}^2$. Now the argument of 
\cite{erdosknowles}, Section 8, applies almost verbatim, and, together with Lemma \ref{lemma: 
bounds on truncated variance}, we get
\begin{equation} \label{limit of ladders}
\lim_{W \to \infty} \sum_x \sum_{n = 0}^{M^\mu / 2} \abs{\alpha_n(W^{d \kappa} T)}^2 V_x(L_n, 
\vartheta_n) \varphi \pbb{\frac{x}{W^{1 + d \kappa / 2}}} = \int \dd X \; L(T,X) \, \varphi(X)
\end{equation}
for all $\varphi \in C_b(\R^d)$.
In fact, the only needed modification to the argument of \cite{erdosknowles}, Section 8, is 
that, in the proof of Lemma 8.4 of \cite{erdosknowles}, the i.i.d.\ random variables $(B_i)$ 
now have the law
\begin{equation} \label{law of steps}
\frac{1}{M} \frac{\sqrt{[W^{d \kappa} T]}}{\sqrt{W^{d \kappa} }} \sum_{a \in \Z^d} 
f\pbb{\frac{a}{W}} \, \frac{\delta_a}{W}
\end{equation}
instead of
\begin{equation*}
\frac{1}{M} \frac{\sqrt{[W^{d \kappa} T]}}{\sqrt{W^{d \kappa} }} \sum_{a \in \Z^d} \indb{1 \leq 
\abs{a} \leq W} \frac{\delta_a}{W}\,.
\end{equation*}
Here $[\cdot]$ denotes integer part and $\delta_a$ the point mass at $a$.
It is easy to see that the covariance matrix of the measure \eqref{law of steps} is $T \Sigma + 
o(1)$ as $W \to \infty$, where, we recall,
\begin{equation*}
\Sigma_{ij} \;=\; \int_{\R^d} \dd x \; f(x) \, x_i x_j\,.
\end{equation*}

\subsection{Bound on the non-pair lumps}
We now give a bound on the contribution of the higher-order lumpings, i.e.\ lumpings that 
contain lumps of size more than two. We start by assigning to each pairing $\Pi \in \scr 
P_{n,n'}$ its \emph{minimum skeleton size}
\begin{equation} \label{minimum skeleton size}
m(\Pi) \;\deq\; \min_{\vartheta \in \{0,1\}^\Pi} \pb{\text{number of bridges in }S(\Pi, 
\vartheta)}\,.
\end{equation}
The quantity $m(\Pi)$ is the correct measure of the combinatorial complexity of the pairing 
$\Pi$.

Let $\Gamma \in \scr G_{n,n'}$ be an arbitrary lumping and define
\begin{equation}
p(\Gamma) \;\deq\; \sum_{\gamma \in \Gamma} (\abs{\gamma} - 2)\,.
\end{equation}
We say that a lumping $\Gamma' \in \scr G_{n,n'}$ is a \emph{refinement} of a lumping $\Gamma 
\in \scr G_{n,n'}$ if for every $\gamma' \in \Gamma'$ there is a $\gamma \in \Gamma$ such that 
$\gamma' \subset \gamma$. If $\Pi \in \scr P_{n,n'}$ is a pairing that is a refinement of 
$\Gamma$, we say that $\Pi$ is a \emph{refining pairing} of $\Gamma$.

\begin{lemma} \label{lemma: existence of refining pairing}
For each $\Gamma \in \scr G_{n,n'} \setminus \scr P_{n,n'}$ there is a refining pairing $\Pi 
\in \scr P_{n,n'}$ of $\Gamma$ such that
\begin{equation} \label{lower bound on number of skeleton edges}
m(\Pi) \;\geq\; \max \pbb{\frac{p(\Gamma)}{4} \,,\, 2}\,.
\end{equation}
\end{lemma}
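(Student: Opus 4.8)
The goal is to produce, from an arbitrary non-pairing lumping $\Gamma$, a refining pairing $\Pi$ whose minimum skeleton size is at least $\max(p(\Gamma)/4, 2)$. The strategy is to build $\Pi$ lump by lump: for each lump $\gamma \in \Gamma$ with $\abs{\gamma} = 2k$ edges (recall from the discussion after \eqref{value of lumping for bare stem} that every lump of $I_n \cup I_{n'}$ has an even number of edges), we split $\gamma$ into $k$ pairs in \emph{some} way, and we must make these choices so that the resulting global pairing has a large skeleton no matter which tagging $\vartheta$ is subsequently chosen. The key mechanism is that a pairing has a \emph{small} skeleton only if it is built, up to iterated collapsing, from very few bridges; equivalently, most of its bridges must be arrangeable into long (anti)ladders of uniform tag. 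So the task reduces to: split each lump into pairs in a way that is ``incompatible with laddering,'' and count how many bridges this forces to survive in any skeleton.

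\textbf{Key steps.} First, I would set up the bookkeeping: if $\Pi$ refines $\Gamma$, then $\Pi$ has $\sum_\gamma \abs{\gamma}/2$ bridges, which exceeds the number of bridges of a pairing by exactly $p(\Gamma)/2$ — this is where the $p(\Gamma)$ on the right-hand side comes from, up to the constant $4$. Second, I would recall precisely, from \cite{erdosknowles} Sections 7 and 9, when two bridges can be collapsed: only parallel-straight or antiparallel-twisted pairs, and collapsing requires the four relevant edges to be consecutive along $I_n \cup I_{n'}$ in a specific pattern. Third, and this is the heart: within a single lump $\gamma$ of size $2k \geq 4$, since $\gamma$ cannot contain two adjacent edges (the remark after \eqref{value of lumping for bare stem}), the $2k$ edges sit at $2k$ distinct, non-adjacent positions; I would choose the pairing of $\gamma$ — most simply, pair the edges cyclically as $(e_1,e_3),(e_2,e_4),\dots$ in the order they appear along the stem rather than $(e_1,e_2),(e_3,e_4),\dots$ — so that no bridge of $\gamma$ is parallel or antiparallel to any other bridge we have created (neither within $\gamma$ nor, using that edges in different lumps connect different label-pairs, to bridges of other lumps). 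Fourth, I would argue that a bridge which is parallel/antiparallel to no other bridge can never be removed by the collapsing procedure, hence survives into $S(\Pi,\vartheta)$ for every $\vartheta$; counting such ``frozen'' bridges — at least roughly one per two extra edges contributed by each oversized lump — gives $m(\Pi) \gtrsim p(\Gamma)/4$. The extra ``$2$'' in the max is handled separately: since $\Gamma \notin \scr P_{n,n'}$ there is at least one lump of size $\geq 4$ and one checks directly that its refining pairing alone forces at least two surviving bridges (or, if $p(\Gamma)$ is small, one pads the count using that a nonempty skeleton on a genuinely two-sided graph $I_n \cup I_{n'}$ has at least two bridges anyway, since a single bridge would force $n$ or $n'$ to be zero).

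\textbf{Main obstacle.} The delicate point is the \emph{global} incompatibility claim in step three: choosing the internal pairing of each lump is easy, but I must rule out the possibility that a bridge from lump $\gamma$ and a bridge from lump $\gamma'$ happen to be parallel or antiparallel and then collapse. Parallelism of bridges is purely a statement about edge positions in $I_n \cup I_{n'}$ and ignores tags, so I cannot use label-distinctness to forbid the configuration outright; instead I must exploit that collapsing a parallel-straight pair also demands both bridges carry the straight tag, and a genuine collapse reduces the graph consistently — so the surviving structure still witnesses a large skeleton. The cleanest route is probably to \emph{not} try to make every bridge frozen, but rather to show that collapsing can shorten the pairing by at most the ``ladder excess'' it contains, and that my choice of lump-pairings makes each oversized lump contribute a bounded-below amount of non-ladder structure; this is an amortized counting argument and getting the constant $4$ (rather than something worse) will require a careful but routine case analysis of how a single oversized lump of size $2k$ can interact with at most $O(k)$ collapsing moves. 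I would organize that case analysis around the greedy algorithm promised in Appendix \ref{appendix: higher lumpings}, which presumably is exactly the device that selects the refining pairing and tracks the skeleton bound.
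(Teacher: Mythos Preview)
Your high-level framework is right, and you correctly identify the central difficulty: bridges produced from one lump can be (anti)parallel to bridges from another lump, and you cannot prevent this by choosing the internal splitting of each lump in isolation. But your concrete proposal --- the static cyclic pairing $(e_1,e_3),(e_2,e_4),\dots$ within each lump --- does not overcome this. Nothing about that choice stops, say, the bridge $\{e_1,e_3\}$ from being parallel to a pre-existing bridge $\{e',e''\}$ with $e'$ adjacent to $e_1$ and $e''$ adjacent to $e_3$. Worse, ``frozen at creation'' is not a stable property: a bridge that is compatible with all bridges at the moment you create it can later become part of an (anti)ladder once further bridges are added from other lumps. So your step four, as stated, does not go through.

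The paper's proof fixes exactly this, via two ingredients you are missing. First, it does \emph{not} refine arbitrary lumps directly: it begins by cutting every lump of size $\geq 6$ into pieces of size $4$ (and a final $2$), producing an intermediate $\Gamma'$ with only $2$-lumps and $4$-lumps, at the cost of a factor $2$ in $p$ (this is where one of the two halvings in the constant $4$ comes from). Second --- and this is the real idea --- it refines the $4$-lumps \emph{sequentially}, choosing the split of each four-lump based on the bridges already present, and it enforces two invariants at every step: (i) one of the two new bridges (the ``marked'' one) is compatible with every existing bridge, and (ii) \emph{each} newly created bridge is incompatible with at most one existing bridge. Property (ii) is the crucial point you do not articulate: it guarantees that no bridge added later can merge two previously separate (anti)ladders, so two marked bridges can never end up in the same (anti)ladder of the final pairing $\Pi$. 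The number of marked bridges is then the number of four-lumps in $\Gamma'$, i.e.\ $p(\Gamma')/2 \geq p(\Gamma)/4$, and this is a lower bound on $m(\Pi)$. Achieving (i) and (ii) simultaneously requires a genuine case analysis (the cases (a), (b), (c1), (c2'), (c2'') in the appendix); there is no uniform recipe like your cyclic pairing that works.
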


\begin{proof}
See Appendix \ref{appendix: higher lumpings}.
\end{proof}

Next, we define the nonnegative quantity $\wt V_x(\Gamma)$ by taking the absolute value of all 
random variables in \eqref{value of lumping for bare stem} inside the expectation, i.e.
\begin{equation} \label{definition of V tilde}
\wt V_x(\Gamma) \;=\; \sum_{\b x} \Delta_\Gamma(\b x)\, Q_x(\b x)
\prod_{\gamma \in \Gamma} \E \prod_{e \in \gamma} \absb{\wh H_{x_{a(e)} x_{b(e)}}}
\,,
\end{equation}
Clearly,
\begin{equation*}
\abs{V_x(\Gamma)} \;\leq\; \wt V_x(\Gamma)\,.
\end{equation*}
Moreover, for a pairing $\Pi \in \scr P_{n,n'}$ we define the nonnegative quantity
\begin{equation} \label{definition of R}
R_x(\Pi) \;\deq\; M^{4 \delta m(\Pi)} \sum_{\b x} Q_x(\b x) \prod_{\{e,e'\} \in \Pi} \indb{\varrho_{\b x}(e) = \varrho_{\b 
x}(e')}  \sigma_{x_{a(e)} x_{b(e)}}^2 \,,
\end{equation}
which is essentially similar to $\wt V_x(\Pi)$ except that we drop the condition that different 
lumps must have different label pairs.

We may now bound the contribution of the higher order lumpings in terms of pairings.
\begin{lemma} \label{lemma: bound of lumps in terms of pairings}
We have that
\begin{equation} \label{bound on lumps in terms of pairings}
\sum_{\Gamma \in \scr G_{n,n'} \setminus \scr P_{n,n'}} \sum_x \wt V_x(\Gamma) \;\leq\; 
\sum_{\substack{\Pi \in \scr P_{n,n'}\\ m(\Pi) \geq 2}} \sum_x R_x(\Pi)\,.
\end{equation}
\end{lemma}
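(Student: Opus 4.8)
The plan is to replace each non-pair lumping $\Gamma$ by one carefully chosen refining pairing, and to exploit the fact that in the definition \eqref{definition of R} of $R_x(\Pi)$ the constraint forcing distinct lumps to carry distinct label pairs has been dropped; consequently a single $R_x(\Pi)$ already accounts for \emph{every} label configuration whose induced lumping is refined by $\Pi$, which is what compensates for the (severe) non-injectivity of the assignment $\Gamma\mapsto\Pi$.

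Concretely, I would fix $n,n'$ and, for each $\Gamma\in\scr G_{n,n'}\setminus\scr P_{n,n'}$, invoke Lemma~\ref{lemma: existence of refining pairing} to choose once and for all a refining pairing $\Pi_\Gamma\in\scr P_{n,n'}$ of $\Gamma$ with $m(\Pi_\Gamma)\geq\max\{p(\Gamma)/4,\,2\}$. The first estimate concerns the expectation attached to a single lump in \eqref{definition of V tilde}: on the support of $\Delta_\Gamma$ all edges of a lump $\gamma$ carry the same unordered label pair $\{x,x'\}$, so that factor equals $\E\absb{\wh H_{xx'}}^{\abs{\gamma}}$; since $\absb{\wh A_{xy}}\leq M^\delta$ and $\E\absb{\wh A_{xy}}^2\leq\E\abs{A_{xy}}^2=1$, it is at most $M^{\delta(\abs{\gamma}-2)}\sigma_{xx'}^{\abs{\gamma}}$. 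Taking the product over all lumps yields the factor $M^{\delta p(\Gamma)}$ together with a product of standard deviations which, on the support of $\Delta_\Gamma$, is exactly $\prod_{e\in\cal E(I_n\cup I_{n'})}\sigma_{x_{a(e)}x_{b(e)}}$. Because $\Pi_\Gamma$ refines $\Gamma$ and $\sigma_{xy}=\sigma_{yx}$, on this support that product equals $\prod_{\{e,e'\}\in\Pi_\Gamma}\sigma^2_{x_{a(e)}x_{b(e)}}$, and bounding $M^{\delta p(\Gamma)}\leq M^{4\delta m(\Pi_\Gamma)}$ via the lemma I would obtain
\[
\wt V_x(\Gamma)\;\leq\;M^{4\delta m(\Pi_\Gamma)}\sum_{\b x}\Delta_\Gamma(\b x)\,Q_x(\b x)\prod_{\{e,e'\}\in\Pi_\Gamma}\sigma^2_{x_{a(e)}x_{b(e)}}\,.
\]

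I would then sum this bound over $\Gamma\in\scr G_{n,n'}\setminus\scr P_{n,n'}$. Since $\Delta_\Gamma(\b x)=\ind{\Gamma(\b x)=\Gamma}$, for each configuration $\b x$ at most one summand survives, namely $\Gamma=\Gamma(\b x)$, so the double sum collapses to a sum over those $\b x$ with $Q_x(\b x)\neq 0$ and $\Gamma(\b x)\in\scr G_{n,n'}\setminus\scr P_{n,n'}$, carrying the weight $M^{4\delta m(\Pi_{\Gamma(\b x)})}\prod_{\{e,e'\}\in\Pi_{\Gamma(\b x)}}\sigma^2_{x_{a(e)}x_{b(e)}}$. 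Regrouping this single sum according to the value of the pairing $\Pi_{\Gamma(\b x)}$ — each of which has $m\geq 2$ — I would, for fixed such $\Pi$, enlarge the inner $\b x$-sum by dropping the constraints $\Gamma(\b x)\in\scr G_{n,n'}\setminus\scr P_{n,n'}$ and $\Pi_{\Gamma(\b x)}=\Pi$ and by reinserting the indicator $\prod_{\{e,e'\}\in\Pi}\indb{\varrho_{\b x}(e)=\varrho_{\b x}(e')}$. That indicator equals $1$ on every configuration originally contributing (since then $\Pi$ refines $\Gamma(\b x)$, so each $\Pi$-pair lies inside a lump of $\Gamma(\b x)$), is $\leq 1$ always, and all summands are nonnegative, so the sum only grows; the result is precisely $\sum_{\Pi\,:\,m(\Pi)\geq 2}R_x(\Pi)$ by \eqref{definition of R}. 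Summing over $x$ then gives the claim.

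The step I expect to be the main obstacle is precisely the one that makes the argument non-circular: the map $\Gamma\mapsto\Pi_\Gamma$ is wildly non-injective (a single pairing refines a number of lumpings that grows like a Bell number), so there is no term-by-term comparison of $\sum_\Gamma\wt V_x(\Gamma)$ with $\sum_\Gamma R_x(\Pi_\Gamma)$. The fix is to descend to a bound indexed by the label configuration $\b x$ \emph{before} regrouping by $\Pi$; since $R_x(\Pi)$ contains no indicator forcing distinct $\Pi$-pairs to have distinct labels, it automatically swallows the contribution of all configurations refined by $\Pi$, including those coming from coarser lumpings. The only other point demanding care is that the entropy loss $M^{\delta p(\Gamma)}$ from the high moments of $\wh A$ must be absorbed into $M^{4\delta m(\Pi_\Gamma)}$; this is exactly what $m(\Pi_\Gamma)\geq p(\Gamma)/4$ in Lemma~\ref{lemma: existence of refining pairing} delivers.
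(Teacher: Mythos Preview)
Your proof is correct and follows essentially the same approach as the paper: bound the lump moments by $M^{\delta(\abs{\gamma}-2)}\sigma^{\abs{\gamma}}$, invoke Lemma~\ref{lemma: existence of refining pairing} to select a refining pairing $\Pi_\Gamma$ with $m(\Pi_\Gamma)\geq p(\Gamma)/4$, and then exploit that $R_x(\Pi)$ has dropped the distinct-label constraint between bridges so that it dominates the contribution of every coarser lumping. The only difference is in the bookkeeping of the final overcounting step: the paper parametrizes all lumpings $\Gamma$ refined by a fixed $\Pi$ via partitions $P$ of the bridge set and uses the identity $\sum_P I_P(\b\varrho_\Pi)=1$, whereas you collapse the $\Gamma$-sum directly by observing that $\Delta_\Gamma(\b x)$ singles out $\Gamma=\Gamma(\b x)$ and then regroup by the value of $\Pi_{\Gamma(\b x)}$; both routes are equivalent and yours is arguably more direct.
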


\begin{proof}
We have the bound
\begin{align*}
\wt V_x(\Gamma) &\;\leq\; \sum_{\b x} Q_x(\b x) \, \Delta_\Gamma(\b x) \qBB{\prod_{\gamma \in \Gamma} M^{\delta 
(\abs{\gamma} - 2)} \prod_{e \in \gamma} \sigma_{x_{a(e)} x_{b(e)}}}
\\
&\;=\; M^{\delta p(\Gamma)} \sum_{\b x} Q_x(\b x) \Delta_\Gamma(\b x) \qBB{\prod_{\gamma \in \Gamma} \prod_{e \in 
\gamma} \sigma_{x_{a(e)} x_{b(e)}}} \,,
\end{align*}
where in the first step we used that
\begin{equation*}
\E \abs{\wh H_{xy}}^{\abs{\gamma}} \;\leq\; \sigma_{xy}^{\abs{\gamma}} M^{\delta (\abs{\gamma} - 2)}\,.
\end{equation*}

Let $\h{\Pi(\Gamma)}_{\Gamma \in \scr G_{n,n'}}$ denote a choice of refining pairings 
satisfying \eqref{lower bound on number of skeleton edges}. Then from Lemma \ref{lemma: 
existence of refining pairing} we get
\begin{align}
\sum_{\Gamma \in \scr G_{n,n'} \setminus \scr P_{n,n'}} \sum_x \wt V_x(\Gamma) &\;\leq\; 
\sum_{\Pi \in \scr P_{n,n'}} \sum_{\Gamma \in \scr G_{n,n'}} \ind{\Pi(\Gamma) = \Pi} \sum_x \wt 
V_x(\Gamma)
\notag \\ \label{first step in estimating non-pairings}
&\;\leq\;
\sum_{\Pi \in \scr P_{n,n'}} M^{4 \delta m(\Pi)} \sum_{\Gamma \in \scr G_{n,n'}} \ind{\Pi(\Gamma) = \Pi}
\sum_{x, \b x} Q_x(\b x) \Delta_\Gamma(\b x) \qBB{\prod_{\gamma \in \Gamma} \prod_{e \in 
\gamma}  \sigma_{x_{a(e)} x_{b(e)}}}\,,
\end{align}
where the sums over $\Pi$ are constrained by $m(\Pi) \geq 2$.

Next, we introduce a family $\b \varrho_\Gamma = \{\varrho_\gamma\}_{\gamma \in \Gamma}$, where 
$\varrho_\gamma$ is an unordered pair of labels. Thus we may rewrite, for fixed $\b x$,
\begin{equation*}
\Delta_\Gamma(\b x) \qBB{\prod_{\gamma \in \Gamma} \prod_{e \in \gamma}  \sigma_{x_{a(e)} 
x_{b(e)}}} \;=\; \sum_{\b \varrho_\Gamma} \qBB{\prod_{\gamma \neq \gamma'} \ind{\varrho_\gamma \neq 
\varrho_{\gamma'}}} \qBB{\prod_{\gamma \in \Gamma} \prod_{e \in \gamma} \ind{\varrho_{\b x}(e) = 
\varrho_\gamma} \sigma_{x_{a(e)} x_{b(e)}}}\,.
\end{equation*}
We now relax the condition $\Pi(\Gamma) = \Pi$ in \eqref{first step in estimating non-pairings} 
to the condition that $\Pi$ is a refinement of $\Gamma$. We may then express $\Gamma$ as 
$\Gamma = \Gamma_P$ using a partition $P = \{p\}$ of the set of bridges $\Pi$, where $\Gamma_P$ 
is defined as $\Gamma_P = \{\gamma_p\}_{p \in P}$ and $\gamma_p \deq \bigcup_{\pi \in p} \pi$, 
i.e.\ $P$ expresses which bridges of $\Pi$ need to be lumped to obtain $\Gamma$.

Thus we get for $\Gamma = \Gamma_P$
\begin{equation*}
\sum_{\b \varrho_\Gamma} \qBB{\prod_{\gamma \neq \gamma'} \ind{\varrho_\gamma \neq \varrho_{\gamma'}}} 
\qBB{\prod_{\gamma \in \Gamma} \prod_{e \in \gamma} \ind{\varrho_{\b x}(e) = \varrho_\gamma}  
\sigma_{x_{a(e)} x_{b(e)}}}
\\
=\; \sum_{\b \varrho_\Pi} I_P(\b \varrho_\Pi) \prod_{\pi \in \Pi} \prod_{e \in \pi} \ind{\varrho_{\b 
x}(e) = \varrho_\pi} \, \sigma_{x_{a(e)} x_{b(e)}}\,,
\end{equation*}
where we defined
\begin{equation*}
I_P(\b \varrho_\Pi) \;\deq\; \qBB{\prod_{p \in P} \prod_{\pi, \pi' \in p} \ind{\varrho_\pi = 
\varrho_{\pi'}}} \qBB{\prod_{p \neq p'} \prod_{\pi \in p} \prod_{\pi' \in p'} \ind{\varrho_{\pi} \neq 
\varrho_{\pi'}}}\,.
\end{equation*}
The claim \eqref{bound on lumps in terms of pairings} now follows from the identity
\begin{equation*}
1 \;=\; \sum_{P} I_P(\b \varrho_\Pi)\,,
\end{equation*}
and the fact that any lumping $\Gamma \in \scr G_{n,n'}$ of which $\Pi$ is a refinement can be 
written as $\Gamma = \Gamma_P$ for some partition $P$ of the set of bridges $\Pi$.
\end{proof}

\subsection{Bounds on all lumpings}
In this final subsection we show that the contribution to \eqref{sum of bare trunk} of all 
non-pairings, as well as all tagged pairings different from the straight ladder of Subsection 
\ref{subsection: ladder}, vanishes as $W \to \infty$.
For a pairing $\Pi \in \scr P_{n,n'}$ and tagging $\vartheta \in \{0,1\}^\Pi$, we define $\wt V_x(\Pi, \vartheta)$ in 
the obvious way (see \eqref{definition of V tilde}, \eqref{tagged Delta}, and \eqref{tag splitting sums up to one}).  
Clearly, we have that
\begin{equation*}
\wt V_x(\Pi) \;=\; \sum_{\vartheta \in \{0,1\}^\Pi} \wt V_x(\Pi, \vartheta)\,.
\end{equation*}

For $n,n' \geq 0$ we define
\begin{equation} \label{definition of h}
h_{n,n'} \;\deq\; \sum_{\Gamma \in \scr G_{n,n'}} \sum_x \wt V_x(\Gamma)
\end{equation}
and
\begin{equation}\label{definition of h*}
h^*_{n,n'} \;\deq\; h_{n,n'} - \delta_{n n'} \wt V_x(L_n, \vartheta_n)
 =  h_{n,n'} - \delta_{n n'}  V_x(L_n, \vartheta_n)
\end{equation}
is the contribution of all diagrams apart from the main term, the straight ladder,
where we used that $ V_x(L_n, \vartheta_n) = \wt V_x(L_n, \vartheta_n)$.
We remark that in \cite{erdosknowles} $h^*_{n,n'}$ was denoted by $h_{n,n'}$.

\begin{lemma} \label{lemma: bound on stem lumpings}
For any integer $1 \leq p \leq M^\mu$ we have
\begin{equation} \label{bound on non-ladder lumpings}
\sum_{n + n' = 2p} h^*_{n,n'} \;\leq\; C M^{\mu /2 -1/3 + 8 \delta}\,.
\end{equation}
as well as
\begin{equation} \label{bound on all lumpings}
\qquad \sum_{n + n' = 2p} h_{n,n'} \;\leq\; C\,.
\end{equation}
\end{lemma}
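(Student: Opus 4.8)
The plan is to reduce everything to estimates on pairings, and then to estimates on skeletons of pairings, following the strategy of \cite{erdosknowles} but keeping track of the extra powers of $M^\delta$ coming from the unbounded truncated matrix elements. First, by Lemma \ref{lemma: bound of lumps in terms of pairings} the contribution of all non-pair lumpings to $\sum_x \wt V_x$ is bounded by $\sum_{\Pi,\, m(\Pi)\geq 2}\sum_x R_x(\Pi)$, where $R_x(\Pi)$ is defined in \eqref{definition of R}. Together with the decomposition $\wt V_x(\Pi) = \sum_\vartheta \wt V_x(\Pi,\vartheta)$ into tagged pairings, this means it suffices to prove two families of estimates: (a) a bound on $\sum_x \wt V_x(\Pi,\vartheta)$ for each tagged pairing $(\Pi,\vartheta)$ in terms of its skeleton size, and (b) the analogous bound on $\sum_x R_x(\Pi)$, again controlled by $m(\Pi)$. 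The combinatorial input is that the number of pairings $\Pi \in \scr P_{n,n'}$ with a given skeleton size $s$, summed against the number of taggings, grows only like $C^{n+n'}$ times a factor polynomial in $n+n'$, while each bridge collapse that is \emph{not} absorbed into a straight ladder costs a gain of a power of $M$.

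Concretely, I would first recall from \cite{erdosknowles} (Sections 7--9) the key a priori bound: for a tagged pairing $(\Pi,\vartheta)$ with skeleton $S(\Pi,\vartheta) \in \scr P_{m,m'}$ one has, after performing the label sums subject to $Q_x$ and $\Delta_{\Pi,\vartheta}$ and using $\sum_y \sigma_{xy}^2 = 1$ together with the heat-kernel/Fourier bounds on the variance profile, an estimate of the shape
\begin{equation*}
\sum_x \wt V_x(\Pi,\vartheta) \;\leq\; C^{n+n'}\, (n+n')^{C}\, M^{-c\, (\text{number of bridges in } S(\Pi,\vartheta))}\,,
\end{equation*}
the point being that a nontrivial skeleton forces at least one extra coincidence of labels beyond what the ladder requires, and each such coincidence saves a factor $M^{-1}$ up to the diffusive correction $M^{\kappa}$ (this is precisely the ``$2/3$ rule'' bookkeeping of \cite{erdosknowles}, Lemma 7.7, here used only in its crude form since $\mu < 1/3$). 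In our setting the right-hand side picks up an additional factor $(CM^{2\delta})^{\,?}$: each lump of size $|\gamma|$ beyond a pair contributes $M^{\delta(|\gamma|-2)}$ via $\E|\wh H_{xy}|^{|\gamma|} \leq \sigma_{xy}^{|\gamma|} M^{\delta(|\gamma|-2)}$, and this is exactly what $R_x(\Pi)$ encodes, with the crucial feature (Lemma \ref{lemma: existence of refining pairing}) that the refining pairing has $m(\Pi) \geq p(\Gamma)/4$, so the prefactor $M^{4\delta m(\Pi)}$ in \eqref{definition of R} is at least $M^{\delta p(\Gamma)}$ and cancels the lump weight, while the bound on $\sum_x R_x(\Pi)$ still decays like $M^{-c\,m(\Pi)}$ once we use $m(\Pi) \geq 2$. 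Summing the geometric series in the skeleton size against the $C^{n+n'}$ combinatorial factor, and then over $n+n' = 2p \leq 2M^\mu$, gives \eqref{bound on non-ladder lumpings}: the dominant non-ladder term is a single nontrivial bridge collapse, whose net weight is $M^{\mu/2}$ (from the sum over $n$ with the $\alpha_n$ absorbed — actually here we retain only the combinatorial and diffusive factors, yielding the $M^{\mu/2 - 1/3 + 8\delta}$ stated, where the $8\delta$ absorbs the $M^{4\delta m}$ with $m=2$ and the crude moment bound).

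For \eqref{bound on all lumpings} the same machinery applies but now \emph{including} the straight ladder: the ladder term $\delta_{nn'} V_x(L_n,\vartheta_n)$ summed over $x$ contributes $\sum_x \wt V_x(L_n,\vartheta_n) \leq 1$ (this is the content of the ladder analysis of \cite{erdosknowles}, Section 8, combined with $\sum_y \sigma_{xy}^2 = 1$ and the trivial bound on the number of ladder paths), and summing this over $n+n' = 2p$ with the geometric decay from all remaining skeletons gives a uniform constant $C$; note that \eqref{bound on all lumpings} is simply \eqref{bound on non-ladder lumpings} plus the ladder, and $M^{\mu/2 - 1/3 + 8\delta} \to 0$ under \eqref{assumption on kappa and mu}, so in fact $\sum_{n+n'=2p} h_{n,n'} \leq 1 + o(1) \leq C$. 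The main obstacle is step (a): carefully re-running the skeleton estimate of \cite{erdosknowles} while verifying that every place where a factor $M^\delta$ from an unbounded matrix element appears is compensated either by a label coincidence forced by the lump structure or by the explicit $M^{4\delta m(\Pi)}$ prefactor in $R_x$ — in other words, checking that the greedy refining pairing of Appendix \ref{appendix: higher lumpings} genuinely realizes the gain $m(\Pi) \geq p(\Gamma)/4$ uniformly, so that the reduction to pairings in Lemma \ref{lemma: bound of lumps in terms of pairings} loses nothing that the skeleton bound cannot recover. Everything else is a geometric-series summation and the elementary estimate $t^n/n! \leq (2t)^{n+n'}/(n+n')!$ already used in \eqref{tail vanishes}.
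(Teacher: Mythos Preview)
Your approach is essentially the paper's: reduce non-pair lumpings to pairings via Lemma~\ref{lemma: bound of lumps in terms of pairings}, then invoke the skeleton estimates of \cite{erdosknowles} (Sections~7--9) with the single modification that the per-skeleton-bridge decay $M^{-1/3}$ becomes $M^{-(1/3 - 4\delta)}$ because of the prefactor $M^{4\delta m(\Pi)}$ in $R_x(\Pi)$, and finally add back the ladder using $\sum_x \wt V_x(L_n,\vartheta_n)\le 1$ to obtain \eqref{bound on all lumpings}. This is exactly what the paper does, arriving at the explicit inequality
\[
\sum_{n+n'=2p} h^*_{n,n'} \;\le\; \frac{1}{M} + \frac{M^{1/3}}{p}\Bigl(\frac{1}{p^{1/2}}+\frac{1}{M^{1/6}}\Bigr)\Bigl(\frac{M}{M-1}\Bigr)^{p}\sum_{r=2}^{p}\Bigl(\frac{2Cp}{M^{1/3-4\delta}}\Bigr)^{r},
\]
where the extra $2^r$ accounts for the bridge-tag summation; the $r=2$ term with the heat-kernel factor $p^{-1/2}$ gives the claimed exponent $\mu/2-1/3+8\delta$.

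One point in your write-up is genuinely misleading and would derail the argument if taken literally: you state an estimate ``of the shape $\sum_x \wt V_x(\Pi,\vartheta)\le C^{n+n'}(n+n')^{C}M^{-c\,\bar m}$'' and later speak of summing ``against the $C^{n+n'}$ combinatorial factor''. There is no such exponential factor in the skeleton bound of \cite{erdosknowles}; since $n+n'\le M^\mu$, a factor $C^{n+n'}$ with $C>1$ would overwhelm any power $M^{-c\bar m}$ and the geometric series in $\bar m$ would not even be the relevant object. The actual mechanism is that the number of tagged pairings with a given skeleton of $\bar m$ bridges contributes at most $(Cp)^{\bar m}$ (polynomial in $p$, exponential only in the skeleton size), and this is what makes the sum over $\bar m\ge 2$ a convergent geometric series under $\mu<1/3-8\delta$. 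Also, your closing remark about $t^n/n!\le(2t)^{n+n'}/(n+n')!$ plays no role here; that estimate belongs to the tail-truncation step in Section~\ref{section: truncations}, not to this lemma.
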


\begin{proof}
The proof of \eqref{bound on non-ladder lumpings} is almost identical to the proof of Equation (7.10) in 
\cite{erdosknowles}. We bound general lumpings in terms of non-ladder pairings, whose contribution we estimate by 
analysing vertex orbits in skeleton graphs (see Sections 7.4 -- 7.6 in \cite{erdosknowles}).

More precisely, using Lemma \ref{lemma: bound of lumps in terms of pairings} we see that the only needed modification to 
the argument of \cite{erdosknowles} arises from the additional factor $M^{4 \delta m(\Pi)}$ in \eqref{definition of R} 
compared to Equation (7.1) of \cite{erdosknowles}. Let $\bar m$ denote the number of bridges in the skeleton $S(\Pi, 
\vartheta)$; then we have $\bar m \geq m(\Pi)$ by the definition \eqref{minimum skeleton size} of $m(\Pi)$.  Thus we 
find that Equation (7.9) of \cite{erdosknowles} (in which $\Gamma$ is now a tagged pairing not equal to a straight 
ladder) remains valid provided that the factor $M^{1/3} M^{-\bar m/3}$ is replaced with $M^{1/3} M^{-(1/ 3 - 4 \delta) 
\bar m}$.  Thus we find from Equation (7.10) of \cite{erdosknowles} that, for $1 \leq p \leq M^\mu$,
\begin{equation} \label{sum over all non-ladder pairings}
\sum_{n + n' = 2p} h^*_{n,n'} \;\leq\; \frac{1}{M} + \frac{M^{1/3}}{p} \pbb{\frac{1}{p^{1/2}} + \frac{1}{M^{1/6}}} 
\pbb{\frac{M}{M - 1}}^p \sum_{r = 2}^p \pbb{\frac{Cp}{M^{1/3 - 4 \delta}}}^r \, 2^r\,,
\end{equation}
where we emphasize the additional factor of $2^r$ arising from the sum over all bridge tags of skeleton pairings, as 
described in Section 9 of \cite{erdosknowles}. The first term $1/M$ accounts for the term $p = 1$ which consists of an 
antiladder with one rung whose contribution is trivially bounded by $1/M$. As explained at the end of Section 7.5 in 
\cite{erdosknowles}, the factor $p^{-1/2} + M^{-1/6}$ results from a detailed heat kernel estimate (Lemma 7.5 in 
\cite{erdosknowles}) which follows from the band structure of $H$. If, instead of the band structure, we had imposed 
only the two conditions $\sum_{y} \sigma_{xy}^2 = 1$ and $\sigma_{xy}^2 \leq M^{-1}$, then \eqref{sum over all 
non-ladder pairings} would be valid without the factor $p^{-1/2} + M^{-1/6}$.

Now \eqref{sum over all non-ladder pairings} immediately yields
\begin{equation} \label{bound on h star}
\sum_{n + n' = 2p} h^*_{n,n'} \;\leq\; C M^{\mu / 2 - 1/3 + 8 \delta}\,,
\end{equation}
which is \eqref{bound on non-ladder lumpings}.

Moreover, \eqref{bound on all lumpings} follows from \eqref{bound on non-ladder lumpings} and 
the estimate
\begin{equation*}
\sum_x \wt V_x(L_n, \vartheta_n) \;\leq\; 1\,;
\end{equation*}
see Subsection \ref{subsection: ladder}.
\end{proof}

From \eqref{sum over alphas} and \eqref{bound on non-ladder lumpings} we get
\begin{align}
\sum_{n+ n' \leq M^\mu} \abs{\alpha_n(W^{d \kappa} T) \alpha_{n'}(W^{d \kappa} T)} \, 
h^*_{n,n'} &\;\leq\; \pBB{\sum_{n+ n' \leq M^\mu} \abs{\alpha_n(W^{d \kappa} T)}^2 
\abs{\alpha_{n'}(W^{d \kappa} T)}^2}^{1/2} \pBB{\sum_{p \leq M^\mu} \sum_{n + n' = 2p} 
(h^*_{n,n'})^2}^{1/2}
\notag \\
&\;\leq\; \pBB{\sum_{p \leq M^\mu} \pBB{\sum_{n+n' = 2p} h^*_{n,n'}}^2}^{1/2}
\notag \\
&\;\leq\; C \pb{M^\mu M^{\mu - 2/3 + 16 \delta}}^{1/2}
\notag \\ \label{summing over alphas}
&\;=\; o(1)
\end{align}
as $W \to \infty$ (see \eqref{assumption on kappa and mu}). Then Proposition \ref{proposition: 
value of bare stem} follows from  \eqref{sum of bare trunk}, \eqref{limit of ladders},
\eqref{definition of h*} and \eqref{summing over alphas}.

\section{The boughs for $\kappa < 1/5$} \label{section: boughs}
In this section we estimate the contribution of the boughs. It turns out that strengthening our 
assumption on $\kappa$ to $\kappa < 1/5$ (from $\kappa < 1/3$) greatly simplifies the estimate 
of the boughs. Thus, throughout this section we assume that $\kappa < 1/5$. The next section is 
devoted to the case $\kappa < 1/3$.

In Section \ref{section: bare stem} we computed the contribution of the first term of 
\eqref{rho split into bare and garnished parts}; see Proposition \ref{proposition: value of 
bare stem}.  We now focus our attention on the remaining three terms of \eqref{rho split into 
bare and garnished parts}, and show that their $\ell^1$-norm in $x$ vanishes.
We need to estimate
\begin{equation} \label{definition of E_1}
E_1 \;\deq\; \sum_{n + n' \leq M^\mu}  \absb{\alpha_{n}(t) \, \alpha_{n'}(t)} \sum_x \sum_{\cal G \in \fra G_n^*} 
\sum_{\cal G' \in \fra G_{n'}^*} \sum_{\Gamma \in \scr G(G \cup G')} \absb{V_x(\cal G \cup \cal G', \Gamma)}
\end{equation}
and
\begin{equation} \label{definition of E_2}
E_2 \;\deq\;
\sum_{n + n' \leq M^\mu}  \absb{\alpha_{n}(t) \, \alpha_{n'}(t)} \sum_x \sum_{\cal G \in \fra G_{n}^*}
\sum_{\Gamma \in \scr G(G \cup I_{n'})} \absb{V_x(\cal G \cup \cal I_{n'}, \Gamma)}
\end{equation}
(It is easy to check that $E_2$ estimates both terms on the second line of \eqref{rho split 
into bare and garnished parts} since $\wh H$ is Hermitian).

\begin{proposition} \label{proposition: boughs vanish for small kappa}
Choose $\mu$ and $\delta$ so that
\begin{equation*}
\kappa + 4 \delta \;<\; \mu \;<\; 1/5 - 4 \delta\,.
\end{equation*}
Then
\begin{equation*}
\lim_{W \to \infty} E_1 \;=\; \lim_{W \to \infty} E_2 \;=\; 0\,.
\end{equation*}
\end{proposition}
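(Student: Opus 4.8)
The plan is to estimate $E_1$ and $E_2$ by exploiting the structure of graphs in $\fra G_\sharp$ -- in particular the fact (Proposition~\ref{proposition: properties of graphs}(ii),(iii)) that every leaf of such a graph is a small edge and every $(b,1)$ edge is a leaf. The key observation is that a graph $\cal G \in \fra G_n^*$ is not a bare stem, hence it has at least one bough, and therefore at least one leaf carrying a small tag. Each small bough edge contributes an extra smallness: a $(b,1)$ edge contributes a factor $\abs{\wh H_{xy}}^2 - \sigma_{xy}^2$, whose expectation against another factor is much smaller than $\sigma_{xy}^2$; a $(b,2),(b,3),(b,4)$ edge carries high powers of $\wh H$, contributing powers of $\sigma_{xy}^2$ (up to the truncation factor $M^{\delta}$ per matrix element). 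First I would set up the bookkeeping: for a fixed pair of graphs $\cal G, \cal G'$ and a fixed lumping $\Gamma \in \scr G(G \cup G')$, replace all random variables by their absolute values (as in \eqref{definition of V tilde}), bound each lump's expectation by $\sigma^{\abs{\gamma}} M^{\delta(\abs{\gamma}-2)}$ as in the proof of Lemma~\ref{lemma: bound of lumps in terms of pairings}, and then perform the vertex summation. The vertex summation is organised as in \cite{erdosknowles}: each free vertex summed over $\Lambda_N$ against a $\sigma^2$-factor yields $O(1)$, each forced coincidence (from $\Delta_\Gamma$ or from the nonbacktracking constraints $l$) yields a gain of $M^{-1}$, and diffusive decay / heat-kernel bounds yield the fine powers of $M^{-1/3}$, $M^{-1/6}$ as in Section~7 of \cite{erdosknowles}.

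Next I would carry out the power counting. The total number of edges in $G \cup G'$ is $n + n' \le M^\mu$; the total degree equals $n+n'$ since $\deg(\cal G)=n$. Since $\cal G, \cal G' \ne \cal I_n, \cal I_{n'}$, there is at least one bough edge, and hence (tracing down to a leaf) at least one small leaf edge. The presence of even one small bough edge forces either an extra coincidence of labels (in the $(b,1)$ case, because the factor $\abs{\wh H_{xy}}^2 - \sigma^2_{xy}$ must be paired with something and contributes $\E(\abs{\wh H_{xy}}^2-\sigma^2_{xy})\abs{\wh H_{xy}}^k \lesssim \sigma^{k+2}_{xy}M^{C\delta}$ with a gain over the naive $\sigma^2$) or an extra high power of $\sigma^2_{xy} \le M^{-1}$ (in the $(b,2),(b,3),(b,4)$ cases). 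Quantitatively, each bough rooted at the stem ``costs'' a factor roughly $M^{-1+C\delta}$ relative to what the bare-stem ladder estimate would give, while the number of ways to attach and decorate boughs is polynomially bounded in $M^\mu$. Summing over $n,n'$ and against $\abs{\alpha_n(t)\alpha_{n'}(t)}$ using \eqref{sum over alphas}, \eqref{Bessel functions are bounded} and the bound $\sum_{n+n'=2p} h_{n,n'} \le C$ from Lemma~\ref{lemma: bound on stem lumpings}, this extra $M^{-1+C\delta}$ beats all combinatorial factors provided $\mu < 1/5 - 4\delta$: roughly one gets $E_1, E_2 \lesssim M^{C\mu - 1 + C\delta}$ with the exponent negative under the stated hypothesis. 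The restriction to $\kappa < 1/5$ enters precisely here: it lets us afford a crude ``one bough costs $M^{-1}$'' estimate without the refined heat-kernel classification needed for $\kappa < 1/3$ (which is deferred to Section~\ref{section: boughs for kappa=1/3}).

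The main obstacle I anticipate is the combinatorial control of the sum over all decorated graphs $\cal G \in \fra G_n^*$ together with all lumpings $\Gamma \in \scr G(G \cup G')$. Unlike the bare stem, a bough can have many edges, and lumps may join bough edges to stem edges in complicated ways; one must show that the number of graph-plus-lumping configurations with a given ``excess'' (number of bough edges plus size of the minimal skeleton) is bounded by $C^{n+n'} (n+n')^{C}$, so that it is controlled by the smallness extracted from each bough. I would handle this by first reducing arbitrary lumpings to refining pairings exactly as in Lemma~\ref{lemma: existence of refining pairing} and Lemma~\ref{lemma: bound of lumps in terms of pairings}, absorbing the resulting $M^{4\delta m(\Pi)}$ into the budget, and then bounding the pairing/graph count by a Catalan-type estimate on rooted trees with tagged edges. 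A secondary technical point is that when a bough edge is lumped with a distant stem or bough edge, the label coincidence it forces must actually be ``new'' (not already implied by the nonbacktracking constraints); this is where one uses that leaves have degree one, so the lump containing a leaf edge must reach outside the leaf and genuinely constrains a free vertex. Once these counting lemmas are in place, the final estimate is a geometric sum in the excess parameter, and $\lim_{W\to\infty} E_1 = \lim_{W\to\infty} E_2 = 0$ follows.
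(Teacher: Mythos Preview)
Your proposal has the right high-level picture but contains a genuine gap at the heart of the power counting, and it is precisely this gap that forces the restriction $\mu<1/5$ rather than $\mu<1/3$.

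The claim that a $(b,1)$ leaf always yields a gain of order $M^{-1}$ is false in the decisive case. If two $(b,1)$ leaves $e,e'$ are lumped together, the lump contributes
\[
\sum_y \E\bigl(|\wh H_{xy}|^2-\sigma_{xy}^2\bigr)^2 \;\leq\; \sum_y M^{2\delta}\sigma_{xy}^4 \;\leq\; \frac{M^{2\delta}}{M}\,,
\]
which is only $M^{-1}$ for the \emph{pair}, i.e.\ $M^{-1/2}$ per leaf (see \eqref{second moment of small edge}). Your sentence ``$\E(\abs{\wh H_{xy}}^2-\sigma^2_{xy})\abs{\wh H_{xy}}^k \lesssim \sigma^{k+2}_{xy}M^{C\delta}$ with a gain over the naive $\sigma^2$'' is correct when the $(b,1)$ leaf is lumped with an edge of a different tag, but not when it is lumped with another $(b,1)$ leaf. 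This is the main obstacle identified in Subsection~\ref{subsection: sketch of simple boughs}, and your proposal does not address it.

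Because of this, ``one factor $M^{-1}$ per bough'' is far too weak to control the combinatorics. A bough with $k$ edges and $l$ leaves has $S_{k,l}\sim k^{2l-2}$ shapes (the Naranya number \eqref{Naranya number}), and the lumping sum contributes a further entropy factor $M^{\mu}$ per bough edge. What the paper does instead is: (a) parametrize bough lumpings by a map $A:\cal E_B\to\cal E(G\cup G')$ (Definition~\ref{definition of A}) suited to summing out labels leaf-first; (b) show that at most half of the leaves can be ``bad'' in the sense $A^{-1}_e\prec e=A_e$ (Lemma~\ref{bound on the number of bad leaves}); (c) conclude that each leaf yields at least $M^{-1/2}$ (up to $M^{O(\delta)}$), giving the factor $(CM^{4\delta+\mu}/M)^{L/2}$ in \eqref{final bound on E(G)}. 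Combining this with the Naranya count and the entropy $M^{\mu s}$ for placing $s$ bough roots on the stem produces the geometric series in \eqref{bound on Z}, whose convergence requires exactly $5\mu+4\delta<1$. This is the origin of the $1/5$; your estimate ``$E_1\lesssim M^{C\mu-1+C\delta}$'' does not reflect this balance and would, if it were valid, already give $\kappa<1/3$.

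Finally, the refining-pairing machinery (Lemmas~\ref{lemma: existence of refining pairing} and \ref{lemma: bound of lumps in terms of pairings}) is the tool for \emph{stem} lumpings and does appear here, but only through the quantity $h_{u,u'}$ on the second line of \eqref{final bound on E(G)}. It does not help with bough lumpings, which require the separate $A$-parametrization.
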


The rest of this section is devoted to the proof of Proposition \ref{proposition: boughs vanish 
for small kappa}.
We expound our main argument for $E_1$. The estimate of $E_2$ is very similar, and we shall 
describe the required minor modifications in
 Subsection~\ref{subsection: bound on E_2}.

From \eqref{value of general lumping} we get
\begin{multline*}
E_1 \;\leq\; \sum_{n + n' \leq M^\mu} \abs{\alpha_n(t) \alpha_{n'}(t)} \sum_x \sum_{\cal G \in \fra G_n^*} \sum_{\cal G' 
\in \fra G_{n'}^*}
\sum_{\Gamma \in \scr G(G \cup G')} \; \sum_{\b x \,:\, \Gamma(\b x) = \Gamma}
\\
\times \prod_{\gamma \in \Gamma} \absbb{\E \prod_{e \in \gamma} P_{\tau(e)}\pb{\wh H_{x_{a(e)} x_{b(e)}}, \wh 
H_{x_{b(e)} x_{a(e)}}}}
\; \delta_{0 x_{a(G)}} \delta_{x x_{b(G)}} \delta_{x x_{a(G')}} \delta_{0 x_{b(G')}}
\prod_{v,w \in \cal V(G \cup G')} \pb{1 - l(v,w) \delta_{x_v x_w}}
\,,
\end{multline*}
where we abbreviated $\tau \equiv \tau_{G \cup G'}$ and $l \equiv l_{G \cup G'}$. 

Next, we relax all nonbacktracking conditions in $l$ pertaining to bough vertices. This gives
\begin{equation} \label{estimate on E_n}
E_1 \;\leq\; \sum_{n + n' \leq M^\mu} \abs{\alpha_n(t) \alpha_{n'}(t)} \sum_{\cal G \in \fra 
G_n^*} \sum_{\cal G' \in \fra G_{n'}^*}
\sum_{\Gamma \in \scr G(G \cup G')} \; \sum_{\b x \,:\, \Gamma(\b x) = \Gamma} Q(\b x)
\prod_{\gamma \in \Gamma} \absbb{ \E \prod_{e \in \gamma} P_{\tau(e)}\pb{\wh H_{x_{a(e)} 
x_{b(e)}}, \wh H_{x_{b(e)} x_{a(e)}}}}\,,
\end{equation}
where
\begin{multline} \label{definition of Q}
Q(\b x) \;\deq\; \delta_{0 x_{a(G)}} \delta_{0 x_{b(G')}} \delta_{x_{a(G')} x_{b(G)}} 
\qBB{\prod_{v,w \in \cal V(\cal S(G))} \ind{d(v,w) = 2} \ind{x_v \neq x_w}}
\\
\times
\qBB{\prod_{v,w \in \cal V(\cal S(G'))} \ind{d(v,w) = 2} \ind{x_v \neq x_w}}
\end{multline}
implements the nonbacktracking condition on the stems $\cal S(G)$ and $\cal S(G')$. The
estimate \eqref{estimate on E_n}
follows from
\begin{equation*}
\sum_x \delta_{0 x_{a(G)}} \delta_{x x_{b(G)}} \delta_{x x_{a(G')}} \delta_{0 
x_{b(G')}}\prod_{v,w \in \cal V(G \cup G')} \pb{1 - l(v,w) \delta_{x_v x_w}} \;\leq\; Q(\b 
x)\,,
\end{equation*}
since, by definition of $\fra G_n \subset \fra G_\sharp$, the stems $\cal S(G)$ and $\cal S(G')$ are 
completely nonbacktracking in $l$ (i.e.\ $l(v,w) = 1$ if $d(v,w) = 2$ and $v,w \in \cal V(\cal 
S(G) \cup \cal S(G))$).  Note that $Q$ depends only on the labels of stem vertices.

\subsection{Sketch of the argument} \label{subsection: sketch of simple boughs}
Before embarking on the estimate of $E_1$, we outline our strategy. We first fix the graph 
$\cal G \cup \cal G'$ and the lumping $\Gamma$. We assume that $\cal G \cup \cal G' \neq \cal 
I_n \cup \cal I_{n'}$ for all $n,n'$, i.e.\ we are not dealing with the bare stem. Starting 
from the bough leaves of $G \cup G'$, we sum successively over all vertex labels that do not 
belong to the stem. The order of summation is such that we sum over the label of a bough vertex 
only after we have summed over the labels of all of its children.

Our estimate uses two crucial facts. First, each leaf is a small edge (this is an immediate 
consequence of the growth process that generates boughs; see Proposition \ref{proposition: 
properties of graphs} (ii)).  This means that, if a leaf is not lumped with any other edge, its 
contribution is small.  Second, edges that are lumped together yield a small contribution owing 
to fixing of labels, which reduces the entropy factor associated with the summation of the 
labels.

Any large bough edge yields a contribution bounded by $1$, as follows from
\begin{equation} \label{l1-bound}
\sum_{x_{b(e)}} \E P_{(b,0)}\pb{\wh H_{x_{a(e)} x_{b(e)}}, \wh H_{x_{b(e)} x_{a(e)}}} \;=\;
\sum_{x_{b(e)}} \E \absb{\wh H_{x_{a(e)} x_{b(e)}}}^2
\;\leq\; 1\,.
\end{equation}
Ideally, we would hope that each leaf, being a small edge, yield a factor of essentially $M^{-1}$.  For example, if 
$\tau(e) = (b,2)$, summation over the label of the final vertex of $e$ yields
\begin{equation} \label{linfty-bound}
\absbb{\sum_{x_{b(e)}} \E P_{\tau(e)}\pb{\wh H_{x_{a(e)} x_{b(e)}}, \wh H_{x_{b(e)} x_{a(e)}}} }
\;=\; \sum_{x_{b(e)}} \E \absb{\wh H_{x_{a(e)} x_{b(e)}}}^4 \;\leq\; C \frac{M^{2 
\delta}}{M}\,.
\end{equation}
In this case the order of a bough with $l$ leaves would be $M^{-l}$ (up to an irrelevant factor 
$M^{2 \delta l}$). It is easy to see that a similar estimate holds for any leaf that is not 
lumped with another edge. This smallness fights against the combinatorics of the number of 
rooted, oriented trees with $k$ edges and $l$ leaves, which is of the order $k^{2l - 2} / 
l^{2l}$ (see \eqref{Naranya number} below). Thus we would find that the sum over the 
contributions of all rooted oriented trees with $k$ edges is
\begin{equation*}
\frac{1}{k^2}\sum_{l \geq 1} \frac{1}{l^{2l}}\pbb{\frac{k^2}{M}}^l \;\leq\; \frac{C}{M}\,,
\end{equation*}
since $k \leq M^\mu \leq M^{1/2}$. (The requirement $l \geq 1$ is simply a statement that there 
is at least one bough edge.)
It would then be a relatively straightforward matter to bound the contribution of all families of boughs growing from 
the stem, and to show that it vanishes as $W \to \infty$.

Unfortunately, this simple approach breaks down because two leaves of type $(b,1)$ lumped together yield a contribution
\begin{equation} \label{second moment of small edge}
\sum_y \E \pb{\abs{\wh H_{xy}}^2 - \sigma^2_{xy}}^2 \;\approx\; \sum_y \frac{CM^{2 \delta}}{M} \sigma_{xy}^2 \;=\; 
\frac{C M^{2 \delta}}{M}\,,
\end{equation}
which is much larger than the desired factor $M^{-2}$. We emphasize that this problem only 
occurs when a lump consists solely of leaves of type $(b,1)$. Indeed, lumping leaves with tags 
$(b,2)$, $(b,3)$ or $(b,4)$ yields a sufficiently high negative power of $M$ to keep the simple 
power counting mentioned above valid. For example, if two leaves of type $(b,2)$ are lumped, 
their contribution is
\begin{equation*}
\sum_y \E \pb{\abs{\wh H_{xy}}^4}^2 \;\leq\; \pbb{\frac{M^{2 \delta}}{M}}^3\,.
\end{equation*}
In fact, it would suffice that every lump had a single edge whose tag is not $(b,1)$ to ensure that each leaf yield a 
factor $1/M$.

In this section, we develop a method that extracts a factor $1/\sqrt{M}$ from each leaf (or, more precisely, a factor 
$1/M$ from pairs of leaves)
instead of the optimal factor $1/M$, thus allowing us to reach time scales of order $M^{1/5}$.  In order to reach time 
scales of order $M^{1/3}$, we need a decay of order $1/M$ from each leaf. This requires more effort and is done in 
Section \ref{section: boughs for kappa=1/3}.

Notice that the estimates of the type \eqref{l1-bound}-\eqref{second moment of small edge} rely on 
$\ell^1$-$\ell^\infty$-bounds on the variances, $\sum_y \sigma_{xy}^2 = 1$ and $\max_{y} \sigma_{x y}^2 \leq M^{-1}$ for 
each $x \in \Lambda_N$. In fact, all the estimates in Sections \ref{section: boughs} and \ref{section: boughs for 
kappa=1/3} rely on such power counting estimates.

\subsection{Ordering of edges and parametrization of lumpings}
There are two natural structures governing the vertex labels in the bound \eqref{estimate on 
E_n}: the tree graph $G \cup G'$ and the lumping $\Gamma$. In the case of the bare stem 
(Section \ref{section: bare stem}), we chose to sum over all vertex labels simultaneously, 
under the constraints imposed by $\Gamma$. This was possible because the tree graph $I_n \cup 
I_{n'}$ of the bare stem was very simple. For a general tree graph $G \cup G'$, however, this 
approach breaks down. Instead, we have to sum over the vertex labels in a manner dictated by 
the structure of the tree graph $G \cup G'$, i.e.\ successively over each individual vertex 
label, starting from the leaves. If all bough edges were in their own single-edge lumps, this 
strategy would be easy to implement. For a general lumping, however, we have additional 
constraints on the bough vertex labels arising from the lumping, which are completely nonlocal 
and in this sense conflicting with the constraints resulting from the tree graph structure $G 
\cup G'$. We overcome this difficulty by introducing a special parametrization for lumpings 
(denoted by $(\wt \Gamma, A) \mapsto \Gamma$ below) that is suited to a successive summation 
along the bough branches. This parametrization is also needed for controlling the summation 
over all lumpings $\Gamma$.

Let us fix $n,n'$ as well as $\cal G = (G, \tau_G) \in \fra G_n^*$ and $\cal G' = (G', \tau_{G'}) \in \fra G_n^*$ in the 
summation \eqref{estimate on E_n}. We abbreviate $\cal E_B \deq \cal E(\cal B(G) \cup \cal B(G'))$ for the set of bough 
edges.
\label{page: total order 1}
Recall that a \emph{leaf} is an edge $e \in \cal E_B$ such that $b(e)$ has degree one.  We now 
introduce a total order $\preceq$ on the set of all edges $\cal E(G \cup G')$.  This order will 
govern the order of the summation of the vertex labels. We use the notation $e \prec e'$ to 
mean $e \preceq e'$ and $e \neq e'$. We impose the following conditions of $\preceq$.
\begin{enumerate}
\item
If $e$ and $e'$ are both bough edges and $e'$ is the parent of $e$ (i.e.\ $a(e) = b(e')$) then 
$e \prec e'$.
\item
We start the ordering from the leaves: If $e$ is a leaf and $e'$ is not a leaf then $e \prec e'$.
\item
Bough edges are smaller than stem edges: If $e$ is a bough edge and $e'$ a stem edge then $e 
\prec
e'$.
\end{enumerate}
It is easy to see that such an order $\preceq$ exists. We choose one and consider it fixed in 
the sequel. Once $\preceq$ is given, each edge $e \in \cal E(G \cup G')$ (except the last edge) 
has a \emph{successor}, denoted by $\sigma(e)$ and defined as the smallest edge strictly 
greater than $e$. Note that the order $\preceq$ is not the same as the (partial) order induced 
by the directedness of the graph. Similarly, the concepts of successor and child are unrelated.

\label{page: A}
We shall sum over the vertex labels of the boughs, starting from the degree one vertices of the leaves.  To this end, we 
need a parametrization of the lumping $\Gamma  \in \scr G(G \cup G')$ that is suited for such a successive summation. 
The parametrization will be given by a map $e \mapsto A_e$ on the set $\cal E_B$, and by $\wt \Gamma$, defined as the 
restriction of $\Gamma$ to the stem edges. The idea behind the construction of $A$ is to set $A_e$ to be the smallest 
edge in the lump containing $e$ with the property that $A_e \succ e$; if there is no such edge, we set $A_e = e$.

\begin{definition} \label{definition of A}
Denote by $\scr A(G \cup G')$ the set of mappings
\begin{equation*}
A : \cal E_B \to \cal E(G \cup G')\,, \quad e \;\mapsto\; A_e\,,
\end{equation*}
with the following two properties. First, $A_e \succeq e$ for all $e$. Second, if $e', e'' 
\prec e$ satisfy $A_{e'} = A_{e''} = e$ then $e' = e''$.
\end{definition}

The following definition will be used to reconstruct $\Gamma$ from the pair $(\wt \Gamma, A)$.

\begin{definition} \label{definition of lump parametrization}
Let $\wt \Gamma$ be a lumping of the stem edges $\cal E(\cal S(G) \cup \cal S(G'))$, and $A \in \scr A(G \cup G')$. Then 
we define $\Gamma(\wt \Gamma, A)$ as the finest equivalence relation on $\cal E(G \cup G')$ (denoted by $\sim$) for 
which $A_e \sim e$ for all $e$ and $e \sim e'$ whenever $e$ and $e'$ belong to the same lump of $\wt \Gamma$.
\end{definition}

Next, let $u$ and $u'$ denote the number of edges in $\cal S(G)$ and $\cal S(G')$ respectively. Note that $u + u'$ is 
even. This is easy to see from the facts that stem edges have odd degree, bough edges have even degree, and the total 
degree $n + n' = \deg(\cal G \cup \cal G')$ is even. We have the following result which shows that any lumping $\Gamma$ 
can be encoded using a lumping $\wt \Gamma$ of the stem and a map $A \in \scr A(G \cup G')$.

\begin{lemma} \label{lemma: lumpings parametrized}
For each $\Gamma \in \scr G(G \cup G')$ there is a pair $(\wt \Gamma, A) \in \scr G_{u,u'} 
\times \scr A(G \cup G')$ such that $\Gamma = \Gamma(\wt \Gamma, A)$.
\end{lemma}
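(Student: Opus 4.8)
The plan is to prove Lemma \ref{lemma: lumpings parametrized} by giving an explicit construction of the pair $(\wt\Gamma, A)$ from $\Gamma$ and then verifying it has the three required properties: that $\wt\Gamma \in \scr G_{u,u'}$, that $A \in \scr A(G\cup G')$, and that $\Gamma(\wt\Gamma, A) = \Gamma$. First I would define $\wt\Gamma$ simply as the restriction of $\Gamma$ to the stem edges $\cal E(\cal S(G)\cup\cal S(G'))$; that is, $\wt e \sim_{\wt\Gamma} \wt e'$ iff $\wt e \sim_\Gamma \wt e'$ for stem edges $\wt e, \wt e'$. Since each stem edge has odd degree and $u+u'$ is even, the lumps of $\wt\Gamma$ automatically have even degree (a lump consisting entirely of odd-degree stem edges that is a sub-lump of an even-degree $\Gamma$-lump must itself... — here one has to be slightly careful, see below), so $\wt\Gamma \in \scr G_{u,u'}$. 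Then, using the fixed total order $\preceq$ on $\cal E(G\cup G')$, for each bough edge $e \in \cal E_B$ I would set $A_e$ to be the smallest edge strictly greater than $e$ (in $\preceq$) that lies in the same $\Gamma$-lump as $e$, and $A_e \deq e$ if no such edge exists.

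The next step is to check $A \in \scr A(G\cup G')$. The condition $A_e \succeq e$ is immediate from the definition. For the injectivity-type condition, suppose $e', e'' \prec e$ with $A_{e'} = A_{e''} = e$; then both $e'$ and $e''$ are $\Gamma$-equivalent to $e$, hence to each other, and $e$ is by construction the smallest element of the $\Gamma$-lump strictly above each of $e', e''$. If $e' \prec e''$ (without loss of generality), then $e''$ is an element of that common lump with $e' \prec e'' \preceq e$, so by minimality of $A_{e'} = e$ we would need $e \preceq e''$, forcing $e'' = e$; but $e'' \prec e$, a contradiction. Hence $e' = e''$.

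Finally I would verify $\Gamma(\wt\Gamma, A) = \Gamma$. Writing $\Gamma' \deq \Gamma(\wt\Gamma, A)$, one inclusion is easy: $A_e \sim_\Gamma e$ by construction, and $\wt\Gamma$ is a restriction of $\Gamma$, so every generating relation of $\Gamma'$ already holds in $\Gamma$; since $\Gamma$ is an equivalence relation, $\Gamma' \subseteq \Gamma$ (i.e. $\Gamma'$ is finer). For the reverse, I would show that each $\Gamma$-lump is connected under the generating relations of $\Gamma'$. Take a lump $\gamma$ of $\Gamma$. The stem edges of $\gamma$ are all mutually $\wt\Gamma$-equivalent, hence in one $\Gamma'$-class; call it $C_0$. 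For any bough edge $e \in \gamma$, follow the chain $e \prec A_e \prec A_{A_e} \prec \cdots$: this is a strictly increasing sequence inside $\gamma$ (with respect to $\preceq$), each consecutive pair linked by the relation $f \sim A_f$, so it must terminate at the $\preceq$-maximum of $\gamma$, and by condition (iii) on $\preceq$ (bough edges precede stem edges) — provided $\gamma$ contains at least one stem edge — this maximum is a stem edge, landing in $C_0$; if $\gamma$ contains no stem edge, all of $\gamma$ is bough edges and the same chain argument shows $\gamma$ is a single $\Gamma'$-class. Either way every edge of $\gamma$ is $\Gamma'$-equivalent to the maximum of $\gamma$, so $\gamma$ is contained in a single $\Gamma'$-class, giving $\Gamma \subseteq \Gamma'$ and hence equality.

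The main obstacle I expect is the parity bookkeeping needed to confirm $\wt\Gamma \in \scr G_{u,u'}$, i.e. that restricting $\Gamma$ to the stem really does produce lumps of even degree. The subtlety is that a $\Gamma$-lump of even degree can contain both stem edges (odd degree) and bough edges (even degree), and when we discard the bough edges the remaining stem edges of that lump must still number evenly; this follows because the bough edges contribute an even amount to the lump's total degree, so the stem edges contribute an even amount too, and since each has odd degree there must be an even number of them. This is the one place where the specific degree data from the table on page~\pageref{polynomial table} (stem tags $(s,0),(s,1)$ have degrees $1,3$, both odd; bough tags $(b,0),\dots,(b,4)$ have degrees $2,2,4,4,6$, all even) is essential, and I would state it carefully rather than wave it through.
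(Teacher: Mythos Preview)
Your proposal is correct and follows essentially the same approach as the paper: define $\wt\Gamma$ as the restriction of $\Gamma$ to the stem edges, and set $A_e$ to be the $\preceq$-smallest edge strictly above $e$ in the same $\Gamma$-lump (or $e$ itself if none exists). The paper's proof is considerably terser---it asserts the parity claim in one line and declares $\Gamma = \Gamma(\wt\Gamma, A)$ to be ``immediate''---whereas you have carefully spelled out the injectivity check for $A$, both inclusions for $\Gamma(\wt\Gamma,A) = \Gamma$, and the degree-parity bookkeeping; all of this is sound and fills in exactly the details the paper omits.
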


\begin{proof}
Let $\Gamma \in \scr G_{G \cup G'}$ be given. We define $\wt \Gamma$ to be the restriction of 
$\Gamma$ to the set $\cal E_S \deq \cal E(\cal S(G) \cup \cal S(G'))$, i.e.\ $\wt \Gamma = 
\{\gamma \cap \cal E_S\}_{\gamma \in \Gamma}$. We now claim that $\wt \Gamma \in \scr 
G_{u,u'}$. Indeed, by definition of $\scr G(G \cup G')$, each $\gamma \in \Gamma$ contains an 
even number of stem edges, which implies that the lumps of $\wt \Gamma$ are of even size.

In order to define $A$, we assign to each bough edge $e \in \cal E_B$ the smallest edge $e' 
\succ e$, $e' \in \cal E(G \cup G')$ in the same lump as $e$. If no such edge exists, we set 
$A_e \deq e$; otherwise we set $A_e \deq e'$. It is now immediate that $\Gamma = \Gamma(\wt 
\Gamma, A)$. In fact this is even a one-to-one map (a fact we shall not need however).
\end{proof}

We now make use of Lemma \ref{lemma: lumpings parametrized} to sum labels $x_v$ of bough 
vertices $v$ in \eqref{estimate on E_n}, starting from the leaves. Let us write
\begin{equation} \label{bound of E_1 in terms of E(G)}
E_1 \;\leq\; \sum_{n + n' \leq M^\mu} \abs{\alpha_n(t) \alpha_{n'}(t)}  \sum_{\cal G \in \fra 
G_n^*} \sum_{\cal G' \in \fra G_{n'}^*} E_{\cal G \cup \cal G'}\,,
\end{equation}
where we defined
\begin{align*}
E_{\cal G \cup \cal G'} &\;\deq\; \sum_{\Gamma \in \scr G(G \cup G')} \; \sum_{\b x \,:\, 
\Gamma(\b x) = \Gamma} Q(\b x)
\prod_{\gamma \in \Gamma} \absbb{ \E \prod_{e \in \gamma} P_{\tau(e)}\pb{\wh H_{x_{a(e)} 
x_{b(e)}}, \wh H_{x_{b(e)} x_{a(e)}}}}
\\
&\;\leq\; \sum_{\wt \Gamma \in \scr G_{u,u'}} \sum_{A} \; \sum_{\b x \,:\, \Gamma(\b x) = 
\Gamma(\wt \Gamma, A)} Q(\b x) \prod_{\gamma \in \Gamma(\wt \Gamma, A)} \absbb{\E \prod_{e \in 
\gamma} P_{\tau(e)}\pb{\wh H_{x_{a(e)} x_{b(e)}}, \wh H_{x_{b(e)}
x_{a(e)}}}}\,.
\end{align*}
The inequality follows from Lemma \ref{lemma: lumpings parametrized}. Here $u = \absb{\cal 
E(\cal S(G))}$ and $u' = \absb{\cal E(\cal S(G'))}$. Moreover, the summation over $A$ is 
understood to mean summation over all $A \in \scr A(G \cup G')$.

Let us partition the vertex labels $\b x$ into bough labels $\b x_B$ and stem labels $\b x_S$, 
i.e.
\begin{equation} \label{splitting of labels}
\b x \;=\; \pb{x_v \,:\, v \in \cal V(G \cup G')} \;=\; (\b x_B, \b x_S)\,,
\end{equation}
where
\begin{equation} \label{definition of split labels}
\b x_B \;\deq\; \pb{x_{b(e)} \,:\, e \in \cal E_B}\,, \qquad \b x_S \;\deq\; \pb{x_v \,:\, v 
\in \cal V\pb{\cal S(G) \cup \cal S(G')}}\,.
\end{equation}
Recall that $Q(\b x) = Q(\b x_S)$; see \eqref{definition of Q}.
Thus we get
\begin{multline} \label{error bound for fixed graphs}
E_{\cal G \cup \cal G'} \;\leq\; \sum_{\wt \Gamma \in \scr G_{u,u'}} \; \sum_{\b x_S \,:\, 
\Gamma(\b x_S) = \wt \Gamma} Q(\b x_S)
\\
\times
\sum_{A} \; \sum_{\b x_B} \qBB{\prod_{e \in \cal E_B} \indb{\varrho_{\b x}(e) = \varrho_{\b x}(A_e)}}
\prod_{\gamma \in \Gamma(\wt \Gamma, A)} \absbb{\E \prod_{e \in \gamma} P_{\tau(e)}\pb{\wh 
H_{x_{a(e)} x_{b(e)}}, \wh H_{x_{b(e)}
x_{a(e)}}}}\,.
\end{multline}
Equation \eqref{error bound for fixed graphs} is our starting point for estimating the 
contribution of the boughs. 

The roadmap for the following subsections is as follows. We start by fixing all summation 
variables in \eqref{error bound for fixed graphs}. In a first step, we sum over the bough 
labels $\b x_B$ (Subsection \ref{section: sum over bough labels}). In a second step, we sum 
over the bough lumpings $A$ (Subsection \ref{section: sum over easy bough lumpings}). The 
result of these summations is the bound \eqref{final bound on E(G)} on $E_{\cal G \cup \cal 
G'}$. In a third step, we sum over all stem labels (i.e.\ $\b x_S$ and $\wt \Gamma$) which 
yields a factor $h_{u,u'}$ (Subsection \ref{section: decoupling of the easy boughs}).  In a 
fourth step, we plug the estimate \eqref{final bound on E(G)} back into \eqref{bound of E_1 in terms of E(G)} and sum 
over the tagging $\tau$ (Subsections \ref{section: decoupling of the easy boughs} and \ref{subsection: sum over easy 
bough tags}). Finally, we sum over the bough graphs $G, G'$ (Subsection \ref{subsection: sum over bough graphs}).

\subsection{Sum over bough labels} \label{section: sum over bough labels}
In this subsection we fix $\cal G, \cal G'$, $A$ as well as an order $\preceq$, and sum over 
$\b x_B$ in \eqref{error bound for fixed graphs}.
The following definitions will prove helpful.

\begin{definition} \label{definition of inverse of A}
On $\cal E_B$ we define the \emph{inverse} $A^{-1}$ of $A$ by setting $A^{-1}_e \deq e'$ if 
there exists a (necessarily unique) $e' \prec e$ such that $A_{e'} = e$; otherwise we set 
$A^{-1}_e = e$. Obviously, $A^{-1}_e \preceq e$.
We say that a bough edge $e$ is \emph{lonely} (with respect to $A$) if $e = A_e = A^{-1}_e$.  
\end{definition}
Note that $e$ is lonely with respect to $A$ if and only if $e$ is the only edge in its lump of 
$\Gamma(\wt \Gamma, A)$ (this property is independent of $\wt \Gamma$).

For now we assume that all nonleaf bough edges have tag $(b,0)$; dealing with different nonleaf bough tags is very easy 
and is done at the end of this subsection. Define the new tagging $\wt \tau \equiv \wt \tau_A$  through
\begin{equation*}
\wt \tau(e) \;\deq\;
\begin{cases}
\tau(e) & \text{if $e$ is not a bough leaf}
\\
(b, 2) & \text{if $e$ is a lonely bough leaf}
\\
(b,5) & \text{if $e$ is a nonlonely bough leaf}\,,
\end{cases}
\end{equation*}
where we introduced the new bough tag $(b,5)$ whose associated polynomial (see table on page 
\pageref{polynomial table}) reads
\begin{equation*}
P_{(b,5)}(\wh H_{xy}, \wh H_{yx}) \;\deq\; 2 M^{2 \delta} \sigma_{xy}^2\,.
\end{equation*}
The motivation behind this definition is the following. If $e$ is a lonely leaf, its 
contribution to \eqref{error bound for fixed graphs} can be bounded by
\begin{equation} \label{bound on a lonely leaf}
\absB{\E P_{\tau(e)}\pb{\wh H_{x_{a(e)} x_{b(e)}}, \wh H_{x_{b(e)}
x_{a(e)}}}}
\;\leq\;
\E \absb{P_{(b,2)}\pb{\wh H_{x_{a(e)} x_{b(e)}}, \wh H_{x_{b(e)}
x_{a(e)}}}} \;\leq\; \frac{M^{2 \delta}}{M} \sigma_{x_{a(e)} x_{b(e)}}^2\,,
\end{equation}
as can be easily seen from Proposition \ref{proposition: properties of graphs} (ii) and Lemma 
\ref{lemma: bounds on truncated variance}. If $e$ is a leaf that is not lonely, its 
contribution in the worst case is of the same order as if its tag were $(b,0)$. Here the worst 
case is given by $\tau(e) = (b,1)$. The best we can do is use the trivial bound
\begin{equation} \label{bound on a nonlonely leaf}
\absb{P_{(b,i)}(\wh H_{xy}, \wh H_{yx})} \;\leq\; \absb{P_{(b, 5)}(\wh H_{xy}, \wh H_{yx})}
\end{equation}
for all $i$.
From \eqref{bound on a lonely leaf} and \eqref{bound on a nonlonely leaf} we see that the 
smallness of a leaf of type $(b,1)$ is only useful if it is lonely; otherwise, its contribution 
is the same as if it were an edge of type $(b,0)$.
For instance, we have
\begin{equation*}
\absB{\E P_{(b,1)}\pb{\wh H_{x y}, \wh H_{y x}}^2} \;=\; \E \pb{\abs{\wh H_{xy}}^2 - 
\sigma_{xy}^2}^2 \;\approx\; M^{2 \delta} \sigma_{xy}^4 \;\approx\;
\absB{\E P_{(b,0)}\pb{\wh H_{x y}, \wh H_{y x}}^2}
\,.
\end{equation*}

Now we claim that
\begin{equation} \label{tau bounded in terms of tau tilde}
\prod_{\gamma \in \Gamma(\wt \Gamma, A)} \absbb{\E \prod_{e \in \gamma} P_{\tau(e)}\pb{\wh 
H_{x_{a(e)} x_{b(e)}}, \wh H_{x_{b(e)}
x_{a(e)}}}}
\;\leq\; \prod_{\gamma \in \Gamma(\wt \Gamma, A)} \E \prod_{e \in \gamma} \absB{P_{\wt 
\tau(e)}\pb{\wh H_{x_{a(e)} x_{b(e)}}, \wh H_{x_{b(e)}
x_{a(e)}}}}\,.
\end{equation}
Indeed, this follows immediately from \eqref{bound on a lonely leaf}, \eqref{bound on a 
nonlonely leaf}, and the definition of $\wt \tau$. In fact, the definition of $\wt \tau$ was 
chosen so as to satisfy \eqref{tau bounded in terms of tau tilde}.

Next, we sum over the bough labels $\b x_B$ in the formula, obtained from \eqref{error bound 
for fixed graphs} and \eqref{tau bounded in terms of tau tilde},
\begin{equation} \label{starting point for sum over labels}
\sum_{\b x_B} \qBB{\; \prod_{e \in \cal E_B} \indb{\varrho_{\b x}(e) = \varrho_{\b x}(A_e)}}
\prod_{\gamma \in \Gamma(\wt \Gamma, A)} \E \prod_{e \in \gamma} \absb{P_{\wt \tau(e)}\pb{\wh 
H_{x_{a(e)} x_{b(e)}}, \wh H_{x_{b(e)}
x_{a(e)}}}}
\end{equation}
by successively summing up all labels in $\b x_B$, in the order defined by $\preceq$.
We denote the current summation edge by $\bar e \in \cal E_B$ (meaning that in the current
 step we sum over the label 
$x_{b(\bar e)}$), and call $\bar e$ the \emph{running edge}. When we tackle the edge $\bar e$,
 we shall \emph{sum it 
out}, by which we mean that we sum over the label $x_{b(\bar e)}$ of the final vertex of $\bar e$,
 and think of $\bar e$ 
as being struck from the graph $G \cup G'$. Thus, if the running edge is $\bar e$, then 
all edges $e \prec \bar e$ have 
already been summed out, and hence struck from $G \cup G'$. In this manner we shall successively 
sum out all bough edges 
and strike them all from $G \cup G'$.

For a running edge $\bar e \in \cal E_B$ define the subset of bough edges
\begin{equation} \label{definition of B^e}
B^{(\bar e)} \;\deq\; \hb{e \in \cal E_B \,:\, e \succeq \bar e}\,.
\end{equation}
The set $B^{(\bar e)}$ represents the bough edges that have not yet been
 summed out when $\bar e$ is the running edge.
We also abbreviate
\begin{equation} \label{definition of x^e and A^e}
\b x^{(\bar e)} \;\deq\; \pb{x_{b(e)} \,:\, e \in B^{(\bar e)}}\,,
\qquad
A^{(\bar e)} \;\deq\; \pb{A_e \,:\, e \in B^{(\bar e)}}\,.
\end{equation}
If $\bar e$ is a bough edge, we define
\begin{equation} \label{definition of recursive bound}
R^{(\bar e)} \;\deq\; \sum_{\b x^{(\bar e)}} \qBB{\; \prod_{e \in B^{(\bar e)}} \indb{\varrho_{\b 
x}(e) = \varrho_{\b x}(A_e)}}
\prod_{\gamma \in \Gamma(\wt \Gamma, A^{(\bar e)})} \E \prod_{e \in \gamma} \absb{P_{\wt 
\tau(e)}\pb{\wh H_{x_{a(e)} x_{b(e)}}, \wh H_{x_{b(e)}
x_{a(e)}}}}\,.
\end{equation}
If $\bar e$ is not a bough edge, we set $R^{(\bar e)} \deq 1$.

Let $e_0$ be the first edge of $\cal E(G \cup G')$. Moreover, \eqref{tau bounded in terms of 
tau tilde} yields
\begin{equation*}
\sum_{\b x_B} \qBB{\; \prod_{e \in \cal E_B} \indb{\varrho_{\b x}(e) = \varrho_{\b x}(A_e)}}
\prod_{\gamma \in \Gamma(\wt \Gamma, A)} \absbb{\E \prod_{e \in \gamma} P_{\tau(e)}\pb{\wh 
H_{x_{a(e)} x_{b(e)}}, \wh H_{x_{b(e)}
x_{a(e)}}}} \;\leq\; R^{(e_0)}\,.
\end{equation*}
We now proceed recursively, starting with $\bar e = e_0$, summing over $x_{b(\bar e)}$, then 
setting $\bar e$ to be the next edge (with respect to $\preceq$), summing over $x_{b(\bar e)}$, and so on until $\bar e$ 
is the first stem edge. In other words, we successively
sum out all bough edges in the order specified by $\preceq$.
 At each step, we get a bound of the form
\begin{equation*}
R^{(\bar e)} \;\leq\; \xi(\bar e, A) \, R^{(\sigma(\bar e))}\,,
\end{equation*}
where $\xi(\bar e, A) > 0$ is the factor resulting from the summation over $x_{b(\bar e)}$.  
Recall that $\sigma(\bar e)$ is the successor (with respect to $\preceq$) of $\bar e$. The 
following lemma gives an expression for $\xi(\bar e, A)$. It also identifies the ``bad 
leaves'', i.e.\ the leaves whose contribution to the right-hand side of \eqref{tau bounded in 
terms of tau tilde} is of order one, as the leaves $e$ that satisfy $A_e^{-1} \prec e = A_e$.  Our approach will 
eventually work because the number of bad leaves cannot be too large (see Lemma \ref{bound on the number of bad leaves} 
below).

\begin{lemma} \label{lemma: definition of easy xi}
For each $\bar e \in \cal E_B$ we have the bound $R^{(\bar e)} \;\leq\; \xi(\bar e, A) 
R^{(\sigma(\bar e))}$, where
\begin{equation*}
\xi(\bar e, A) \;\deq\;
\begin{cases}
\frac{2M^{2 \delta}}{M} + \ind{A_{\bar e} = \bar e} & \text{if $\bar e$ is not a leaf}
\\
\frac{2 M^{2 \delta}}{M} + \indb{A^{-1}_{\bar e} \prec \bar e = A_{\bar e}} 2 M^{2 \delta} & 
\text{if $\bar e$ is a leaf}.
\end{cases}
\end{equation*}
\end{lemma}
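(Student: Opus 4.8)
The plan is to carry out the recursive summation over bough labels one step at a time, peeling off the running edge $\bar e$ from the graph $G \cup G'$ and tracking exactly the factor $\xi(\bar e, A)$ that results from summing over $x_{b(\bar e)}$. First I would observe that when $\bar e$ is the running edge, the label $x_{b(\bar e)}$ appears in $R^{(\bar e)}$ in only three places: in the indicator $\indb{\varrho_{\b x}(\bar e) = \varrho_{\b x}(A_{\bar e})}$; in the polynomial $P_{\wt\tau(\bar e)}$ attached to $\bar e$ itself; and, if $\bar e$ is not lonely (i.e.\ $A^{-1}_{\bar e} \prec \bar e$), in the indicator $\indb{\varrho_{\b x}(A^{-1}_{\bar e}) = \varrho_{\b x}(\bar e)}$ coming from the edge $A^{-1}_{\bar e}$ that was summed out earlier but whose constraint still references $x_{b(\bar e)}$. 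Crucially, $x_{b(\bar e)}$ does \emph{not} appear in any edge $e$ with $e \succ \bar e$, since the only way a later edge could constrain $x_{b(\bar e)}$ is through $A$, and $A_e \succeq e \succ \bar e$ for such $e$; moreover since $\bar e$ is a bough edge, $x_{b(\bar e)}$ is not a stem label, so it does not appear in $Q$ either. This locality is what makes the recursion work.

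With this localization in hand, the proof splits into the cases of the lemma. If $\bar e$ is a leaf and $\bar e = A_{\bar e}$ with $A^{-1}_{\bar e} \prec \bar e$ (a ``bad leaf''), then the constraint from $A^{-1}_{\bar e}$ fixes the unordered pair $\{x_{a(\bar e)}, x_{b(\bar e)}\}$, and since $x_{a(\bar e)}$ is already determined (it was summed earlier or lies on the stem), $x_{b(\bar e)}$ is pinned to at most two values; the polynomial $P_{\wt\tau(\bar e)} = P_{(b,5)}$ contributes $2M^{2\delta}\sigma^2_{x_{a(\bar e)} x_{b(\bar e)}} \le 2M^{2\delta}/M$, so the total factor is $\le 2M^{2\delta}$, matching the second case. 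If instead $\bar e$ is a leaf with $A_{\bar e} \succ \bar e$, then $\bar e$ is nonlonely with its partner later, $\indb{\varrho_{\b x}(\bar e)=\varrho_{\b x}(A_{\bar e})}$ does not yet constrain $x_{b(\bar e)}$ (that constraint is consumed later when $A_{\bar e}$ is processed), $\wt\tau(\bar e) = (b,5)$ again, and summing $\sum_{x_{b(\bar e)}} 2M^{2\delta}\sigma^2_{x_{a(\bar e)} x_{b(\bar e)}} \le 2M^{2\delta}$ uses $\sum_y \sigma^2_{xy}=1$; but wait, here $\bar e$ is lonely only if also $A^{-1}_{\bar e}=\bar e$, so if $A_{\bar e}\succ\bar e$ then it is nonlonely and gets $P_{(b,5)}$, giving $\le 2M^{2\delta}$ — and if $\bar e$ is lonely ($\bar e = A_{\bar e}=A^{-1}_{\bar e}$) then $\wt\tau(\bar e)=(b,2)$, the bound \eqref{bound on a lonely leaf} gives $\le M^{2\delta}/M$, consistent with $\frac{2M^{2\delta}}{M}$ in the first branch's estimate once we note the leaf-case formula's first summand $\frac{2M^{2\delta}}{M}$ covers exactly this. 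The bookkeeping of which indicator is ``spent'' at which step is the delicate part and I would be careful to state it once cleanly.

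For a nonleaf bough edge $\bar e$ (tag $(b,0)$, polynomial $\abs{\wh H_{x_{a(\bar e)} x_{b(\bar e)}}}^2$), there are two sub-possibilities. If $A_{\bar e} = \bar e$, then $\bar e$ is lonely with respect to later edges but may have $A^{-1}_{\bar e}\prec\bar e$; if it is genuinely lonely the sum $\sum_{x_{b(\bar e)}} \E\abs{\wh H_{x_{a(\bar e)} x_{b(\bar e)}}}^2 \le 1$ by \eqref{l1-bound}, which is the $\ind{A_{\bar e}=\bar e}$ term; if $A^{-1}_{\bar e}\prec\bar e$ the label is pinned to two values and the polynomial gives $\le M^{2\delta}/M$ summed, i.e.\ $\le 2M^{2\delta}/M$. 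If $A_{\bar e}\succ\bar e$, the constraint $\varrho_{\b x}(\bar e)=\varrho_{\b x}(A_{\bar e})$ is spent later, so we get the free $\ell^1$-sum $\le 1$ — but this case needs the $\ind{A_{\bar e}=\bar e}$ indicator to \emph{not} fire; re-reading the statement, the nonleaf formula is $\frac{2M^{2\delta}}{M} + \ind{A_{\bar e}=\bar e}$, so whenever $A_{\bar e}\succ\bar e$ we claim the factor is just $\le \frac{2M^{2\delta}}{M}$, which forces $x_{b(\bar e)}$ to be pinned — and indeed it is, because a nonleaf bough edge has a child $e' \prec \bar e$ already summed out, and the child's constraint (via $A$ or directly) together with the graph structure fixes $x_{b(\bar e)} = x_{a(e')}$. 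I would isolate this "a nonleaf edge's final vertex is the initial vertex of an already-struck child'' observation as the key combinatorial input for that sub-case.

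The main obstacle, and the step I would spend the most care on, is the precise accounting of constraints: making rigorous the claim that when we sum out $\bar e$, \emph{exactly} the indicators $\indb{\varrho_{\b x}(e)=\varrho_{\b x}(A_e)}$ with $A_e = \bar e$ (equivalently $e = A^{-1}_{\bar e}$) are "activated" and pin $x_{b(\bar e)}$, while the indicator $\indb{\varrho_{\b x}(\bar e)=\varrho_{\b x}(A_{\bar e})}$ attached to $\bar e$ itself is deferred to the step at which $A_{\bar e}$ is processed. This requires checking that the order $\preceq$ (conditions (i)--(iii) on page \pageref{page: total order 1}) together with the defining property $A_e \succeq e$ guarantees no "future" constraint touches $x_{b(\bar e)}$, and that the expectation $\E \prod_{e\in\gamma}\absb{P_{\wt\tau(e)}(\cdots)}$ for a lump $\gamma$ factorizes appropriately across the summation step — i.e.\ at most one edge of each lump has its final vertex equal to $b(\bar e)$. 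Once this is set up, each of the four arithmetic bounds (using \eqref{l1-bound}, \eqref{bound on a lonely leaf}, \eqref{bound on a nonlonely leaf}, Proposition \ref{proposition: properties of graphs}(ii), and Lemma \ref{lemma: bounds on truncated variance}) is routine, and the lemma follows by combining the case analysis with the trivial bound $R^{(\bar e)} = 1$ when $\bar e$ is not a bough edge.
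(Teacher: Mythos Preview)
Your argument has the direction of constraint usage backwards, and this causes it to fail in two concrete places. Look at the definition of $R^{(\bar e)}$: the product $\prod_{e \in B^{(\bar e)}} \indb{\varrho_{\b x}(e) = \varrho_{\b x}(A_e)}$ runs only over $e \succeq \bar e$. So the indicator attached to $\bar e$ itself, $\indb{\varrho_{\b x}(\bar e) = \varrho_{\b x}(A_{\bar e})}$, \emph{is} present in $R^{(\bar e)}$ and is \emph{not} present in $R^{(\sigma(\bar e))}$; it must be consumed now, not deferred. When $A_{\bar e} \succ \bar e$, the labels $x_{a(A_{\bar e})}, x_{b(A_{\bar e})}$ belong to $\b x^{(\sigma(\bar e))}$ (or to the stem) and are therefore fixed during the summation over $x_{b(\bar e)}$; this is exactly what pins $x_{b(\bar e)}$ to two values and produces the factor $\frac{2M^{2\delta}}{M}$ via the $L^\infty$ bound on $P_{\wt\tau(\bar e)}$. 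Conversely, the indicator attached to $A^{-1}_{\bar e}$ lies in the product over $B^{(A^{-1}_{\bar e})}$, was already consumed when $A^{-1}_{\bar e}$ was summed out, and is simply absent from $R^{(\bar e)}$; you cannot invoke it now.

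The concrete failures are: (a) for a leaf with $A_{\bar e} \succ \bar e$, the lemma requires the factor $\frac{2M^{2\delta}}{M}$ (the indicator $\indb{A^{-1}_{\bar e} \prec \bar e = A_{\bar e}}$ does not fire), but your free $\ell^1$-sum gives only $2M^{2\delta}$, which is too weak; (b) for a nonleaf with $A_{\bar e} \succ \bar e$, your attempted fix via a child $e'$ is circular: $a(e') = b(\bar e)$ by the definition of child, so ``$x_{b(\bar e)} = x_{a(e')}$'' is a tautology and pins nothing. The paper's proof handles both cases uniformly by using the present indicator $\indb{\varrho_{\b x}(\bar e) = \varrho_{\b x}(A_{\bar e})}$ to restrict the sum over $x_{b(\bar e)}$ to at most two values and then applying the $L^\infty$ bound \eqref{bound on Linfty norm of polynomial}. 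For the bad-leaf case $A^{-1}_{\bar e} \prec \bar e = A_{\bar e}$ there is genuinely no constraint left on $x_{b(\bar e)}$, and the factor $2M^{2\delta}$ comes from the free sum $\sum_y 2M^{2\delta}\sigma_{xy}^2$, not from any pinning.
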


\begin{proof}
Assume first that $\bar e \in \cal E_B$ is not a leaf.
Then we have $\wt \tau(\bar e) = (b,0)$ (recall that we assumed that all nonleaf bough edges 
have tag $(b,0)$).
If $A_{\bar e} = \bar e$, we get
\begin{align}
R^{(\bar e)} &\;\leq\;
\sum_{\b x^{(\sigma(\bar e))}} \qBB{\; \prod_{e \in B^{(\sigma(\bar e))}} \indb{\varrho_{\b x}(e) 
= \varrho_{\b x}(A_e)}}
\prod_{\gamma \in \Gamma(\wt \Gamma, A^{(\sigma(\bar e))})} \E \prod_{e \in \gamma} 
\absb{P_{\wt \tau(e)}\pb{\wh H_{x_{a(e)} x_{b(e)}}, \wh H_{x_{b(e)}
x_{a(e)}}}}
\notag \\
&\qquad \times \sum_{x_{b(\bar e)}}\E \absb{P_{\wt \tau(\bar e)}\pb{\wh H_{x_{a(\bar e)} 
x_{b(\bar e)}}, \wh H_{x_{b(\bar e)}
x_{a(\bar e)}}}}
\notag \\
&\;\leq\; R^{(\sigma(\bar e))}\,, \label{recursive estimate for nonleaf}
\end{align}
where in the second step we used that $\wt \tau(\bar e) = (b,0)$, and consequently
\begin{equation*}
\sum_{x_{b(\bar e)}}\E \absb{P_{\wt \tau(\bar e)}\pb{\wh H_{x_{a(\bar e)} x_{b(\bar e)}}, \wh H_{x_{b(\bar e)} x_{a(\bar 
e)}}}} \;=\; \sum_{x_{b(\bar e)}} \E \absb{\wh H_{x_{a(\bar e)} x_{b(\bar e)}}}^2 \;\leq\; 1
\end{equation*}
to sum over the label $x_{b(\bar e)}$.

If $A_{\bar e} \succ \bar e$, we get
\begin{align}
R^{(\bar e)} &\;\leq\;
\sum_{\b x^{(\sigma(\bar e))}} \qBB{\; \prod_{e \in B^{(\sigma(\bar e))}} \indb{\varrho_{\b x}(e) 
= \varrho_{\b x}(A_e)}}
\prod_{\gamma \in \Gamma(\wt \Gamma, A^{(\sigma(\bar e))})} \E \prod_{e \in \gamma} 
\absb{P_{\wt \tau(e)}\pb{\wh H_{x_{a(e)} x_{b(e)}}, \wh H_{x_{b(e)}
x_{a(e)}}}}
\notag \\
&\qquad \times \sum_{x_{b(\bar e)}} \indb{\varrho_{\b x}(\bar e) = \varrho_{\b x}(A_{\bar e})} \;
\normb{P_{\wt \tau(\bar e)}\pb{\wh H_{x_{a(\bar e)} x_{b(\bar e)}}, \wh H_{x_{b(\bar e)}
x_{a(\bar e)}}}}_\infty
\notag \\ \label{recursive bound for lumped edge}
&\;\leq\; \frac{2M^{2 \delta}}{M} R^{(\sigma(\bar e))}\,,
\end{align}
where in the last step we used the bound
\begin{equation} \label{bound on Linfty norm of polynomial}
\normb{P_{(b,0)}\pb{\wh H_{x y}, \wh H_{y
x}}}_\infty \;\leq\; M^{2 \delta} \sigma_{x y}^2 \;\leq\; \frac{M^{2 \delta}}{M}
\end{equation}
as well as
\begin{equation*}
\indb{\varrho_{\b x}(\bar e) = \varrho_{\b x}(A_{\bar e})} \;\leq\; \indb{x_{b(\bar e)} = 
x_{b(A_{\bar e})}} +\indb{x_{b(\bar e)} = x_{a(A_{\bar e})}}\,.
\end{equation*}
Note that the summation over $x_{b(\bar e)}$ in \eqref{recursive bound for lumped edge} is 
restricted to the two values $x_{a(A_{\bar e})}$ and $x_{b(A_{\bar e})}$, which are fixed as 
they belong to $\b x^{(\sigma(\bar e))}$. This concludes the proof of Lemma \ref{lemma: 
definition of easy xi} in the case that $\bar e$ is not a leaf.

Next, let $\bar e \in \cal E_B$ be a leaf.
If $A_{\bar e}^{-1} = \bar e = A_{\bar e}$ then $\bar e$ is lonely and $\wt \tau(\bar e) = 
(b,2)$. Thus we get, exactly as in \eqref{recursive estimate for nonleaf} and using 
\eqref{bound on a lonely leaf},
\begin{equation*}
R^{(\bar e)} \;\leq\; \frac{M^{2 \delta}}{M} R^{(\sigma(\bar e))}\,.
\end{equation*}
If $A^{-1}_{\bar e} \prec \bar e = A_{\bar e}$ then $\wt \tau(\bar e) = (b,5)$. Therefore, using \begin{equation*}
\normb{P_{(b,5)}\pb{\wh H_{x y}, \wh H_{y x}}}_\infty \;\leq\; 2 M^{2 \delta} \sigma_{xy}^2
\end{equation*}
we get, as in \eqref{recursive estimate for nonleaf},
\begin{equation*}
R^{(\bar e)} \;\leq\; 2 M^{2 \delta} R^{(\sigma(\bar e))}\,.
\end{equation*}
If $\bar e \prec A_{\bar e}$ then $\wt \tau(\bar e) = (b,5)$ and we find, as in \eqref{recursive bound for lumped edge},
\begin{equation*}
R^{(\bar e)} \;\leq\; \frac{2 M^{2 \delta}}{M} R^{(\sigma(\bar e))}\,.
\end{equation*}
This concludes the proof.
\end{proof}

Putting everything together we get by iteration, for a fixed $\b x_S$,
\begin{multline} \label{bound on sum over bough labels}
\sum_{\b x_B} \qBB{\; \prod_{e \in \cal E_B} \indb{\varrho_{\b x}(e) = \varrho_{\b x}(A_e)}}
\prod_{\gamma \in \Gamma(\wt \Gamma, A)} \absbb{\E \prod_{e \in \gamma} P_{\tau(e)}\pb{\wh 
H_{x_{a(e)} x_{b(e)}}, \wh H_{x_{b(e)}
x_{a(e)}}}}
\\
\leq\; F(A) \, \prod_{\gamma \in \wt \Gamma} \E \prod_{e \in \gamma} \absB{P_{\tau(e)}\pb{\wh 
H_{x_{a(e)} x_{b(e)}}, \wh H_{x_{b(e)}
x_{a(e)}}}}\,,
\end{multline}
where
\begin{equation} \label{definition of F(A)}
F(A) \;\deq\; \prod_{e \in \cal E_B \text{ nonleaf}} \pbb{\frac{2M^{2 \delta}}{M} + \ind{A_{e} 
= e}}
\prod_{e \in \cal E_B \text{ leaf}} \pbb{\frac{2 M^{2 \delta}}{M} + \indb{A^{-1}_{e} \prec e = 
A_{e}} \, 2 M^{2 \delta}}\,.
\end{equation}

So far we assumed that all nonleaf bough tags were $(b,0)$.
Now we deal with arbitrary taggings. We split the tagging $\tau = (\tau_B, \tau_S)$ into a bough and stem tagging, where
\begin{equation*}
\tau_B \;\deq\; \pb{\tau(e) \,:\, e \in \cal E_B}\,,
\qquad
\tau_S \;\deq\; \pb{\tau(e) \,:\, e \in \cal E(\cal S(G) \cup \cal S(G'))}\,.
\end{equation*}
We now define $F(A, \tau_B)$ in such a way that \eqref{bound on sum over bough labels}, with $F(A)$ replaced by $F(A, 
\tau_B)$, holds for an arbitrary tagging $\tau$.

Let $e$ be a nonleaf bough edge. If $\tau(e) = (b,i)$ for $i > 0$, Proposition 
\ref{proposition: properties of graphs} (iii) implies that $i \geq 2$. Therefore the bound
\begin{equation*}
\absb{P_{(b,i)}(\wh H_{xy}, \wh H_{yx})} \;\leq\; \frac{M^{2 \delta}}{M} \abs{\wh H_{xy}}^2\,,
\end{equation*}
valid for all $i \geq 2$, implies that each nonleaf bough edge whose tag is not $(b,0)$ 
contributes
an additional factor $M^{-1 + 2 \delta}$ to the right-hand side of \eqref{bound on sum over bough labels} compared to if 
its tag were $(b,0)$. Thus we have that, for an arbitrary tagging $\tau = (\tau_B, \tau_S)$, the estimate \eqref{bound 
on sum over bough labels} is valid with $F(A)$ replaced by
\begin{multline} \label{definition of F(A,tau)}
F(A, \tau_B) \;\deq\; \qBB{\prod_{e \in \cal E_B \text{ nonleaf}} \pbb{\frac{M^{2 
\delta}}{M}}^{\ind{\tau(e) \neq (b,0)}} \pbb{\frac{2M^{2 \delta}}{M} + \ind{A_{e} = e}}}
\\ \times
\prod_{e \in \cal E_B \text{ leaf}} \pbb{\frac{2 M^{2 \delta}}{M} + \indb{A^{-1}_{e} \prec e = 
A_{e}} \, 2 M^{2 \delta}}\,.
\end{multline}
Thus we get from \eqref{error bound for fixed graphs}
\begin{equation} \label{result over summing over bough labels}
E_{\cal G \cup \cal G'} \;\leq\; \sum_{\wt \Gamma \in \scr G_{u,u'}} \; \sum_{\b x_S \,:\, 
\Gamma(\b x_S) = \wt \Gamma} Q(\b x_S)
\sum_{A} F(A, \tau_B) \, \prod_{\gamma \in \wt \Gamma} \E \prod_{e \in \gamma} 
\absB{P_{\tau(e)}\pb{\wh H_{x_{a(e)} x_{b(e)}}, \wh H_{x_{b(e)}
x_{a(e)}}}}\,.
\end{equation}

\subsection{Sum over bough lumpings} \label{section: sum over easy bough lumpings}
In this subsection we estimate $\sum_A F(A, \tau_B)$. Let $\cal E_l \subset \cal E_B$ denote 
the subset of bough leaves. Multiplying out the product over leaves in \eqref{definition of 
F(A,tau)} yields
\begin{multline} \label{multiplying out leaf product}
F(A, \tau_B) \;\leq\; \qBB{\prod_{e \in \cal E_B \text{ nonleaf}} \pbb{\frac{M^{2 
\delta}}{M}}^{\ind{\tau(e) \neq (b,0)}} \pbb{\frac{2M^{2 \delta}}{M} + \ind{A_{e} = e}}}
\\
\times
\sum_{a \in \{0,1\}^{\cal E_l}} \; \qBB{\prod_{e \in \cal E_B \text{ leaf}} \pbb{\frac{2 M^{2 
\delta}}{M}}^{1 - a_e} \pbb{\indb{A^{-1}_{e} \prec e = A_{e}} \, 2 M^{2 \delta}}^{a_e}}\,.
\end{multline}
Let $L \deq \abs{\cal E_l}$ denote the number of bough leaves.
We claim that the right-hand side of \eqref{multiplying out leaf product} vanishes unless 
$\abs{a} \deq \sum_e a_e \leq [L/2]$, where $[\cdot]$ denotes integer part. This is an 
immediate consequence of the following Lemma.

\begin{lemma} \label{bound on the number of bad leaves}
The set of bad leaves $\cal L \deq \{e \in \cal E_l \,:\, A^{-1}_e \prec e = A_e\}$ contains at 
most $[L/2]$ elements.
\end{lemma}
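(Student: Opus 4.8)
The plan is to set up a counting/injection argument that pairs up bad leaves, so that no more than half the leaves can be bad. Recall that a leaf $e \in \cal E_l$ is \emph{bad} if $A^{-1}_e \prec e = A_e$; the condition $e = A_e$ says $e$ is the \emph{largest} edge in its lump of $\Gamma(\wt \Gamma, A)$ that is reached by following $A$ from below, while $A^{-1}_e \prec e$ says that some strictly smaller edge $e'$ (necessarily a bough edge, since only bough edges carry $A$-values) satisfies $A_{e'} = e$. So a bad leaf $e$ is precisely a leaf that is the terminal element of a nontrivial $A$-chain, i.e.\ the top of a lump containing at least two edges, with $e$ being reached from a strictly smaller edge. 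The key structural observation I would exploit is that the edge $A^{-1}_e = e'$ associated with a bad leaf $e$ is itself a bough edge with $e' \prec e$, and distinct bad leaves have distinct such $e'$ (since $A^{-1}$ is a partial function, injective where defined — this is exactly the second defining property of $\scr A(G\cup G')$ in Definition \ref{definition of A}).

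First I would make precise the map $e \mapsto A^{-1}_e$ on the set $\cal L$ of bad leaves. By definition of a bad leaf, $A^{-1}_e$ is defined and satisfies $A^{-1}_e \prec e$. By Definition \ref{definition of A} (second property), if $A_{e'} = A_{e''} = e$ with $e', e'' \prec e$ then $e' = e''$; hence $A^{-1}$ restricted to $\cal L$ is injective, so $\abs{\cal L} \leq \abs{A^{-1}(\cal L)}$. The plan is then to show that $A^{-1}(\cal L) \subseteq \cal E_l \setminus \cal L$, or more precisely to show that $A^{-1}(\cal L)$ is disjoint from $\cal L$ and consists of edges that can be charged to leaves, so that $\abs{\cal L} + \abs{A^{-1}(\cal L)} \leq L$, giving $2\abs{\cal L} \leq L$ and hence $\abs{\cal L} \leq [L/2]$. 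The disjointness $A^{-1}(\cal L) \cap \cal L = \emptyset$ should follow because if $e'' = A^{-1}_e$ for a bad leaf $e$, then $A_{e''} = e \succ e''$, so $e''$ fails the condition $e'' = A_{e''}$ and therefore $e''$ is not bad.

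The remaining point — and this is the step I expect to be the main obstacle — is to argue that $A^{-1}(\cal L)$ is itself contained in the set of \emph{leaves} (not merely in $\cal E_B$), so that $\cal L$ and $A^{-1}(\cal L)$ are two disjoint subsets of $\cal E_l$. Here I would use the ordering conventions on $\preceq$: condition (ii) in the definition of $\preceq$ (page \pageref{page: total order 1}) says that every leaf precedes every nonleaf bough edge. If $e'' = A^{-1}_e$ with $e$ a leaf, then $e'' \prec e$; but $e$ is a leaf, and all edges preceding a leaf in the $\preceq$-order are themselves... this needs care, since leaves need not form an initial segment unless we argue via the lump structure. The cleaner route is: within the lump of $\Gamma(\wt\Gamma,A)$ containing the bad leaf $e$, the $A$-chain $\cdots \prec A^{-1}_e \prec e$ realizes a sequence of bough edges; since $e = A_e$ is a leaf and leaves are $\preceq$-minimal among bough edges adjacent to them in the chain in a way forced by condition (ii), each edge $e''$ with $A_{e''}$ equal to a leaf must itself be a leaf. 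I would spell this out by induction along the $A$-chain using properties (i) and (ii) of $\preceq$, possibly also invoking Proposition \ref{proposition: properties of graphs} (ii) (all leaves are small) to control which tags can appear. Once each bad leaf $e$ is injectively paired with a distinct non-bad leaf $A^{-1}_e$, the inequality $2\abs{\cal L}\le L$ is immediate, and the lemma follows.
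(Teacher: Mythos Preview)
Your approach is exactly the paper's: exhibit the injection $e \mapsto A^{-1}_e$ from $\cal L$ into $\cal E_l \setminus \cal L$, and conclude $2\abs{\cal L} \leq L$. The injectivity and the disjointness $A^{-1}(\cal L) \cap \cal L = \emptyset$ you handle correctly.

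The step you flag as the ``main obstacle'' --- showing $A^{-1}_e$ is itself a leaf --- is in fact immediate, and you are overcomplicating it. Condition (ii) of $\preceq$ says that if $e$ is a leaf and $e'$ is not a leaf then $e \prec e'$; combined with condition (iii) (bough edges precede stem edges), this means the leaves form an \emph{initial segment} of $\cal E(G \cup G')$ with respect to $\preceq$. Hence any edge $e'' \prec e$ with $e$ a leaf is automatically a leaf. There is no need to induct along the $A$-chain or invoke Proposition \ref{proposition: properties of graphs} (ii). The paper's one-line proof simply says ``it follows from the definition of $\preceq$ that $A^{-1}_e \in \cal E_l \setminus \cal L$,'' which is precisely this observation.
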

\begin{proof}
If $e \in \cal L$ then it follows from the definition of $\preceq$ that $A^{-1}_e \in \cal E_l 
\setminus \cal L$. In words: A bad leaf always comes with a unique companion that is not bad.
\end{proof}

Abbreviating $\sum_{a \in \{0,1\}^{\cal E_l} \,:\, \abs{a} \leq [L/2]}$ by $\sum_{\abs{a} \leq 
[L/2]}$, we get from \eqref{multiplying out leaf product}
\begin{align} \label{longsum}
\sum_A F(A, \tau_B) &\;\leq\; \sum_A \qBB{\prod_{e \in \cal E_B \text{ nonleaf}} 
\pbb{\frac{M^{2 \delta}}{M}}^{\ind{\tau(e) \neq (b,0)}} \pbb{\frac{2M^{2 \delta}}{M} + 
\ind{A_{e} = e}}}
\\
&\qquad \times
\sum_{\abs{a} \leq [L/2]} \; \qBB{\prod_{e \in \cal E_B \text{ leaf}} \pbb{\frac{2 M^{2 
\delta}}{M}}^{1 - a_e} \pbb{\indb{A^{-1}_{e} \prec e = A_{e}} \, 2 M^{2 \delta}}^{a_e}}
\nonumber
\\
&\;\leq\;
\qBB{\prod_{e \in \cal E_B \text{ nonleaf}} \pbb{\frac{M^{2 \delta}}{M}}^{\ind{\tau(e) \neq 
(b,0)}} \pbb{\frac{2M^{2 \delta} M^\mu}{M} + 1}}
\nonumber\\
&\qquad \times
\sum_{\abs{a} \leq [L/2]} \; \qBB{\prod_{e \in \cal E_B \text{ leaf}} \pbb{\frac{2 M^{2 \delta} 
M^\mu}{M}}^{1 - a_e} \pb{2 M^{2 \delta}}^{a_e}}
\nonumber\\
&\;\leq\; C 2^L \qBB{\prod_{e \in \cal E_B \text{ nonleaf}} \pbb{\frac{M^{2 
\delta}}{M}}^{\ind{\tau(e) \neq (b,0)}}} \pbb{1 + \frac{2 M^{2 \delta + \mu}}{M}}^{M^\mu} 
\pbb{\frac{2 M^{2 \delta} M^\mu}{M}}^{L - [L/2]} \pb{2 M^{2 \delta}}^{[L/2]}
\nonumber\\
&\;\leq\; C \pbb{\frac{C M^{4 \delta} M^\mu}{M}}^{L / 2} \prod_{e \in \cal E_B \text{ nonleaf}} 
\pbb{\frac{M^{2 \delta}}{M}}^{\ind{\tau(e) \neq (b,0)}}\,,
\nonumber
\end{align}
where we used that $2 \delta + 2 \mu < 1$, and performed the sum over $A$ trivially using the 
fact that, for each $e$, $A_e$ takes values in a set of size at most $M^\mu$.

Summarizing, we get from \eqref{result over summing over bough labels}
\begin{multline} \label{final bound on E(G)}
E_{\cal G \cup \cal G'} \;\leq\;
C \pbb{\frac{C M^{4 \delta} M^\mu}{M}}^{L/2} \prod_{e \in \cal E_B \text{ nonleaf}} 
\pbb{\frac{M^{2 \delta}}{M}}^{\ind{\tau(e) \neq (b,0)}}
\\
\times \sum_{\wt \Gamma \in \scr G_{u,u'}} \; \sum_{\b x_S \,:\, \Gamma(\b x_S) = \wt \Gamma} 
Q(\b x_S)
\prod_{\gamma \in \wt \Gamma} \E \prod_{e \in \gamma} \absB{P_{\tau(e)}\pb{\wh H_{x_{a(e)} 
x_{b(e)}}, \wh H_{x_{b(e)}
x_{a(e)}}}}\,.
\end{multline}
We can understand the first factor
in \eqref{final bound on E(G)} as follows. Each leaf
carries a factor $M^{-1}$ due to its smallness.
We estimated the combinatorial factor arising from the sum over lumpings by $M^\mu$ per leaf (which is near optimal in 
the case when  most bough edges are
leaves). Therefore, ideally, each leaf should contribute a factor
$M^{-1+\mu}$ (up to an irrelevant $M^{4\delta}$). The above argument is only able to
exploit this factor for half of the leaves; this is why we have the exponent $L/2$ instead of the desired $L$ in 
\eqref{final bound on E(G)}.
This deficiency is the main reason why the exponent of the time scale $\kappa$ is restricted
to $\kappa<1/5$ in this section.
If $L/2$ were replaced with $L$ at this point, the whole argument of Section \ref{section: boughs} would be valid up to 
time scales of order $M^{1/3}$.

\subsection{Decoupling of the graphs and the tags} \label{section: decoupling of the easy 
boughs}
The summation in \eqref{bound of E_1 in terms of E(G)} over the decorated graphs $\cal G$ 
involves summing over $G$ and $\tau_G$ under the constraint
\begin{equation*}
\sum_{e \in \cal E(G)} \deg_{\tau_G}(e)  = n\,,
\end{equation*}
and similarly for $\cal G'$. In order to sum over $G$ and $\tau_G$ separately, it is convenient 
to decouple them.
To this end, we define the degree of the boughs and the stem separately,
\begin{equation*}
\deg\pb{\cal B(G), \tau_B} \;\deq\; \sum_{e \in \cal E(\cal B(G))} \deg(e)\,,
\qquad
\deg\pb{\cal S(G), \tau_S} \;\deq\; \sum_{e \in \cal E(\cal S(G))} \deg(e)\,.
\end{equation*}
As above, we use the variable $u$ to denote $\abs{\cal E(\cal S(G))}$. Moreover, we introduce 
the variable $r = 1,2,3, \dots$ through
\begin{equation*}
\deg \pb{\cal S(G), \tau_S} \;=\; u + 2r\,.
\end{equation*}
That $r$ is an integer follows from the fact that all stem edges have odd degree (since a stem 
edge has degree 1 or 3). The variable $r$ is equal to the number of small edges (i.e.\ edges of 
type $(s,1)$ which have degree 3) in the stem $\cal E(\cal S(G))$.
The primed variables $u',r'$ are defined similarly in terms of $\cal G'$.

Let us denote by $l(G)$ and $l(G')$ the number of bough leaves in $G$ and $G'$ respectively.
Now we may write, using first \eqref{bound of E_1 in terms of E(G)} and then \eqref{final bound 
on E(G)},
\begin{align}
E_1 &\;\leq\; \sum_{n + n' \leq M^\mu} \abs{\alpha_n(t) \alpha_{n'}(t)}  \sum_{\cal G, \cal G' 
\in \fra G_\sharp} \sum_{u = 0}^{n-1} \sum_{u' = 0}^{n' - 1} \sum_{r,r' \geq 0}  E_{\cal G \cup \cal 
G'}
\notag \\
&\qquad \times
\qB{\indb{\abs{\cal E(\cal S(G))} = u} \indb{\deg(\cal S(G), \tau_S) = u + 2r} \indb{\deg 
\pb{\cal B(G), \tau_B} = n - u - 2r}} \qB{\text{primed}}
\notag \\
&\;\leq\;
\sum_{n + n' \leq M^\mu} \abs{\alpha_n(t) \alpha_{n'}(t)} \sum_{\cal G, \cal G' \in \fra G_\sharp} 
\sum_{u = 0}^{n-1} \sum_{u' = 0}^{n' - 1} \sum_{r,r' \geq 0}
\sum_{\wt \Gamma \in \scr G_{u,u'}} \; \sum_{\b x_S \,:\, \Gamma(\b x_S) = \wt \Gamma} Q(\b 
x_S)
\prod_{\gamma \in \wt \Gamma} \E \prod_{e \in \gamma} \absB{P_{\tau(e)}\pb{\wh H_{x_{a(e)} 
x_{b(e)}}, \wh H_{x_{b(e)}
x_{a(e)}}}}
\notag \\
&\qquad \times
\qB{\indb{\abs{\cal E(\cal S(G))} = u} \indb{\deg(\cal S(G), \tau_S) = u + 2r} \indb{\deg 
\pb{\cal B(G), \tau_B} = n - u - 2r}} \qB{\text{primed}}
\notag \\ \label{beginning of estimate of E_1}
&\qquad \times
C \pbb{\frac{C M^{4 \delta} M^\mu}{M}}^{\frac{l(G) + l(G')}{2}} \prod_{e \in \cal E_B \text{ 
nonleaf}} \pbb{\frac{M^{2 \delta}}{M}}^{\ind{\tau(e) \neq (b,0)}}\,,
\end{align}
where $[\text{primed}]$ means the preceding product of indicator functions with primed 
variables. The condition $u < n$ is equivalent to requiring that $\cal G \neq \cal I_n$.

Next, in \eqref{beginning of estimate of E_1} we bound
\begin{equation*}
\sum_{\wt \Gamma \in \scr G_{u,u'}} \; \sum_{\b x_S \,:\, \Gamma(\b x_S) = \wt \Gamma} Q(\b 
x_S)
\prod_{\gamma \in \wt \Gamma} \E \prod_{e \in \gamma} \absB{P_{\tau(e)}\pb{\wh H_{x_{a(e)} 
x_{b(e)}}, \wh H_{x_{b(e)}
x_{a(e)}}}} \;\leq\; \pbb{\frac{M^{2 \delta}}{M}}^{r + r'} h_{u,u'}\,.
\end{equation*}
This follows immediately from \eqref{definition of h}, \eqref{definition of V tilde}, the bound
\begin{equation*}
\absB{P_{\tau(e)}\pb{\wh H_{x_{a(e)} x_{b(e)}}, \wh H_{x_{b(e)}
x_{a(e)}}}} \;\leq\;
\begin{cases}
\abs{\wh H_{x_{a(e)} x_{b(e)}}} & \text{if } \tau(e) = (s,0)
\\
\frac{M^{2 \delta}}{M} \, \abs{\wh H_{x_{a(e)} x_{b(e)}}} & \text{if } \tau(e) = (s,1)\,,
\end{cases}
\end{equation*}
and the fact that precisely $r + r'$ stem edges have tag $(s,1)$.
Thus we get
\begin{multline} \label{bound on E_1 before decoupling}
E_1 \;\leq\;
\sum_{n + n' \leq M^\mu} \abs{\alpha_n(t) \alpha_{n'}(t)}  \sum_{r,r' \geq 0}
\pbb{\frac{M^{2 \delta}}{M}}^{r + r'} \sum_{u = 0}^{n-1} \sum_{u' = 0}^{n' - 1} h_{u,u'}
\\
\times
\sum_{\cal G, \cal G' \in \fra G_\sharp}
\qB{\indb{\abs{\cal E(\cal S(G))} = u} \indb{\deg(\cal S(G), \tau_S) = u + 2r} \indb{\deg 
\pb{\cal B(G), \tau_B} = n - u - 2r}} \qB{\text{primed}}
\\
\times
C \pbb{\frac{C M^{4 \delta} M^\mu}{M}}^{\frac{l(G) + l(G')}{2}} \prod_{e \in \cal E_B \text{ 
nonleaf}} \pbb{\frac{M^{2 \delta}}{M}}^{\ind{\tau(e) \neq (b,0)}}\,.
\end{multline}

In the next lemma we show that we can replace the condition
\begin{equation*}
\deg \pb{\cal B(G), \tau_B} = n - u - 2r \qquad \text{with} \qquad
2 \absb{\cal E(\cal B(G))} = n - u - 2r
\end{equation*}
to obtain an upper bound. Thus we decouple the dependence of the indicator function on $G$ from its dependence on the 
tagging $\tau_B$. We do this by adding bough edges of type $(b,0)$ to $G$, and by ensuring that this procedure does not 
decrease the estimate of the graph contributing to \eqref{bound on E_1 before decoupling}.

\begin{lemma} \label{lemma: decoupling}
We have that
\begin{multline} \label{decoupled estimate on E_1}
E_1 \;\leq\;
\sum_{n + n' \leq M^\mu} \abs{\alpha_n(t) \alpha_{n'}(t)}  \sum_{r,r' \geq 0}
\pbb{\frac{M^{2 \delta}}{M}}^{r + r'} \sum_{u = 0}^{n-1} \sum_{u' = 0}^{n' - 1} h_{u,u'}
\\
\times
\sum_{\cal G, \cal G' \in \fra G_\sharp}
\qB{\indb{\abs{\cal E(\cal S(G))} = u} \indb{\deg(\cal S(G), \tau_S) = u + 2r} \indb{2 
\absb{\cal E(\cal B(G))} = n - u - 2r}} \qB{\text{\rm primed}}
\\
\times
C \pbb{\frac{C M^{4 \delta} M^\mu}{M}}^{\frac{l(G) + l(G')}{2}} \prod_{e \in \cal E_B \text{\rm 
nonleaf}} \pbb{\frac{M^{2 \delta}}{M}}^{\ind{\tau(e) \neq (b,0)}}\,.
\end{multline}
\end{lemma}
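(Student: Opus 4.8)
The plan is to reduce \eqref{decoupled estimate on E_1} to a combinatorial statement about a single bough forest and then to exhibit a weight–nondecreasing, ``essentially injective'' map between the two graph families. In \eqref{bound on E_1 before decoupling} every factor except the two bough indicators $\indb{\deg(\cal B(G),\tau_B)=n-u-2r}$ and its primed counterpart is nonnegative and independent of the boughs, so it suffices to fix a completely nonbacktracking stem $\cal S(G)$ with $u$ edges and a stem tagging $\tau_S$, put $m:=n-u-2r$ (necessarily even, since bough edges have even degree), abbreviate the per-graph bough weight
\begin{equation*}
w(\cal G)\;\deq\;\pbb{\frac{CM^{4\delta}M^\mu}{M}}^{l(G)/2}\prod_{e\in\cal E(\cal B(G))\text{ nonleaf}}\pbb{\frac{M^{2\delta}}{M}}^{\ind{\tau_G(e)\neq(b,0)}}\,,
\end{equation*}
and show that $\sum w(\cal G)$, taken over $\cal G\in\fra G_\sharp$ with that stem and $\deg(\cal B(G),\tau_B)=m$, is bounded by a constant times the same sum with the constraint $\deg(\cal B(G),\tau_B)=m$ replaced by $2\absb{\cal E(\cal B(G))}=m$.

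The map $\Psi$ trades the ``wasted'' degree of high-degree bough edges for extra bough edges. Process the bough edges $e$ with $\deg(e)=2k\geq 4$ in a fixed order: if $e$ is a leaf, replace it by the pendant path of $k$ edges whose first $k-1$ edges are tagged $(b,0)$ and whose last edge is tagged $(b,1)$, keeping $b(e)$ as the degree-one endpoint; if $e$ is not a leaf, retag $e$ to $(b,0)$ and attach at $b(e)$ a fresh pendant path of $k-1$ edges ($k-2$ edges tagged $(b,0)$, then a $(b,1)$-leaf). After processing, $\Psi(\cal G)$ has all bough edges of degree $2$, so $2\absb{\cal E(\cal B(\Psi(\cal G)))}=\deg(\cal B(G),\tau_B)=m$; its boughs are rooted trees with $(b,1)$-leaves and $(b,0)$ non-leaf edges, which is precisely the shape produced by successive collapses $\cal F_c$, so $\Psi(\cal G)\in\fra G_\sharp$, its stem being unchanged. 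For the weight: replacing a high-degree leaf leaves both $l(G)$ and the set of non-leaf bough edges tagged $\neq(b,0)$ unchanged, so $w$ is unchanged; retagging a high-degree non-leaf edge to $(b,0)$ removes a factor $M^{2\delta-1}$ and creates exactly one new leaf (hence a factor $(CM^{4\delta+\mu-1})^{1/2}$), and since $M^{1-2\delta}(CM^{4\delta+\mu-1})^{1/2}=C^{1/2}M^{(1+\mu)/2}\geq 1$ for $\mu<1$, the weight does not decrease. Thus $w(\cal G)\leq w(\Psi(\cal G))$ for every admissible $\cal G$.

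The map $\Psi$ is not injective, since a maximal pendant $(b,0)$-path of $\Psi(\cal G)$ ending in a $(b,1)$-leaf may already have been present in $\cal G$; but reconstructing a preimage of a graph $\cal H$ in the image amounts to deciding, independently at each leaf of $\cal H$ capping such a path, whether to leave it alone or to recollapse its last one, two or three edges into a single degree-$4$ or degree-$6$ bough edge — a bounded number (at most four) of options per leaf. Hence $\#\Psi^{-1}(\cal H)\leq 4^{l(\cal H)}$, so $\sum_{\cal G}w(\cal G)\leq\sum_{\cal G}w(\Psi(\cal G))\leq\sum_{\cal H}4^{l(\cal H)}w(\cal H)$, and since $4^{l(\cal H)}\bigl(CM^{4\delta}M^\mu/M\bigr)^{l(\cal H)/2}=\bigl(16CM^{4\delta}M^\mu/M\bigr)^{l(\cal H)/2}$ the factor $4^{l(\cal H)}$ is absorbed into the constant $C$. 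Re-attaching the stem factors $h_{u,u'}$, the stem indicators and $(M^{2\delta}/M)^{r+r'}$, and summing over $n,n',r,r',u,u'$, then gives \eqref{decoupled estimate on E_1}. I expect the delicate point to be exactly this last bookkeeping: verifying that $\Psi$ genuinely lands in $\fra G_\sharp$ (i.e.\ identifying precisely which decorated bough forests occur there) and that the preimage count is uniformly $\leq C^{l(\cal H)}$, so that the enlargement of the constant is harmless.
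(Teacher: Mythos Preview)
Your approach is correct in outline, but it is considerably more elaborate than what the paper does, and the extra complications you introduce are avoidable.

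The paper does not process high-degree bough edges one at a time. Instead it computes the total degree excess $2D \deq \deg(\cal B(G),\tau_B) - 2\abs{\cal E(\cal B(G))}$ once, picks the first bough root $v$ reached along the walk around $G$, and simply inserts a path of $D$ new $(b,0)$ edges between $v$ and the bough attached there. This single, local insertion has three pleasant consequences that your construction lacks: the map $\cal G \mapsto \wt{\cal G}$ is genuinely \emph{injective} (one recovers $\cal G$ by stripping edges from the first bough until the bough degree drops back to $2\abs{\cal E(\cal B(\wt G))}$), the number of leaves $l(G)$ is \emph{exactly} preserved, and the number of small nonleaf bough edges is \emph{exactly} preserved. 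Hence the weight $w$ is literally unchanged and no multiplicity or weight-comparison argument is needed.

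By contrast, your map $\Psi$ changes $l(G)$ (each high-degree nonleaf edge spawns a new leaf), changes the small-nonleaf count, is only bounded-to-one, and requires both the inequality $w(\cal G)\le w(\Psi(\cal G))$ and the preimage count $\#\Psi^{-1}(\cal H)\le C^{l(\cal H)}$. Both of these do hold (your weight computation is right, and the preimage bound follows since each high-degree edge of a preimage accounts for exactly one leaf of $\cal H$, so the number of ``un-processing'' choices is at most $l(\cal H)$, each with a bounded number of tag options), but the bookkeeping is heavier and your own caveat about verifying $\Psi(\cal G)\in\fra G_\sharp$ becomes more delicate because you alter the branching structure at internal vertices. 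The paper's insertion at a single bough root sidesteps all of this: the resulting graph rips via $\cal R$ to an element of $\fra G'$ with $2D$ extra $(s,0)$ stem edges, which manifestly re-collapses to $\wt{\cal G}$.

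In short: your route reaches the same destination, but the paper's single-insertion trick is both simpler and sharper, yielding an injective, weight-preserving map where you settle for bounded-to-one and weight-nondecreasing.
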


\begin{proof}
Fix $n,n',r,r',u,u'$.
Note first that
\begin{equation} \label{definition of edge deficiency}
2 D \;\deq\; \deg \pb{\cal B(G), \tau_B} - 2 \absb{\cal E(\cal B(G))}\,,
\end{equation}
is a nonnegative even number. It is nonnegative because every bough edge has degree at least 
two, and even because both terms of the right-hand side of \eqref{definition of edge 
deficiency} are even.

Let $\cal G = (G, \tau_G)$ satisfy $\deg \pb{\cal B(G), \tau_G} = n - u - 2r$. We construct a 
tagged graph $\wt {\cal G} = (\wt G, \tau_{\wt G})$ as follows. If $D = 0$ then we set $\wt 
{\cal G} = \cal G$. If $D > 0$ then we denote by $v$ the stem vertex that is closest to $a(G)$ 
such that $v$ is the root of a bough. (Because $D > 0$ there is such a $v$.) We then define 
$\wt{\cal G}$ to be $\cal G$ but with the vertex $v$ replaced with a path consisting of $D$ 
bough edges, each carrying the tag $(b,0)$. (More precisely, if $e$ denotes the bough edge 
incident to $v$, we separate the vertices $a(e)$ and $v$ and join them with path of length $D$ 
carrying tags $(b,0)$). Thus, we simply lengthen a leaf by adding $D$ additional large edges.

We claim that $\wt {\cal G}$ has the following properties.
\begin{enumerate}
\item
The map $\cal G \mapsto \wt{\cal G}$ is injective.
\item
$G$ and $\wt G$ have the same number of bough leaves.
\item
$2 \absb{\cal E(\cal B(\wt G))} = n - u - 2r$.
\item
The number of small nonleaf bough edges is the same in $\cal G$ and $\wt {\cal G}$, i.e.
\begin{equation*}
\prod_{e \in \cal E(\cal B(G)) \text{ nonleaf}} \pbb{\frac{M^{2 \delta}}{M}}^{\ind{\tau_G(e) 
\neq (b,0)}} \;=\; \prod_{e \in \cal E(\cal B(\wt G)) \text{ nonleaf}} \pbb{\frac{M^{2 
\delta}}{M}}^{\ind{\tau_{\wt G}(e) \neq (b,0)}}\,.
\end{equation*}
\item
$\cal G$ and $\wt {\cal G}$ have the same tagged stem.
\end{enumerate}
Properties (ii) -- (v) are immediate from the definition of $\wt {\cal G}$. Property (i) follows from the fact that 
$\cal G$ can be reconstructed from $\wt {\cal G}$ as follows. Set $\cal G_* = (G_*, \tau_{G_*}) \deq \wt {\cal G}$. Let 
$e$ be the first bough edge of $G_*$ reached along the walk (see Figure \ref{figure: walk}) around $G_*$. If the total 
degree of the boughs of $\cal G_*$ is greater than $2 \abs{\cal E (\cal B(\wt G))}$, remove the edge $e$ from  $\cal 
G_*$.  (Note that in this case $\cal G_* \neq \cal G$, and the edge $e \in \cal E(\cal B(G_*))$ was added to $\wt G$ in 
the above construction.)
Repeat this process until the total degree of the boughs of $\cal G_*$ is equal to $2 \abs{\cal E (\cal B(\wt G))}$.  
Then $\cal G_* = \cal G$.

Constructing a tagged graph $\wt {\cal G}'$ in the same way from $\cal G'$, we bound the term 
indexed by $\cal G, \cal G'$ on the right-hand side of \eqref{bound on E_1 before decoupling} 
by the term corresponding to $\wt {\cal G}, \wt {\cal G}'$. Using the fact that the map
$\cal G \mapsto \wt {\cal G}$ is injective we may therefore bound the right-hand side of 
\eqref{bound on E_1 before decoupling} by the right-hand side of
\eqref{decoupled estimate on E_1}, writing $\cal G$ and $\cal G'$ instead of $\wt {\cal G}$ and $\wt{\cal G}'$.
\end{proof}

\subsection{Sum over taggings} \label{subsection: sum over easy bough tags}
Thanks to Lemma \ref{lemma: decoupling}, we may perform the sums over $G, G', \tau_B$, and 
$\tau_S$ separately in \eqref{decoupled estimate on E_1}. We start with the sum over $\tau_B$.
From Lemma \ref{lemma: decoupling} we get
\begin{multline} \label{sum over bough tags}
E_1 \;\leq\;
\sum_{n + n' \leq M^\mu} \abs{\alpha_n(t) \alpha_{n'}(t)}  \sum_{r,r' \geq 0}
\pbb{\frac{M^{2 \delta}}{M}}^{r + r'} \sum_{u = 0}^{n - 1} \sum_{u' = 0}^{n'-1} h_{u,u'}
\\
\times
\sum_{G, G' \in \fra W} \sum_{\tau_S}
\qB{\indb{\abs{\cal E(\cal S(G))} = u} \indb{\deg(\cal S(G), \tau_S) = u + 2r} \indb{2 
\absb{\cal E(\cal B(G))} = n - u - 2r}} \qB{\text{\rm primed}}
\\
\times
C\sum_{\tau_B} \pbb{\frac{C M^{4 \delta} M^\mu}{M}}^{\frac{l(G) + l(G')}{2}}  \prod_{e \in \cal 
E_B \text{\rm nonleaf}} \pbb{\frac{M^{2 \delta}}{M}}^{\ind{\tau_B(e) \neq (b,0)}}\,.
\end{multline}
The last line of \eqref{sum over bough tags} is bounded by
\begin{equation} \label{detail in sum over bough tags}
C \pbb{\frac{C M^{4 \delta} M^\mu}{M}}^{\frac{l(G) + l(G')}{2}} C^{l(G) + l(G')} \pbb{1 + 
\frac{C M^{2 \delta}}{M}}^{M^\mu} \;\leq\;
C \pbb{\frac{C M^{4 \delta} M^\mu}{M}}^{\frac{l(G) + l(G')}{2}}\,.
\end{equation}

Next, we sum over the stem taggings $\tau_S$ in \eqref{sum over bough tags}. The constraint $\deg(\cal S(G), \tau_S) = 
u + 2r$ means that the stem $\cal S(G) = I_u$ has $u - r$ edges with tag $(s,0)$ and $r$ edges with tag $(s,1)$. Thus we 
get in \eqref{sum over bough tags}
\begin{equation} \label{bound on sum over stem tags}
\sum_{\tau_S} \indb{\deg(\cal S(G), \tau_S) = u + 2r}
\indb{\deg(\cal S(G'), \tau_S) = u' + 2r'} \;=\;
\binom{u}{r} \binom{u'}{r'} \;\leq\; M^{\mu(r + r')}\,.
\end{equation}
Plugging \eqref{detail in sum over bough tags} and \eqref{bound on sum over stem tags} into 
\eqref{sum over bough tags} yields
\begin{multline} \label{sum over all tags}
E_1 \;\leq\;
C \sum_{n + n' \leq M^\mu} \abs{\alpha_n(t) \alpha_{n'}(t)}  \sum_{r,r' \geq 0}
\pbb{\frac{M^{2 \delta} M^\mu}{M}}^{r + r'} \sum_{u = 0}^{n - 1} \sum_{u' = 0}^{n'-1} h_{u,u'}
\\
\times
\sum_{G, G' \in \fra W}
\qB{\indb{\abs{\cal E(\cal S(G))} = u} \indb{2 \absb{\cal E(\cal B(G))} = n - u - 2r}} 
\qB{\text{\rm primed}}
\pbb{\frac{C M^{4 \delta} M^\mu}{M}}^{\frac{l(G) + l(G')}{2}}\,.
\end{multline}

\subsection{Sum over the bough graphs} \label{subsection: sum over bough graphs}
We now sum over $G, G' \in \fra W$ and complete the estimate of $E_1$.
From \eqref{sum over all tags} we get
\begin{equation} \label{easy E_1 estimated with Z}
E_1 \;\leq\; C \sum_{n + n' \leq M^\mu} \abs{\alpha_n(t) \alpha_{n'}(t)}  \sum_{u = 0}^{n - 1} 
\sum_{u' = 0}^{n'-1} h_{u,u'} \, Z_{n,u} Z_{n',u'}\,,
\end{equation}
where we defined
\begin{equation} \label{definition of easy Z}
Z_{n,u} \;\deq\;  \sum_{r \geq 0}
\pbb{\frac{M^{2 \delta} M^\mu}{M}}^r \sum_{G \in \fra W}
\indb{\abs{\cal E(\cal S(G))} = u} \indb{2 \absb{\cal E(\cal B(G))} = n - u - 2r} \pbb{\frac{C M^{4 \delta} 
M^\mu}{M}}^{\frac{l(G)}{2}}\,.
\end{equation}
The graph $G$ has a stem $\cal S(G) = I_u$ of size $u$, to which are attached boughs consisting 
together of
\begin{equation*}
k_{n,u}(r) \;\equiv\; k(r) \;\deq\; \frac{n - u - 2r}{2}
\end{equation*}
edges. Note that, because $u < n$, we always have $k(r) + r > 0$.

Next, let $s \geq 0$ be the number of boughs in $G$. We order the $s$ boughs of $G$ in some 
arbitrary manner and index them using $i = 1, \dots, s$.  Let $k_i$ be the number of edges in 
the $i$-th bough, and $l_i$ the number of leaves in the $i$-th bough.  Denote by $S_{k, l}$ the 
number of oriented, unlabelled, rooted trees with $k$ edges and $l$ leaves.  Thus we get from 
\eqref{definition of easy Z}, splitting the contributions $s = 0$ and $s \geq 1$,
\begin{equation*}
Z_{n,u} \;\leq\; \sum_{r \geq 0} \pbb{\frac{M^{2 \delta} M^\mu}{M}}^r \qBB{\ind{k(r) = 0} + 
\sum_{s \geq 1} \binom{u + 1}{s} \sum_{k_1 + \cdots + k_s = k(r)} \; \sum_{l_1 = 1}^{k_1} 
\cdots \sum_{l_s = 1}^{k_s} \prod_{i = 1}^s S_{k_i, l_i} \pbb{\frac{C M^{4 \delta} 
M^\mu}{M}}^{l_i/2}}\,,
\end{equation*}
where we sum over $k_i \geq 1$ for all $i$.
The binomial factor accounts for the locations of the roots of the boughs, which may be located 
at any of the $u+1$ stem vertices.

The number $S_{k, l}$ is known as the \emph{Naranya number}. For the convenience of the reader, we outline its key 
properties in the following short combinatorial digression. For full details see e.g.\ \cite{stanley}, p.\ 237. Denote 
by $X_{k,l}$ the set of sequences $(w_1, w_2, \dots, w_{2k})$ with $k$ elements $+1$ and $k$ elements $-1$, such that 
all partial sums are nonnegative and
\begin{equation*}
l \;=\; \absb{\h{j \st w_j = 1\,,\, w_{j+1} = -1}}\,.
\end{equation*}
The set $X_{k,l}$ parametrizes the set of oriented, unlabelled, rooted trees with $k$ edges and $l$ leaves. This
identification is the well-known bijection between such trees and Dick paths. It is constructed by walking around the 
tree, as in Figure \ref{figure: walk}, whereby at each step we add the element $+1$ to the sequence if we move away from 
the root and the element $-1$ if we move towards the root. See e.g.\ \cite{AGZ}, Chapter 1, for further details. Thus we 
have $S_{k,l} = \abs{X_{k,l}}$. In \cite{stanley}, p.\ 237, it is proved that
\begin{equation} \label{Naranya number}
S_{k,l} \;=\; \frac{1}{l} \binom{k - 1}{l - 1} \binom{k}{l - 1} \;\leq\; k^{2 l - 2}\,.
\end{equation}

Having found the expression \eqref{Naranya number} for $S_{k,l}$, we may continue our estimate of $Z_{n,u}$. We get
\begin{align*}
Z_{n,u} &\;\leq\;
\sum_{r \geq 0} \pbb{\frac{M^{2 \delta} M^\mu}{M}}^r \qBB{\ind{k(r) = 0} + \sum_{s \geq 1} 
\binom{u + 1}{s} \sum_{k_1 + \cdots + k_s = k(r)} \; \sum_{l_1 = 1}^{k_1} \cdots \sum_{l_s = 
1}^{k_s} \prod_{i = 1}^s k_i^{2 l_i - 2} \pbb{\frac{C M^{4 \delta} M^\mu}{M}}^{l_i/2}}
\\
&\;\leq\;
\sum_{r \geq 0} \pbb{\frac{M^{2 \delta} M^\mu}{M}}^r \qBB{\ind{k(r) = 0} + \sum_{s \geq 1} 
\binom{u + 1}{s} \sum_{k_1 + \cdots + k_s = k(r)} \; \prod_{i = 1}^s \frac{1}{k_i^2} \sum_{l_1 
= 1}^{k_i} \pbb{\frac{C k_i^4 M^{4 \delta} M^\mu}{M}}^{l_i/2}}
\\
&\;\leq\; \frac{C M^{2 \delta} M^\mu}{M} + \sum_{r \geq 0} \pbb{\frac{M^{2 \delta} M^\mu}{M}}^r 
\sum_{s \geq 1} \binom{u + 1}{s} \sum_{k_1 + \cdots + k_s = k(r)} \; \prod_{i = 1}^s 
\pbb{\frac{C M^{4 \delta} M^\mu}{M}}^{1/2}\,,
\end{align*}
where we used that $k_i \leq M^\mu$, $\mu + 4 \delta < 1/5$, and the fact that $r \geq 1$ if 
$k(r) = 0$.  Thus we get
\begin{align}
Z_{n,u} &\;\leq\; \frac{C M^{2 \delta} M^\mu}{M} + \sum_{r \geq 0} \pbb{\frac{M^{2 \delta} 
M^\mu}{M}}^r \sum_{s \geq 1} \binom{u + 1}{s} \binom{k(r) - 1}{s - 1} \;
\pbb{\frac{C M^{4 \delta} M^\mu}{M}}^{s/2}
\notag \\
&\;\leq\; \frac{C M^{2 \delta} M^\mu}{M} + \sum_{r \geq 0} \pbb{\frac{M^{2 \delta} M^\mu}{M}}^r 
\sum_{s \geq 1} \binom{M^\mu}{s} \binom{M^\mu}{s - 1} \;
\pbb{\frac{C M^{4 \delta} M^\mu}{M}}^{s/2}
\notag \\
&\;\leq\; \frac{C M^{2 \delta} M^\mu}{M} + C \sum_{s \geq 1} \frac{1}{M^\mu} \pbb{\frac{C M^{4 
\delta} M^{5 \mu}}{M}}^{s/2}
\notag \\ \label{bound on Z}
&\;\leq\; \frac{o(1)}{M^\mu}\,.
\end{align}

From \eqref{easy E_1 estimated with Z} and \eqref{bound on Z} we may finally conclude
\begin{equation} \label{final estimate for easy E_1}
E_1 \;\leq\; o(1) \sum_{n + n' \leq M^\mu} \abs{\alpha_n(t) \alpha_{n'}(t)}  \sum_{u = 0}^{n - 
1} \sum_{u' = 0}^{n'-1} h_{u,u'} \frac{1}{M^{2 \mu}}
\;\leq\; o(1) \sum_{n + n' \leq M^\mu}
\abs{\alpha_n(t) \alpha_{n'}(t)} \frac{M^\mu}{M^{2 \mu}}
\;=\;
o(1)\,,
\end{equation}
where we used \eqref{bound on all lumpings}, Cauchy-Schwarz, and \eqref{sum over alphas}.

\subsection{Bound on $E_2$} \label{subsection: bound on E_2}
In this final subsection, we show that $E_2$ vanishes as $W \to \infty$. Recall from 
\eqref{definition of E_2} that
\begin{equation*}
E_2 \;=\;
\sum_{n + n' \leq M^\mu}  \absb{\alpha_{n}(t) \, \alpha_{n'}(t)} \sum_x \sum_{\cal G \in \fra G_{n}^*}
\sum_{\Gamma \in \scr G(G \cup I_{n'})} \absb{V_x(\cal G \cup \cal I_{n'}, \Gamma)}
\end{equation*}
Now the preceding discussion, after setting $\cal G' = \cal I_{n'}$ and $u' = n'$ carries over 
verbatim.  The analogue of \eqref{final estimate for easy E_1} yields
\begin{equation*}
E_2 \;\leq\; o(1) \sum_{n + n' \leq M^\mu} \abs{\alpha_n(t) \alpha_{n'}(t)} \sum_{u = 0}^{n - 
1} h_{u,n'} \frac{1}{M^\mu} \;\leq\; o(1) \pBB{\sum_{n = 0}^{M^\mu} \frac{1}{M^{\mu / 2}} 
\abs{\alpha_n(t)}} \pBB{\sum_{u, n' = 0}^{M^\mu} \abs{\alpha_{n'}(t)} \frac{1}{M^{\mu/2}} 
h_{u,n'}}\,.
\end{equation*}
The first parenthesis is bounded by a constant (using Cauchy-Schwarz and \eqref{sum over alphas}).
We bound the second parenthesis using Lemma \ref{lemma: bound on stem lumpings} and \eqref{sum over alphas}:
\begin{align*}
\sum_{u, n' = 0}^{M^\mu} \abs{\alpha_{n'}(t)} \frac{1}{M^{\mu/2}} h_{u,n'} &\;\leq\; \sum_{u = 
0}^{M^\mu} \abs{\alpha_u(t)} \frac{1}{M^{\mu /2}} h_{u,u} +
\sum_{u, n' = 0}^{M^\mu} \ind{u \neq n'} \abs{\alpha_{n'}(t)} \frac{1}{M^{\mu/2}} h^*_{u,n'}
\\
&\;\leq\; C + \frac{1}{M^{\mu / 2}}\sum_{p \leq M^\mu} \sum_{u + n' = 2 p} h^*_{u,n'}
\\
&\;\leq\; C + \frac{1}{M^{\mu / 2}} M^\mu M^{\mu / 2 - 1/3 + 8 \delta}
\\
&\;\leq\; C\,.
\end{align*}
This completes the proof of Proposition \ref{proposition: boughs vanish for small kappa}.

\section{The boughs for $\kappa < 1/3$} \label{section: boughs for kappa=1/3}

In this section we extend the result of Section \ref{section: boughs} (i.e.\ Proposition \ref{proposition: boughs vanish 
for small kappa}) from $\kappa < 1/5$ to $\kappa < 1/3$.
The goal of this section is to prove the following result.

\begin{proposition} \label{proposition: boughs vanish for large kappa}
Choose $\mu$ and $\delta$ so that
\begin{equation*}
\kappa + 4 \delta \;<\; \mu \;<\; 1/3 - 8 \delta\,.
\end{equation*}
Then
\begin{equation*}
\lim_{W \to \infty} E_1 \;=\; \lim_{W \to \infty} E_2 \;=\; 0\,,
\end{equation*}
where $E_1$ and $E_2$ are defined in \eqref{definition of E_1} and \eqref{definition of E_2} 
respectively.
\end{proposition}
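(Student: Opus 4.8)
The plan is to revisit the estimate of Section~\ref{section: boughs} and recover the missing half of the decay per leaf. Recall that in Section~\ref{section: boughs} each pair of leaves yielded a factor $M^{-1+\mu}$ (up to harmless $M^{4\delta}$) instead of the desired $M^{-1}$ per leaf; this is precisely what restricted us to $\kappa<1/5$. The remark after \eqref{final bound on E(G)} makes this diagnosis explicit: if $L/2$ could be replaced by $L$ in \eqref{final bound on E(G)}, the entire argument would already reach $\kappa<1/3$. So the goal of this section is to produce exactly such an improved bound on $E_{\cal G \cup \cal G'}$, and then feed it through the same decoupling, tagging, and bough-graph summations (Subsections \ref{section: decoupling of the easy boughs}--\ref{subsection: sum over bough graphs}) essentially verbatim.

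The key to the improvement is a more careful treatment of the nonlonely leaves, i.e.\ the lumps of $\Gamma(\wt\Gamma,A)$ consisting entirely of leaves of type $(b,1)$. In Section~\ref{section: boughs} these were bounded crudely using the $\ell^\infty$ bound \eqref{bound on a nonlonely leaf}, which throws away the fact that such a lump is a product of centred random variables $|\wh H_{xy}|^2 - \sigma^2_{xy}$, and in particular throws away the cancellations available when summing the shared label over $\Lambda_N$. The plan is: first, refine the order $\preceq$ and the parametrization $(\wt\Gamma, A)$ so that, when we sum out the bough labels starting from the leaves, a lump of $(b,1)$-leaves sharing a common variance $\sigma^2_{xy}$ is handled as a block; second, use the fact that $\E\prod_{e\in\gamma}(|\wh H_{x_{a(e)}x_{b(e)}}|^2 - \sigma^2_{x_{a(e)}x_{b(e)}})$ over a lump $\gamma$ with $|\gamma|$ edges is bounded by $C^{|\gamma|} M^{2\delta|\gamma|} \sigma^{2|\gamma|}_{xy}$, which when combined with the two label-identifications imposed by the lump and the $\ell^1$ bound $\sum_y \sigma^2_{xy}=1$ gives an extra factor $\sigma^2_{xy}\le M^{-1}$ from \emph{each} of the remaining leaves of the lump, not just half. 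Concretely, I expect to introduce a second parametrization layer (call it $(\wt\Gamma, A, \text{blocks})$ or equivalently a refined version of Definition~\ref{definition of A}) so that $A$ no longer simply records ``the next edge in the lump'' but organizes the $(b,1)$-leaves of a lump into consecutive-in-$\preceq$ chains, and the recursion of Lemma~\ref{lemma: definition of easy xi} gains a genuine $M^{-1+2\delta}$ from each leaf that is not the $\preceq$-minimum of its lump.

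After obtaining the strengthened analogue of \eqref{final bound on E(G)} --- namely $E_{\cal G\cup\cal G'} \le C (CM^{c\delta}M^\mu/M)^{L}\prod_{e\in\cal E_B\,\mathrm{nonleaf}}(M^{2\delta}/M)^{\ind{\tau(e)\neq(b,0)}}\sum_{\wt\Gamma}\sum_{\b x_S} Q(\b x_S)\prod_{\gamma\in\wt\Gamma}\E\prod_{e\in\gamma}|P_{\tau(e)}(\cdots)|$ --- the remainder of the argument runs in parallel to Section~\ref{section: boughs}: we decouple the graph from its tagging exactly as in Lemma~\ref{lemma: decoupling}, perform the sum over bough and stem taggings as in \eqref{detail in sum over bough tags}--\eqref{bound on sum over stem tags}, and sum over bough graphs via the Naranya number estimate \eqref{Naranya number}. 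The bound on $Z_{n,u}$ becomes geometric as soon as $CM^{c\delta}M^{5\mu}/M^2 = o(1)$, i.e.\ when $5\mu < 2 - c\delta$; but with the exponent $L$ (rather than $L/2$) the relevant smallness is instead $CM^{c\delta}M^{\mu}\cdot M^{4\mu}/M^{?}$ — more precisely, redoing the computation in \eqref{bound on Z} with the improved exponent replaces the constraint $\mu+4\delta<1/5$ by $\mu+8\delta<1/3$, which is exactly the hypothesis of Proposition~\ref{proposition: boughs vanish for large kappa}. Finally $E_2$ is handled by the same specialization $\cal G'=\cal I_{n'}$, $u'=n'$ as in Subsection~\ref{subsection: bound on E_2}, invoking Lemma~\ref{lemma: bound on stem lumpings}.

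The main obstacle, and the bulk of the work, will be the bookkeeping in the refined summation over bough labels: one must make precise how a lump spanning several bough branches interacts with the successive-summation order $\preceq$, so that each leaf beyond the $\preceq$-minimal one in its lump really does release a clean factor $\sigma^2_{xy}\le M^{-1}$ rather than an $\ell^\infty$ factor $M^{2\delta}$. This requires re-examining which labels are already ``fixed'' at the moment a given leaf is summed out --- some of the label-identifications forced by the lump may constrain $x_{a(e)}$ rather than $x_{b(e)}$, and one has to verify that after relabelling the lump still delivers the $\ell^1$-summable variance. A secondary subtlety is that the combinatorial cost of enumerating these refined parametrizations must still be controlled by $M^{O(\mu)}$ per leaf; I expect this to follow because, as in Lemma~\ref{bound on the number of bad leaves}, the ``bad'' configurations are paired with ``good'' companions, but the counting is more delicate than in Section~\ref{section: boughs} and will need its own lemma. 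Once these two points are settled the rest is a transcription of the Section~\ref{section: boughs} estimates with the improved power of $M$.
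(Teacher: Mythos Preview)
Your diagnosis is correct --- the point of Section~\ref{section: boughs for kappa=1/3} is to upgrade the exponent $L/2$ in \eqref{final bound on E(G)} towards $L$ --- but the mechanism you propose does not close the gap, and the target bound you write down is in fact too strong to be true.

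The block estimate $\bigl|\E\prod_{e\in\gamma}(|\wh H_{xy}|^2-\sigma_{xy}^2)\bigr| \le C^{|\gamma|}M^{2\delta|\gamma|}\sigma_{xy}^{2|\gamma|}$ on a lump $\gamma$ of $k$ leaves, after summing the one free end-label, yields only $M^{-(k-1)}$: one factor of $M^{-1}$ per leaf \emph{except one}. This is exactly what Section~\ref{section: boughs} already extracts (the ``bad'' leaf of Lemma~\ref{bound on the number of bad leaves} is the $\preceq$-\emph{maximum} of the lump, not the minimum; its end-label is unconstrained when it is summed out, whence the $O(1)$ factor). To recover the missing $M^{-1}$ you must exploit the second identification the lump imposes, on $x_{a(e)}$; but $x_{a(e)}$ is the endpoint of the parent edge, so using it means summing out the parent simultaneously with the leaf. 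That is impossible when the parent lies on the stem (its label belongs to $\b x_S$ and is already consumed by $h_{u,u'}$), and it can be done for at most one leaf in each group of adjacent leaves (they share a parent). Concretely: two adjacent leaves attached directly to the stem, lumped with each other, contribute at best $M^{-1+\mu+O(\delta)}$ in total, i.e.\ $(M^{-1+\mu+O(\delta)})^{1/2}$ per leaf; so your claimed bound with exponent $L$ is false for such configurations.

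The paper's route is structurally different from what you outline. Leaves are classified as \emph{free} (one per group of adjacent leaves, not incident to the stem), \emph{bound} (the remaining leaves in such a group), and \emph{degenerate} (in boughs consisting only of leaves incident to the stem). The order $\preceq$ is refined so that each free leaf is immediately followed by its parent, and both are summed out in one step; combined with a dynamical retagging that shifts factors $M^{-1+\mu+O(\delta)}$ forward along the lump (Proposition~\ref{proposition: step for hard leaves}), this yields a genuine $M^{-1+\mu+O(\delta)}$ per free leaf. Bound leaves yield only $M^{-\delta}$ and degenerate leaves only $(M^{-1+\mu+O(\delta)})^{1/2}$, exactly as in Section~\ref{section: boughs}. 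The compensation for bound and degenerate leaves comes not from the label summation at all but from a sharper tree-counting estimate (Lemma~\ref{lemma: fine bound on tree graphs}: the number of boughs with $k$ edges, $f$ free and $b$ bound leaves is at most $k^{2f-2}2^{2f+b}$, so bound leaves cost only a constant factor rather than $k^2$), together with a separate treatment of degenerate boughs in the graph sum. Your proposal lacks this classification, the leaf-plus-parent double summation, the dynamical weight transfer, and the refined combinatorics; without them the argument cannot reach $\kappa<1/3$.
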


\subsection{Sketch of the argument} \label{sect:longsketch}
In Section \ref{section: boughs} we estimated the contribution of the boughs by summing
successively, starting from the leaves, over the label of the final vertex $x_{b(\bar e)}$
 of each bough edge $\bar e$.  
We called this process \emph{summing out} the \emph{running edge} $\bar e$ and interpreted 
it as striking $\bar e$ from 
the graph $G \cup G'$.
This summation was done for a fixed lumping which induces constraints on the values of the
 labels. In particular, we used 
the simple fact that, if the running edge $\bar e$ is lumped with
 another edge that has not yet been summed out, then
 the label of final vertex $x_{b(\bar e)}$ of $\bar e$ is fixed.  This reduces the entropy factor associated with the 
summation over $x_{b(\bar e)}$ from $M^\mu$ to 2. In general,  bigger lumps typically have
smaller contributions and this effect counterbalances the fact that the combinatorics
of the lumpings consisting of bigger lumps is larger. If, on the other hand,
a leaf $\bar e$ is not lumped with any other edge (and therefore its end-label
 $x_{b(\bar e)}$ can be summed up
without restriction), then the factor resulting from summing out $\bar e$ is small; 
see \eqref{bound on a lonely leaf}.

It turns out that the summations over all bough labels and bough lumpings (i.e.\ over
 $\b x_B$ and $A$ in the notation 
of Section \ref{section: boughs}) are not critical on the time scales we are concerned with,
 $t\sim M^{\kappa}$ where 
$\kappa <1/3$.  Hence these summations we can be done generously (see \eqref{multiplying out leaf product} where
the main contribution comes from $|a|= [L/2]$). The main reason for the restriction $\kappa<1/3$ is that the exponent 
$\kappa = 1/3$ is critical when estimating the summation over all
 stem lumpings; see \cite{erdosknowles}, Section 11.
With the method presented in this section, $\kappa <1/3$  is also critical for
the summation over the bough graphs (i.e.\ over $G \cup G'$).

As outlined in Subsection \ref{subsection: sketch of simple boughs}, the key
 difficulty when estimating the contribution 
of the boughs is to extract a sufficiently high negative power of $M$ from the
 summing out of each bough leaf. This power is needed to control the combinatorics resulting from summing over all bough 
graphs.
Ideally, each bough leaf should give a factor $M^{-1}$ (up to factors of $M^\delta$),
but in Section \ref{section: boughs} we saw that this is not true for leaves of
 type $(b,1)$. Accordingly, we were only 
able to extract a factor $M^{-1/2}$ from each bough leaf; see 
Lemma~\ref{bound on the number of bad leaves}
and \eqref{final bound on E(G)}.
More precisely, the only obstacle to extracting the full factor $M^{-1}$ from every leaf, and thus
 reaching time scales of order $M^{1/3}$, was lumps consisting exclusively of
 leaves of type $(b,1)$; see \eqref{second 
moment of small edge}.

In this section we overcome this obstacle by exploiting the fact that,
 if the running edge $\bar e$ is a leaf that is 
lumped with another edge that has not been summed out, then \emph{both} of its vertex labels,
 $x_{a(\bar e)}$ and 
$x_{b(\bar e)}$, are fixed.  In order to make use of the reduction of the entropy factor
 resulting from the fixing of 
$x_{a(\bar e)}$, we need to sum over both $x_{b(\bar e)}$ and $x_{a(\bar e)}$ when our algorithm 
tackles the leaf $\bar e$.  
The sum over $x_{a(\bar e)}$ corresponds to summing out the parent edge of $\bar e$.
 This additional summation over 
$x_{a(\bar e)}$ is clearly not possible for every leaf since several leaves 
may have a common parent or the parent of the  leaf
may be on the stem whose labels are summed over separately.
Thus the simultaneous summation over both labels of a leaf
 can only be applied once for each group of adjacent leaves (namely, to the
 \emph{free leaf} of the group; see below), 
and is not applicable at all for leaves incident to the stem (called
 \emph{degenerate leaves}; see below).
However, this deficiency is counteracted by the fact that the number of
 boughs with large groups of adjacent leaves, as well as many leaves incident to the stem, is considerably smaller than 
the number
 of arbitrary boughs (see Lemma \ref{lemma: fine 
bound on tree graphs}). This gain in the graph combinatorics is sufficient to
 compensate for the large contribution of 
groups of adjacent leaves and of leaves incident to the stem.

Roughly speaking, we gain a factor $M^{-1/2}$ from summing out each degenerate
leaf, essentially as in Section \ref{section: boughs}. Additionally, with the
double summation procedure for the free leaves, we gain the optimal factor
 $M^{-1}$ from summing out a free leaf 
together with its parent. Actually, we get the somewhat larger factor
 $ M^{-1 + \mu + 5 \delta}$, where the additional $M^\mu$ represents the entropy factor from summing over bough lumpings 
$A$ as in Section~\ref{section: boughs}. (Recall that the combinatorics
of the bough lumping, encoded in the function $e \mapsto A_e$, is overestimated
by allowing $A_e$ to by any of the $M^\mu$ edges.)
These gains have to be compared with the combinatorics of the graphs.
The number of bough graphs with a given number of free and degenerate leaves
can be easily estimated; this (with a slightly different parametrization) is
 the content of Lemma~\ref{lemma: fine bound 
on tree graphs} below. Then it would be a fairly straightforward
enumeration to sum up the contribution of all boughs; this will eventually be
done in the second part of Subsection~\ref{section: sum over bough graphs}.

Unfortunately, this simple-minded procedure is substantially complicated by a technical
hurdle. In Section~\ref{section: boughs}
the graph structure of the boughs and a simple ordering of the lumps
determined a natural order of summation over the bough edges
in such a way that the necessary size factor could be extracted from
each edge at the time it was summed out. This idea was implemented
by recursive relations of  the type $R^{(\bar e)} \;\leq\; \xi \, R^{(\sigma(\bar e))}$
 in Section~\ref{section: boughs},
where recall that $\sigma(\bar e)$ denotes the successor of $\bar e$.
 In the current situation, we have to extract a factor $M^{-1}$ from \emph{each}
 free leaf. If a free leaf $\bar e$ is bad (i.e.\ it is lumped with an edge preceding it in the order $\preceq$ but with 
no edge
 following it, written $A^{-1}_{\bar e}\prec \bar e = A_{\bar e}$;
see  Lemma \ref{bound on the number of bad leaves}), then
 the simple-minded approach 
of Section \ref{section: boughs} yields a factor of order 1 from summing out $\bar e$. 
(In fact, a key step in Section 
\ref{section: boughs} was to bound the number of such bad leaves.)

The solution is to reallocate dynamically, along the summation procedure, the weight
 factors from the running edge to edges that will
be summed out at a later stage. In other words we make sure that, if $\bar e$ is a leaf, when summing out the edge
$e \deq A^{-1}_{\bar e} \prec \bar e$ we transfer a part of the smallness resulting from summing out $e$ to the leaf 
$\bar e$. If $e$ itself is not a free leaf this is easy, because we can afford to transfer all of the smallness 
resulting from summing out $e$ (i.e.\ $M^{-1}$) to $\bar e$. If $e$ itself is a free leaf then this approach does not 
work, because the combined summing out of $e$ and $\bar e$ yields a smallness factor $M^{-1}$, which is not small enough 
to be shared among two free leaves. We solve this problem by summing out $e$ and $\sigma(e)$ in one step, as explained 
above. By choosing the order $\preceq$ appropriately, we shall ensure that
the successor $\sigma(e)$ of any free leaf $e$
 is its parent (i.e.\ $a_e= b_{\sigma(e)}$), so this double summation
amounts to summing up the labels of both vertices of $e$ at the same time. This yields a total smallness
factor $M^{-2}$, half of which is used to sum out $e$ (and hence get a small contribution),
 and the other half transferred to $\bar e$.

Thus, when summing out a free leaf $e$, we always also sum out its successor $\sigma(e)$.
 In practice, 
we need to consider all possible cases 
for $A_{e}$ and $A_{\sigma(e)}$,
 but only the cases 
where $A_{e} \notin \{e, \sigma(e)\}$
 are interesting (since otherwise $A_{e}$ is cannot be a leaf larger than $e$).
The various cases are summarized in Proposition \ref{proposition: step for hard leaves} (iii)
 (note that in the notation of Proposition \ref{proposition: step for hard leaves} the running edge is $\bar e$, which 
was denoted by $e$ in the above discussion.).

In the preceding paragraphs we neglected the role of the
entropy factor $M^\mu$ arising from the summation over the
bough lumpings $A$.  The reallocation of weights (implemented in
Proposition~\ref{proposition: step for hard leaves}) is designed in a way
that ensures that every free leaf yields a small contribution of order $M^{-1+\mu}$ (up to irrelevant $M^{O(\delta)}$ factors) \emph{after} the summation over  bough
lumpings.  Thus, we not only shift weights arising from
the summation over labels, but also entropy factors associated
with summing over lumpings. More precisely, if $e\prec A_e$ and both $e$ and $A_e$ are leaves, then we shall shift a factor $M^{-1 + \mu + 3 \delta}$ from $e$ to $A_e$ (the tiny power $3 \delta$ is unimportant, and needed only to compensate the various powers of $M^{\delta}$ that
arise in our estimates). We have seen above that a factor $M^{-1}$
is available for transfer irrespective whether $e$ is bound (its total
gain can be transferred) or $e$ is free (its total gain, $M^{-2}$,
can be shared between $e$ and $A_e$). To see why transferring the
 entropy factor $M^\mu$ is necessary, consider for instance the case where $e$ is a bound leaf and $A_e$ is a leaf.
 After the smallness $M^{-1}$ has been transferred from 
$e$ to $A_e$, we shall have to sum over $A_e$, which yields an entropy factor $M^\mu$.
 To ensure that the contribution 
of $e$ after the transfer and the summation over $A_e$ is not $O(M^\mu)$ but $O(1)$,
 we transfer only a factor $M^{-1 + \mu + O(\delta)}$ instead of $M^{-1}$  from $e$ to $A_e$.
In this way,  sufficient smallness (i.e.\ a factor $M^{-\mu}$)  remains with $e$ to compensate the entropy factor 
$M^\mu$ associated with the summation over $A_e$.

Summarizing, we transfer $M^{-1+\mu + O(\delta)}$  from $e$ to $A_e$, thus moving the
combined contribution (weight times entropy) from $e$, where it was
obtained, to $A_e$, where it is used. This transfer makes the iterative
argument cleaner; it provides a simple way of making sure that every free leaf yields
a factor $M^{-1+\mu +O(\delta)}$. Without this procedure the total weight of the leaves would be the same, but we would 
need a more complicated bookkeeping of the small factors $M^{-1+\mu +O(\delta)}$ to ensure that they arise precisely as 
often as free leaves. For instance, if $e$ is a bound leaf and $A_e$ a free leaf,
the contribution of $e$ would be $M^{-1+\mu +O(\delta)}$
(summing out $e$ and summing up for the possible lumpings $A_e$)
and the contribution of $A_e$ would only be $O(M^\delta)$.

We shall bookkeep the weights using tags as before,
but now we shall work with taggings  $\tau^{(\bar e)}$ depending on the running edge $\bar e$
that will express this reallocation process.
We shall also introduce a new tag, $(b,6)$, to record the smallness needed from each lonely leaf; it bears the weight 
that we shift around, i.e.\ $M^{-1 + \mu + 3 \delta} M^{2 \delta}$; see \eqref{definition of (p,6)}. Using it, we may 
transfer smallness from the running edge to another leaf that has become
 lonely only after all other edges in its lump 
have been summed out. In other words, the
concept of loneliness, and the smallness factor associated with it, becomes dynamical
(see Definition \ref{def:relative lonely}).
 The goal is to organize the summation over the
bough labels in such a way that one or at most two edges are summed out in one step
and, as before, recursive relations of the type  $R^{(\bar e)} \;\leq\; \xi \, R^{(\sigma(\bar e))}$
or $R^{(\bar e)} \;\leq\; \xi \, R^{(\sigma^2(\bar e))}$ keep track of the
result. The running quantity $R^{(\bar e)}$, which expresses the
size of the terms not yet summed out, will depend on the dynamical tagging,
$\tau^{(\bar e)}$. Much of the heavy notation of the following subsections
 is due to the meticulous bookkeeping of this 
dynamical process.

\subsection{Classification and ordering of leaves}
As in Section \ref{section: boughs}, we first concentrate on the term $E_1$. Our starting point 
for the proof are the bounds \eqref{bound of E_1 in terms of E(G)} and \eqref{error bound for 
fixed graphs}, where we split the vertex labels according to \eqref{splitting of labels} and 
\eqref{definition of split labels}.

Let all summation variables in \eqref{error bound for fixed graphs} be fixed. We begin by 
classifying all leaves in $\cal E_B = \cal E(\cal B(G) \cup \cal B(G'))$. To this end, we 
recall that $\cal B(G) \cup \cal B(G') = \bigcup_i T_i$ consists of disjoint boughs (rooted oriented trees) $T_i$ whose 
roots are distinct stem vertices. If all edges of a bough $T_i$ are leaves, we call $T_i$ \emph{degenerate}.  Otherwise 
we call $T_i$ \emph{nondegenerate}. Thus, all edges of a degenerate bough are incident to its root. We call the edges of 
a (non)degenerate bough \emph{(non)degenerate edges}.

Next, we assign each leaf of $\cal E_B$ to one of three categories: \emph{degenerate}, \emph{free}, or \emph{bound}. See 
Figure \ref{figure: classification of leaves}.
\begin{figure}[ht!]
\begin{center}
\includegraphics{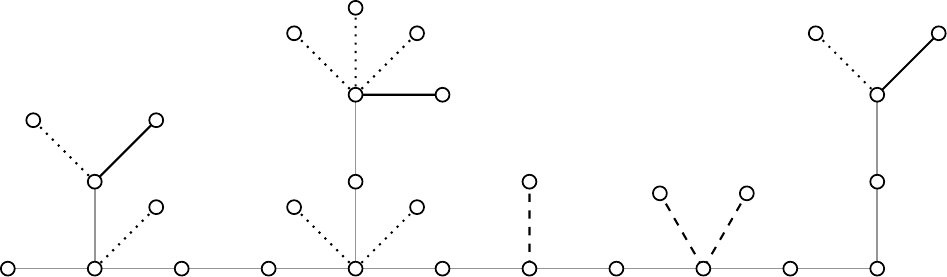}
\end{center}
\caption{A graph $G$ with five boughs, two of which are degenerate. We draw one possible choice of free leaves.  Free 
leaves are drawn with solid lines, bound leaves with dotted lines, and degenerate leaves with dashed lines.  
\label{figure: classification of leaves}}
\end{figure}

\begin{definition} \label{definition: types of leaves}
A leaf is \emph{degenerate} if it belongs to a degenerate bough. For each nondegenerate bough 
$T$, we choose a maximal subset of leaves $\cal L_T \subset \cal E(T)$ with the properties that
\begin{enumerate}
\item
no leaf in $\cal L_T$ is incident to the root of $T$,
\item
no two leaves of $\cal L_T$ are adjacent.
\end{enumerate}
We call the leaves in $\cal L_T$ \emph{free}, and the remaining leaves of $\cal E(T)$ 
\emph{bound}.
\end{definition}

Note that, by definition, each bound leaf is adjacent to a free leaf. Thus, in a nondegenerate 
bough, each group of adjacent leaves contains precisely one free leaf.

\label{page: total order 2}
Next, we introduce a total order $\preceq$ on $\cal E(G \cup G')$, as in Section \ref{section: boughs}, which will 
dictate the order of summation over the labels of the bough vertices. As in Section \ref{section: boughs}, we denote by 
$\sigma(e)$ the immediate successor of $e$ with respect to $\preceq$ (provided that $e$ is not the last edge of $\cal 
E(G \cup G')$). We impose the following conditions of $\preceq$.
\begin{enumerate}
\item
If $e$ and $e'$ are both bough edges and $e'$ is the parent of $e$ then $e \prec e'$.
\item
A free leaf immediately precedes its parent: If $e$ is a free leaf then $\sigma(e)$ is the 
parent of $e$.
\item
Nondegenerate boughs precede degenerate boughs which precede the stem: If $e$ belongs to a 
nondegenerate bough, $e'$ to a degenerate bough, and $e''$ to the stem, then $e \prec e' \prec 
e''$.
\end{enumerate}
These properties encode the plan that children will be summed up before their
parents (as in Section \ref{section: boughs}) and, additionally, in case of
the free leaves, their parents will be summed up immediately after them. Furthermore, nondegenerate edges will be summed 
up before
the degenerate ones.

It is easy to see that an order satisfying (i)--(iii) exists\footnote{Such an order can for instance be constructed as 
follows, by successively removing edges from $\cal E(G \cup G')$. Pick any nondegenerate bough $T$ (if one exists) and 
remove all of its bound leaves in an arbitrary order. Then remove from $T$ either a free leaf followed by its parent or 
a nonleaf edge, in such a way as to keep the resulting tree connected (in other words, respect the condition (i) 
above).  When all edges of $T$ have been thus removed, repeat this procedure on another nondegenerate bough.  When all 
nondegenerate boughs have been thus removed, remove all degenerate leaves in an arbitrary order.  Finally, remove all 
stem edges in an arbitrary order.} on $\cal E(G \cup G')$. We choose one such order and consider it fixed in the sequel. 

As in Section \ref{section: boughs}, we parametrize a general lumping $\Gamma \in \scr G(G \cup 
G')$ with a pair $(\wt \Gamma, A) \in \scr G_{u,u'} \times \scr A(G \cup G')$, where $u$ and 
$u'$ denote the number of edges in $\cal E(\cal S(G))$ and $\cal E(\cal S(G'))$ respectively.  
See Definitions \ref{definition of A} and \ref{definition of lump parametrization}, as well as 
Lemma \ref{lemma: lumpings parametrized}.

\subsection{Sum over nondegenerate bough labels}
We start by summing over the bough labels $\b x_B$ in \eqref{error bound for fixed graphs}, 
which we recall here for convenience:
\begin{multline} \label{error bound for fixed graphs bis}
E_{\cal G \cup \cal G'} \;\leq\; \sum_{\wt \Gamma \in \scr G_{u,u'}} \; \sum_{\b x_S \,:\, 
\Gamma(\b x_S) = \wt \Gamma} Q(\b x_S)
\\
\times
\sum_{A} \; \sum_{\b x_B} \qBB{\prod_{e \in \cal E_B} \indb{\varrho_{\b x}(e) = \varrho_{\b x}(A_e)}}
\prod_{\gamma \in \Gamma(\wt \Gamma, A)} \absbb{\E \prod_{e \in \gamma} P_{\tau(e)}\pb{\wh 
H_{x_{a(e)} x_{b(e)}}, \wh H_{x_{b(e)}
x_{a(e)}}}}\,.
\end{multline}
In this subsection we sum over the vertex labels of nondegenerate boughs.
\begin{definition}
Denote by $e_0$ the first (with respect to $\preceq$) edge of $\cal E(G \cup G')$, by $e_d$ the 
first edge of the degenerate boughs, and by $e_s$ the first stem edge. If there are no 
degenerate boughs, set $e_d = e_s$.
\end{definition}
Note that, by definition of $\preceq$, we have $e_0 \preceq e_d \preceq e_s$ (where equality is 
possible).

As in Section \ref{section: boughs}, we sum over the bough labels successively using a running 
edge $\bar e$. We start with $\bar e = e_0$, the first edge in $\cal E(G \cup G')$, and after 
each step redefine $\bar e$ to be $\sigma(\bar e)$; we stop when $\bar e = e_d$.  We use the 
abbreviations $B^{(\bar e)}$, $\b x^{(\bar e)}$, and $A^{(\bar e)}$ from Section \ref{section: 
boughs}; see \eqref{definition of B^e} and \eqref{definition of x^e and A^e}.

We shall need one additional bough tag, $(b,6)$. Its associated polynomial is defined by (we 
also recall the definition of $P_{(b,5)}$ from Section \ref{section: boughs})
\begin{equation} \label{definition of (p,6)}
P_{(b,5)}(\wh H_{xy}, \wh H_{yx}) \;=\; 2 M^{2 \delta} \sigma^2_{xy} \,, \qquad
P_{(b,6)}(\wh H_{xy}, \wh H_{yx}) \;\deq\; 2  M^{-1 + \mu + 5 \delta} \sigma^2_{xy}\,.
\end{equation}
As in Section \ref{section: boughs}, we consider first the special case that all nonleaf bough 
edges are large, i.e.\ have tag $(b,0)$.

We recall that a bough edge $e \in \cal E_B$ is \emph{lonely} whenever $A_e = e$ and there is 
no $e' \prec e$ satisfying $A_{e'} = e$. This is equivalent to saying that $e$ is the only edge 
in its lump of $\Gamma(\wt \Gamma, A)$.
We define a new tagging $\wt \tau$ through
\begin{equation*}
\wt \tau(e) \;\deq\;
\begin{cases}
\tau(e) & \text{if $e$ is not a bough leaf}
\\
(b,6) & \text{if $e$ is a lonely bough leaf}
\\
(b,5) & \text{if $e$ is a nonlonely leaf}\,.
\end{cases}
\end{equation*}
Note that this tagging $\wt\tau$ is  different from the one used in
 Section~\ref{section: boughs}. The role of the new 
tag $(b,6)$ is to encode the gain from a lonely leaf, similarly to \eqref{bound on a lonely leaf},
but the estimate will be somewhat weaker.

Using Proposition \ref{proposition: properties of graphs} (ii) and Lemma \ref{lemma: bounds on 
truncated variance} it is now easy to see that in \eqref{error bound for fixed graphs bis} we 
have the bound
\begin{equation} \label{bound using tau tilde}
\prod_{\gamma \in \Gamma(\wt \Gamma, A)} \absbb{\E \prod_{e \in \gamma} P_{\tau(e)}\pb{\wh 
H_{x_{a(e)} x_{b(e)}}, \wh H_{x_{b(e)}
x_{a(e)}}}} \;\leq\;
\prod_{\gamma \in \Gamma(\wt \Gamma, A)} \E \prod_{e \in \gamma} \absB{P_{\wt \tau(e)}\pb{\wh 
H_{x_{a(e)} x_{b(e)}}, \wh H_{x_{b(e)}
x_{a(e)}}}}\,.
\end{equation}
At this point the definition \eqref{definition of (p,6)} deserves a comment. It seems that the bound \eqref{bound using 
tau tilde} with the definition of $P_{(b,6)}$ is wasteful; indeed, \eqref{bound using tau tilde} would be correct even 
if on the right-hand side of the second equation of \eqref{definition of (p,6)} we replaced $M^{-1 + \mu + 5 \delta}$ 
with $M^{-1 + 2 \delta}$. This difference is of no consequence for our estimates, however, as the critical contribution 
to \eqref{bound using tau tilde} comes not from lonely leaves, but from leaves lumped with other leaves, as explained 
around \eqref{second moment of small edge} and \eqref{bound on a lonely leaf}. Hence the wasteful additional factor 
$M^{\mu + 3 \delta}$ is of no consequence. The factor $M^{-1 + \mu + 5 \delta}$ is designed with the transferring of 
entropy factors $M^\mu$ in mind, as explained in Section \ref{sect:longsketch}. This turns out to be the correct choice 
for the algorithm contained in Proposition \ref{proposition: step for hard leaves} below.

Next, we introduce a generalization of the notion of loneliness that is relative to the running edge.
\begin{definition}\label{def:relative lonely}
Let $\bar e \in \cal E_B$ be the running edge. We say that an edge $e \in \cal E_B$ is 
\emph{lonely in} $A^{(\bar e)}$ whenever $A_e = e$ and there is no $e' \prec e$ satisfying $e' 
\in B^{(\bar e)}$ and $A_{e'} = e$.
\end{definition}
Clearly, $e$ is lonely if and only if $e$ is lonely in $A^{(e_0)}=A$ (since $e_0$
is the first edge of $\cal E_B$).

In order to get an adequate estimate from the successive summation over the nondegenerate bough 
labels, we shall need that, at each step indexed by $\bar e$ of the recursion, every leaf that 
is lonely in $A^{(\bar e)}$ is small in the sense that it has tag $(b,6)$. Thus, we shall have to dynamically modify the 
bough tagging. To this end, let
\begin{equation*}
\tau^{(\bar e)} \;=\; \pb{\tau^{(\bar e)}(e) \,:\, e \succeq \bar e}
\end{equation*}
denote a tagging on the set of edges $\hb{e \in \cal E(G \cup G') \,:\, e \succeq \bar e}$.  The tag $\tau^{(\bar 
e)}(e)$ will indicate the actual weight of the edge $e$
after summing out all edges before $\bar e$, i.e.\ it will take into account the reallocation of
the weights.

In analogy to \eqref{definition of recursive bound}, we define
\begin{equation} \label{define of R with dynamical tag}
R^{(\bar e)}(\tau^{(\bar e)}) \;\deq\; \sum_{\b x^{(\bar e)}} \qBB{\; \prod_{e \in B^{(\bar 
e)}} \indb{\varrho_{\b x}(e) = \varrho_{\b x}(A_e)}}
\prod_{\gamma \in \Gamma(\wt \Gamma, A^{(\bar e)})} \E \prod_{e \in \gamma} 
\absb{P_{\tau^{(\bar e)}(e)}\pb{\wh H_{x_{a(e)} x_{b(e)}}, \wh H_{x_{b(e)}
x_{a(e)}}}}\,.
\end{equation}
If $\bar e$ is not a bough edge, we set $R^{(\bar e)}(\tau^{(\bar e)}) \deq 1$.

We define the initial tagging through $\tau^{(e_0)} \deq \wt \tau$.
Now \eqref{bound using tau tilde} immediately implies that in \eqref{error bound for fixed 
graphs bis} we may bound
\begin{equation} \label{starting point for hard recursion}
\sum_{\b x_B} \qBB{\prod_{e \in \cal E_B} \indb{\varrho_{\b x}(e) = \varrho_{\b x}(A_e)}}
\prod_{\gamma \in \Gamma(\wt \Gamma, A)} \absbb{\E \prod_{e \in \gamma} P_{\tau(e)}\pb{\wh 
H_{x_{a(e)} x_{b(e)}}, \wh H_{x_{b(e)}
x_{a(e)}}}} \;\leq\; R^{(e_0)}(\tau^{(e_0)})\,.
\end{equation}
We shall construct a sequence of taggings $\tau^{(e_0)}, \tau^{(\sigma(e_0))}, \tau^{(\sigma^2(e_0))}, \dots$ that 
satisfies the following property at each step $\bar e$.
\begin{itemize}
\item[$(\rr L_{\bar e})$]
If $e \in B^{(\bar e)}$ is a leaf then $\tau^{(\bar e)}(e) \in \{(b,5), (b,6)\}$. If in 
addition $e$ is lonely in $A^{(\bar e)}$ then $\tau^{(\bar e)} (e)= (b,6)$.
\end{itemize}

To start the iteration in $\bar e$, we note
that by definition of $\wt \tau$, the initial property $(\rr L_{e_0})$ holds.

The next proposition allows us to sum out a nondegenerate bough edge
 $\bar e$, i.e.\ perform one step of the iteration. 
Before stating it, we give its main idea.
If $\bar e$ is a nonleaf edge or a bound leaf, we sum over the label $x_{b(\bar e)}$ in 
\eqref{define of R with dynamical tag}. The resulting bound is given by a number $\xi$ 
resulting from the summation over $x_{b(\bar e)}$ multiplied with $R^{(\sigma(\bar e))}$. We 
use the fact that, by $(\rr L_{\bar e})$, a lonely leaf  in $A^{(\bar e)}$ is
 small in the sense that it carries a 
factor $ M^{-1 + \mu + 5 \delta}$. In order to be able to iterate this procedure, 
we need that $(\rr L_{\sigma(\bar e)})$ 
be satisfied. This is not true if $A_{\bar e}$ is a bough leaf that is lonely in
 $A^{(\sigma(\bar e))}$. We remedy this 
by introducing a factor $1 =  M^{1 - \mu - 3 \delta} \,  M^{-1 + \mu + 3 \delta}$,
 and absorbing the first term into $\xi$ 
and the second into $R^{(\sigma(\bar e))}$. This will be implemented by changing
 the tag 
$\tau^{(\bar e)}(A_{\bar e}) = (b,5)$ to $\tau^{(\sigma(\bar e))}(A_{\bar e}) = (b,6)$.

In the case that $\bar e$ is a free leaf, we have to extract a factor $ M^{-1 + \mu + 5 \delta}$
 from the summation over 
$x_{b(\bar e)}$. This is relatively easy if $\bar e$ is
lonely in $A^{(\bar e)}$; otherwise we need to distinguish further cases.
The most involved case, in particular, occurs when $A_{\bar e} \succ \bar e$ and $A_{\bar e}$
is a leaf that is lonely in  $A^{(\sigma(\bar e))}$ (i.e.\ the lump of $\bar e$ in $A^{(\bar e)}$
consists only of two elements, $\bar e$ and $A_{\bar e}$). In this case we need to change the
 tag of $A_{\bar e}$ in $\tau^{(\sigma(\bar e))}$ to $(b,6)$, as described above. 
The resulting factor $\xi$ is of order 
$M^{-1 + 2 \delta}  M^{1 - \mu - 3 \delta}$, which is much too large (here $M^{-1 + 2\delta}$
 comes from \eqref{bound on a lonely leaf}
exactly as in Lemma \ref{lemma: definition of easy xi}, while $ M^{1 - \mu - 3 \delta}$ comes
from the reallocation of this factor into $\xi$ explained in the previous paragraph).
We remedy this by exploiting the fact that the label of $a(\bar 
e)$ is also fixed by the lumping. Thus, we perform two summations in one step:
 over $x_{b(\bar e)}$ as well as over 
$x_{a(\bar e)} = x_{b(\sigma(\bar e))}$. In other words, we sum out both $\bar e$
 and its successor $\sigma(\bar e)$ at 
the same time.
Together these summations yield a factor of order $ M^{-1 + \mu + 5 \delta}$, which is small enough.

We shall need to distinguish several different cases, which leads to a somewhat
 lengthy statement of the iteration 
step.  The reason is that the worst-case estimates, which arise in the case
 $A_{\bar e} = \bar e$ (or, if $\bar e$ is a free leaf, in the cases where we do not have $A_{\bar e} \notin \{\bar e, 
\sigma(\bar e)\}$ and $A_{\sigma(\bar e)} \succ \sigma(\bar e)$) are not good enough for the following step of summing 
over lumpings (done in Subsection \ref{subsection: sum over hard lumpings}). In the case $A_{\bar e} = \bar e$,
 we shall need to compensate the poor 
estimate by the smallness of the entropy factor associated with the summation
 over $A_{\bar e}$ (namely $1$). In the case 
$A_{\bar e} \succ \bar e$, this same entropy factor is much larger
 (of the order $M^\mu$), but the estimate on $\xi$ is 
sufficiently strong to compensate for this. 
  The following Proposition
collects the estimates for the various cases.

\begin{proposition} \label{proposition: step for hard leaves}
Assume that $\bar e \in \cal E_B$ is a nondegenerate bough edge and that $(\rr L_{\bar e})$ 
holds.
\begin{enumerate}
\item
If $\bar e$ is not a leaf, then there exists a tagging $\tau^{(\sigma(\bar e))}$ satisfying $(\rr L_{\sigma(\bar e)})$ 
as well as the estimate
\begin{equation*}
R^{(\bar e)}(\tau^{(\bar e)}) \;\leq\; \xi \, R^{(\sigma(\bar e))}(\tau^{(\sigma(\bar e))})\,,
\end{equation*}
where
\begin{equation}\label{xi for nonleaf}
\xi \;\deq\;
\begin{cases}
1 & \text{if } A_{\bar e} = \bar e
\\
2 M^{-1 + 2 \delta} & \text{if } A_{\bar e} \succ \bar e \text{ and $A_{\bar e}$ is not a leaf}
\\
2 M^{-\mu - \delta} & \text{if } A_{\bar e} \succ \bar e \text{ and $A_{\bar e}$ is a leaf}\,.
\end{cases}
\end{equation}
\item
If $\bar e$ is a bound leaf, then there exists a tagging $\tau^{(\sigma(\bar e))}$ satisfying $(\rr L_{\sigma(\bar e)})$ 
as well as the estimate
\begin{equation*}
R^{(\bar e)}(\tau^{(\bar e)}) \;\leq\; \xi \, R^{(\sigma(\bar e))}(\tau^{(\sigma(\bar e))})\,,
\end{equation*}
where
\begin{equation}\label{xi for bound leaf}
\xi \;\deq\; 2 M^{- \mu - \delta}\,.
\end{equation}
\item
If $\bar e$ is a free leaf, then there exists a tagging $\tau^{(\sigma^2(\bar e))}$ satisfying $(\rr L_{\sigma^2(\bar 
e)})$ as well as the estimate
\begin{equation*}
R^{(\bar e)}(\tau^{(\bar e)}) \;\leq\; \xi \, R^{(\sigma^2(\bar e))}\pb{\tau^{(\sigma^2(\bar 
e))}}\,,
\end{equation*}
where
\begin{equation} \label{definition of xi for free leaf}
\xi \;\deq\;
\begin{cases}
2  M^{-1 + \mu + 5 \delta} & \text{if } A_{\bar e} \in \{\bar e, \sigma(\bar e)\} \text{ and } A_{\sigma(\bar e)} = 
\sigma(\bar e)
\\
2 M^{-1 + 4 \delta} & \text{if } A_{\bar e} \in \{\bar e, \sigma(\bar e)\} \text{ and } A_{\sigma(\bar e)} \succ 
\sigma(\bar e)
\\
2 M^{-1 - \mu + \delta} & \text{if } A_{\bar e} \notin \{\bar e, \sigma(\bar e)\}\,.
\end{cases}
\end{equation}
\end{enumerate}
\end{proposition}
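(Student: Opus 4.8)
The plan is to treat each of the three cases (nonleaf, bound leaf, free leaf) separately, in each case performing the summation over the label $x_{b(\bar e)}$ (and, in the free-leaf case, simultaneously over $x_{a(\bar e)} = x_{b(\sigma(\bar e))}$) in the definition \eqref{define of R with dynamical tag} of $R^{(\bar e)}(\tau^{(\bar e)})$, and then exhibiting the tagging $\tau^{(\sigma(\bar e))}$ (resp.\ $\tau^{(\sigma^2(\bar e))}$) explicitly and checking that $(\rr L_{\sigma(\bar e)})$ (resp.\ $(\rr L_{\sigma^2(\bar e)})$) holds. Throughout I use that the running edge $\bar e$ is nondegenerate so that $\sigma(\bar e)$ is the parent of $\bar e$ whenever $\bar e$ is a free leaf, by property (ii) of the order $\preceq$; and I use Proposition \ref{proposition: properties of graphs} (ii), (iii) together with Lemma \ref{lemma: bounds on truncated variance}, which give the basic moment estimates $\E\abs{\wh H_{xy}}^2 \leq \sigma_{xy}^2 \leq M^{-1}$, $\normb{P_{(b,0)}(\wh H_{xy},\wh H_{yx})}_\infty \leq M^{2\delta}\sigma_{xy}^2$, and the analogous bounds for the polynomials with tags $(b,2),\dots,(b,6)$ recorded in the table on page \pageref{polynomial table} and in \eqref{definition of (p,6)}.

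For (i), $\bar e$ is not a leaf, hence has tag $(b,0)$ in $\wt\tau$ and (since the tagging is modified only on leaves) $\tau^{(\bar e)}(\bar e) = (b,0)$. If $A_{\bar e} = \bar e$ then $\bar e$ is in its own lump in $A^{(\bar e)}$, and summing over $x_{b(\bar e)}$ uses $\sum_{x_{b(\bar e)}}\E\abs{\wh H_{x_{a(\bar e)}x_{b(\bar e)}}}^2 \leq 1$ exactly as in \eqref{recursive estimate for nonleaf}; set $\tau^{(\sigma(\bar e))} \deq \tau^{(\bar e)}$ restricted to $B^{(\sigma(\bar e))}$, giving $\xi = 1$. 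If $A_{\bar e} \succ \bar e$, the indicator $\indb{\varrho_{\b x}(\bar e) = \varrho_{\b x}(A_{\bar e})}$ restricts $x_{b(\bar e)}$ to the two fixed values $x_{a(A_{\bar e})}, x_{b(A_{\bar e})}$, so the sum over $x_{b(\bar e)}$ picks up at most $2\normb{P_{(b,0)}}_\infty \leq 2M^{2\delta}M^{-1}$, giving $\xi = 2M^{-1+2\delta}$ when $A_{\bar e}$ is not a leaf, and we keep $\tau^{(\sigma(\bar e))} = \tau^{(\bar e)}$. When $A_{\bar e}$ is a leaf, it may become lonely in $A^{(\sigma(\bar e))}$, so to preserve $(\rr L_{\sigma(\bar e)})$ we must change its tag from $(b,5)$ to $(b,6)$; we do this by inserting $1 = M^{1-\mu-3\delta}\cdot M^{-1+\mu+3\delta}$ and absorbing $M^{1-\mu-3\delta}$ into $\xi$ and $M^{-1+\mu+3\delta}$ into the new weight of $A_{\bar e}$ (so that $P_{(b,5)} = 2M^{2\delta}\sigma^2 \to P_{(b,6)} = 2M^{-1+\mu+5\delta}\sigma^2$ after the ratio $M^{-1+\mu+3\delta}$ is applied). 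This yields $\xi = 2M^{2\delta}M^{-1}\cdot M^{1-\mu-3\delta} = 2M^{-\mu-\delta}$. Case (ii) is identical to the last subcase of (i): a bound leaf $\bar e$ has tag $(b,5)$ (it is not lonely, being adjacent to a free leaf, but in fact we only need the trivial bound $\normb{P_{(b,5)}}_\infty \leq 2M^{2\delta}\sigma^2$), and since each bound leaf is adjacent to a free leaf it is never the image $A_e$ of a larger edge in a problematic way; the same reallocation argument applied to $A_{\bar e}$ when $A_{\bar e}$ is a leaf gives $\xi = 2M^{-\mu-\delta}$.

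For (iii), $\bar e$ is a free leaf, so $\sigma(\bar e)$ is its parent and $x_{a(\bar e)} = x_{b(\sigma(\bar e))}$; we sum over both $x_{b(\bar e)}$ and $x_{b(\sigma(\bar e))}$ in one step, moving from $R^{(\bar e)}$ to $R^{(\sigma^2(\bar e))}$. In $\wt\tau$ the edge $\bar e$ has tag $(b,6)$ if it is lonely and $(b,5)$ otherwise; the parent $\sigma(\bar e)$ is a nonleaf with tag $(b,0)$. When $A_{\bar e} \in \{\bar e, \sigma(\bar e)\}$ — i.e.\ $\bar e$ is effectively lonely among edges $\succ \bar e$, possibly lumped only with its own parent — summing out $x_{b(\bar e)}$ costs $\normb{P_{(b,6)}}_\infty \leq 2M^{-1+\mu+5\delta}\sigma^2 \leq 2M^{-1+\mu+5\delta}M^{-1}$ against one factor of $\sigma^2$; then summing $x_{b(\sigma(\bar e))}$ costs a further $\sum\E\abs{\wh H}^2 \leq 1$ if $A_{\sigma(\bar e)} = \sigma(\bar e)$, giving $\xi = 2M^{-1+\mu+5\delta}$, or costs at most $2M^{2\delta}M^{-1}$ if $A_{\sigma(\bar e)} \succ \sigma(\bar e)$ (the label is pinned), giving $\xi = 2M^{-1+\mu+5\delta}\cdot 2M^{2\delta}M^{-1}$, which is dominated by $2M^{-1+4\delta}$ since $\mu + 2\delta < 1$ (indeed $2M^{\mu+7\delta-1} \le 1$ for large $M$, so the product is $\le 2M^{-1+4\delta}$ after absorbing constants). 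The interesting case $A_{\bar e} \notin \{\bar e, \sigma(\bar e)\}$ means $\bar e$ is lumped with some edge $\succ \sigma(\bar e)$, so \emph{both} labels $x_{a(\bar e)}$ and $x_{b(\bar e)}$ are pinned by $\indb{\varrho_{\b x}(\bar e) = \varrho_{\b x}(A_{\bar e})}$; the joint sum over $(x_{b(\bar e)}, x_{b(\sigma(\bar e))})$ then costs at most $4\normb{P_{\tau^{(\bar e)}(\bar e)}}_\infty\normb{P_{(b,0)}}_\infty \leq 4\cdot 2M^{2\delta}\sigma^2\cdot M^{2\delta}\sigma^2 \leq 8M^{4\delta}M^{-2}$ (using $\normb{P_{(b,5)}}_\infty \le 2M^{2\delta}M^{-1}$ and $\normb{P_{(b,0)}}_\infty \le M^{2\delta}M^{-1}$). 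As in (i) we may also have to retag a leaf $A_{\bar e}$ from $(b,5)$ to $(b,6)$ if it becomes lonely in $A^{(\sigma^2(\bar e))}$, which, by the same insertion of $M^{1-\mu-3\delta}\cdot M^{-1+\mu+3\delta}$, multiplies $\xi$ by $M^{1-\mu-3\delta}$: this gives $\xi \leq 8M^{4\delta}M^{-2}\cdot M^{1-\mu-3\delta} = 8M^{-1-\mu+\delta}$, i.e.\ $\xi = 2M^{-1-\mu+\delta}$ after absorbing the harmless constant $4$ into the $C$-convention. In every subcase one checks directly from the construction that the new tagging still satisfies the loneliness property $(\rr L)$ at the new running edge: nonleaf edges keep tag $(b,0)$; bound leaves keep $(b,5)$; free leaves that remain in $B$ keep $(b,6)$ or $(b,5)$ and any leaf that has just become lonely has been reassigned $(b,6)$.

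The main obstacle, and the heart of the bookkeeping, is the third subcase of (iii): one must verify that summing out a free leaf \emph{and its parent simultaneously} genuinely produces a factor $M^{-2}$ (and not merely $M^{-1}$), which rests on the nonbacktracking-free fact that when $A_{\bar e} \succ \sigma(\bar e)$ both endpoints of $\bar e$ are forced to already-fixed labels, combined with the $\ell^1$-$\ell^\infty$ bound $\normb{P_{(b,0)}}_\infty \le M^{2\delta}M^{-1}$ on the parent edge. One then has to confirm that splitting off the reallocation factor $M^{1-\mu-3\delta}$ leaves exactly enough smallness ($M^{-1-\mu+\delta}$) to absorb, in the subsequent Subsection \ref{subsection: sum over hard lumpings}, the entropy factor $M^\mu$ from the sum over the possible values of $A_{A_{\bar e}^{-1}}$ and the graph combinatorics — but that is the content of the later sections, not of this Proposition; here we only need the per-step estimates as stated.
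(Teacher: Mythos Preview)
Your plan is correct in spirit and matches the paper for Cases (i) and (ii) and for the first two subcases of (iii), but there is a genuine gap in the third subcase of (iii), i.e.\ $A_{\bar e}\notin\{\bar e,\sigma(\bar e)\}$. After striking \emph{both} $\bar e$ and $\sigma(\bar e)$, it is not only $A_{\bar e}$ that may become a lonely leaf in $A^{(\sigma^2(\bar e))}$: the edge $A_{\sigma(\bar e)}$ (if $A_{\sigma(\bar e)}\succ\sigma(\bar e)$ and it is a leaf) may become lonely as well, and by $(\rr L_{\sigma^2(\bar e)})$ it too must be retagged to $(b,6)$. You only account for one retagging. If both retaggings are needed, your bound becomes
\[
C M^{4\delta-2}\cdot \bigl(M^{1-\mu-3\delta}\bigr)^2 \;=\; C M^{-2\mu-2\delta}\,,
\]
and this is \emph{larger} than the required $2M^{-1-\mu+\delta}$ (since $\mu+3\delta<1$). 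So the two-edge summation over $x_{b(\bar e)},x_{b(\sigma(\bar e))}$ alone is not strong enough in this worst case.

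The paper closes this gap with an extra idea: in the bad scenario where both $e'\deq A_{\bar e}$ and $e''\deq A_{\sigma(\bar e)}$ are leaves lonely in $A^{(\sigma^2(\bar e))}$, one sums out \emph{four} edges $\bar e,\sigma(\bar e),e',e''$ simultaneously. The constraints $\varrho_{\b x}(\bar e)=\varrho_{\b x}(e')$ and $\varrho_{\b x}(\sigma(\bar e))=\varrho_{\b x}(e'')$, together with the fixed labels $x_{a(e')},x_{a(e'')},x_{a(\sigma(\bar e))}$, leave only one of the four end-labels (namely $x_{b(e')}$) genuinely free. Hence the joint summation produces $(2M^{2\delta})^3 M^{-3}$ times the quantity $\wt R$ in which all four edges have been struck; since $R^{(\sigma^2(\bar e))}(\tau^{(\bar e)})=(2M^{2\delta})^2\wt R$, one gets $R^{(\bar e)}\le 2M^{-3+2\delta}R^{(\sigma^2(\bar e))}(\tau^{(\bar e)})$, and \emph{now} there is room for two retaggings: $2M^{-3+2\delta}(M^{1-\mu-3\delta})^2=2M^{-1-2\mu-4\delta}\le 2M^{-1-\mu+\delta}$. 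You should add this four-edge argument.

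A minor point on Case (ii): the assertion that a bound leaf ``is not lonely, being adjacent to a free leaf'' conflates graph adjacency with lumping; a bound leaf can perfectly well satisfy $A_{\bar e}=\bar e$. In that case $(\rr L_{\bar e})$ gives it tag $(b,6)$, and free summation yields $2M^{-1+\mu+5\delta}\le 2M^{-\mu-\delta}$ (using $2\mu+6\delta<1$), so the stated $\xi$ still holds --- but your reasoning for it needs this correction.
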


The form of \eqref{definition of xi for free leaf} is crucial for the later summation
 over the lumpings $A_{\bar e}$ and 
$A_{\sigma(\bar e)}$. The entropy factor from each such summation is $O(1)$ if we
 have a ``hard constraint'' (i.e.\ that constrains $A_{\bar e}$ (or $A_{\sigma(\bar e)}$) to one or two edges), and 
$M^\mu$ if we have no hard constraint.
Thus, summing over the lumpings $A_{\bar e}$ and $A_{\sigma(\bar e)}$ yields an entropy factor $M^{\mu (2 - i)}$, where 
$i = 0,1,2$ is the number of hard constraints. It is easy to see from \eqref{definition of xi for free leaf} that $\xi 
M^{\mu (2 - i)}$ is always
 bounded by $M^{-1 + \mu + O(\delta)}$.

\begin{proof}[Proof of Proposition \ref{proposition: step for hard leaves}]
In order to avoid needless special cases throughout the proof, we shall always assume that $A_{\bar e}$ and 
$A_{\sigma(\bar e)}$ are leaves, unless otherwise stated. This assumption always covers the worst case scenario.

We begin with Case (i). The cases $A_{\bar e} = \bar e$ and $A_{\bar e} \succ \bar e$, $A_{\bar 
e}$ not a leaf, are dealt with exactly as in the proof of Lemma \ref{lemma: definition of easy 
xi}; see \eqref{recursive estimate for nonleaf} and \eqref{recursive bound for lumped edge}. In 
both cases we set $\tau^{(\sigma(\bar e))}(e) \deq \tau^{\bar e}(e)$ for all $e \succeq 
\sigma(\bar e)$.

If $A_{\bar e} \succ \bar e$ is a leaf we get from \eqref{recursive bound for lumped edge}
\begin{equation*}
R^{(\bar e)}(\tau^{(\bar e)}) \;\leq\; 2 M^{-1 + 2 \delta} R^{(\sigma(\bar e))}(\tau^{(\bar e)}) \;=\; 2 M^{-\mu - 
\delta} M^{-1 + \mu + 3\delta} R^{(\sigma(\bar e))}(\tau^{(\bar e)}) \;\leq\; 2 M^{- \mu - \delta} R^{(\sigma(\bar 
e))}(\tau^{(\sigma(\bar e))})\,,
\end{equation*}
where $\tau^{(\sigma(\bar e))}$ is defined as
\begin{equation} \label{redefining tau}
\tau^{(\sigma(\bar e))}(e) \;\deq\;
\begin{cases}
\tau^{(\bar e)}(e) & \text{if } e \neq A_{\bar e}
\\
(b,6) & \text{if } e = A_{\bar e}\,,
\end{cases}
\end{equation}
i.e.\ the gain of size $M^{-1 + 2\delta}$ from the summation over $x_{b(\bar e)}$ is not
exploited immediately in $\xi$, but a part of size $M^{-1+\mu + 3\delta}$ is reallocated to the tag of $A_{\bar e}$.
Here we used the bound
\begin{equation*}
M^{-1 + \mu + 3 \delta} \absb{P_{(b,5)}(\wh H_{xy}, \wh H_{yx})} \;\leq\; \absb{P_{(b,6)}(\wh H_{xy}, \wh H_{yx})}\,,
\end{equation*}
which we tacitly make use of in the rest of the proof. Note that the
second line of \eqref{redefining tau} guarantees that $(L_{\sigma(\bar e)})$ holds.

Next, we consider Case (ii). If $A_{\bar e} = \bar e$, i.e.\ $\bar e$ is lonely in
$A^{(\bar e)}$, then we use \eqref{variances sum to one} and
the fact that the property $(\rr L_{\bar e})$ implies $\tau^{(\bar e)}(\bar e) = (b,6)$. Hence, by \eqref{definition of 
(p,6)}, we get
\begin{equation*}
R^{(\bar e)}(\tau^{(\bar e)}) \;\leq\; 2 M^{-1 + \mu + 5 \delta} R^{(\sigma(\bar e))}(\tau^{(\bar e)})
 \; \le \;  2 M^{-\mu - \delta} R^{(\sigma(\bar e))}(\tau^{(\sigma(\bar e))})\,,
\end{equation*}
where we set $\tau^{(\sigma(\bar e))}(e) \deq \tau^{(\bar e)}(e)$ for $e \succeq \sigma(\bar e)$. If $A_{\bar e} \succ 
\bar e$ we define $\tau^{(\sigma(\bar e))}$ through \eqref{redefining tau} and get, as in the proof of Case (i),
\begin{equation*}
R^{(\bar e)}(\tau^{(\bar e)}) \;\leq\; 2 M^{-\mu - \delta} R^{(\sigma(\bar e))}(\tau^{(\sigma(\bar e))})\,.
\end{equation*}
Again, one can easily check that  $(L_{\sigma(\bar e)})$ holds.

Now consider Case (iii). By property (ii) of
the order $\prec$, we know that $\sigma(\bar e)$ is the parent of $\bar e$, i.e.\ $b(\sigma(\bar e))=a(\bar e)$. Note 
that in this case we sum out the two edges $\bar e$ and $\sigma(\bar e)$ in one step.

Consider first the case $A_{\bar e} = \bar e$ and $A_{\sigma(\bar e)} = \sigma(\bar e)$. Then $\tau^{(\bar e)}(\bar e) = 
(b,6)$ and $\tau^{(\bar e)}(\sigma(\bar e)) = (b,0)$ by assumption. Therefore summing over $x_{b(\bar e)}$ and 
$x_{a(\bar e)}=x_{b(\sigma(\bar e))}$ using \eqref{definition of (p,6)} and \eqref{variances sum to one}
yields
\begin{equation*}
R^{(\bar e)}(\tau^{(\bar e)}) \;\leq\;
 2 M^{-1 + \mu + 5 \delta} R^{(\sigma^2(\bar e))}\pb{\tau^{(\sigma^2(\bar e))}}\,,
\end{equation*}
where $\tau^{(\sigma^2(\bar e))}(e) \deq \tau^{(\bar e)}(e)$.
In the case $A_{\bar e} = A_{\sigma(\bar e)} = \sigma(\bar e)$ we have that $\tau^{(\bar e)}(\bar e)$ is either $(b,5)$ 
or $(b,6)$, and $\tau^{(\bar e)}(\sigma(\bar e)) = (b,0)$. Thus \eqref{definition of (p,6)} and \eqref{variances sum to 
one} imply
\begin{equation*}
R^{(\bar e)}(\tau^{(\bar e)}) \;\leq\; 2 M^{-1 + 2 \delta}
 R^{(\sigma^2(\bar e))}\pb{\tau^{(\sigma^2(\bar e))}}\,,
\end{equation*}
where $\tau^{(\sigma^2(\bar e))}(e) \deq \tau^{(\bar e)}(e)$. It is
 easy to see that $(\rr L_{\sigma^2(\bar e)})$ holds. We have covered the first line of \eqref{definition of xi for free 
leaf}.

Next, consider the case $A_{\bar e} = \bar e$ and $A_{\sigma(\bar e)} \succ \sigma(\bar e)$. Then, using that $\bar e$ 
is lonely in $A^{\bar e}$ (and hence carries a
tag $(b,6)$ by $(\rr L_{\bar e})$) and that $x_{b(\sigma(\bar e))}$ is fixed, we get the bound
\begin{multline*}
R^{(\bar e)}(\tau^{(\bar e)}) \;\leq\; 2 M^{-1 + \mu + 5 \delta} M^{-1 + 2 \delta} R^{(\sigma^2(\bar e))}(\tau^{(\bar 
e)})
\;=\; 2 M^{-1 + 4 \delta} M^{-1 + \mu + 3 \delta} R^{(\sigma^2(\bar e))}(\tau^{(\bar e)})
\\
\leq\; 2 M^{-1 + 4 \delta} R^{(\sigma^2(\bar e))}(\tau^{(\sigma^2(\bar e))})\,,
\end{multline*}
where we set
\begin{equation} \label{definition of tau 2}
\tau^{(\sigma^2(\bar e))}(e) \;\deq\;
\begin{cases}
\tau^{(\bar e)}(e) & \text{if } e \neq A_{\sigma(\bar e)}
\\
(b,6) & \text{if } e = A_{\sigma(\bar e)}\,;
\end{cases}
\end{equation}
thus, $(\rr L_{\sigma^2(\bar e)})$ holds.
Similarly, if $A_{\bar e} = \sigma(\bar e)$ and $A_{\sigma(\bar e)} \succ \sigma(\bar e)$ then  $x_{b(\bar e)}$ and 
$x_{b(\sigma(\bar e))}$ are fixed by $A_{\bar e}$ and
we get
\begin{multline*}
R^{(\bar e)}(\tau^{(\bar e)}) \;\leq\; 2 M^{-1 + 2 \delta} M^{-1 + 2 \delta} R^{(\sigma^2(\bar e))}(\tau^{(\bar e)})
\;=\; 2 M^{-1 - \mu + \delta} M^{-1 + \mu + 3 \delta} R^{(\sigma^2(\bar e))}(\tau^{(\bar e)})
\\
\leq\; 2 M^{-1 - \mu + \delta} R^{(\sigma^2(\bar e))}(\tau^{(\sigma^2(\bar e))})\,,
\end{multline*}
where we define $\tau^{(\sigma^2(\bar e))}$ through \eqref{definition of tau 2}. This covers the second line of 
\eqref{definition of xi for free leaf}.

We now turn to the last line of \eqref{definition of xi for free leaf}.
Consider the case $A_{\bar e} \notin \{\bar e, \sigma(\bar e)\}$ and $A_{\sigma(\bar e)} = \sigma(\bar e)$. Thus 
$x_{b(\bar e)}$ and $x_{b(\sigma(\bar e))}$ are fixed by $A_{\bar e}$, and we have
\begin{multline*}
R^{(\bar e)}(\tau^{(\bar e)}) \;\leq\; 2 M^{-1 + 2 \delta} M^{-1 + 2 \delta} R^{(\sigma^2(\bar e))}(\tau^{(\bar e)}) 
\;=\; 2 M^{-1 - \mu + \delta} M^{-1 + \mu + 3 \delta} R^{(\sigma^2(\bar e))}(\tau^{(\bar e)})
\\
\leq\; 2 M^{-1 - \mu + \delta} R^{(\sigma^2(\bar e))}(\tau^{(\sigma^2(\bar e))})\,,
\end{multline*}
where we set
\begin{equation*}
\tau^{(\sigma^2(\bar e))}(e) \;\deq\;
\begin{cases}
\tau^{(\bar e)}(e) & \text{if } e \neq A_{\bar e}
\\
(b,6) & \text{if } e = A_{\bar e}\,,
\end{cases}
\end{equation*}
passing part of the total gain from the double summation to $A_{\bar e}$. In particular, $(\rr L_{\sigma^2(\bar e)})$ 
holds. 

Next, consider the case $A_{\bar e} \notin \{\bar e, \sigma(\bar e)\}$ and  $A_{\sigma(\bar e)} \succ \sigma(\bar e)$.  
Assume first that $A_{\bar e}$ and $A_{\sigma(\bar e)}$ are not both bough leaves that are lonely in $A^{(\sigma^2(\bar 
e))}$.  Then we get, using again that  $x_{b(\bar e)}$ and $x_{b(\sigma(\bar e))}$ are fixed by $A_{\bar e}$, that
\begin{multline*}
R^{(\bar e)}(\tau^{(\bar e)}) \;\leq\; 2 M^{-1 + 2 \delta} M^{-1 + 2 \delta} R^{(\sigma^2(\bar e))}(\tau^{(\bar e)}) 
\;=\; 2 M^{-1 - \mu + \delta} M^{-1 + \mu + 3 \delta} R^{(\sigma^2(\bar e))}(\tau^{(\bar e)})
\\
\leq\; 2 M^{-1 - \mu + \delta} R^{(\sigma^2(\bar e))}(\tau^{(\sigma^2(\bar e))})\,,
\end{multline*}
where we set
\begin{equation*}
\tau^{(\sigma^2(\bar e))}(e) \;\deq\;
\begin{cases}
(b,6) & \text{if } e \in \h{A_{\bar e}, A_{\sigma(\bar e)}} \text{ is a leaf lonely in } 
A^{(\sigma^2(\bar e))}
\\
\tau^{(\bar e)}(e) & \text{otherwise}\,.
\end{cases}
\end{equation*}
Here we used that, in order for $(\rr L_{\sigma^2(\bar e)})$ to hold, we need at most one of 
the bough edges $A_{\bar e}$ and $A_{\sigma(\bar e)}$ to receive the tag $(b,6)$ in 
$\tau^{(\sigma^2(\bar e))}$.

Finally, we consider the case where both $A_{\bar e} \eqd e' \notin \{\bar e, \sigma(\bar e)\}$ and $A_{\sigma(\bar e)} 
\eqd e'' \succ \sigma(\bar e)$
 are bough leaves that 
are lonely in $A^{(\sigma^2(\bar e))}$. Although our goal is to sum out only the edges
 $\bar e$ and $\sigma(\bar e)$, it 
will prove necessary to first sum out all four edges $\bar e, \sigma(\bar e), e', e''$ in
 order to get a sufficiently 
strong reduction of the entropy factor. Having done this, we put back the sum over the
 end-labels of $e'$ and $e''$ (thus undoing their ``striking'' out of the graph that resulted from summing them out) to 
get the needed factor $R^{(\sigma^2(\bar e))}(\tau^{(\bar e)})$.

Thus, we sum over all the labels of the four vertices $b(\bar e), b(\sigma(\bar e)), b(e')$,
 and $b(e'')$ in the expression for $R^{(\bar e)}(\tau^{(\bar e)})$; we fix all other labels. Now it is easy to see
 that  the label $x_{b(e')}$ uniquely determines the other three labels, so
 the total entropy factor for these summations is $M$.
 Hence summing over the above four labels in the expression for $R^{(\bar e)}(\tau^{(\bar e)})$ yields the bound
\begin{equation*}
R^{(\bar e)}(\tau^{(\bar e)}) \;\leq\; (2 M^{2 \delta})^3 M^{-3} \, \wt R\,,
\end{equation*}
where $\wt R$ is the expression obtained from $R^{(\bar e)}(\tau^{(\bar e)})$ by summing
 out the edges $\bar e, \sigma(\bar e), e', e''$.
(In the estimate we used the worst case scenario, in which the edges $\bar e, e', e''$ are of type $(b,5)$ and the edge 
$\sigma(\bar e)$ of type $(b,0)$.)
 Next, it is easy
 to see that summing out the two edges $e'$ and $e''$ in the expression for $R^{(\sigma^2(\bar e))}(\tau^{(\bar e)})$
 gives the equality
\begin{equation*}
R^{(\sigma^2(\bar e))}(\tau^{(\bar e)}) \;=\; (2 M^{2 \delta})^2 \wt R\,
\end{equation*}
since at the moment when $\bar e$ is summed out, both $e'$ and $e''$
are nonlonely bough leaves, thus $\tau^{(\bar e)}(e')=\tau^{(\bar e)}(e'')=(b,5)$. Thus we find
\begin{multline*}
R^{(\bar e)}(\tau^{(\bar e)}) \;\leq\; 2 M^{-3 + 2 \delta} R^{(\sigma^2(\bar e))}(\tau^{(\bar e)})
 \;=\; 2 M^{-1 - 2 \mu - 4\delta} \pb{M^{-1 + \mu + 3\delta}}^2 R^{(\sigma^2(\bar e))}(\tau^{(\bar e)})
\\
\leq\; 2 M^{-1 - 2 \mu - 4 \delta} R^{(\sigma^2(\bar e))}(\tau^{(\sigma^2(\bar e))})\,,
\end{multline*}
where we set
\begin{equation*}
\tau^{(\sigma^2(\bar e))}(e) \;\deq\;
\begin{cases}
(b,6) & \text{if } e \in \h{A_{\bar e}, A_{\sigma(\bar e)}}
\\
\tau^{(\bar e)}(e) & \text{otherwise}\,.
\end{cases}
\end{equation*}
The factor $\pb{M^{-1 + \mu + 3\delta}}^2$ is absorbed into $R^{(\bar e)}(\tau^{(\bar e)})$ to get $R^{(\bar 
e)}(\tau^{(\sigma^2(\bar e))})$ and at the same time ensure that $(\rr L_{\sigma^2(\bar e)})$ is satisfied.
This covers the third line of \eqref{definition of xi for free leaf}, and hence concludes the proof.
\end{proof}

\subsection{Sum over degenerate bough labels} \label{section: sum over deg}
In this subsection we sum over all labels associated with degenerate bough edges, having already summed out the 
nondegenerate boughs in the previous section. We estimate $R^{(e_d)}(\tau^{(e_d)})$, where $e_d$ is the first degenerate 
bough edge of $\cal E_B$, and $\tau^{(e_d)}$ is a tagging satisfying $(\rr L_{e_d})$.  Our strategy is very similar to 
Subsection \ref{section: sum over bough labels}.  For $e \in B^{(e_d)}$ (i.e.\ $e_d \preceq e \prec e_s$) we define the 
inverse $A^{-1}_e$ by restricting Definition \ref{definition of inverse of A} to $B^{(e_d)}$.  In other words, we define 
$A_e^{-1} \deq e'$ if there exists a (necessarily unique) $e'$ satisfying $e_d \preceq e' \prec e$ and $A_{e'} = e$.  
Otherwise we set $A_e^{-1} \deq e$.

By the assumption $(\rr L_{e_d})$, every leaf $e \in B^{(e_d)}$ that is lonely in $A^{(e_d)}$ 
has tag $\tau^{(e_d)}(e) = (b,6)$. Therefore we may reproduce the proof of \eqref{bound on sum over bough labels} 
verbatim to get, for a fixed ${\b x}_S$,
\begin{equation} \label{bound on degenerate boughs}
R^{(e_d)}(\tau^{(e_d)}) \;\leq\; f(A) \, \prod_{\gamma \in \wt \Gamma} \E \prod_{e \in \gamma} 
\absB{P_{\tau(e)}\pb{\wh H_{x_{a(e)} x_{b(e)}}, \wh H_{x_{b(e)}
x_{a(e)}}}}\,,
\end{equation}
where
\begin{equation} \label{def f(A)}
f(A) \;\deq\;
\prod_{e \in B^{(e_d)} \text{ leaf}} \pbb{\frac{2 M^{2 \delta}}{M} + \indb{e = A_{e}} \, 2 M^{-1 + \mu + 5 \delta} + 
\indb{A^{-1}_{e} \prec e = A_{e}} \, 2 M^{2 \delta}}\,.
\end{equation}
Note that, unlike in \eqref{bound on sum over bough labels}, the product in \eqref{def f(A)} ranges only over leaves, 
since degenerate boughs consist only of leaves.

We may now put the estimate on both nondegenerate and degenerate boughs together. From \eqref{starting point for hard 
recursion}, Proposition \ref{proposition: step for hard leaves}, and \eqref{bound on degenerate boughs} we get
\begin{multline} \label{sum over all bough labels in hard case}
\sum_{\b x_B} \qBB{\prod_{e \in \cal E_B} \indb{\varrho_{\b x}(e) = \varrho_{\b x}(A_e)}}
\prod_{\gamma \in \Gamma(\wt \Gamma, A)} \absbb{\E \prod_{e \in \gamma} P_{\tau(e)}\pb{\wh 
H_{x_{a(e)} x_{b(e)}}, \wh H_{x_{b(e)}
x_{a(e)}}}}
\\
\leq\;
F(A) \, \prod_{\gamma \in \wt \Gamma} \E \prod_{e \in \gamma} \absB{P_{\tau(e)}\pb{\wh 
H_{x_{a(e)} x_{b(e)}}, \wh H_{x_{b(e)}
x_{a(e)}}}}
\,,
\end{multline}
where
\begin{multline} \label{expression for hard F(A)}
F(A) \;\deq\; \qBB{\; \prod_{e \in \cal E_B \text{ nonleaf}} \pB{\ind{A_e = e} + 2 M^{-1 + 2 \delta} + \indb{A_e \succ e 
\text{ is a leaf}\,} 2 M^{-\mu - \delta} }}
\qBB{\; \prod_{e \in \cal E_B \text{ bound leaf}} 2 M^{-\mu - \delta}}
\\ \times
\prod_{e \in \cal E_B \text{ free leaf}} \biggl[
\indb{A_{e} \in \{e, \sigma(e)\}} \indb{A_{\sigma(e)} = \sigma(e)} \, 2M^{-1 + \mu + 5 \delta}
+
\indb{A_{e} \in \{e, \sigma(e)\}} \, 2M^{-1 + 4 \delta}
+
 2M^{-1-\mu + \delta}
\biggr]
\\ \times
\prod_{e \in \cal E_B \text{ degenerate leaf}} \qB{2 M^{-1 + 2 \delta} + \indb{e = A_{e}} \, 2 M^{-1 + \mu + 5 \delta} + 
\indb{A^{-1}_{e} \prec e = A_{e}} \, 2 M^{2 \delta}}\,.
\end{multline}
As was advertised before the proof of
Proposition~\ref{proposition: step for hard leaves},
this estimate is designed to counterbalance the 
various smallness factors and the entropy factors for the lumping summation. 
For instance, in the second line, the prefactor is $M^{-1+(i-1)\mu+ O(\delta)}$
where $i=0,1,2$ is the number of hard constraints, so after
summation over the lumpings, each summand will be of the same order $M^{-1+\mu +5\delta}$.
The same balance can be seen among the first two summands in the last line,
while the last summand will be treated similarly to how
the second factor in  \eqref{definition of F(A)} was evaluated in 
Subsection~\ref{section: sum over easy bough lumpings}. Finally, in the product
over the nonleaves in  the first
line, only a weaker bound is available if $A_e\succ e$ is a leaf. But
this bound is strong enough to guarantee that, even after summation over $A$,
the total contribution of the nonleaves is $C^L$ instead of $C^{M^\mu}$, where $L$ is the number of leaves (see 
\eqref{bound on all nonleaves}).  Since some (small, at worst $O(M^{-\delta})$)
 gain is available for each leaf, a factor 
$C^L$ is affordable.

\subsection{General taggings and sum over bough lumpings} \label{subsection: sum over hard lumpings}
So far we assumed that all nonleaf bough edges had tag $(b,0)$ and all bough leaves
 tag $(b,1)$. As in Subsection 
\ref{section: sum over bough labels}, we split the tagging into a bough and 
stem tagging, $\tau = (\tau_B, \tau_S)$, and 
define
\begin{equation}\label{def of F(A) with tau}
F(A, \tau_B) \;\deq\; F(A) \prod_{e \in \cal E_B \text{ nonleaf}} \pb{M^{2 \delta - 
1}}^{\ind{\tau_B(e) \neq (b,0)}}\,.
\end{equation}
Then \eqref{sum over all bough labels in hard case} for arbitrary $\tau$ holds if $F(A)$ on the 
right-hand side is replaced by $F(A,\tau_B)$.

Now we sum over all bough lumpings $A \in \scr A(G \cup G')$. We start by summing over $A_e$ in \eqref{expression for 
hard F(A)} for all degenerate edges $e$. Now we proceed as in
Subsection~\ref{section: sum over easy bough lumpings}. In fact, we perform the summation of Subsection \ref{section: 
sum over easy bough lumpings} over $A$ only on the second factor of \eqref{definition of F(A)}, as the (trivial) nonleaf 
edges are treated separately. Moreover, the additional summand in each nonleaf factor $\indb{e = A_{e}} \, 2 M^{-1 + 
\mu + 5 \delta}$ is trivially accounted for.  Thus we get
\begin{multline}
\sum_{A^{(e_d)}} \; \prod_{e \in \cal E_B \text{ degenerate leaf}} \qB{2 M^{-1 + 2 \delta} + \indb{e = A_{e}} \, 2 
M^{-1 + \mu + 5 \delta} + \indb{A^{-1}_{e} \prec e = A_{e}} \, 2 M^{2 \delta}}
\\
\leq\; \pB{C M^{-1 + \mu + 7 \delta}}^{L^{(d)} /2}\,,
\end{multline}
where $L^{(d)} \equiv L^{(d)}(G \cup G')$ is the number of degenerate leaves in $G \cup G'$.
From \eqref{error bound for fixed graphs bis}, \eqref{sum over all bough labels in hard case}
 with an arbitrary tagging 
$\tau_B$,
and \eqref{def of F(A) with tau} we therefore get
\begin{multline}
E_{\cal G \cup \cal G'} \;\leq\;
\qBB{\; \prod_{e \in \cal E_B \text{ nonleaf}}\pB{1 + 2 M^{-1 + \mu + 2 \delta} + 2 L M^{-\mu -\delta}}
\pb{M^{-1 + 2 \delta}}^{\ind{\tau_B(e) \neq (b,0)}}
}
\qBB{\; \prod_{e \in \cal E_B \text{ bound leaf}} 2 M^{- \delta}}
\\ \times
\qBB{\; \prod_{e \in \cal E_B \text{ free leaf}} C M^{-1 + \mu + 5 \delta}}
\pB{C M^{-1 + \mu + 7 \delta}}^{L^{(d)} / 2}
\\ \times
\sum_{\wt \Gamma \in \scr G_{u,u'}} \sum_{\b x_S \,:\, \Gamma(\b x_S) = \wt \Gamma} Q(\b x_S) 
\prod_{\gamma \in \wt \Gamma} \E \prod_{e \in \gamma} \absB{P_{\tau(e)}\pb{\wh H_{x_{a(e)} 
x_{b(e)}}, \wh H_{x_{b(e)}
x_{a(e)}}}}\,,
\end{multline}
where $L$ is the total number of bough leaves in $G \cup G'$. Recalling that the number of 
edges of $G \cup G'$ is bounded by $M^\mu$, we find
\begin{equation} \label{bound on all nonleaves}
\prod_{e \in \cal E_B \text{ nonleaf}}\pB{1 + 2 M^{-1 + \mu + 2 \delta} + 2 L M^{-\mu - \delta}} \;\leq\; C^L\,.
\end{equation}

Recall that $L^{(d)}$ denotes the number of degenerate leaves of $G \cup G'$. Similarly, denote 
by $L^{(b)}$ the number of bound leaves of $G \cup G'$ and by $L^{(f)}$ the number of free 
leaves of $G \cup G'$.  We have proved the following result.

\begin{proposition} \label{proposition: bound on hard E_G}
For any $\cal G, \cal G' \in \fra G_\sharp$ we have
\begin{multline} \label{bound on hard E_G}
E_{\cal G \cup \cal G'} \;\leq\; \qBB{\prod_{e \in \cal E_B \text{\rm nonleaf}} \pb{M^{-1 + 2 \delta}}^{\ind{\tau(e) 
\neq (b,0)}}} \pb{C M^{- \delta}}^{L^{(b)}} \pb{C M^{-1 + \mu + 5 \delta}}^{L^{(f)}} \pb{C M^{-1 + \mu + 7 
\delta}}^{L^{(d)} / 2}
\\ \times
\sum_{\wt \Gamma \in \scr G_{u,u'}} \sum_{\b x_S \,:\, \Gamma(\b x_S) = \wt \Gamma} Q(\b x_S) 
\prod_{\gamma \in \wt \Gamma} \E \prod_{e \in \gamma} \absB{P_{\tau(e)}\pb{\wh H_{x_{a(e)} 
x_{b(e)}}, \wh H_{x_{b(e)}
x_{a(e)}}}}\,.
\end{multline}
\end{proposition}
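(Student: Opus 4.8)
The plan is to assemble the estimates of the preceding subsections. We start from the bound \eqref{error bound for fixed graphs bis} on $E_{\cal G \cup \cal G'}$, whose free summation variables are the stem lumping $\wt \Gamma$, the stem labels $\b x_S$, the bough lumping $A$, and the bough labels $\b x_B$; the idea is to perform the $\b x_B$- and $A$-summations and be left with exactly the stem sum appearing on the last line of \eqref{bound on hard E_G}. As in Subsection \ref{subsection: sum over hard lumpings} I would first reduce to the case where every nonleaf bough edge has tag $(b,0)$, the general case being recovered by the correction factor $\pb{M^{2\delta - 1}}^{\ind{\tau(e) \neq (b,0)}}$ per nonleaf bough edge, cf.\ \eqref{def of F(A) with tau}; the leaf polynomials are bounded via the substitution $\tau \mapsto \wt \tau$ replacing each leaf tag by $(b,5)$ or $(b,6)$ according to whether the leaf is lonely, cf.\ \eqref{bound using tau tilde}.

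For fixed $A$ and $\wt \Gamma$, the sum over $\b x_B$ is carried out in two stages. First, the nondegenerate bough edges are summed out in the order $\preceq$, applying Proposition \ref{proposition: step for hard leaves} at each step to obtain a recursion $R^{(\bar e)}(\tau^{(\bar e)}) \leq \xi\, R^{(\sigma(\bar e))}(\tau^{(\sigma(\bar e))})$ --- or $R^{(\bar e)}(\tau^{(\bar e)}) \leq \xi\, R^{(\sigma^2(\bar e))}(\tau^{(\sigma^2(\bar e))})$ when $\bar e$ is a free leaf --- with the dynamical taggings chosen to preserve the invariant $(\rr L_{\bar e})$. Iterating down to the first degenerate edge $e_d$ bounds $R^{(e_d)}(\tau^{(e_d)})$ by the product of the $\xi$-factors. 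Second, the degenerate bough edges are summed out by repeating verbatim the argument leading to \eqref{bound on sum over bough labels}, which produces the function $f(A)$ of \eqref{def f(A)}. Combining the two stages gives \eqref{sum over all bough labels in hard case} with $F(A) = F(A,\tau_B)$ as in \eqref{expression for hard F(A)} and \eqref{def of F(A) with tau}.

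It remains to sum $F(A,\tau_B)$ over $A \in \scr A(G \cup G')$, which I would do edge type by edge type. For the nonleaf bough edges, summing each $A_e$ trivially over its at most $M^\mu$ values and using that $G \cup G'$ has at most $M^\mu$ edges gives the bound $C^L$ of \eqref{bound on all nonleaves}. For the degenerate leaves, the argument of Subsection \ref{section: sum over easy bough lumpings} applied to the degenerate factor of \eqref{expression for hard F(A)} gives $\pB{C M^{-1 + \mu + 7\delta}}^{L^{(d)}/2}$. For the free leaves, the point is the inequality $\xi M^{\mu(2-i)} \leq C M^{-1 + \mu + O(\delta)}$ noted after \eqref{definition of xi for free leaf}: summing over $A_{\bar e}$ and $A_{\sigma(\bar e)}$ contributes the entropy factor $M^{\mu(2-i)}$, with $i \in \{0,1,2\}$ the number of hard constraints, and this is exactly compensated by the corresponding case of $\xi$, so each free leaf yields $C M^{-1 + \mu + 5\delta}$. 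For the bound leaves, the factor $2 M^{-\mu - \delta}$ in \eqref{expression for hard F(A)} absorbs the entropy factor $M^\mu$ from summing the relevant $A_e$, leaving $C M^{-\delta}$ per bound leaf. Collecting these prefactors, while the stem sum $\sum_{\wt \Gamma} \sum_{\b x_S} Q(\b x_S) \prod_\gamma \E \prod_e \absB{P_{\tau(e)}}$ has been carried along untouched, yields \eqref{bound on hard E_G}. The step I expect to require the most care is the bookkeeping in the first stage: one must check that the dynamical tagging $\tau^{(\bar e)}$ built through the various cases of Proposition \ref{proposition: step for hard leaves} genuinely preserves $(\rr L_{\bar e})$, and that the reallocated weights $M^{-1 + \mu + 3\delta}$ are never charged to the same edge twice, so that after the lumping summation each free leaf has indeed acquired the full factor $M^{-1 + \mu + O(\delta)}$; granting this, the remaining estimates are routine power counting.
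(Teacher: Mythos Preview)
Your proposal is correct and follows the same route as the paper: the proposition is stated in the paper immediately after the sentence ``We have proved the following result,'' so its proof is precisely the accumulation of the estimates from Subsections on nondegenerate bough labels, degenerate bough labels, and general taggings with the lumping sum, which is exactly the chain you reconstruct (start from \eqref{error bound for fixed graphs bis}, reduce leaf tags via \eqref{bound using tau tilde}, iterate Proposition \ref{proposition: step for hard leaves} to reach $R^{(e_d)}$, apply \eqref{bound on degenerate boughs}, assemble $F(A,\tau_B)$, then sum over $A$ using \eqref{bound on all nonleaves} and the degenerate-leaf argument). Your closing remark about the dynamical tagging and the non-duplication of the transferred weight $M^{-1+\mu+3\delta}$ correctly isolates the one place where care is genuinely needed.
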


\subsection{Decoupling and sum over the tagging}
We now proceed as in Subsection \ref{section: decoupling of the easy boughs} and prove the following result which is 
analogous to Lemma \ref{lemma: decoupling}. In order to state it, we split
\begin{equation*}
L^{(i)}(G \cup G') \;=\; L^{(i)}(G) + L^{(i)}(G')\,;
\end{equation*}
here $L^{(i)}(G)$ is the number of bough leaves of $G$ of type $i$, where $i$ can be $b$ (for ``bound''), $f$ (for 
``free''), or $d$ (for ``degenerate'').

\begin{proposition} \label{proposition: hard decoupling}
We have
\begin{multline} \label{estimate for hard decoupling}
E_1 \;\leq\;
\sum_{n + n' \leq M^\mu} \abs{\alpha_n(t) \alpha_{n'}(t)}  \sum_{r,r' \geq 0}
\pb{M^{-1 + 2 \delta}}^{r + r'} \sum_{u = 0}^{n-1} \sum_{u' = 0}^{n' - 1} h_{u,u'}
\\
\times
\sum_{\cal G, \cal G' \in \fra G_\sharp}
\qB{\indb{\abs{\cal E(\cal S(G))} = u} \indb{\deg(\cal S(G), \tau_S) = u + 2r} \indb{2 
\abs{\cal E(\cal B(G))} = n - u - 2r}} \qB{\text{\rm primed}}
\\
\times
\pb{C M^{- \delta}}^{L^{(b)}(G) + L^{(b)}(G')} \pb{C M^{-1 + \mu + 5 \delta}}^{L^{(f)}(G) + 
L^{(f)}(G')} \pb{C M^{-1 + \mu + 7 \delta}}^{L^{(d)}(G)/2 + L^{(d)}(G')/2}
\\
\times
\prod_{e \in \cal E_B \text{\rm nonleaf}} \pb{M^{-1 + 2 \delta}}^{\ind{\tau(e) \neq (b,0)}}\,.
\end{multline}
\end{proposition}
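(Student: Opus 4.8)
The plan is to mimic the argument of Subsection~\ref{section: decoupling of the easy boughs} verbatim, with the roles of the quantities adapted to the sharper bound~\eqref{bound on hard E_G} obtained in Proposition~\ref{proposition: bound on hard E_G}. First I would start from~\eqref{bound of E_1 in terms of E(G)}, which expresses $E_1$ as a sum over $n,n'$, over the Bessel coefficients $\alpha_n(t)\alpha_{n'}(t)$, and over decorated graphs $\cal G, \cal G' \in \fra G_n^* \times \fra G_{n'}^*$ of $E_{\cal G \cup \cal G'}$. I substitute into it the estimate~\eqref{bound on hard E_G}. Exactly as in the proof of Lemma~\ref{lemma: decoupling}, I then insert the indicator functions that fix the size $u = \abs{\cal E(\cal S(G))}$ of the stem, the stem degree $\deg(\cal S(G), \tau_S) = u + 2r$ (which forces precisely $r$ stem edges to carry the tag $(s,1)$), and the bough degree. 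The constraint $u < n$ is equivalent to $\cal G \neq \cal I_n$, i.e.\ $\cal G \in \fra G_n^*$.

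Next I would perform the summation over the stem labels $\b x_S$ and over $\wt \Gamma \in \scr G_{u,u'}$ that appears in the last line of~\eqref{bound on hard E_G}. Using the bound $\absB{P_{\tau(e)}(\wh H_{x_{a(e)} x_{b(e)}}, \wh H_{x_{b(e)} x_{a(e)}})} \leq \abs{\wh H_{x_{a(e)} x_{b(e)}}}$ for stem edges with tag $(s,0)$ and $\leq \frac{M^{2\delta}}{M}\abs{\wh H_{x_{a(e)} x_{b(e)}}}$ for tag $(s,1)$, together with the fact that precisely $r + r'$ stem edges carry tag $(s,1)$ and the definitions~\eqref{definition of h}, \eqref{definition of V tilde}, this sum is bounded by $\pb{M^{2\delta}/M}^{r+r'} h_{u,u'} = \pb{M^{-1+2\delta}}^{r+r'} h_{u,u'}$. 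This is the only place where the stem-lumping estimate (Lemma~\ref{lemma: bound on stem lumpings}) will eventually be invoked downstream, but here it suffices to isolate the factor $h_{u,u'}$.

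The remaining — and essentially only nontrivial — step is the decoupling of the graph structure from the bough tagging, i.e.\ replacing the constraint $\deg\pb{\cal B(G), \tau_B} = n - u - 2r$ with the weaker $2\abs{\cal E(\cal B(G))} = n - u - 2r$ while preserving the bound. I would run the same construction as in the proof of Lemma~\ref{lemma: decoupling}: given $\cal G$ with bough-degree deficiency $2D \deq \deg\pb{\cal B(G), \tau_B} - 2\abs{\cal E(\cal B(G))} \geq 0$, lengthen one leaf — the one rooted at the stem vertex closest to $a(G)$ — by inserting a path of $D$ edges each carrying tag $(b,0)$. The resulting map $\cal G \mapsto \wt{\cal G}$ is injective, preserves the numbers $L^{(b)}(G), L^{(f)}(G), L^{(d)}(G)$ of bound, free, and degenerate leaves (since lengthening a leaf changes neither which edges are leaves nor their classification), preserves the tagged stem, and does not decrease the product $\prod_{e \in \cal E_B\text{ nonleaf}} \pb{M^{-1+2\delta}}^{\ind{\tau(e)\neq(b,0)}}$ because the added edges all carry tag $(b,0)$ and hence contribute no new factors. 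The main obstacle I anticipate is precisely checking that the classification of leaves is invariant under this lengthening operation: one must verify that a nondegenerate bough stays nondegenerate (it has a nonleaf edge both before and after, except one has to be slightly careful if $D>0$ is applied to a bough that was degenerate — in that case the lengthened bough becomes nondegenerate, but then its former leaves become bound or free and one must confirm the counts $L^{(b)}, L^{(f)}, L^{(d)}$ transform consistently with the weights in~\eqref{bound on hard E_G}; since a degenerate leaf carries weight $\pb{CM^{-1+\mu+7\delta}}^{1/2}$ which is larger than the free-leaf weight $CM^{-1+\mu+5\delta}$ and the bound-leaf weight $CM^{-\delta}$, turning degenerate leaves into free/bound ones only helps). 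With the injectivity of $\cal G \mapsto \wt{\cal G}$ in hand, I bound the term indexed by $(\cal G, \cal G')$ in the expression obtained so far by the term indexed by $(\wt{\cal G}, \wt{\cal G}')$ and then rename, obtaining~\eqref{estimate for hard decoupling}.
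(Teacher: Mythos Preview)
Your overall plan is right and matches the paper: substitute \eqref{bound on hard E_G} into \eqref{bound of E_1 in terms of E(G)}, extract $\pb{M^{-1+2\delta}}^{r+r'} h_{u,u'}$ from the stem sum exactly as in Subsection~\ref{section: decoupling of the easy boughs}, then decouple the bough tagging from the bough graph via an injective map $\cal G \mapsto \wt{\cal G}$ as in Lemma~\ref{lemma: decoupling}. The gap is in your treatment of the degenerate case in the decoupling step.

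For the decoupling you need $W(\cal G) \leq W(\wt{\cal G})$, where $W$ denotes the weight on the first line of \eqref{bound on hard E_G}. When the bough you lengthen was degenerate with $l$ leaves, your operation turns it into a nondegenerate bough with one free leaf and $l-1$ bound leaves (the latter being incident to the root and hence bound by Definition~\ref{definition: types of leaves}). Your observation that the degenerate per-leaf weight $(CM^{-1+\mu+7\delta})^{1/2}$ exceeds both the free-leaf weight $CM^{-1+\mu+5\delta}$ and the bound-leaf weight $CM^{-\delta}$ is correct, but the conclusion ``only helps'' is backwards: it gives $W(\wt{\cal G}) < W(\cal G)$, not $\geq$. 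Concretely, for $l=1$ you would need $(CM^{-1+\mu+7\delta})^{1/2} \leq CM^{-1+\mu+5\delta}$, i.e.\ $\mu + 3\delta \geq 1$, which fails since $\mu < 1/3$.

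The paper repairs this with two case-dependent modifications of the map $\cal G \mapsto \wt{\cal G}$. If $G$ has at least one nondegenerate bough, one lengthens a \emph{nonleaf} bough edge (the first one reached on the walk around $G$) rather than a leaf; this keeps $L^{(b)}, L^{(f)}, L^{(d)}$ exactly invariant. If every bough of $G$ is degenerate and $D > 0$, then some degenerate leaf $\wt e$ must carry a tag $(b,i)$ with $i \in \{2,3,4\}$; by placing $\wt e$ first among the degenerate edges in the order $\preceq$ and rerunning the degenerate-bough summation, one obtains a sharpened version of \eqref{bound on hard E_G} in which the exponent $L^{(d)}/2$ is replaced by $(L^{(d)}-1)/2$ and an extra factor $M^{-1+2\delta}$ appears. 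That extra factor is exactly what is needed: one checks directly that
\[
\pb{C M^{-1+\mu+7\delta}}^{(L^{(d)}-1)/2} M^{-1+2\delta} \;\leq\; \pb{CM^{-\delta}}^{l-1} \, CM^{-1+\mu+5\delta} \, \pb{CM^{-1+\mu+7\delta}}^{(L^{(d)}-l)/2}
\]
for all $l \geq 1$, which restores the inequality $W(\cal G) \leq W(\wt{\cal G})$.
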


\begin{proof}
See Appendix \ref{appendix: hard decoupling}.
\end{proof}

We may now sum over the bough tagging $\tau_B$ in \eqref{estimate for hard decoupling} to get
\begin{align*}
E_1 &\;\leq\;
\sum_{n + n' \leq M^\mu} \abs{\alpha_n(t) \alpha_{n'}(t)}  \sum_{r,r' \geq 0}
\pb{M^{-1 + 2 \delta}}^{r + r'} \sum_{u = 0}^{n-1} \sum_{u' = 0}^{n' - 1} h_{u,u'}
\\
&\qquad \times
\sum_{G, G' \in \fra W}
\sum_{\tau_S}
\qB{\indb{\abs{\cal E(\cal S(G))} = u} \indb{\deg(\cal S(G), \tau_S) = u + 2r} \indb{2 
\abs{\cal E(\cal B(G))} = n - u - 2r}} \qB{\text{\rm primed}}
\\
&\qquad \times
\sum_{\tau_B}
\pb{C M^{- \delta}}^{L^{(b)}(G) + L^{(b)}(G')} \pb{C M^{-1 + \mu + 5 \delta}}^{L^{(f)}(G) + 
L^{(f)}(G')} \pb{C M^{-1 + \mu + 7 \delta}}^{L^{(d)}(G)/2 + L^{(d)}(G')/2}
\\
& \qquad\times
\prod_{e \in \cal E_B \text{\rm nonleaf}} \pb{M^{-1 + 2 \delta}}^{\ind{\tau(e) \neq (b,0)}}
\\
&\;\leq\;
\sum_{n + n' \leq M^\mu} \abs{\alpha_n(t) \alpha_{n'}(t)}  \sum_{r,r' \geq 0}
\pb{M^{-1 + 2 \delta}}^{r + r'} \sum_{u = 0}^{n-1} \sum_{u' = 0}^{n' - 1} h_{u,u'}
\\
&\qquad \times
\sum_{G, G' \in \fra W}
\sum_{\tau_S}
\qB{\indb{\abs{\cal E(\cal S(G))} = u} \indb{\deg(\cal S(G), \tau_S) = u + 2r} \indb{2 
\abs{\cal E(\cal B(G))} = n - u - 2r}} \qB{\text{\rm primed}}
\\
&\qquad \times C
\pb{C M^{- \delta}}^{L^{(b)}(G) + L^{(b)}(G')} \pb{C M^{-1 + \mu + 5 \delta}}^{L^{(f)}(G) + 
L^{(f)}(G')} \pb{C M^{-1 + \mu + 7 \delta}}^{L^{(d)}(G)/2 + L^{(d)}(G')/2}\,,
\end{align*}
where the second inequality follows analogously to \eqref{detail in sum over bough tags}.
Next, we sum over the stem tagging $\tau_S$ using \eqref{bound on sum over stem tags},
\begin{align}
E_1 &\;\leq\;
\sum_{n + n' \leq M^\mu} \abs{\alpha_n(t) \alpha_{n'}(t)}  \sum_{r,r' \geq 0}
\pb{M^{-1 + \mu + 2 \delta}}^{r + r'} \sum_{u = 0}^{n-1} \sum_{u' = 0}^{n' - 1} h_{u,u'}
\notag \\
&\qquad \times
\sum_{G, G' \in \fra W}
\qB{\indb{\abs{\cal E(\cal S(G))} = u} \indb{2 \abs{\cal E(\cal B(G))} = n - u - 2r}} 
\qB{\text{\rm primed}}
\notag \\
&\qquad \times C
\pb{C M^{- \delta}}^{L^{(b)}(G) + L^{(b)}(G')} \pb{C M^{-1 + \mu + 5 \delta}}^{L^{(f)}(G) + 
L^{(f)}(G')} \pb{C M^{-1 + \mu + 7 \delta}}^{L^{(d)}(G)/2 + L^{(d)}(G')/2}\,
\notag \\
&\;\leq\;
\sum_{n + n' \leq M^\mu} \abs{\alpha_n(t) \alpha_{n'}(t)}  \sum_{u = 0}^{n-1} \sum_{u' = 0}^{n' 
- 1} h_{u,u'}
\notag \\
&\qquad \times
\Biggl[ \sum_{r \geq 0}
\pb{M^{-1 + \mu + 2 \delta}}^r
\sum_{G \in \fra W}
\indb{\abs{\cal E(\cal S(G))} = u} \indb{2 \abs{\cal E(\cal B(G))} = n - u - 2r} \notag \\ 
\label{everything but graphs summed}
&\qquad \times C
\pb{C M^{- \delta}}^{L^{(b)}(G)} \pb{C M^{-1 + \mu + 5 \delta}}^{L^{(f)}(G)} \pb{C M^{-1 + \mu + 7 
\delta}}^{L^{(d)}(G)/2}
\Biggr] \qB{\text{\rm primed}}\,.
\end{align}

\subsection{Sum over the bough graphs} \label{section: sum over bough graphs}
Now we may sum over the graphs $G,G' \in \fra W$ in \eqref{everything but graphs summed}. A key ingredient is the 
following combinatorial estimate. Let $S_{kfb}$ be the number of nondegenerate boughs with $k$ edges, $f$ free leaves, 
and $b$ bound leaves. In other words, $S_{kfb}$ is the number of oriented, unlabelled, rooted trees with $k$ edges and 
$f + b$ leaves, such that the number of groups of adjacent leaves, excluding leaves incident to the root, is equal to 
$f$ (see Definition \ref{definition: types of leaves}).
\begin{lemma} \label{lemma: fine bound on tree graphs}
We have
\begin{equation*}
S_{kfb} \;\leq\; k^{2f - 2} \, 2^{2 f + b}\,.
\end{equation*}
\end{lemma}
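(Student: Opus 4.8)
\textbf{Proof proposal for Lemma \ref{lemma: fine bound on tree graphs}.}

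The plan is to encode a nondegenerate bough by a Dick path (a lattice path of $+1$'s and $-1$'s with nonnegative partial sums), exactly as in the combinatorial digression preceding Lemma \ref{lemma: fine bound on tree graphs}, and then count how much extra information must be specified once the ``shape'' of the tree restricted to its non-leaf part is known. A tree $T$ with $k$ edges corresponds to a sequence $(w_1,\dots,w_{2k}) \in \{+1,-1\}^{2k}$ with all partial sums $\geq 0$; a leaf corresponds to a position $j$ with $w_j = +1$, $w_{j+1} = -1$ (a ``peak''). The key observation is that a leaf incident to the root of $T$ corresponds to a peak at ``height one'' (i.e.\ the partial sum immediately after $w_j$ equals $1$), and a \emph{group of adjacent leaves} of size $m$ not incident to the root corresponds to a block of the form $(+1)(-1)^{m}$ occurring at a position where, just before this block, the walk has reached some height $\geq 2$ by a final $+1$ step.

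First I would reduce the count to two pieces of data: (a) the tree $T'$ obtained from $T$ by deleting all the leaves, together with (b) for each ``attachment vertex'' of $T'$, the number of leaves hung there and the sizes of the groups. Since $T$ has exactly $f$ groups of adjacent leaves not incident to the root, the contracted tree $T'$ has at most $f$ peaks that become attachment points, plus the root; more precisely, $T'$ is a rooted oriented tree with fewer edges, and the number of rooted oriented trees with a given number of edges is a Catalan number, bounded crudely by $4^{\,\#\text{edges}}$. The number of edges of $T'$ is $k - (f + b)$ (one edge deleted per leaf), but the relevant bound will come by noting that the attachment structure is determined by choosing, among the at most $2k$ positions of the Dick path of $T$, the $O(f)$ positions where the leaf-groups begin. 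That is where the factor $k^{2f-2}$ comes from: choosing $2f - 2$ positions out of at most $k$ available interior positions (the $-2$ reflecting that the first and last steps of any nonempty Dick path are forced, and that the root and the overall path length eat up two degrees of freedom, exactly mirroring the Naranya estimate $S_{k,l} \leq k^{2l-2}$ of \eqref{Naranya number}). Once these positions are fixed, the remaining data — how many bound leaves sit in each of the $f$ groups, and how many degenerate-style leaves hang at the root — is a composition, and the total number of such refinements is bounded by $2^{2f + b}$: the $2^{b}$ counts the ways to distribute the $b$ bound leaves among the $f$ groups (each bound leaf independently ``extends'' some group, at most $2^{b}$ ways after the group-starting positions are pinned), and the extra $2^{2f}$ absorbs the choice of which peaks of the underlying skeleton are promoted to group-roots versus root-incident leaves, together with the usual Catalan-type overcounting factor.

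Concretely, the steps in order are: (1) set up the bijection between nondegenerate boughs and Dick paths, and translate ``leaf'', ``group of adjacent leaves not incident to the root'', and ``leaf incident to the root'' into path-language; (2) show that a bough with $f$ such groups is determined by specifying the $\leq 2f$ lattice-path positions at which these groups start and end (hence $\leq \binom{2k}{2f} \leq (2k)^{2f}$, which after the standard normalization is $\leq k^{2f-2} \cdot C^{f}$, absorbing constants into the $2^{2f}$); (3) show that, given those positions, the number of ways to fill in the $b$ bound leaves is at most $2^{b}$; (4) multiply and absorb all universal constants into $2^{2f+b}$, using $k \geq 1$ and $f \geq 1$ for nonempty boughs to clean up boundary cases; (5) handle the trivial/degenerate edge cases ($k=1$, $f=0$, etc.) separately and check the stated inequality still holds. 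I would expect the main obstacle to be step (2) — getting the exponent of $k$ down to exactly $2f - 2$ rather than something like $2f$ or $2(f+b)$. The naive ``choose all group endpoints'' bound gives $(2k)^{2f}$, and one has to argue, just as in the Naranya/Catalan setting, that two of these choices are redundant (the outermost structure is forced) so that the genuine free parameters number $2f - 2$; the cleanest way is probably to realize the group-starting peaks as a sub-Dick-path and invoke the Naranya bound \eqref{Naranya number} directly on that reduced path, with the $2^{2f+b}$ collecting the cost of re-inflating the reduced path back to $T$.
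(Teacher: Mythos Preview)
Your overall strategy --- reduce to a skeleton tree and then count the ways to re-attach the extra leaves --- is the right one, and is exactly what the paper does. But your execution has a gap and is more complicated than necessary.

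The paper's decomposition is cleaner than yours: instead of deleting \emph{all} leaves to get $T'$, delete only the $b$ bound leaves. What remains is a rooted oriented tree with $k-b$ edges and exactly $f$ leaves (the free leaves), with no two leaves adjacent and no leaf at the root. The Naranya bound \eqref{Naranya number} applies directly to this tree and gives $\leq (k-b)^{2f-2} \leq k^{2f-2}$. This handles your ``main obstacle'' (step (2)) in one line, with no need to argue about redundant choices or sub-Dick-paths.

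Your step (3), by contrast, is where the real gap lies. You assert that re-inserting the $b$ bound leaves costs at most $2^b$, on the grounds that ``each bound leaf independently extends some group''. This is not a valid count: the leaves are unlabelled, the trees are \emph{oriented} (so a bound leaf inserted to the left of a free leaf is a different tree from one inserted to the right), and a bound leaf can also sit at the root. A correct argument must count the available insertion \emph{slots}. The paper does this: at each free-leaf parent and at the root, count the planar wedges into which a new leaf can go (merging the two wedges on either side of a free leaf into one slot, since leaves in the same group are indistinguishable in this sense). One finds the total number of slots is $z \leq 2f+1$, and then inserting $b$ unordered leaves into $z$ slots is $\binom{z+b-1}{z-1} \leq 2^{2f+b}$. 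Your proposal never establishes the slot bound $z \leq 2f+1$, and without it the $2^{2f+b}$ factor is unjustified.
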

\begin{proof}
We construct an arbitrary nondegenerate bough corresponding to the triple $(k,f,b)$ in two 
steps.
\begin{enumerate}
\item
We choose an oriented, unlabelled, rooted tree $T$ with $k - b$ edges and $f$ leaves such that 
no two leaves are adjacent and no leaf is incident to the root.
\item
We add $b$ leaves to $T$ by requiring that each newly added leaf be either adjacent to an 
existing leaf or incident to the root.
\end{enumerate}
Clearly, the number of possible choices for the tree $T$ in (i) is bounded by the number of oriented, unlabelled, rooted 
trees with $k$ edges and $f$ leaves. This was estimated in \eqref{Naranya number} by $k^{2 f - 2}$.

Next, let $\scr V$ denote the subset of vertices of $T$ consisting of the root of $T$ and of 
all initial vertices of the leaves of $T$. Step (ii) means that we have to add $b$ leaves to 
$T$ under the constraint that each newly added leaf be incident to a vertex of $\scr V$. The new leaves may be thought 
of as being added to a certain number, $z$, of allowed slots; each slot may receive several new leaves. The number of 
slots associated with a vertex $v \in \scr V$ is computed as follows. For a vertex $v$ of $T$, let $c_v \in \N$ denote 
the number of children of $v$ in $T$. It is easy to see that the number of slots associated with the vertex $v \in \scr 
V$ is equal to $c_v$ if $v$ is not the root and to $c_v + 1$ if $v$ is the root. In this
counting we have taken the orientation of the graph into account, i.e.\
existing edges are drawn in the plane, and the new edges emanate between them.
Thus the number of slots associated with a leaf $e$ is equal to the number of planar ``wedges'' delimited by edges 
incident to $a(e)$, whereby the two wedges on either side of $e$ count as one; see Figure \ref{figure: slots for adding 
leaves}.
\begin{figure}[ht!]
\begin{center}
\includegraphics{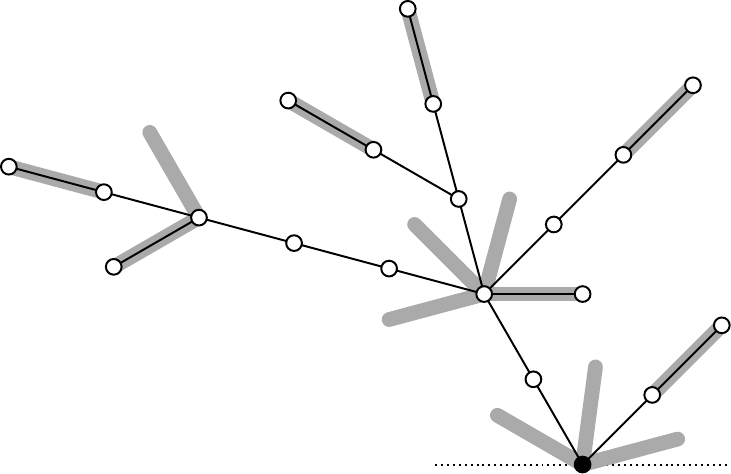}
\end{center}
\caption{A bough with $k = 19$ edges and $f = 7$ free leaves. The $z = 14$ slots for adding bound leaves are indicated 
in grey. The stem is indicated with a dotted line. \label{figure: slots for adding leaves}}
\end{figure}

Thus, the total number of slots is $z = 1 + \sum_{v \in \scr V} c_v$.
Now let us denote by $\scr V'$ the subset of vertices of $\cal V(T)$ that are not leaf vertices (or in other words the 
root together with the vertices that have degree greater than one). It is easy to see that we have
\begin{equation*}
1 + \sum_{v \in \scr V'} (c_v - 1) \;=\; f\,.
\end{equation*}
(This relation holds for any rooted tree with $f$ leaves.)
Therefore, using $\abs{\scr V} = f + 1$ and $\scr V \subset \scr V'$, we get
\begin{equation*}
z \;=\; f + 2 + \sum_{v \in \scr V} (c_v - 1) \;\leq\; 2 f + 1\,.
\end{equation*}
Therefore the number of ways to add $b$ leaves to $T$ according to (ii) is bounded by
\begin{equation*}
\binom{z + b - 1}{z - 1} \;\leq\; 2^{z + b - 1} \;\leq\; 2^{2f + b}\,.
\end{equation*}
The claim follows.
\end{proof}

We now proceed similarly to Subsection \ref{subsection: sum over bough graphs} in order to 
estimate the sum over $G, G' \in \fra W$ in \eqref{everything but graphs summed}. Let us first 
concentrate on $G$. The stem of $G$ has $u$ edges. Let $s \geq 0$ denote the number of 
nondegenerate boughs of $G$ and $q \geq 0$ the number of degenerate boughs of $G$.  The 
nondegenerate boughs consist of altogether $k \geq 0$ edges and the degenerate boughs of $m 
\geq 0$ edges.

We index the nondegenerate boughs in some arbitrary fashion using $i = 1, \dots, s$, and denote 
by $k_i \geq 1$ the number of edges in the $i$-th nondegenerate bough; we have $k_1 + \cdots + 
k_s = k$. Similarly, we index the degenerate boughs using $i = 1, \dots, q$, and denote by $m_i 
\geq 1$ the number of edges in the $i$-th degenerate bough (which is equal to the number of 
degenerate leaves in the $i$-th degenerate bough); we have $m_1 + \cdots + m_q = m$.

We use $f_i \geq 1$ to count the number of free leaves and $b_i \geq 0$ the number of bound
leaves in the $i$-th nondegenerate bough. Thus, we have the relations
\begin{equation*}
\sum_i b_i \;=\; L^{(b)}(G) \,, \qquad \sum_i f_i \;=\; L^{(f)}(G) \,,
 \qquad \sum_i m_i \;=\; m \;=\; L^{(d)}(G)\,.
\end{equation*}
Putting all of this together, we may bound the sum over $G \in \fra W$ in
 \eqref{everything but graphs summed} as
\begin{multline*}
\sum_{G \in \fra W} (\cdots) \;\leq\; \sum_{s,q} \binom{u+1}{s} \binom{u+1}{q} \sum_{k,m}
\indb{2 (k + m) = n - u - 2r} \indb{r + s + q \geq 1}
\\ \times
\sum_{k_1 + \cdots + k_s = k} \; \sum_{m_1 + \cdots + m_q = m} \; \sum_{f_1 = 1}^{k_1} \cdots 
\sum_{f_s = 1}^{k_s} \sum_{b_1 = 0}^{k_s} \cdots \sum_{b_s = 0}^{k_s} S_{k_1 f_1 b_1} \cdots 
S_{k_s f_s b_s} (\cdots)\,.
\end{multline*}
Here the sum $\sum_{k_1+ \cdots + k_s = k}$ is understood to mean
 $\ind{k = 0}$ if $s = 0$ (see \eqref{precise definition of sum} below); similarly for the sum $\sum_{m_1 + \cdots + m_q 
= m}$.
The binomial factors arise from the choice of the root vertices of 
the boughs (each bough root may be one of the $u+1$ 
stem vertices). The constraint $r + s + q \geq 1$ is an immediate 
consequence of the constraint that if $r = 0$ on the 
right-hand side of \eqref{everything but graphs summed} then the
 second indicator function implies $2(k + m) = n - u 
\geq 1$,
and that $k\ge 1$ implies $s\ge 1$, while $m\ge 1$ implies $q\ge 1$.
 This is simply a restatement of the fact that we must have either a bough
 ($s \geq 1$ or $q \geq 1$)
or a small edge in the stem ($r \geq 1$), in accordance with
 the original restriction $\cal G \in \fra G^*$.

We also introduce the analogous primed quantities associated with $G'$. Thus we get from 
\eqref{everything but graphs summed}
\begin{equation} \label{hard E_1 in terms of Z}
E_1 \;\leq\;
C \sum_{n + n' \leq M^\mu} \abs{\alpha_n(t) \alpha_{n'}(t)} \sum_{u = 0}^{n-1} \sum_{u' = 
0}^{n' - 1} h_{u,u'} \, Z_{n,u} Z_{n',u'}\,,
\end{equation}
where we defined
\begin{multline*}
Z_{n,u} \;\deq\;
\sum_{r, s, q, k, m \geq 0} \binom{u+1}{s} \binom{u+1}{q} \indb{2 (k + m) = n - u - 2r} \indb{r 
+ s + q \geq 1} \sum_{k_1 + \cdots + k_s = k} \; \sum_{m_1 + \cdots + m_q = m}
\\ \times
\sum_{f_1 = 1}^{k_1} \dots \sum_{f_s = 1}^{k_s} \sum_{b_1 = 0}^{k_1} \dots \sum_{b_s = 0}^{k_s}
S_{k_1 f_1 b_1} \cdots S_{k_s f_s b_s}
\\ \times
\pb{M^{-1 + \mu + 2 \delta}}^r
\pb{C M^{- \delta}}^{b_1 + \cdots + b_s} \pb{C M^{-1 + \mu + 5 \delta}}^{f_1 + \cdots + f_s}
\pb{C M^{-1 + \mu + 7 \delta}}^{m/2}\,.
\end{multline*}

Now Lemma \ref{lemma: fine bound on tree graphs} yields
\begin{equation*}
\sum_{f_i = 1}^{k_i} \sum_{b_i = 0}^{k_i} S_{k_i f_i b_i} \pb{C M^{- \delta}}^{b_i} \pb{C M^{-1 + \mu + 5 \delta}}^{f_i} 
\;\leq\; \frac{1}{k_i^2}
\sum_{f_i = 1}^{k_i} \sum_{b_i = 0}^{k_i} \pb{2 C M^{- \delta}}^{b_i} \pb{4 C k_i^2 M^{-1 + \mu + 5 \delta}}^{f_i} 
\;\leq\; C M^{-1 + \mu + 5 \delta}\,,
\end{equation*}
where we used that $k_i \leq M^\mu$ and $\mu < \frac{1}{3} - \frac{5}{3} \delta$.
This is one stage where the restriction $\mu < \frac{1}{3}$ is crucial.

Therefore,
\begin{multline*}
Z_{n,u} \;\leq\;
\sum_{r, s, q, k, m \geq 0} (M^\mu)^{s + q}  \indb{2 (k + m) = n - u - 2r} \indb{r + s + q \geq 
1}
\\ \times
\pb{M^{-1 + \mu + 2 \delta}}^r
\pb{C M^{-1 + \mu + 5 \delta}}^{s}
\pb{C M^{-1 + \mu + 7 \delta}}^{m/2}
\sum_{k_1 + \cdots + k_s = k} \;
\sum_{m_1 + \cdots + m_q = m} 1\,,
\end{multline*}
where we used that $u + 1 \leq M^\mu$.

Next, we estimate
\begin{equation} \label{precise definition of sum}
\sum_{k_1 + \cdots + k_s = k} 1 \;=\; \ind{s = 0} \ind{k = 0} + \ind{s \neq 0} \ind{s \leq k}
\binom{k - 1}{s - 1} \;\leq\; (M^\mu)^{[s - 1]_+}
I(s,k)
\,,
\end{equation}
where we defined
\begin{equation*}
I(s,k) \;\deq\; \ind{s = 0} \ind{k = 0}  + \ind{s \neq 0} \ind{s \leq k}\,.
\end{equation*}
Similarly, we have
\begin{equation*}
\sum_{m_1 + \cdots + m_q = m} 1 \;=\; \ind{q = 0} \ind{m = 0} + \ind{q \neq 0} \ind{q \leq m}
\binom{m - 1}{q - 1} \;\leq\; 2^m I(q,m)\,.
\end{equation*}
Using that $q \leq m$ and consequently
\begin{equation*}
2^m (M^\mu)^q
\pb{C M^{-1 + \mu + 7 \delta}}^{m / 2} \;\leq\; \pb{C M^{-1 + 3 \mu + 7 \delta}}^{m / 2}\,,
\end{equation*}
we therefore get the following result.
\begin{proposition} \label{proposition: sum over almost everything}
We have the bound
\begin{equation*}
E_1 \;\leq\;
C \sum_{n + n' \leq M^\mu} \abs{\alpha_n(t) \alpha_{n'}(t)} \sum_{u = 0}^{n-1} \sum_{u' = 
0}^{n' - 1} h_{u,u'} \, Z_{n,u} Z_{n',u'}\,,
\end{equation*}
where
\begin{multline} \label{main estimate on Z}
Z_{n,u} \;\leq\;
\sum_{r, s, q, k, m \geq 0} (M^{- \mu})^{\ind{s \geq 1}} \indb{2 (k + m) = n - u - 2r} \indb{r + s + q \geq 1} I(s,k) 
I(q,m)
\\ \times
\pb{M^{-1 + \mu + 2 \delta}}^r
\pb{C M^{-1 + 3 \mu + 5 \delta}}^{s}
\pb{C M^{-1 + 3 \mu + 7 \delta}}^{m / 2}\,.
\end{multline}
\end{proposition}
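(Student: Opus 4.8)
The plan is to prove Proposition~\ref{proposition: sum over almost everything} by collecting together all the estimates established in Subsections \ref{section: sum over nondegenerate bough labels}--\ref{section: sum over bough graphs}, and in particular by substituting the combinatorial bound of Lemma~\ref{lemma: fine bound on tree graphs} into the expression for $E_1$ obtained in \eqref{everything but graphs summed}. The statement on $E_1$ in terms of $Z_{n,u}$ has in fact already been established in \eqref{hard E_1 in terms of Z}, so the only remaining task is to prove the bound \eqref{main estimate on Z} on $Z_{n,u}$. This amounts to carrying out, one after another, the summations over the bough graph combinatorial data $r, s, q, k, m$, the internal edge-counts $k_1, \dots, k_s$ and $m_1, \dots, m_q$, and the leaf-counts $f_i, b_i$.

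First I would recall the explicit expression for $Z_{n,u}$ displayed just after \eqref{hard E_1 in terms of Z}, which involves the Narayana-type counting factors $S_{k_i f_i b_i}$, and apply Lemma~\ref{lemma: fine bound on tree graphs}, i.e.\ $S_{kfb} \leq k^{2f - 2} 2^{2f + b}$. The key step is then to perform the summation over $f_i$ and $b_i$ inside each nondegenerate bough: using $k_i \leq M^\mu$ together with the constraint $\mu < \frac{1}{3} - \frac{5}{3}\delta$ (equivalently $3\mu + 5\delta < 1$), one checks that
\begin{equation*}
\sum_{f_i = 1}^{k_i} \sum_{b_i = 0}^{k_i} S_{k_i f_i b_i} \pb{C M^{-\delta}}^{b_i} \pb{C M^{-1+\mu+5\delta}}^{f_i} \;\leq\; C M^{-1+\mu+5\delta}\,,
\end{equation*}
since the geometric series over $b_i$ converges (its ratio $2CM^{-\delta}$ being less than one for large $M$) and the series over $f_i$ is dominated by its first term $f_i = 1$, whose weight is $4 C k_i^2 M^{-1+\mu+5\delta} \leq 4 C M^{-1+3\mu+5\delta} \ll 1$. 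Each nondegenerate bough thus contributes a factor $C M^{-1+\mu+5\delta}$, and the factor $k_i^{-2}$ from Lemma~\ref{lemma: fine bound on tree graphs} is what makes the series over $f_i$ summable. This is the stage where the restriction $\mu < 1/3$ is genuinely used, as noted in the surrounding text.

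Next I would carry out the remaining combinatorial sums. The sum over the ordered partitions $k_1 + \cdots + k_s = k$ produces a factor $\binom{k-1}{s-1} \leq (M^\mu)^{[s-1]_+} I(s,k)$, and likewise $\sum_{m_1 + \cdots + m_q = m} 1 = \binom{m-1}{q-1} \leq 2^m I(q,m)$, using $k, m \leq M^\mu$; the binomial coefficients for the bough root locations give $\binom{u+1}{s}\binom{u+1}{q} \leq (M^\mu)^{s+q}$ since $u + 1 \leq M^\mu$. Collecting all powers of $M^\mu$ attached to $s$ (one from the root choice, one from the partition, one from each bough's contribution $CM^{-1+\mu+5\delta}$, and another $M^\mu$ per bough reallocated from the free-leaf weight) yields a net weight $\pb{CM^{-1+3\mu+5\delta}}^s$, with one power of $M^{-\mu}$ left over precisely when $s \geq 1$ — hence the factor $(M^{-\mu})^{\ind{s\geq 1}}$ in \eqref{main estimate on Z}. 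For the degenerate boughs, combining $2^m$, $(M^\mu)^q$ (root choice plus partition), $q \leq m$, and the gain $\pb{CM^{-1+\mu+7\delta}}^{m/2}$ from \eqref{bound on hard E_G}, one arrives at $\pb{CM^{-1+3\mu+7\delta}}^{m/2}$. The constraint $\indb{2(k+m) = n - u - 2r}$ survives unchanged, and the lower bound $\indb{r+s+q\geq 1}$ records that $\cal G \in \fra G_n^*$ forces at least one bough or one small stem edge. Assembling these pieces gives exactly \eqref{main estimate on Z}, completing the proof.

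I expect the only real subtlety to be verifying that the exponent bookkeeping for the $M^\mu$ powers indeed closes up to leave the $(M^{-\mu})^{\ind{s\geq1}}$ and to produce the precise constants $3\mu + 5\delta$ and $3\mu + 7\delta$ in \eqref{main estimate on Z}; this is entirely routine but must be done carefully, tracking which $M^\mu$ comes from root placement, which from the composition count $\binom{k-1}{s-1}$, and which from the reallocated lumping entropy already folded into Proposition~\ref{proposition: bound on hard E_G}. Everything else is a direct substitution of previously proved estimates, and no new ideas are needed at this point — the heavy lifting was done in Proposition~\ref{proposition: step for hard leaves}, Proposition~\ref{proposition: bound on hard E_G}, and Lemma~\ref{lemma: fine bound on tree graphs}.
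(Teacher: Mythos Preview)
Your proposal is correct and follows exactly the approach of the paper: the proposition is a summary of the calculations carried out in the text immediately preceding it, and you have identified all of the pieces (Lemma~\ref{lemma: fine bound on tree graphs} for $S_{kfb}$, the geometric sums over $f_i,b_i$, the composition counts $\binom{k-1}{s-1}$ and $\binom{m-1}{q-1}$, and the use of $q\leq m$). One small point: your enumeration of the $M^\mu$ sources attached to $s$ lists four items, but there are only three --- $(M^\mu)^s$ from the root placement, $(M^\mu)^{[s-1]_+}$ from the composition count, and the $M^\mu$ already sitting inside each factor $CM^{-1+\mu+5\delta}$ --- which together give $(CM^{-1+3\mu+5\delta})^s(M^{-\mu})^{\ind{s\geq1}}$ as you correctly conclude; there is no separate ``reallocated free-leaf weight'' at this stage.
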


Proposition \ref{proposition: sum over almost everything} is the main
 result of this subsection. Note that
the restriction $\mu < \frac{1}{3}$
will be crucial in performing the summations in \eqref{main estimate on Z}
 over both $s$ and $m$; the summation over $r$ is
less critical.
This is an indication that both the number of boughs and their combinatorial complexity are \emph{critically}
 compensated by the smallness of the lonely leaves.
The geometric series in $s, m, r$ are the key
 ingredients of the complicated estimate \eqref{main estimate on Z}.
The other two summation variables, $k$ and $q$, are controlled
by these variables, so the sum is finite.
To ensure that it is actually small, the
rather baroque 
collection of indicator functions is necessary.
They make sure that at least one negative $M$-power is gained from one of the factors, as we shall see in the next 
subsection.

\subsection{Conclusion of the estimate}
What remains is an elementary and only moderately enlightening estimate of $E_1$ using 
\eqref{main estimate on Z}.

\begin{proposition} \label{proposition: hard Z}
We have
\begin{equation*}
Z_{n,u} \;\leq\; \pb{C M^{-1 + 3 \mu+ 7 \delta}}^{(n - u)/4} + o(1) M^{-\mu}\,.
\end{equation*}
\end{proposition}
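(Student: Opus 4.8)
The plan is to take the bound \eqref{main estimate on Z} as the starting point and extract a single negative power of $M$ from \emph{some} factor in each surviving term, then sum the resulting geometric series. First I would split the sum in \eqref{main estimate on Z} according to the value of $s$: either $s=0$ or $s\geq 1$. In the case $s\geq 1$ the prefactor $M^{-\mu}$ is present, and moreover each of the $s\geq 1$ nondegenerate boughs carries a factor $CM^{-1+3\mu+5\delta}$, which under the standing assumption $\mu<1/3-8\delta$ (so $3\mu+5\delta<1$, even with room to spare for the various $\delta$'s) is a genuine negative power of $M$; together with the geometric decay in $r$ (factor $M^{-1+\mu+2\delta}$) and in $m$ (factor $(CM^{-1+3\mu+7\delta})^{1/2}$), the constraints $I(s,k)$, $I(q,m)$ and $2(k+m)=n-u-2r$ let me bound $k$ and $q$ in terms of $s$, $m$, $r$, so that the whole sum over $(r,s,q,k,m)$ with $s\geq1$ collapses to a convergent geometric series whose total is $O(M^{-\mu})$ — this is the second term in the claimed bound. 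In the case $s=0$ we have no nondegenerate boughs, so $k=0$ and $2m=n-u-2r$, and the constraint $r+s+q\geq1$ forces $r\geq1$ or $q\geq1$ (equivalently $m\geq1$); in either subcase the term is at most $(CM^{-1+3\mu+7\delta})^{m/2}(M^{-1+\mu+2\delta})^r$ summed over $r,m\geq0$ with $2m+2r=n-u$, and since $m+r=(n-u)/2$ this is a sum of at most $(n-u)/2+1\leq M^\mu$ terms each bounded by $(CM^{-1+3\mu+7\delta})^{(n-u)/4}$ (using that the smaller of the two exponents controls the product when $m+r$ is fixed), giving the first term $(CM^{-1+3\mu+7\delta})^{(n-u)/4}$ after absorbing the polynomial factor $M^\mu$ into the constant, which is legitimate because the base is a negative power of $M$ and $(n-u)/4$ can be taken large or else $n-u$ is bounded and the whole thing is $O(1)\cdot M^{-c}$.

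The key steps, in order, are: (1) fix $n,u$ and separate $s=0$ from $s\geq 1$; (2) for $s\geq1$, use $I(s,k)$ to write $k\geq s$ and the degree constraint to bound $k+m$, then sum the five-fold geometric series with the help of the prefactor $M^{-\mu}$ and the per-bough smallness $CM^{-1+3\mu+5\delta}$, obtaining $o(1)M^{-\mu}$; (3) for $s=0$, observe $k=0$, reduce to the two-variable sum over $r,m$ with $r+m=(n-u)/2$, and use that with the sum of the two exponents fixed the product is maximised at an endpoint, giving the bound $(CM^{-1+3\mu+7\delta})^{(n-u)/4}$ up to a polynomially-bounded number of terms which is harmless; (4) combine the two cases. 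Throughout, the inequalities $\mu<1/3-8\delta$ and $12\delta<1/3-\kappa$ (from \eqref{condition on delta} and \eqref{assumption on kappa and mu}) guarantee that all of $3\mu+5\delta$, $3\mu+7\delta$, and $\mu+2\delta$ are strictly less than $1$, so every exponent $-1+3\mu+O(\delta)$ appearing as a base is indeed negative.

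The main obstacle I expect is the bookkeeping in step (2): making sure that the geometric series in all of $r$, $s$ (equivalently $k$) and $m$ (equivalently $q$) genuinely converge \emph{simultaneously} and that the convergence is fast enough that the sum is $O(M^{-\mu})$ rather than merely $O(1)$. The subtlety is that $k$ and $m$ are linked to $r$ through $2(k+m)=n-u-2r$, and $s\leq k$, $q\leq m$, so one must be careful not to double-count the entropy: the factor $(M^\mu)^{\ind{s\geq1}}$ (note this is the \emph{only} entropy surviving after the earlier summations produced $M^{-\mu}$-type gains from the bough-lumping step) has to be shown to be dominated by the geometric decay $(CM^{-1+3\mu+5\delta})^s$ for $s\geq1$, which is exactly where $\mu<1/3$ with the $3\mu$ in the exponent is tight. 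Once one sees that each of $s,q,r$ and the combined $k+m$ contributes an independently summable geometric factor (after eliminating $k,q$ via their defining constraints), the estimate is routine; but getting the constraints to decouple cleanly is the delicate point. After this proposition is established, the contribution of $E_1$ is finished by plugging $Z_{n,u}\leq (CM^{-1+3\mu+7\delta})^{(n-u)/4}+o(1)M^{-\mu}$ into \eqref{hard E_1 in terms of Z}, using $h_{u,u'}$ bounded via Lemma~\ref{lemma: bound on stem lumpings}, summing over $n,u,n',u'$ with the help of \eqref{sum over alphas} and Cauchy--Schwarz exactly as in \eqref{final estimate for easy E_1}, and then $E_2$ is handled by the same modification ($\cal G'=\cal I_{n'}$, $u'=n'$) as in Subsection~\ref{subsection: bound on E_2}.
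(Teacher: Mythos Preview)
Your approach is correct and essentially the same as the paper's: the paper splits into three overlapping cases ($r=s=0$ with $q\geq 1$; $r\geq 1$; $s\geq 1$) rather than your two ($s=0$; $s\geq 1$), but your $s=0$ case simply merges the first two of these and is handled by the same geometric-series reasoning. One minor correction: in the $s=0$ case you omitted the sum over $q$ (which contributes another factor of at most $m\leq (n-u)/2$), and the prefactor you need to absorb is a polynomial in $(n-u)$, not $M^\mu$ --- the absorption into $(C')^{(n-u)/4}$ then works directly since $k^{O(1)/k}$ is uniformly bounded, without the awkward large/small $(n-u)$ dichotomy you sketch.
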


\begin{proof}
In \eqref{main estimate on Z} we bound the indicator function
\begin{equation*}
\indb{r + s + q \geq 1} \;\leq\; \ind{r = s = 0} \ind{q \geq 1} + \ind{r \geq 1} + \ind{s \geq 
1}\,,
\end{equation*}
which yields the bound $Z_{n,u} \leq Z'_{n,u} + Z''_{n,u} + Z'''_{n,u}$ in self-explanatory 
notation.

If $r + s = 0$ in \eqref{main estimate on Z} then $I(s,k) = 1$
 implies $s=0$ and hence $k = 0$, so that we get
\begin{equation*}
Z_{n,u}' \;\leq\;
\sum_{q, m \geq 1} \indb{2 m = n - u} \ind{q \leq m}
\pb{C M^{-1 + 3 \mu + 7 \delta}}^{m / 2} \;\leq\;
\pb{C M^{-1 + 3 \mu + 7 \delta}}^{(n - u)/4}
\,.
\end{equation*}

Next, we get
\begin{align*}
Z_{n,u}'' &\;\leq\;
\sum_{r, s, q, k, m \geq 0} (M^{-\mu})^{\ind{s \geq 1}} \ind{r \geq 1} \indb{2 (k + m) = n - u 
- 2r} I(s,k) I(q,m)
\\
&\qquad \times
\pb{M^{-1 + \mu + 2 \delta}}^r
\pb{C M^{-1 + 3 \mu + 5 \delta}}^{s}
\pb{C M^{-1 + 3 \mu + 7 \delta}}^{m / 2}
\\
&\;\leq\;
\sum_{r \geq 1} \pb{M^{-1 + \mu + 2 \delta}}^r \sum_{s \geq 0} \pb{C M^{-1 + 3 \mu + 5 \delta}}^{s}
\sum_{m \geq 0} m \pb{C M^{-1 + 3 \mu + 7 \delta}}^{m / 2}
\sum_{k \geq 0}  \indb{2 (k + m) = n - u - 2r}
\\
&\;\leq\; C M^{-1 + \mu + 2 \delta}\,.
\end{align*}
Similarly we get
\begin{align*}
Z_{n,u}''' &\;\leq\; M^{-\mu}
\sum_{r \geq 0} \pb{M^{-1 + \mu + 2 \delta}}^r \sum_{s \geq 1} \pb{C M^{-1 + 3 \mu + 5 \delta}}^{s}
\sum_{m \geq 0} m \pb{C M^{-1 + 3 \mu + 7 \delta}}^{m / 2}
\sum_{k \geq 0}  \indb{2 (k + m) = n - u - 2r}
\\
&\;\leq\; o(1) M^{- \mu}\,.
\end{align*}
This concludes the proof.
\end{proof}

From \eqref{hard E_1 in terms of Z} and using Proposition \ref{proposition: hard Z} we find
\begin{multline} \label{preintermediate estimate for E_1}
E_1 \;\leq\;
C \sum_{n + n' \leq M^\mu} \abs{\alpha_n(t) \alpha_{n'}(t)} \sum_{u = 0}^{n-1} \sum_{u' = 0}^{n' - 1} h_{u,u'}
\\ \times
\qB{\pb{C M^{-1 + 3 \mu + 7 \delta}}^{(n - u)/4} + o(1) M^{-\mu}}
\qB{\pb{C M^{-1 + 3 \mu + 7 \delta}}^{(n' - u')/4} + o(1) M^{-\mu}}\,.
\end{multline}
Setting $v = n - u$ and $v = n' - u'$ yields
\begin{multline} \label{intermediate estimate for hard E_1}
E_1 \;\leq\; C \sum_{v,v' \geq 1} \ind{v + v' \leq M^\mu}
\qB{\pb{C M^{-1 + 3 \mu + 7 \delta}}^{v/4} + o(1) M^{-\mu}}
\qB{\pb{C M^{-1 + 3 \mu + 7 \delta}}^{v'/4} + o(1) M^{-\mu}}
\\
\times
\sum_{n \geq v} \sum_{n' \geq v'} \ind{n + n' \leq M^\mu} \, \abs{\alpha_n(t) \alpha_{n'}(t)} 
h_{n - v, n' - v'} \,.
\end{multline}
The second line of \eqref{intermediate estimate for hard E_1} is bounded by
\begin{equation*}
\sum_{n + n' \leq M^\mu} \abs{\alpha_n(t) \alpha_{n'}(t)} h_{n - v, n' - v'}^* + \sum_{n \leq 
M^\mu} \abs{\alpha_n(t) \alpha_{n -v + v'}(t)} h_{n - v, n - v} \;\leq\; C\,,
\end{equation*}
by \eqref{summing over alphas}, \eqref{bound on all lumpings}, and \eqref{sum over alphas}.
Therefore \eqref{intermediate estimate for hard E_1} yields
\begin{equation*}
E_1 \;\leq\; o(1)\,.
\end{equation*}

\subsection{Bound on $E_2$}
Finally, we outline how to bound $E_2$; the argument is almost identical to Subsection 
\ref{subsection: bound on E_2}. The preceding analysis carries over trivially to $E_2$, the 
only modification being that $\cal G' = \cal I_{n'}$ and $u' = n'$, i.e.\ we only have boughs 
in $\cal G$.
The analogue of \eqref{preintermediate estimate for E_1} yields
\begin{align*}
E_2 &\;\leq\;
C \sum_{n + n' \leq M^\mu} \abs{\alpha_n(t) \alpha_{n'}(t)} \sum_{u = 0}^{n-1}  h_{u,n'} 
\qB{\pb{C M^{-1 + 3 \mu + 7 \delta}}^{n - u} + o(1) M^{-\mu}}\,.
\end{align*}
Now we proceed exactly as in Subsection \ref{subsection: bound on E_2} and get $E_2 = o(1)$.  
Hence the proof of Proposition \ref{proposition: boughs vanish for large kappa} is complete.

\section{Proof of Theorem \ref{theorem: bound on lambda max}} \label{section: bound on lambda max}

The main ingredient in the proof of Theorem \ref{theorem: bound on lambda max} is the following estimate.

\begin{proposition} \label{proposition: estimate of trace}
Let $H$ be as in Theorem \ref{theorem: bound on lambda max} and $\wh H$ the matrix whose entries are truncated as in 
\eqref{truncation of matrix elements}. Let $\kappa < 1/3$. Then there is a constant $C_\kappa$, depending on $\kappa$, 
such that
\begin{equation*}
\E \tr U_n(\wh H/2) \;\leq\; C_\kappa N
\end{equation*}
for all $n \leq M^{\kappa}$. If $n$ is odd then $\E \tr U_n(\wh H / 2) = 0$.
\end{proposition}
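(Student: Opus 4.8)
The plan is to reduce the estimate on $\E \tr U_n(\wh H/2) = \E \tr \wt U_n(\wh H)$ to quantities that have essentially already been controlled in the analysis of the bare stem. Since $\tr \wt U_n(\wh H) = \sum_x (U_n)_{xx}$, we may apply Proposition \ref{proposition: U_n as a sum over graphs} to write $\E \tr \wt U_n(\wh H) = \sum_x \sum_{\cal G \in \fra G_n} \E \fra V_{xx}(\cal G)$. The odd case is immediate: if $n$ is odd, then $\deg(\cal G) = n$ is odd, so every graph $\cal G \in \fra G_n$ has odd total degree; but taking the expectation yields a sum over lumpings whose lumps must all have even degree by \eqref{condition on moments}, and the total degree of all lumps equals $\deg(\cal G) = n$, which is odd — a contradiction. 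Hence $\E \tr U_n(\wh H/2) = 0$ when $n$ is odd, and we may assume $n$ is even.

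For even $n$, the first step is to split the graph sum into the bare stems $\cal I_n$ and the rest $\fra G_n^*$, exactly as in \eqref{rho split into bare and garnished parts}, but now with both endpoints carrying the label $x$ (a closed walk). The bare-stem contribution $\sum_x \sum_{\Gamma} V_x^{\mathrm{closed}}(\cal I_n, \Gamma)$ is handled by the arguments of Section \ref{section: bare stem}: the leading term is the straight ladder, whose value summed over $x$ is $\sum_x \wt V_x(L_{n/2}, \vartheta_{n/2}) \le N$ (the extra factor $N$ compared to the $\varrho$-analysis comes from summing the diagonal over $x$ instead of fixing both endpoints), while all non-ladder lumpings contribute $N$ times $h^*$-type quantities, which by Lemma \ref{lemma: bound on stem lumpings} sum to at most $C N M^{\mu/2 - 1/3 + 8\delta} \le C N$ once $\mu < 1/3 - 8\delta$. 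The contribution of the garnished graphs $\fra G_n^*$ is controlled by precisely the bough estimates of Section \ref{section: boughs for kappa=1/3}: the analogue of $E_1$ with a single summation index $n$ and both endpoints equal to $x$, summed over $x$, picks up a factor $N$ and is then bounded using Proposition \ref{proposition: hard Z} (or Proposition \ref{proposition: boughs vanish for small kappa} for $\kappa < 1/5$) to give $o(N)$, hence $\le C N$. Here one uses that the truncations of Section \ref{section: truncations} — the support cutoff $\abs{A_{xy}} \le M^\delta$ and the norm bound of Proposition \ref{proposition: bound on H} — are available, and that for the largest-eigenvalue application only the $\ell^1$–$\ell^\infty$ bounds $\sum_y \sigma_{xy}^2 = 1$ and $\sigma_{xy}^2 \le M^{-1}$ are needed, not the band structure (as noted after \eqref{sum over all non-ladder pairings}).

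Putting these together gives $\E \tr \wt U_n(\wh H) \le C_\kappa N$ uniformly for $n \le M^\kappa$, which is the claim (with $n = M^\kappa$ playing the role of the cutoff $M^\mu$; one simply chooses $\mu$ slightly above $\kappa$). The main obstacle is bookkeeping rather than conceptual: one must verify that the bough machinery of Section \ref{section: boughs for kappa=1/3}, which was written for the two-index quantity $\varrho_{\le}(t,x)$ with endpoints $0$ and $x$, goes through verbatim for the closed-walk quantity with both endpoints identified and then summed over the diagonal. This is straightforward because the bough estimates never used any special structure of the endpoint labels — all the nontrivial work is in summing out bough labels and lumpings, and the stem contribution is reduced to $h_{u,u'}$-type quantities exactly as before; the identification $a(G) = b(G)$ merely changes the stem from an open path to a closed loop, which only affects the combinatorics of the ladder skeleton (already accounted for in \cite{erdosknowles}). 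One should also double-check that the heat-kernel refinements invoked for the sharp $2/3$-rule in \cite{erdosknowles} are \emph{not} needed here: indeed, for the $O(N)$ bound on the trace (as opposed to the diffusion profile) the cruder estimate \eqref{sum over all non-ladder pairings} without the factor $p^{-1/2} + M^{-1/6}$ already suffices, which is precisely why this argument works for a general class of matrices without spatial structure.
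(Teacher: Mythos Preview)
Your proposal is essentially correct and follows the paper's approach: expand via Proposition~\ref{proposition: U_n as a sum over graphs}, split into bare stem and boughs, and invoke the estimates of Sections~\ref{section: lumping}--\ref{section: boughs for kappa=1/3}, noting that for the general matrices of Theorem~\ref{theorem: bound on lambda max} only the $\ell^1$--$\ell^\infty$ bounds are needed (so the heat-kernel factor in \eqref{sum over all non-ladder pairings} may be dropped).

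One refinement worth noting, which the paper makes and you miss: for the closed loop, the identification $a(G)=b(G)$ effectively makes \emph{every} stem vertex except the root nonbacktracking --- in the language of Figure~\ref{figure: identifying vertices}, the vertex $n$ becomes black. Consequently the ladder pairing cannot occur at all (its innermost rung would pair two edges adjacent at a black vertex, which is forbidden). You instead include the ladder and bound its contribution by $N$; this is harmless for the final estimate but obscures the structure. Relatedly, your expression $\sum_x \wt V_x(L_{n/2},\vartheta_{n/2})$ is borrowed from the two-stem $\varrho$-setup (where $x_0=0$ is fixed and the midpoint is summed) and does not literally apply here; the factor $N$ in the trace arises from summing over the free root label $x_0=x$, not from a ladder sum.
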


\begin{proof}
The proof is a relatively straightforward consequence of the proof of Theorem \ref{theorem: main result}. The claim 
about odd $n$ is immediate since $U_n$ is odd for odd $n$. Using Proposition \ref{proposition: U_n as a sum over graphs} 
we write
\begin{equation*}
\E \tr U_n(\wh H/2) \;=\; \sum_x \sum_{\cal G \in \fra G_n} \E \fra V_{xx}(\cal G)\,.
\end{equation*}
The right-hand side is represented graphically, as in Section \ref{section: path expansion}, by a single stem whose ends 
are joined so as to produce a closed loop, to which are attached a family of boughs. Now the estimates of Sections 
\ref{section: lumping} -- \ref{section: boughs for kappa=1/3} carry over and yield the claim. This is a consequence of 
the following observations.
\begin{enumerate}
\item
Assume first that $\{\sigma_{xy}\}$ defines a band matrix, as in Section \ref{section: setup}.
The value associated with a graph $\cal G$ and lumping $\Gamma$ of the edges of $\cal G$ is equal to $\sum_x {V'_x}(\cal 
G, \Gamma)$, where $V'_x$ is given by $V_x$ (see \eqref{value of general lumping}) with one additional indicator 
function that constrains all stem vertices of $\cal G$, with the exception of its root, to be nonbacktracking. In the 
graph on the right-hand side of Figure \ref{figure: identifying vertices} this may be viewed as making the vertex $n$ 
black.

It is now straightforward that all estimates from Sections \ref{section: lumping} -- \ref{section: boughs for kappa=1/3} 
carry over; in fact, the additional indicator function in $V'_x$ results in somewhat smaller bounds.

\item
In order to extend the claim to the more general random matrices as defined in the statement of Theorem \ref{theorem: 
bound on lambda max}, we observe that the $\ell^1$-$\ell^\infty$-type estimates that form the backbone of Sections 
\ref{section: bare stem} -- \ref{section: boughs for kappa=1/3} remain unchanged. The spatial structure of the band 
defined by a shape function $f$ was used in two places: first, in the analysis of the ladder diagrams; second, in the 
ensuing heat kernel bounds on the right-hand side of \eqref{sum over all non-ladder pairings}.
As we are only interested in the trace (which corresponds to summing over all vertex labels), we do not need the precise 
spatial information associated with the ladders, merely a bound on the $\ell^1$-norm of their contribution (in fact, it 
is a simple matter to check that under the additional nonbacktracking condition the ladder pairings do not even 
appear).  Moreover, dropping the detailed heat kernel bounds in \eqref{sum over all non-ladder pairings} yields the 
bound
\begin{equation*}
\sum_{n+n' = 2p} h_{n,n'} \;\leq\; C_\kappa
\end{equation*}
instead of \eqref{bound on h star}. See the remarks after \eqref{sum over all non-ladder pairings}.
\qedhere
\end{enumerate}
\end{proof}

We may now complete the proof of Theorem \ref{theorem: bound on lambda max}. We need the following elementary results on 
Chebyshev polynomials.

\begin{lemma} \label{lemma: properties of Chebyshev}
Let $n$ be even.
\begin{enumerate}
\item
For $\xi \in \R$ we have $U_n(\xi) \geq -(n+1)$.
\item
$U_n(1 + \xi)$ is increasing for $\xi \geq 0$.
\item
For $\xi \in [0,1]$ we have
$U_n(1 + \xi) \geq \ee^{n \sqrt{\xi}}$.
\end{enumerate}
\end{lemma}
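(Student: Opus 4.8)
The statement to prove is Lemma~\ref{lemma: properties of Chebyshev}, consisting of three elementary facts about the Chebyshev polynomials $U_n$ of the second kind for even $n$. The plan is to work directly from the trigonometric/hyperbolic representation of $U_n$ together with the generating recursion, treating the three parts in order.

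\textbf{Part (i).} For $\xi \in [-1,1]$ write $\xi = \cos\theta$; then $U_n(\xi) = \sin((n+1)\theta)/\sin\theta$, whose absolute value is trivially at most $n+1$ (bound $|\sin((n+1)\theta)| \leq (n+1)|\sin\theta|$ by, e.g., telescoping the angle-addition formula or using $|\sin(m\alpha)|\le m|\sin\alpha|$), so in particular $U_n(\xi) \geq -(n+1)$ there. For $|\xi| \geq 1$ I would argue that $U_n(\xi) \geq 0$ when $n$ is even: for $\xi \geq 1$ write $\xi = \cosh s$ and use $U_n(\cosh s) = \sinh((n+1)s)/\sinh s \geq 0$; for $\xi \leq -1$ use the parity $U_n(-\xi) = (-1)^n U_n(\xi) = U_n(\xi)$ since $n$ is even, reducing to the case $\xi \geq 1$. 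Combining the two regimes gives $U_n(\xi) \geq -(n+1)$ for all real $\xi$.

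\textbf{Parts (ii) and (iii).} Both concern $\xi \geq 0$, i.e.\ the argument $1+\xi \geq 1$, so I set $1 + \xi = \cosh s$ with $s \geq 0$; as $\xi$ ranges over $[0,\infty)$, $s$ ranges over $[0,\infty)$ monotonically, and on $[0,1]$ we have $s \in [0, \operatorname{arccosh} 2]$. Then
\begin{equation*}
U_n(1+\xi) \;=\; \frac{\sinh\big((n+1)s\big)}{\sinh s}\,.
\end{equation*}
For (ii), monotonicity in $\xi$ is equivalent to monotonicity in $s$; one checks that $s \mapsto \sinh((n+1)s)/\sinh s$ is nondecreasing on $[0,\infty)$, e.g.\ by writing it as $\sum_{j=0}^{n} \frac{\sinh((n-2j)s + \text{\dots})}{\dots}$ — more cleanly, use the identity $\sinh((n+1)s)/\sinh s = \sum_{k=0}^{n} e^{(n-2k)s}$ (the finite geometric sum), whose derivative in $s$ is $\sum_k (n-2k) e^{(n-2k)s} \geq 0$ because pairing the $k$ and $n-k$ terms gives $(n-2k)(e^{(n-2k)s} - e^{-(n-2k)s}) \geq 0$. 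For (iii), I want $U_n(1+\xi) \geq e^{n\sqrt\xi}$ for $\xi \in [0,1]$. Using $U_n(1+\xi) = \sum_{k=0}^n e^{(n-2k)s} \geq e^{ns}$ (the top term alone, all terms being positive), it suffices to show $e^{ns} \geq e^{n\sqrt\xi}$, i.e.\ $s \geq \sqrt\xi$. But $s = \operatorname{arccosh}(1+\xi)$, and the elementary inequality $\cosh s \leq 1 + s^2$ for, say, $s \in [0,1]$ — or more robustly $\operatorname{arccosh}(1+\xi) \geq \sqrt{2\xi}/\sqrt{1 + \xi/3} \geq \sqrt\xi$ on $[0,1]$ — gives $s \geq \sqrt\xi$ on this range. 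I would just verify $\cosh(\sqrt\xi) \leq 1 + \xi$ for $\xi\in[0,1]$ directly from the Taylor series $\cosh(\sqrt\xi) = 1 + \xi/2 + \xi^2/24 + \cdots \leq 1 + \xi$, which holds comfortably for $\xi \le 1$; this yields $\operatorname{arccosh}(1+\xi) \geq \sqrt\xi$ and hence the claim.

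\textbf{Main obstacle.} None of the three parts is genuinely hard; the only place requiring a little care is the constant in part (iii) — one must pick the comparison so that $\operatorname{arccosh}(1+\xi) \geq \sqrt\xi$ holds on the full interval $[0,1]$ (not just near $0$), which is why I would route through the clean inequality $\cosh(\sqrt\xi) \leq 1+\xi$ rather than a crude quadratic bound on $\cosh$. The representation $U_n(1+\xi) = \sum_{k=0}^{n} e^{(n-2k)s}$ with $\cosh s = 1+\xi$ does all the real work for (ii) and (iii) simultaneously, so I would establish that identity first (it is immediate from summing the geometric series, or from $U_n(\cos\theta)\sin\theta = \sin((n+1)\theta)$ analytically continued), then read off both monotonicity and the lower bound. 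For (i) the only subtlety is remembering to invoke the evenness of $n$ to handle $\xi \leq -1$; everything else is the standard $|\sin m\alpha| \leq m|\sin\alpha|$ estimate.
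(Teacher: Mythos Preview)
Your proof is correct and follows essentially the same route as the paper: the trigonometric representation on $[-1,1]$ for (i), and the hyperbolic parametrization $1+\xi=\cosh s$ for (ii) and (iii), culminating in the same key inequality $s\geq\sqrt{\xi}$ (equivalently $\cosh(\sqrt{\xi})\leq 1+\xi$ on $[0,1]$). The only cosmetic difference is that you pass through the finite geometric-sum identity $\sinh((n+1)s)/\sinh s=\sum_{k=0}^n e^{(n-2k)s}$ to read off both monotonicity and the lower bound $e^{ns}$, whereas the paper works directly with the $\sinh$ ratio; these are equivalent, and your version arguably makes (ii) more transparent.
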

\begin{proof}
If $\xi \in [-1,1]$ the claim (i) is easily seen from either \eqref{definition of Chebyshev} or the recursion relation 
\eqref{recursion for Chebyshev}. For $\xi \geq 1$, the claim (i) follows immediately from the formula
\begin{equation} \label{Chebyshev for large arguments}
U_n(\cosh \zeta) \;=\; \frac{\sinh (n+1) \zeta}{\sinh \zeta}\,,
\end{equation}
itself a straightforward consequence of \eqref{definition of Chebyshev} and analyticity.

The claim (ii) follows from \eqref{Chebyshev for large arguments}.

In order to prove the claim (iii), pick $\zeta \geq 0$ such that $1 + \xi = \cosh \zeta$. Using \eqref{Chebyshev for 
large arguments} we get for $\xi \in [0,1]$
\begin{equation*}
U_n(1 + \xi) \;=\; \frac{\sinh (n+1) \zeta}{\sinh \zeta} \;\geq\; \ee^{n \zeta} \;\geq\; \ee^{n \sqrt{\xi}}\,. \qedhere
\end{equation*}
\end{proof}

Denote by $\wh \lambda_{\rr{max}}$ the largest eigenvalue of $\wh H$.
Then we get for $\xi \in [0,1]$, using Lemma \ref{lemma: properties of Chebyshev} (ii) and (iii),
\begin{equation*}
\P \pb{\wh \lambda_{\rr{max}} \geq 2 + 2 \xi} \;\leq\; \P \pB{U_n(\wh \lambda_{\rr{max}}/2) \geq U_n(1 + \xi)} \;\leq\; 
\frac{\E U_n(\wh \lambda_{\rr{max}}/2)}{\ee^{n \sqrt{\xi}}}\,.
\end{equation*}
Thus Lemma \ref{lemma: properties of Chebyshev} (i) and Proposition \ref{proposition: estimate of trace} yield
\begin{equation*}
\P \pb{\wh \lambda_{\rr{max}} \geq 2 + 2 \xi} \;\leq\; \frac{\E \tr U_n(\wh H/2) + N (n + 1)}{\ee^{n \sqrt{\xi}}} 
\;\leq\; \frac{N (C_\kappa + n + 1)}{\ee^{n \sqrt{\xi}}}\,,
\end{equation*}
for all $n \leq M^\kappa$. Setting $\xi = M^{-2/3 + \epsilon} / 2$ and invoking the bound \eqref{probability bound for 
entry cutoff} gives
\begin{equation}
\P \pbb{\lambda_{\rr{max}} \geq 2 + \frac{1}{M^{2/3 - \epsilon}}} \;\leq\; \P \pbb{\wh \lambda_{\rr{max}} \geq 2 + 
\frac{1}{M^{2/3 - \epsilon}}}
+
C N^2 \ee^{- M^{\alpha \delta}}
\;\leq\;
\frac{N (C_\kappa + M^\kappa + 1)}{\exp \p{M^{\epsilon / 2 - (1/3 - \kappa)}}}
+
C N^2 \ee^{- M^{\alpha \delta}}
\,.
\end{equation}
Choosing $\kappa$ satisfying $1/3 - \kappa = \epsilon /3$ and $\delta = \epsilon / 37$ (see \eqref{condition on delta})
completes the proof.
\appendix

\section{Proof of Proposition \ref{proposition: finite speed of propagation}} \label{appendix: speed of propagation}

\subsection{Control of the spread of time evolution} Abbreviate $\psi_t \deq \ee^{- \ii t H/2} 
\delta_0$.  We start by estimating $\scalar{\psi_t}{\abs{x}^2 \psi_t}$.
Using $\ii \partial_t \psi_t = H \psi_t/2$ we find
\begin{equation*}
\partial_t \scalar{\psi_t}{\abs{x}^2 \psi_t} \;=\;
\frac{\ii}{2} \scalarb{\psi_t}{\comb{H}{\abs{x}^2} \psi_t}
\;=\; \frac{-\ii}{2} \sum_{x,y} H_{xy} (\abs{x}^2 - \abs{y}^2) \ol \psi_t(x) \psi_t(y)\,.
\end{equation*}
This gives
\begin{align}
\absb{\partial_t \scalar{\psi_t}{\abs{x}^2 \psi_t}} &\;\leq\; \frac{1}{2} \sum_{x,y} 
\abs{H_{xy}} \absb{\abs{x}^2 - \abs{y}^2} \abs{\psi_t(x)} \abs{\psi_t(y)}
\notag \\
&\;\leq\; \frac{1}{2}\sum_{x,y} \abs{H_{xy}} \abs{x - y} (\abs{x} + \abs{y}) \abs{\psi_t(x)} 
\abs{\psi_t(y)}
\notag \\ \label{first step in control of spreading}
&\;\leq\; \sum_{x,y} \frac{\abs{H_{xy}} \abs{x - y}}{\avg{y}^{2\epsilon}} \, \abs{x} \, 
\abs{\psi_t(x)} \, \avg{y}^{2\epsilon} \abs{\psi_t(y)}\,,
\end{align}
for any $\epsilon > 0$. Here we defined
\begin{equation*}
\avg{y} \;\deq\; \sqrt{1 + \abs{y}^2}\,.
\end{equation*}

Next, we recall Schur's inequality, valid for any matrix $A$,
\begin{equation} \label{Holmgren-Schur}
\norm{A} \;\leq\; \pbb{\sup_x \sum_y \abs{A_{xy}}}^{1/2} \pbb{\sup_y \sum_x 
\abs{A_{xy}}}^{1/2}\,.
\end{equation}
Thus we get from \eqref{first step in control of spreading}, for any $\zeta > 0$,
\begin{align*}
\absb{\partial_t \scalar{\psi_t}{\abs{x}^2 \psi_t}} &\;\leq\; \pbb{\sup_x \sum_y 
\frac{\abs{H_{xy}} \abs{x - y}}{\avg{y}^{2\epsilon}}}^{1/2} \pbb{\sup_x \sum_y 
\frac{\abs{H_{xy}} \abs{x - y}}{\avg{x}^{2\epsilon}}}^{1/2} \scalarb{\psi_t}{\abs{x}^2 
\psi_t}^{1/2} \scalarb{\psi_t}{\avg{x}^{4 \epsilon} \psi_t}^{1/2}
\\
&\;\leq\; B \pbb{\zeta \scalarb{\psi_t}{\abs{x}^2 \psi_t} + \frac{1}{\zeta} 
\scalarb{\psi_t}{\avg{x}^{4 \epsilon} \psi_t}}\,,
\end{align*}
where we defined
\begin{equation*}
B \;\deq\; \pbb{\sup_x \sum_y \frac{\abs{H_{xy}} \abs{x - y}}{\avg{y}^{2\epsilon}}}^{1/2} 
\pbb{\sup_x \sum_y \frac{\abs{H_{xy}} \abs{x - y}}{\avg{x}^{2\epsilon}}}^{1/2}\,.
\end{equation*}

In order to estimate $B$ we observe that the inequality $\avg{x + y} \leq 2 \avg{x} \avg{y}$ 
implies
\begin{equation*}
\sup_x \sum_y \frac{\abs{H_{xy}} \abs{x - y}}{\avg{x}^{2\epsilon}} \;\leq\;
2^{2\epsilon} \sup_x \sum_y \frac{\abs{H_{xy}} \avg{x - y}^{1 + 2 
\epsilon}}{\avg{y}^{2\epsilon}}\,.
\end{equation*}
Thus we get
\begin{equation} \label{bound on B}
B \;\leq\; 2^{\epsilon} \sup_x \sum_y \frac{\abs{H_{xy}} \avg{x - y}^{1 + 2 
\epsilon}}{\avg{y}^{2 \epsilon}} \;\leq\; 2^{\epsilon} \sup_x \sum_y \frac{\abs{A_{xy}} 
\sigma_{xy} \avg{x - y}^{1 + 3 \epsilon}}{\avg{x - y}^\epsilon \avg{y}^{2 \epsilon}}
\;\leq\;
2^{2 \epsilon} \sup_x \sum_y \frac{\abs{A_{xy}}}{\avg{x}^\epsilon \avg{y}^\epsilon}
\sigma_{xy} \avg{x - y}^{1 + 3 \epsilon}\,.
\end{equation}

Next, for $u \geq 1$ define
\begin{equation*}
\Omega_u \;\deq\; \hbb{\sup_{x,y} \frac{\abs{A_{xy}}}{\avg{x}^\epsilon \avg{y}^\epsilon} \leq 
u}\,.
\end{equation*}
In order to find a bound on $\P(\Omega_u^c)$, we note that, by the uniform subexponential decay 
of the entries of $A$, we have
\begin{equation*}
\P\pbb{\frac{\abs{A_{xy}}}{\avg{x}^\epsilon \avg{y}^\epsilon} \geq u} \;\leq\; \beta \ee^{-u^\alpha \avg{x}^{\alpha 
\epsilon} \avg{y}^{\alpha \epsilon}}\,.
\end{equation*}
Therefore
\begin{equation} \label{probability of omega_u}
\P(\Omega_u^c) \;=\; \P\pbb{\sup_{x,y} \frac{\abs{A_{xy}}}{\avg{x}^\epsilon \avg{y}^\epsilon} > 
u} \;\leq\; \sum_{x,y} \P\pbb{\frac{\abs{A_{xy}}}{\avg{x}^\epsilon \avg{y}^\epsilon} \geq u} 
\;\leq\; C_\epsilon \, \ee^{-u^\alpha}\,.
\end{equation}

Moreover, from \eqref{bound on B} we get on $\Omega_u$
\begin{align*}
B &\;\leq\; 2^{2 \epsilon} u \sup_x \sum_y \sigma_{xy} \avg{x - y}^{1 + 3 \epsilon}
\\
&\;\leq\; 2^{2 \epsilon} u \sup_x \sum_y \frac{1}{\avg{x - y}^{d/2 + \epsilon}} \, \avg{x - 
y}^{1 + d/2 + 4 \epsilon} \sigma_{xy}
\\
&\;\leq\; C_\epsilon u \pbb{\sup_x \sum_y \avg{x - y}^{d + 2 + 8\epsilon} \sigma^2_{xy}}^{1/2}
\\
&\;\leq\; C_\epsilon u \, W^{d/2 + 1 + 4 \epsilon}\,,
\end{align*}
provided that $8 \epsilon \leq \eta$. Here we used \eqref{definition of sigma} and the 
assumption \eqref{decay of f}.

Summarizing: On $\Omega_u$ we have
\begin{equation*}
\absb{\partial_t \scalar{\psi_t}{\abs{x}^2 \psi_t}} \;\leq\; C u \, W^{d/2 + 1 + 4 \epsilon} 
\pbb{\zeta \scalarb{\psi_t}{\abs{x}^2 \psi_t} + \frac{1}{\zeta} \scalarb{\psi_t}{\avg{x}^{4 
\epsilon} \psi_t}}\,.
\end{equation*}
Choosing $\zeta^{-1} = u W^{d/2 + 1 + 4 \epsilon + d}$ yields
\begin{equation*}
\absb{\partial_t \scalar{\psi_t}{\abs{x}^2 \psi_t}} \;\leq\; C \pbb{\frac{1}{W^d} 
\scalarb{\psi_t}{\abs{x}^2 \psi_t} + u^2 W^{2 d + 2 + 8 \epsilon} \scalarb{\psi_t}{\avg{x}^{4 
\epsilon} \psi_t}}\,.
\end{equation*}
Let us take $\epsilon \leq 1/4$. Then we have, for any $\xi > 0$,
\begin{equation*}
\abs{x}^{4 \epsilon} \;\leq\; \xi^{4 \epsilon/ (4 \epsilon - 2)} + \xi \abs{x}^2\,.
\end{equation*}
Choosing $\xi^{-1} = u^2 W^{3d + 2 + 8 \epsilon}$ therefore yields
\begin{equation*}
\absb{\partial_t \scalar{\psi_t}{\abs{x}^2 \psi_t}} \;\leq\; C \pbb{\frac{1}{W^d} 
\scalarb{\psi_t}{\abs{x}^2 \psi_t} + u^4 W^{5d + 8}}\,.
\end{equation*}
Thus Gr\"onwall's lemma, together with $\scalar{\psi_0}{\abs{x}^2 \psi_0} = 0$, implies that on $\Omega_u$ we have
\begin{equation*}
\scalar{\psi_t}{\abs{x}^2 \psi_t} \;\leq\; C u^4 W^{5d + 8} t \ee^{C t / W^d}\,.
\end{equation*}
Therefore we have showed that, for all $t \leq W^d$, we have
\begin{equation} \label{estimate on variance of wave function}
\scalar{\psi_t}{\abs{x}^2 \psi_t} \;\leq\; C u^4 W^{6d + 8}
\end{equation}
on $\Omega_u$.

\subsection{Conclusion of the proof}
Let us abbreviate $\psi_t = \ee^{-\ii t H /2} \delta_0$ and $\wt \psi_t = \ee^{-\ii t 
\wt H / 2} \delta_0$. Then we have
\begin{equation*}
\partial_t \norm{\psi_t - \wt \psi_t}^2 \;=\;
\frac{\ii}{2} \pB{\scalarb{H \psi_t - \wt H \wt \psi_t}{\psi_t - \wt 
\psi_t} - \scalarb{\psi_t - \wt \psi_t}{H \psi_t - \wt H \wt \psi_t}}
\;=\; \im \scalar{\wt \psi_t}{(\wt H - H) \psi_t}\,.
\end{equation*}
Thus, using $\norm{\wt \psi_t} = 1$, we get
\begin{equation} \label{bound on derivative of error}
\absb{\partial_t \norm{\psi_t - \wt \psi_t}^2} \;\leq\; \norm{(H - \wt H) 
\psi_t}\,.
\end{equation}
Next, we observe that
\begin{align*}
\abs{H_{xy} - \wt H_{xy}} &\;=\; \qB{1 - \ind{\abs{x} \leq \wt N} \ind{\abs{y} 
\leq \wt N}} \abs{H_{xy}}
\\
&\;\leq\; \indb{\abs{y} \geq \wt N / 2} \abs{H_{xy}} + \indb{\abs{x - y} \geq \wt 
N / 2} \abs{H_{xy}}\,.
\end{align*}
This gives
\begin{align}
\norm{(H - \wt H) \psi_t}^2 &\;\leq\; \sum_{x,y,z} \absb{H_{xy} - \wt H_{xy}} 
\absb{H_{xz} - \wt H_{xz}}
\abs{\psi_t(y)} \abs{\psi_t(z)}
\notag \\
&\;\leq\; \sum_{x,y,z}
\indb{\abs{y} \geq \wt N / 2}
\indb{\abs{z} \geq \wt N / 2}
\absb{H_{xy}} \absb{H_{xz}}
\abs{\psi_t(y)} \abs{\psi_t(z)}
\notag \\
&\qquad
+ 2 \sum_{x,y,z}
\indb{\abs{y} \geq \wt N / 2}
\indb{\abs{x - z} \geq \wt N / 2}
\absb{H_{xy}} \absb{H_{xz}}
\abs{\psi_t(y)} \abs{\psi_t(z)}
\notag \\ \label{main splitting for finite speed of propagation}
&\qquad
+ \sum_{x,y,z}
\indb{\abs{x - y} \geq \wt N / 2}
\indb{\abs{x - z} \geq \wt N / 2}
\absb{H_{xy}} \absb{H_{xz}}
\abs{\psi_t(y)} \abs{\psi_t(z)}\,.
\end{align}
We estimate the second term of \eqref{main splitting for finite speed of propagation}; the two 
other terms are dealt with in exactly the same way.
On $\Omega_u$ the second term of \eqref{main splitting for finite speed of propagation} is 
bounded by
\begin{align*}
&\mspace{-40mu} 2 u^2 \sum_{x,y,z} \avg{x}^{2 \epsilon} \avg{y}^\epsilon \avg{z}^\epsilon 
\sigma_{xy} \indb{\abs{x - z} \geq \wt N / 2}
\sigma_{xz} \, \indb{\abs{y} \geq \wt N /2} \abs{\psi_t(y)}  \abs{\psi_t(z)}
\\
&\leq\; C u^2 \sum_{x,y,z} \avg{x - y}^{\epsilon} \sigma_{xy} \, \avg{x - z}^\epsilon 
\indb{\abs{x - z} \geq \wt N / 2} \sigma_{xz} \, \indb{\abs{y} \geq \wt N /2} 
\avg{y}^{2 \epsilon} \abs{\psi_t(y)}  \, \avg{z}^{2 \epsilon} \abs{\psi_t(z)}
\\
&\leq\; C u^2 \normb{\indb{\abs{x} \geq \wt N / 2} \avg{x}^{2 \epsilon} \psi_t}  
\normb{\avg{x}^{2 \epsilon} \psi_t}
\\
&\qquad \times
\pbb{\sup_y \sum_{x,z} \avg{x - y}^{\epsilon} \sigma_{xy} \, \avg{x - z}^\epsilon \indb{\abs{x 
- z} \geq \wt N / 2} \sigma_{xz}}^{1/2}
\\
&\qquad \times
\pbb{\sup_z \sum_{x,y} \avg{x - y}^{\epsilon} \sigma_{xy} \, \avg{x - z}^\epsilon \indb{\abs{x 
- z} \geq \wt N / 2} \sigma_{xz}}^{1/2}
\\
&\leq\; C u^2 \normb{\indb{\abs{x} \geq \wt N / 2} \avg{x}^{2 \epsilon} \psi_t}  
\normb{\avg{x}^{2 \epsilon} \psi_t}
\\
&\qquad \times
\pbb{\sup_x \sum_{y} \avg{x - y}^{\epsilon} \sigma_{xy}}
\, \pbb{\sup_x \sum_{y} \avg{x - y}^\epsilon \indb{\abs{x - y} \geq \wt N / 2} 
\sigma_{xy}}\,,
\end{align*}
where we used Schur's inequality \eqref{Holmgren-Schur}.
Next, we observe that \eqref{definition of sigma} and \eqref{decay of f} yield
\begin{equation*}
\sup_x \sum_{y} \avg{x - y}^\epsilon \sigma_{xy} \;\leq\; \sup_x \pbb{\sum_y \avg{x - y}^{-d - 
2 \epsilon}}^{1/2} \pbb{\sum_y \avg{x - y}^{d + 4 \epsilon} \sigma^2_{xy}}^{1/2} \;\leq\; 
C_\epsilon W^{d/2 + 2 \epsilon}\,,
\end{equation*}
as well as
\begin{align*}
&\mspace{-40mu} \sup_x \sum_{y} \avg{x - y}^\epsilon \indb{\abs{x - y} \geq \wt N / 2} 
\sigma_{xy}
\\
&\;\leq \pbb{\sum_{\abs{y} \geq \wt N /2} \frac{1}{\abs{y}^{d + 2 + 2 \epsilon}}}^{1/2}
\pbb{\sum_y \abs{y}^{d + 2 + 4 \epsilon} \sigma^2_{0 y}}^{1/2}
\\
&\leq\; \frac{C}{\wt N} \pbb{\sum_{\abs{y} \geq \wt N /2} \frac{1}{\abs{y}^{d + 2 
\epsilon}}}^{1/2}
W^{d/2 + 1 + 2 \epsilon}
\\
&\leq\; \frac{C_\epsilon}{\wt N} W^{d/2 + 1 + 2 \epsilon}\,.
\end{align*}
Moreover,
\begin{equation*}
\normb{\indb{\abs{x} \geq \wt N / 2} \avg{x}^{2 \epsilon} \psi_t} \;\leq\; \wt 
N^{2 \epsilon - 1} \norm{\avg{x} \psi_t}
\;\leq\; \wt N^{-1/2} \norm{\avg{x} \psi_t}
\,.
\end{equation*}

Estimating the first and third terms of \eqref{main splitting for finite speed of propagation} 
along the same lines, and putting everything together, yields
\begin{equation*}
\norm{(H - \wt H) \psi_t}^2 \;\leq\; \frac{C_\epsilon u^2}{\wt N} W^{d + 2} \, 
\norm{\avg{x} \psi_t}^2\,.
\end{equation*}
Using \eqref{estimate on variance of wave function} we therefore get
\begin{equation*}
\norm{(H - \wt H) \psi_t}^2 \;\leq\; \frac{C_\epsilon u^6}{\wt N} W^{7d + 10}\,.
\end{equation*}
Integrating \eqref{bound on derivative of error} we find the bound, valid on $\Omega_u$,
\begin{equation*}
\norm{\psi_t - \wt \psi_t} \;\leq\; C_\epsilon \pbb{\frac{u^6}{\wt N} W^{8d + 10}}^{1/2}\,,
\end{equation*}
uniformly for $t \leq W^d$. Setting $u = W$ and recalling \eqref{probability of omega_u} yields 
the claim.

\section{Proof of Proposition \ref{proposition: bound on H}} \label{section: proof of bound on 
H}

We start by partitioning $\Lambda_N = \bigcup_A \Lambda_{N,A}$ into cubes $\Lambda_{N,A}$ of 
side length $W$. In order to simplify notation, we assume that $N = 2 L W$ for some integer $L 
\in \N$.  The $(2L)^d$ cubes are indexed by $A \in \cal A_L \deq \{-L, \dots, L-1\}^d$. We set
\begin{equation*}
\Lambda_{N,A} \;\deq\; \hb{W A + \wt x \,:\, \wt x \in \{0, \dots, W - 1\}^d}
\end{equation*}
Let $P_A$ denote the projection $(P_A \psi)(x) \;\deq\; \ind{x \in \Lambda_{N,A}} \psi(x)$.

Next, decompose $\wh H$ into its cube components $\wh H_{AB} \deq P_A \wh H P_B$. Thus, $\wh 
H_{AB}$ is a $W^d \times W^d$ matrix. By Schur's inequality \eqref{Holmgren-Schur}, we have
\begin{equation*}
\norm{\wh H} \;\leq\; \sup_{A \in \cal A_L} \sum_{B \in \cal A_L} \norm{\wh 
H_{AB}}\,.
\end{equation*}
Let $g$ be a periodic function on $\cal A_L$ to be chosen later, and set
\begin{equation*}
\Omega_0 \;\deq\; \hb{\norm{\wh H_{AB}} \leq 3 M^{2 \delta} g(A - B) \text{ for all } A,B 
\in \cal A_L}\,.
\end{equation*}
Thus, on $\Omega_0$ we have
\begin{equation} \label{bound on H on good set}
\norm{\wh H} \;\leq\; 3 M^{2 \delta} \sum_{A} g(A)\,.
\end{equation}

In order to derive an estimate on the probability of $\Omega_0$, we use the Marcinkiewicz-Zygmund inequality: If $Z_1, 
\dots, Z_n$ are independent mean-zero complex random variables and $a_1, \dots, a_n \in \C$, then
\begin{equation}
\E \absbb{\sum_i a_i Z_i}^p \;\leq\; (Cp)^{p/2} \, \E \pbb{\sum_i \abs{a_i Z_i}^2}^{p/2}\,.
\end{equation}
(See e.g.\ \cite{strook}, Exercise 2.2.30, for a proof that gives the constant $(Cp)^{p/2}$.) Defining
$A^2 \deq \sum_i \abs{a_i}^2$, Jensen's inequality therefore yields for $p \geq 2$
\begin{equation} \label{consequence of MZ}
\E \absbb{\sum_i a_i Z_i}^p \;\leq\; (Cp)^{p/2} A^p \, \E \pbb{\sum_i \frac{\abs{a_i}^2}{A^2} \abs{Z_i}^2}^{p/2}
\;\leq\;
(Cp)^{p/2} A^p \, \sum_i \frac{\abs{a_i}^2}{A^2} \E \abs{Z_i}^p \;\leq\;
(C A^2 p)^{p/2} \max_i \, \E \abs{Z_i}^p\,.
\end{equation}

Next, we have, for $\wt x, \wt y \in \{0, \dots, W-1\}^d$,
\begin{equation*}
(\sigma_{AB})^2_{\wt x \wt y} \;\deq\; \sigma^2_{WA + \wt x, WB + \wt y} \;=\; 
\frac{1}{M} f\pbb{\frac{[AW + \wt x - BW - \wt y]_N}{W}} \;=\; \frac{1}{M} f 
\pb{[A - B]_{2L} + R_{\wt x \wt y}}\,,
\end{equation*}
where $\abs{R_{\wt x \wt y}} \leq 1$. Thus \eqref{definition of sigma} yields
\begin{equation} \label{bound on block variance}
(\sigma_{AB})^2_{\wt x \wt y} \;\leq\; \frac{1}{M} \wt f \pb{[A - B]_{2L}}\,.
\end{equation}
We may now estimate $\E \absb{\scalar{\psi_1}{\wh H_{AB} \psi_2}}^p$ for any $p \geq 2$ and $\psi_1, \psi_2 \in 
\C^{W^d}$ satisfying $\norm{\psi_1}, \norm{\psi_2} \leq 1$.  Let us first take $A \neq B$.  Then we get
\begin{equation} \label{rewriting of high moment}
\E \absb{\scalar{\psi_1}{\wh H_{AB} \psi_2}}^p \;=\; \E \absbb{\sum_{\wt x, 
\wt y} \underbrace{\frac{(\wh H_{AB})_{\wt x \wt 
y}}{(\sigma_{AB})_{\wt x \wt y}}}_{\eqd Z_{\wt x \wt y}} \, 
(\sigma_{AB})_{\wt x \wt y} \, \psi_1(\wt x) \psi_2(\wt y)}^p\,,
\end{equation}
where we restrict the summation to $\wt x, \wt y$ satisfying 
$(\sigma_{AB})_{\wt x \wt y} \neq 0$. Observing that
\begin{equation*}
(\wh H_{AB})_{\wt x \wt y} \;=\; \wh H_{WA + \wt x, WB + \wt y}\,,
\end{equation*}
we see that the random variables $(Z_{\wt x \wt y})_{\wt x \wt y \in \{0, \dots, W-1\}^d}$ are independent and satisfy 
$\abs{Z_{\wt x \wt y}} \leq M^\delta$. Therefore \eqref{consequence of MZ} and \eqref{rewriting of high moment} yield
\begin{equation*}
\E \absb{\scalar{\psi_1}{\wh H_{AB} \psi_2}}^p \;\leq\; \pbb{C p M^{2 \delta} \sum_{\wt x, \wt y} (\sigma_{AB})^2_{\wt x 
\wt y} \abs{\psi_1(\wt x)}^2 \abs{\psi_2(\wt y)}^2}^{p/2} \;\leq\; \pB{C p M^{-1 + 2 \delta} \wt f \pb{[A - 
B]_{2L}}}^{p/2}\,,
\end{equation*}
where in the last step we used \eqref{bound on block variance}. If $A = B$ then the random 
variables $Z_{\wt x \wt y}$ are no longer independent; this is easily remedied by 
splitting the summation over $\wt x, \wt y$ in \eqref{rewriting of high moment} 
into two parts: $\wt x \leq \wt y$ and $\wt x > \wt y$.  Using the 
estimate $\abs{a + b}^p \leq \abs{2a}^p + \abs{2b}^p$ we therefore get the bound
\begin{equation} \label{estimate on high moment of expectation of H}
\E \absb{\scalar{\psi_1}{\wh H_{AB} \, \psi_2}}^p \;\leq\; \pB{C p M^{-1 + 2 \delta} \wt f \pb{[A - B]_{2L}}}^{p/2}\,,
\end{equation}
valid for all $A,B$.

Next, we estimate, using \eqref{estimate on high moment of expectation of H},
\begin{equation*}
\P \pB{\absb{\scalar{\psi_1}{\wh H_{AB} \, \psi_2}} \geq M^{2 \delta} g(A - B)} \;\leq\; \frac{\E 
\absb{\scalar{\psi_1}{\wh H_{AB} \psi_2}}^p}{\pb{M^{2 \delta} g(A - B)}^p} \;\leq\; \pBB{\frac{C p \wt f \pb{[A - 
B]_{2L}}}{M^{1 + 2 \delta} g^2(A - B)}}^{p/2}\,.
\end{equation*}
Setting $p = \nu M$ for some fixed $\nu > 0$ and defining $g(A) \deq \sqrt{\wt f \pb{[A]_{2L}}}$ yields
\begin{equation} \label{final bound on expectation}
\P \pB{\absb{\scalar{\psi_1}{\wh H_{AB} \psi_2}} \geq M^{2\delta} g(A - B)} \;\leq\; \pbb{\frac{C \nu}{
M^{2 \delta}}}^{\nu M/2}\,.
\end{equation}
Note that this choice of $g$ implies
\begin{equation} \label{bound on sum of g}
\sum_A g(A) \;\leq\; \sum_{A \in \Z^d} \sqrt{\wt f(A)} \;\leq\; \pBB{\sum_{A \in \Z^d} \wt f(A) 
\avg{A}^{d + 1}}^{1/2} \pBB{\sum_{A \in \Z^d} \avg{A}^{-d - 1}}^{1/2} \;\leq\; C\,,
\end{equation}
by \eqref{decay of f}.

In order to estimate $\norm{\wh H_{AB}}$, we define the rectangular lattice
\begin{equation*}
I \;\deq\; \hbb{\psi \in \frac{1}{2 W^{d/2}} \Z^{W^d} \,:\, \norm{\psi} \leq 1}\,.
\end{equation*}
It is easy to see that $\abs{I} \leq (4 W^{d/2})^{W^d}$.  Now set
\begin{equation*}
\Omega_{AB} \;\deq\; \hbb{\sup_{\psi_1, \psi_2 \in I} \absb{\scalar{\psi_1}{\wh H_{AB} \, 
\psi_2}} \leq M^{2 \delta} g(A - B)}\,.
\end{equation*}
Therefore \eqref{final bound on expectation} yields
\begin{equation*}
\P(\Omega_{AB}^c) \;\leq\; \abs{I}^2 \pbb{\frac{C \nu}{M^{2 \delta}}}^{\nu M/2} \;\leq\; \pbb{\frac{C \nu M^{C / 
\nu}}{M^{2 \delta}}}^{\nu M/2}\,.
\end{equation*}

We now do an approximation argument using the lattice $I$. Let $\psi^*_1, \psi^*_2$ satisfy 
$\norm{\psi^*_1} , \norm{\psi^*_2}\leq 1$ and
\begin{equation*}
\norm{\wh H_{AB}} \;=\; \scalar{\psi^*_1}{\wh H_{AB} \, \psi^*_2}\,.
\end{equation*}
Now by definition of $I$, there are $\psi_1, \psi_2 \in I$ such that $\norm{\psi_1 - \psi^*_1}, 
\norm{\psi_2 - \psi^*_2} \leq 1/4$.  This gives
\begin{equation*}
\norm{\wh H_{AB}} \;=\; \scalarB{\psi_1^* - \psi_1 + \psi_1}{\wh H_{AB} (\psi_2^* - 
\psi_2 + \psi_2)} \;\leq\; \norm{\wh H_{AB}} \pbb{2 \frac{1}{4} + \frac{1}{4^2}} + 
\absb{\scalar{\psi_1}{\wh H_{AB} \, \psi_2}}\,.
\end{equation*}
Thus, on $\Omega_{AB}$ we have
\begin{equation*}
\norm{\wh H_{AB}} \;\leq\; \frac{16}{7} M^{2 \delta} g(A-B)\,.
\end{equation*}
We have therefore proved that $\Omega_0 \supset \bigcap_{A,B \in \cal A_L} \Omega_{AB}$, which 
yields the probability bound
\begin{equation*}
\P(\Omega_0^c) \;\leq\; \abs{\cal A_L}^2 \pbb{\frac{C \nu M^{C / \nu}}{M^{2 \delta}}}^{\nu M/2}
 \;\leq\; N^{2d} \pbb{\frac{C \nu M^{C / \nu}}{M^{2 \delta}}}^{\nu M/2}\,.
\end{equation*}
Choosing $\nu$ large enough yields
\begin{equation*}
\P(\Omega_0^c) \;\leq\; M^{-\epsilon M}\,,
\end{equation*}
for large enough $M$ and some fixed $\epsilon > 0$.

Moreover, \eqref{bound on H on good set} and \eqref{bound on sum of g} imply that on $\Omega_0$ 
we have
\begin{equation*}
\norm{\wh H} \;\leq\; C M^{2 \delta}\,.
\end{equation*}

\section{Proof of Lemma \ref{lemma: existence of refining pairing}} \label{appendix: higher 
lumpings}

We start with the following observation which allows us to rule out the simple case $n + n' \leq 8$. Assume that $n + n' 
\leq 8$ and that $\Gamma \in \scr G_{n,n'} \setminus \cal P_{n,n'}$. In order to prove \eqref{lower bound on number of 
skeleton edges}, we have to construct a refining pairing $\Pi$ of $\Gamma$ satisfying $m(\Pi) \geq 2$. It may be easily 
checked that this is always possible. Throughout this appendix we therefore assume that
\begin{equation} \label{more than eight edges}
n + n' \;>\; 8\,.
\end{equation}

Choose some ordering of the edges $\cal E(I_n \cup I_{n'})$. Then lumps are ordered by their 
smallest edge.

In a first step, we construct a special refining $\Gamma'$ of $\Gamma$ whose lumps are of size 
$2$ or $4$. Start by setting $\Gamma_0 \deq \Gamma$ and $j = 0$.
\begin{itemize}
\item
Denote by $\gamma$ the first lump in $\Gamma_j$ that satisfies $\abs{\gamma} \geq 6$; if there 
is no such lump, stop.
\item
Denote by $\gamma'$ the union of the first four edges of $\gamma$; define $\Gamma_{j + 1} \deq 
\Gamma_j \cup \{\gamma', \gamma \setminus \gamma'\} \setminus \gamma$. (That is, cut the lump 
$\gamma$ into two lumps of sizes $4$ and $\abs{\gamma} - 4$.)
\item
Set $j \mapsto j + 1$ and repeat this procedure.
\end{itemize}
After the algorithm has terminated, set $\Gamma' = \Gamma_j$.
We now claim that
\begin{equation} \label{bound on lump size}
p(\Gamma') \;\geq\; \frac{1}{2} \, p(\Gamma)\,.
\end{equation}
Indeed, let $n_i$ denote the number of lumps of size $i$ in $\Gamma$. Thus we have
\begin{equation*}
p(\Gamma) \;=\; 2 n_4 + 4 n_6 + 6 n_8 + 8 n_{10} + 10 n_{12} + \cdots\,.
\end{equation*}
From the definition of $\Gamma'$ we get
\begin{equation*}
p(\Gamma') \;=\; 2 n_4 + 2 n_6 + 4 n_8 + 4 n_{10} + 6 n_{12} + \cdots\,,
\end{equation*}
and \eqref{bound on lump size} follows.

In a second step, we construct a refining pairing $\Pi$ of $\Gamma'$ using a greedy algorithm 
that generates a finite sequence of lumpings $(\Gamma_j)$ that are successive refinements of 
each other. Additionally, along this construction some bridges will get a \emph{mark}. Bridges 
that received a mark at some stage retain it for all later stages. (To avoid confusion, we 
stress that this marking has nothing to do with the bridge tags; it is only used in this 
proof.) We shall construct the algorithm and the marking in such a way that, in the resulting 
pairing $\Pi$, no two marked bridges belong to the same (anti)ladder. Thus, the number of 
marked bridges will be a lower bound for $m(\Pi)$.
As usual we call lumps of size $2$ bridges. We call lumps of size $4$ \emph{four-lumps}. We say 
that two bridges are \emph{compatible} if they are neither parallel nor antiparallel; otherwise 
they are said to be \emph{incompatible}.  


The following notions will prove helpful. We say that two edges $e_1$ and $e_2$ are \emph{bridged} in $\Gamma_j$ if 
$\{e_1, e_2\} \in \Gamma_j$. For a four-lump of the form $\gamma = \{e_1, e_2, e_3, e_4\}$ we introduce the operation of 
\emph{bridging} $e_1$ with $e_2$ and $e_3$ with $e_4$; this means that we set $\Gamma_{j+1} \deq \Gamma_j \cup 
\hb{\{e_1, e_2\}, \{e_3, e_4\}} \setminus \gamma$, i.e.\ we split the four-lump into two bridges.

We now define the greedy algorithm and the marking. Start by setting $\Gamma_0 = \Gamma'$ and 
$j = 0$, and let all bridges of $\Gamma_0$ be unmarked.

Let $\gamma$ be the first four-lump of $\Gamma_j$ (recall that lumps have a fixed ordering). We 
define $\Gamma_{j+1}$ by refining $\gamma$ into two bridges, and marking one of the bridges of 
$\Gamma_{j + 1}$. We do this in such a way that
\begin{enumerate}
\item
the newly marked bridge is compatible with all other bridges of $\Gamma_{j + 1}$, and
\item
each newly created bridge is incompatible with at most one other bridge of $\Gamma_{j + 1}$.
\end{enumerate}

Now we show that such a refining process together with an appropriate marking is possible.  
First we deal with the case that there are two adjacent edges $e_1, e_2 \in \gamma$. By the 
nonbacktracking constraint in $Q_x(\b x)$, this is only possible if the common vertex of $e_1$ 
and $e_2$ is either $0$ or $n$. Denote by $e_3, e_4$ the two other edges of $\gamma$. It is easy to see that there is an 
$i = 1,2$ and an $i' = 3,4$ such that the bridge $\{e_i, e_{i'}\}$ is compatible with all bridges of $\Gamma_j$. We then 
define the lumping $\Gamma_{j+1}$ by bridging $e_i$ with $e_{i'}$ as well as the two remaining edges of $\gamma$ with 
each other. We mark the newly created bridge $\{e_i, e_{i'}\}$. That properties (i) and (ii) hold follows readily from 
the definition of (anti)parallel bridges.

\begin{figure}[ht!]
\begin{center}
\includegraphics{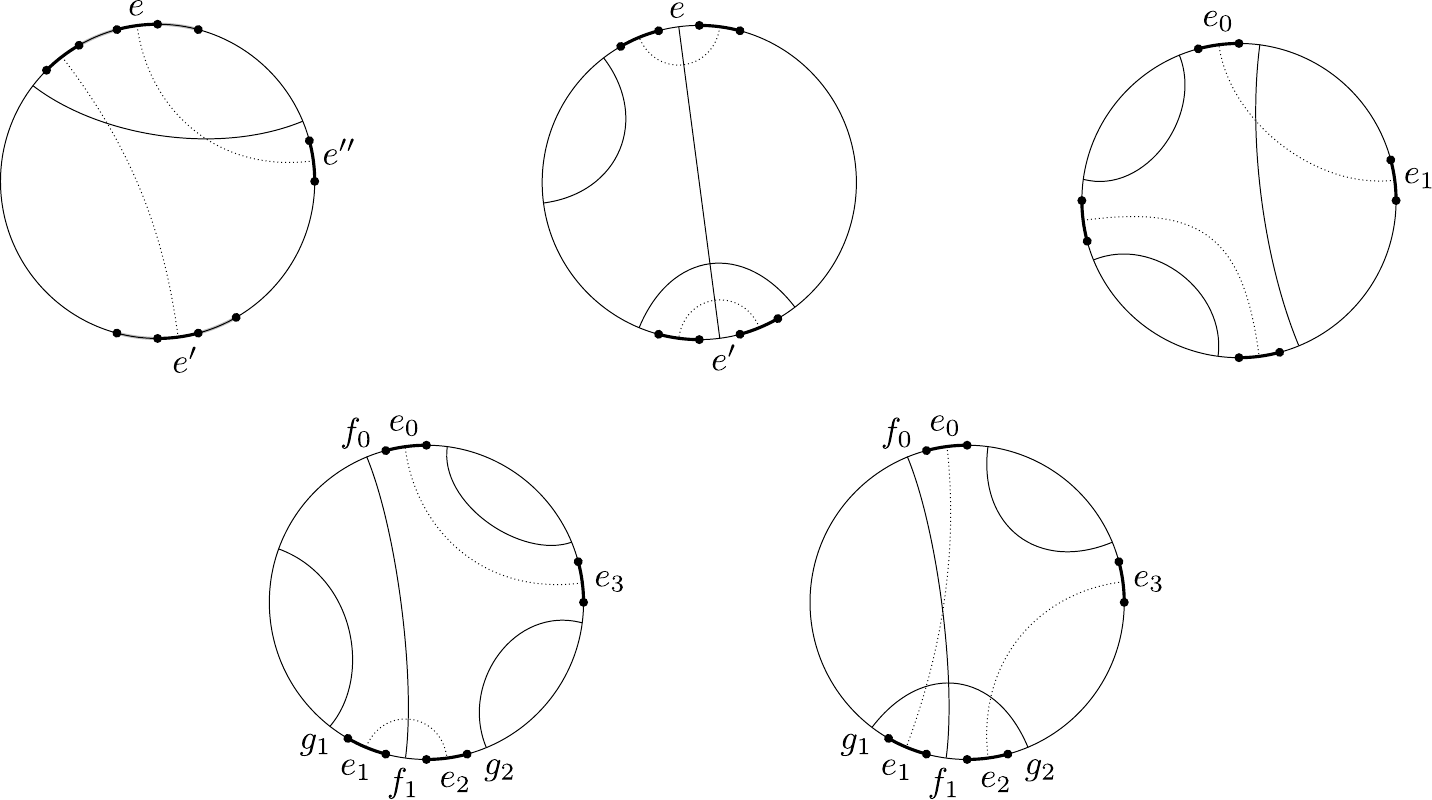}
\end{center}
\caption{The main step of the greedy algorithm. Top: (left to right) Case (a), Case (b), Case (c1). Bottom (left to 
right): Case (c2'), Case (c2'').  For each case we draw a typical scenario, in which edges of $\gamma$ are separated by 
a single edge only if this is required by the case in question. The edges in $\gamma$ are drawn using thick black 
lines.  Bridges already present in $\Gamma_j$ are drawn using solid lines, and bridges added by the current step using 
dotted lines. In Case (a), the edges in $\gamma'$ are drawn using thick grey lines.  \label{figure: refining}}
\end{figure}
Let us therefore assume from now on that no two edges of $\gamma$ are adjacent.
The lumping $\Gamma_{j+1}$ with marked bridges is defined according to the following four 
cases.  (See Figure \ref{figure: refining} for an illustration of each case.) In each case, 
both properties (i) and (ii) are easy to check. (Note that, under the additional assumption 
$\wh H_{xx} = 0$ for all $x$, it is easy to see that any two edges of $\gamma$ must be 
separated by at least two edges, so that only Case (c1) below needs to be considered.)

\begin{itemize}
\item[(a)] \emph{There are two edges $e,e' \in \gamma$ whose neighbouring edges all belong to 
another four-lump $\gamma' \in \Gamma_j$.} We choose an edge $e'' \in \gamma$ that has at least 
one neighbouring edge not in $\gamma'$ (it is easy to see that, since $\Gamma_j$ cannot consist of two interlacing 
four-lumps by \eqref{more than eight edges}, there always exists such an $e''$). We bridge $e$ with $e''$, as well as 
the two remaining edges of $\gamma$ with each other. We mark the newly created bridge $\{e,e''\}$.
\item[(b)]
\emph{There is a bridge $\{e,e'\} \in \Gamma_j$ such that every edge in $\gamma$ is adjacent to 
either $e$ or $e'$.} We bridge both edges adjacent to $e$ with each other, as well as both 
edges adjacent to $e'$ with each other. We mark the bridge $\{e,e'\}$.
\end{itemize}
\begin{itemize}
\item[(c)] \emph{Neither} (a) \emph{nor} (b) \emph{applies.} We choose $e_0 \in \gamma$ so that the set of four edges 
adjacent to $e_0$ and its two neighbours contains at most one other edge in $\gamma$. (By \eqref{more than eight edges} 
such an $e_0$ always exists.) Define
\begin{equation*}
\zeta \;\deq\;  \hb{e_1 \in \gamma \setminus \{e_0\} \,:\, \text{neither neighbour of $e_1$ is 
bridged in $\Gamma_j$ with a neighbour of $e_0$}}\,.
\end{equation*}
\begin{itemize}
\item[(c1)]
If $\zeta \neq \emptyset$, it is not hard to see that there is an $e_1 \in \zeta$ such that the bridge $\gamma \setminus 
\{e_0, e_1\}$ is incompatible with at most one bridge of $\Gamma_j$.  We bridge $e_0$ with $e_1$, and both remaining 
edges of $\gamma$ with each other. We mark the bridge $\{e_0,e_1\}$.
\item[(c2)]
If $\zeta = \emptyset$, there is a bridge $\{f_0, f_1\} \in \Gamma_j$ such that $f_0$ is adjacent to $e_0$, and $f_1$ is 
adjacent to two edges, $e_1$ and $e_2$; see Figure \ref{figure: refining}. We choose $e_2$ to be the edge ``antipodal'' 
to $e_0$ in the circular ordering of the four edges of $\gamma$, i.e.\ $e_2$ is the edge that cannot be reached from 
$e_0$ along the circle without crossing another edge of $\gamma$.  Clearly, one of the two selected edges has this 
property. Define $e_3 \deq \gamma \setminus \{e_0, e_1, e_2\}$.  Let $g_1 \neq f_1$ and $g_2 \neq f_1$ denote the two 
other neighbours of $e_1$ and $e_2$.

\begin{itemize}
\item[(c2')]
Assume first that $g_1$ and $g_2$ are not bridged in $\Gamma_j$. In this case we bridge $e_1$ with $e_2$ and $e_0$ with 
$e_3$; we mark the bridge $\{e_1, e_2\}$. It is immediate that $\{e_1, e_2\}$ is compatible with all bridges in 
$\Gamma_j$, and that $\{e_0, e_3\}$ is incompatible with precisely one bridge in $\Gamma_j$.
\item[(c2'')]
Assume now that $g_1$ and $g_2$ are bridged in $\Gamma_j$. Then we bridge $e_2$ with $e_3$ and $e_0$ with $e_1$. We mark 
the bridge $\{e_2, e_3\}$. Since Case (b) is excluded, we find that the bridge $\{e_2, e_3\}$ is compatible with all 
bridges of $\Gamma_j$. Moreover, the bridge $\{e_0, e_1\}$ is incompatible with precisely one bridge of $\Gamma_j$.
\end{itemize}
\end{itemize}
\end{itemize}

The pictures in Figure \ref{figure: refining} depict typical scenarios,
in which edges of $\gamma$ are separated by a single
edge (they are next-nearest neighbours) only if this
is explicitly required in the case being considered.
It is also possible that additional edges are
next-nearest neighbours; e.g.\ it may happen
that $f_0=g_1$ in the last picture. Checking
the few such explicit cases, one can see that
the algorithm
described above works for these cases as well, even
though the pictures are not accurate. It is this
step where the special choice of $e_0$ made in Case (c) is
necessary.

Set $j \mapsto j+1$. If $\Gamma_j$ is not yet a pairing, we repeat the procedure. Otherwise, we 
set $\Pi \deq \Gamma_j$ and stop the recursion; this is the completion of the algorithm. We 
need two crucial observations about the algorithm.

First, no bridge of $\Pi$ is marked twice. Indeed, in Cases (a) and (c), the bridge marked at 
step $j$ is new (i.e.\ does not exist in $\Gamma_j$); in Case (b) the bridge marked at step 
$j$, i.e.\ $\{e,e'\}$, was unmarked in $\Gamma_j$, as follows from the definition of Case (a).  
(The marking of $\{e, e'\}$ could only have been done in Case (a) if there $e$ had been bridged 
with $e'$, but this does not happen.) Therefore, the number of marked bridges of $\Pi$ is equal 
to the number of steps of the algorithm, i.e.\ the number of four-lumps in $\Gamma'$, which is 
$p(\Gamma')/2$.

Second, no two marked bridges of $\Pi$ belong to the same (anti)ladder. Indeed, by 
construction, the bridge marked at step $j$ of the algorithm is compatible with all bridges of 
$\Gamma_j$. Thus, if two marked bridges of $\Pi$, $\gamma$ and $\gamma'$, belong to the same 
(anti)ladder in $\Pi$, then there must exist a $j$ such that at step $j$ we added a bridge 
$\gamma''$ (marked or not) that was (anti)parallel to two bridges of $\Gamma_j$, one belonging 
to an (anti)ladder containing $\gamma$ and the other to an (anti)ladder containing $\gamma'$.  
By construction, however, this never happens; see (ii).

In conclusion: $\Pi$ has $p(\Gamma')/2$ marked bridges, such that no two of them lie in the 
same ladder or antiladder of $\Pi$. Therefore, for any choice of tags of the bridges of $\Pi$, 
the resulting skeleton will always contain at least $p(\Gamma')/2$ bridges. From \eqref{bound 
on lump size} we therefore get $m(\Pi) \geq p(\Gamma)/4$.

That $m(\Pi) \geq 2$ is easy to see from the fact that $m(\Pi) = 1$ would imply that $\Pi$ is 
either a complete ladder or a complete antiladder; this never happens by the property (i) of 
the greedy algorithm.

\section{Proof of Proposition \ref{proposition: hard decoupling}} \label{appendix: hard 
decoupling}

The key to the proof Proposition \ref{proposition: hard decoupling} is a decoupling of the bough tagging from the bough 
graph. The is done by adding an appropriate number of bough edges to $G \cup G'$, as in the proof of Lemma \ref{lemma: 
decoupling}.

\begin{lemma} \label{lemma: hard decupling in appendix}
There is an injective map $Y : \fra G_\sharp \to \fra G_\sharp$ such that for any $\cal G = (G, 
\tau_G)$ and $\cal {\wt G} = (\wt G, \tau_{\wt G}) = Y(\cal G)$ the following properties hold.
\begin{enumerate}
\item
The tagged stems of $\cal G$ and $\wt {\cal G}$ are identical.
\item
$\deg \pb{\cal B(G), \tau_G} =  2 \abs{\cal E(\cal B(\wt G))}$.
\item
For any $\cal G, \cal G' \in \fra G_\sharp$ we have the bound
\begin{multline} \label{estimate for hard decoupled graphs}
E_{\cal G \cup \cal G'} \;\leq\; \qBB{\prod_{e \in \cal E_B \text{\rm nonleaf}} \pb{M^{-1 + 2 \delta}}^{\ind{\tau(e) 
\neq (b,0)}}} \pb{C M^{- \delta}}^{L^{(b)}} \pb{C M^{-1 + \mu + 5 \delta}}^{L^{(f)}} \pb{C M^{-1 + \mu + 7 
\delta}}^{L^{(d)}/2}
\\ \times
\sum_{\wt \Gamma \in \scr G_{u,u'}} \sum_{\b x_S \,:\, \Gamma(\b x_S) = \wt \Gamma} Q(\b x_S) 
\prod_{\gamma \in \wt \Gamma} \E \prod_{e \in \gamma} \absB{P_{\tau(e)}\pb{\wh H_{x_{a(e)} 
x_{b(e)}}, \wh H_{x_{b(e)}
x_{a(e)}}}}\,,
\end{multline}
where all quantities on the right-hand side of \eqref{estimate for hard decoupled graphs} are defined in terms of $\wt 
{\cal G} \cup \wt{\cal G}'$, i.e.\ $L^{(i)} \equiv L^{(i)}(\wt {\cal G} \cup \wt{\cal G}')$ for $i = b,f,d$, and $\tau 
\equiv \tau_{\wt G \cup \wt G'}$.
\end{enumerate}
\end{lemma}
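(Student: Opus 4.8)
The plan is to imitate the proof of Lemma~\ref{lemma: decoupling}, adding to $\cal G$ an appropriate number of bough edges of tag $(b,0)$ so as to trade the constraint on $\deg(\cal B(G),\tau_G)$ for one on $\abs{\cal E(\cal B(\wt G))}$ alone. Given $\cal G=(G,\tau_G)\in\fra G_\sharp$, set $2D\deq\deg(\cal B(G),\tau_G)-2\abs{\cal E(\cal B(G))}$; since every bough edge has even degree at least two, $2D$ is a nonnegative even integer. If $D=0$ I would put $Y(\cal G)\deq\cal G$. If $D>0$ and $G$ has a nondegenerate bough, then that bough contains a free leaf $e$ (by maximality of $\cal L_T$ in Definition~\ref{definition: types of leaves}, every nondegenerate bough has one), and I would lengthen $e$ by inserting $D$ fresh edges of tag $(b,0)$ along it — the operation of Lemma~\ref{lemma: decoupling}, except that here it is crucial to lengthen a \emph{free} leaf rather than an arbitrary one, because a lengthened free leaf is still free, so that no leaf changes type. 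If $D>0$ but $G$ has no nondegenerate bough, then $D>0$ forces some degenerate leaf $e$ to carry a tag in $\{(b,2),(b,3),(b,4)\}$, and I would lengthen this $e$ in the same way, accepting that its bough thereby becomes nondegenerate. In all cases the inserted edges are to be placed so that, in the reconstruction below, they are the first bough edges met by the walk around $\wt G$.

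\textbf{Properties (i), (ii) and injectivity.} These follow as in Lemma~\ref{lemma: decoupling} and are routine. Since $Y$ alters only bough edges, $\cal G$ and $\wt{\cal G}=Y(\cal G)$ have identical tagged stems, which is (i). By construction $\abs{\cal E(\cal B(\wt G))}=\abs{\cal E(\cal B(G))}+D=\frac{1}{2}\deg(\cal B(G),\tau_G)$, which is (ii). For injectivity, from $\wt{\cal G}$ one recovers $\cal G$ by repeatedly deleting the first bough edge met by the walk around the current graph as long as the total bough degree exceeds $2\abs{\cal E(\cal B(\wt G))}$; because the $D$ inserted $(b,0)$ edges sit at the start of the walk and each has degree two, this deletes precisely those edges and returns $\cal G$.

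\textbf{Property (iii), the easy case.} When $e$ is a free leaf (and trivially when $D=0$), the right-hand side of Proposition~\ref{proposition: bound on hard E_G} is literally unchanged upon replacing $\cal G\cup\cal G'$ by $\wt{\cal G}\cup\wt{\cal G}'$: the tagged stem is unchanged by (i); the leaf counts $L^{(b)},L^{(f)},L^{(d)}$ are unchanged since the inserted edges are nonleaves and the lengthened leaf keeps its type; and $\prod_{e\in\cal E_B\text{ nonleaf}}(M^{-1+2\delta})^{\ind{\tau(e)\neq(b,0)}}$ is unchanged since the inserted edges all have tag $(b,0)$. Hence (iii) is immediate from Proposition~\ref{proposition: bound on hard E_G} applied to $\cal G\cup\cal G'$, combined with (ii).

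\textbf{Property (iii), the hard case — the main obstacle.} The one genuinely new point, which I expect to be the crux, is the case where $e$ was a degenerate leaf of tag $(b,2),(b,3),$ or $(b,4)$. There the lengthening turns the bough nondegenerate, so $\wt{\cal G}$ has different leaf counts: the lengthened $e$ now counts as a free leaf (charged $CM^{-1+\mu+5\delta}$ in Proposition~\ref{proposition: bound on hard E_G}), while the remaining leaves of that bough, being incident to its root, become bound. Proposition~\ref{proposition: bound on hard E_G} applied to $\cal G$ is then insufficient, since it charges $e$ only the weak degenerate-leaf weight $(CM^{-1+\mu+7\delta})^{1/2}$, which exceeds $CM^{-1+\mu+5\delta}$. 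The remedy is to re-run the bough estimate of Sections~\ref{section: boughs}--\ref{section: boughs for kappa=1/3} directly for $\cal G\cup\cal G'$, now using that a leaf of tag $(b,2),(b,3),$ or $(b,4)$ contributes — upon summing out its final vertex and over its lumping $A_e$ — at most $CM^{-1+\mu+2\delta}$, which is the \emph{full} negative power (a consequence of $\sum_y\E\abs{\wh H_{xy}}^{2k}\leq M^{-(k-1)(1-2\delta)}$ for $k\geq2$, together with Lemma~\ref{lemma: bounds on truncated variance}, just as in Proposition~\ref{proposition: step for hard leaves}). This lies well below the $CM^{-1+\mu+5\delta}$ charged to the corresponding free leaf of $\wt{\cal G}$, and the remaining leaves-turned-bound are charged exactly as bound leaves, so $E_{\cal G\cup\cal G'}$ is once more bounded by the right-hand side of Proposition~\ref{proposition: bound on hard E_G} evaluated at $\wt{\cal G}\cup\wt{\cal G}'$. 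This yields (iii).
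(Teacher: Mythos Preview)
Your overall strategy---add $D$ extra $(b,0)$ bough edges so that $\deg(\cal B(G),\tau_G)=2\abs{\cal E(\cal B(\wt G))}$, then read off~(iii) from Proposition~\ref{proposition: bound on hard E_G}---is exactly the paper's, and your treatment of the hard case (only degenerate boughs) is essentially correct: the extra smallness from a leaf of tag $(b,2),(b,3),(b,4)$ does compensate for its bough becoming nondegenerate, just as in the paper's improved bound~\eqref{better estimate for small degenerate leaf}.

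The gap is in your ``easy case.'' You lengthen a \emph{free leaf} $e$ and assert that ``no leaf changes type,'' but this fails under the maximal classification of Definition~\ref{definition: types of leaves}. Subdividing $e$ puts the new terminal leaf into its own group of adjacent leaves, while the bound leaves that previously shared the vertex $a(e)$ with $e$ now form a separate group containing no free leaf; maximality then forces one of them to become free in $\wt G$. (Concretely: in the bough with edges $r\to v,\ v\to w_1,\ v\to w_2$, lengthening the free leaf $v\to w_1$ turns $v\to w_2$ into a free leaf as well.) Thus $L^{(f)}(\wt G)>L^{(f)}(G)$ is possible, and since the free-leaf weight $CM^{-1+\mu+5\delta}$ is smaller than the bound-leaf weight $CM^{-\delta}$, the right-hand side of~\eqref{estimate for hard decoupled graphs} evaluated at $\wt{\cal G}$ is \emph{strictly smaller} than the bound Proposition~\ref{proposition: bound on hard E_G} gives for $\cal G\cup\cal G'$; the desired inequality does not follow. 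The paper sidesteps this by inserting the $D$ extra edges along a \emph{nonleaf} bough edge (one exists in any nondegenerate bough): replacing a nonleaf edge by a path leaves the set of leaves and all leaf adjacencies untouched, so the maximal classification---and with it all three counts $L^{(b)},L^{(f)},L^{(d)}$---is exactly preserved, and~(iii) is immediate from Proposition~\ref{proposition: bound on hard E_G}.
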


Using Lemma \ref{lemma: hard decupling in appendix} we find that Proposition \ref{proposition: 
hard decoupling} follows easily by repeating to the letter the argument at the beginning of 
Subsection \ref{section: decoupling of the easy boughs}.

\begin{proof}[Proof of Lemma \ref{lemma: hard decupling in appendix}]
For any graph $G$ we define the two following cases.
\begin{enumerate}
\item[(a)]
$\cal B(G)$ is either empty or contains at least one nondegenerate bough.
\item[(b)]
$\cal B(G)$ consists exclusively of degenerate boughs.
\end{enumerate}

Consider first the case that both $G$ and $G'$ satisfy (a).
Then we may proceed exactly as in the proof of Lemma \ref{lemma: decoupling}. Thus, we define
\begin{equation*}
D \;\deq\; \deg\pb{\cal B(G), \tau_G} - 2 \abs{\cal E(\cal B(G))}\,.
\end{equation*}
If $D = 0$ set $\wt {\cal G} = \cal G$. Otherwise $\cal B(G)$ contains a nondegenerate bough.  
Let $e$ be the nonleaf bough edge that is reached first on the walk around $G$ (see the proof 
of Proposition \ref{proposition: properties of graphs} for the definition of the walk around 
$G$).  Define $\wt {\cal G}$ as $\cal G$ in which we replaced the edge $e$ with a path of $D+1$ 
edges; here the first edge of the path carries the tag $\tau_G(e)$ and all other edges of the 
path the tag $(b,0)$.

Now set $Y(\cal G) \deq \wt {\cal G}$. By construction, we have that
\begin{equation*}
L^{(b)}(\wt G) \;=\; L^{(b)}(G) \,, \qquad L^{(f)}(\wt G) \;=\; L^{(f)}(G)\,, \qquad 
L^{(d)}(\wt G) \;=\; L^{(d)}(G)\,.
\end{equation*}
Moreover, $\cal G$ and $\cal {\wt G}$ have the same number of small nonleaf bough edges. It is 
also easy to see that Claims (i) and (ii) hold. Moreover, as in the proof of Lemma \ref{lemma: decoupling}, we find that the map $\cal G \mapsto \wt {\cal G}$ 
is injective. Defining $\wt {\cal G}'$ in the same way, we find that Claim (iii) follows from 
Proposition \ref{proposition: bound on hard E_G}.

Next, consider the case where $G$ satisfies (b) and $G'$ satisfies (a). The complication here 
is that we cannot add bough edges to $G$ without changing the numbers $L^{(b)}, L^{(f)}, L^{(d)}$. If $D = 0$ then we 
can set $\wt {\cal G} = \cal G$ and proceed as above. If $D > 0$ then there must be a (degenerate) bough edge $\wt e \in 
\cal E(\cal B(G))$ whose tag is $\tau_G(\wt e) = (b,i)$ for $i = 2,3,4$. We now use the additional small factor arising 
from such an edge. We claim that in this case we can improve the bound \eqref{bound on hard E_G} to
\begin{multline} \label{better estimate for small degenerate leaf}
E_{\cal G \cup \cal G'} \;\leq\; \qBB{\prod_{e \in \cal E_B \text{\rm nonleaf}} \pb{M^{-1 + 2 \delta}}^{\ind{\tau(e) 
\neq (b,0)}}} \pb{C M^{- \delta}}^{L^{(b)}} \pb{C M^{-1 + \mu + 5 \delta}}^{L^{(f)}} \pb{C M^{-1 + \mu + 7 
\delta}}^{(L^{(d)} - 1)/2} M^{-1 + 2 \delta}
\\ \times
\sum_{\wt \Gamma \in \scr G_{u,u'}} \sum_{\b x_S \,:\, \Gamma(\b x_S) = \wt \Gamma} Q(\b x_S) 
\prod_{\gamma \in \wt \Gamma} \E \prod_{e \in \gamma} \absB{P_{\tau(e)}\pb{\wh H_{x_{a(e)} 
x_{b(e)}}, \wh H_{x_{b(e)}
x_{a(e)}}}}\,.
\end{multline}
Note the additional factor $M^{-1 + 2\delta}$ at the expense of reducing the
exponent of $ M^{-1 + \mu + 7 \delta}$ by 1/2.
We outline the proof of \eqref{better estimate for small degenerate leaf}, which is almost identical to the proof of 
\eqref{bound on hard E_G}.
In choosing the ordering of edges $\preceq$, we require that $\wt e$ be the first degenerate bough edge. When tackling 
the edge $\wt e$ immediately after the recursive algorithm (used for nondegenerate boughs) of Proposition 
\ref{proposition: step for hard leaves} has terminated, we get a bound $\xi =  M^{-1 + 2 \delta} = M^{-1 + \mu + 5 
\delta} M^{-\mu - 3 \delta}$. Here the first term is the worst-case estimate using \eqref{definition of (p,6)}, and the 
second arises from the fact that, thanks to the assumption on $\tau(\wt e)$, the estimate \eqref{bound using tau tilde} 
is now in fact valid if we multiply the right-hand side by a factor $M^{-\mu - 3 \delta}$.  The remaining $L^{(d)} - 1$ 
degenerate edges are estimated exactly as in Section \ref{section: sum over deg}.  Thus we get \eqref{better estimate 
for small degenerate leaf}.

Now we may proceed as above. Let $e$ be the (degenerate) leaf that is reached first on the walk 
around $G$. Define $\wt {\cal G}$ as $\cal G$ in which we replaced the edge $e$ with a path of 
$D+1$ edges; here the first edge of the path carries the tag $\tau_G(e)$ and all other edges of the path carry the tag 
$(b,0)$. Denoting by $l \geq 1$ the number of leaves in $G$ belonging to the bough containing $e$, we 
have
\begin{equation*}
L^{(b)}(\wt G) \;=\;  l - 1\,,\qquad
L^{(f)}(\wt G) \;=\; 1\,,\qquad
L^{(d)}(\wt G) \;=\; L^{(d)}(G) - l\,.
\end{equation*}
These identities are simply an expression of the fact that the degenerate bough of $G$ that 
contains $e$ becomes a nondegenerate bough in $\wt G$ with one free leaf. Moreover, the mapping $\cal G \mapsto \wt 
{\cal G}$ clearly satisfies Claims (i) and (ii). That it is injective can be seen from the fact that $\cal G$ can be 
reconstructed from $\wt {\cal G}$, similarly to the construction given in the proof of Lemma \ref{lemma: decoupling}.

Choosing $\wt {\cal G}' = Y(\cal G')$ as above, we find that the bound \eqref{estimate for hard 
decoupled graphs} follows from \eqref{better estimate for small degenerate leaf} and the bound
\begin{equation*}
\pb{C M^{-1 + \mu + 7 \delta}}^{(L^{(d)} - 1)/2} M^{-1 + 2 \delta}
\;\leq\;
\pb{C M^{- \delta}}^{l - 1} C M^{-1 + \mu + 5 \delta} \pb{C M^{-1 + \mu + 7 \delta}}^{(L^{(d)} - l)/2}\,,
\end{equation*}
which is easy to check for all $l \geq 1$.

Finally, the case when both $G$ and $G'$ satisfy (b) is dealt with exactly as the previous 
case.
\end{proof}

\section{List of concepts and symbols} \label{section: notations}

\begin{tabbing}
\hspace{5cm} \= \\[-0.5cm]

\nitem{graph $G, G', \dots$}{A rooted, oriented, unlabelled tree graph; see p.\ \pageref{page: graphs}.}

\nitem{stem $\cal S(G)$ of a graph $G$}{The path of $G$ joining the vertices $a(G)$ and $b(G)$; see p.\ \pageref{page: 
stem}.}

\nitem{boughs $\cal B(G)$ of a graph $G$}{The subgraph of $G$ that does not contain the stem of $G$; a collection of 
disjoint trees; see p.\ \pageref{page: boughs}.}

\nitem{tagging $\tau$}{A map from the edges of a graph to the set of tags
encoding the contribution of an edge; see p.\ \pageref{page: tags} as well as 
Figures \ref{figure: edge decorations} and \ref{figure: decorated tree}.}

\nitem{nonbacktracking encoding $l$}{A map assigning to each pair of vertices $v,w$ a number $l(v,w) = 0,1$, used to 
encode any nonbacktracking conditions; see p.\ \pageref{page: l}.}

\nitem{decorated graph $\cal G, \cal G', \dots$}{A graph $G$ together with a tagging $\tau_G$ of $G$ and a map $l_G$ 
implementing any nonbacktracking conditions; see p.\ \pageref{page: decorated graph}.}

\nitem{label $x_v$ of a vertex $v$}{An element $x_v \in \Lambda_N$ assigned to $v$; see p.\ \pageref{page: label}.}

\nitem{lumping $\Gamma$}{An equivalence relation on the set of edges arising from taking the expectation value; see p.\ 
\ref{page: lumping}.}

\nitem{pairing $\Pi$}{The simplest type of lumping, whose equivalence classes each consist of two edges; see p.\ 
\pageref{page: pairing}.}

\nitem{bridge $\pi$}{A lump of a pairing; see p.\ \pageref{page: pairing}.}

\nitem{(anti)parallel bridges}{See p.\ \pageref{page: parallel}.}

\nitem{twisted and straight bridges}{See p.\ \pageref{page: twisted and straight bridge}.}

\nitem{tagged pairing $(\Pi,\theta)$}{A pairing $\Pi$ whose bridges $\pi$ carry a
 tag $\vartheta(\pi) \in \{\text{straight, 
twisted}\}$; see p.\ \pageref{page: tagged bridge}.}

\nitem{ladder $L_n$}{The simplest pairing, whose contribution is of leading order; see p.\ \pageref{page: ladder}.}

\nitem{skeleton pairing $S(\Pi, \vartheta)$}{Tagged pairing obtained from the
 tagged pairing $(\Pi, \vartheta)$ by 
collapsing parallel straight bridges and antiparallel twisted bridges; see p.\ \ref{page: skeleton}.}

\nitem{lonely leaf}{A leaf that is the only edge of its lump; see Definition \ref{definition of inverse of A}.}

\nitem{degenerate, bound,  free leaves}{See Definition \ref{definition: types of leaves}.}

\nitem{$\fra W$}{The set of graphs; see p.\ \pageref{page: graphs} and Figure \ref{figure: bare tree}.}

\nitem{$\fra G$}{The set of decorated graphs; see p.\ \pageref{page: decorated graph}.}

\nitem{$\fra G'$}{The set of decorated graphs corresponding to terms of the main path expansion \eqref{main path 
expansion}; see Definition \ref{definition of G prime}.}

\nitem{$\fra V_{xy}(\cal G)$}{The value of the decorated graph, a random variable; see \eqref{definition of value of 
graph}.}

\nitem{$\cal F_n$}{The operation that makes nonbacktracking the first backtracking stem vertex of a decorated graph; see 
p.\ \pageref{page: F_n}.}

\nitem{$\cal F_c$}{The operation that collapses the two stem neighbours of the first backtracking stem vertex of a 
decorated graph; see p.\ \pageref{page: F_n}.}

\nitem{$\scr B_{\cal G}$}{The set of decorated graphs obtained from $\cal G \in \fra G'$ by applying the operations 
$\cal F_n, \cal F_n$ until the stem is completely nonbacktracking; see Definition \ref{definition of B_G}.}

\nitem{$\fra G_\sharp$}{The union of all $\scr B_{\cal G}$ for $\cal G \in \fra G'$; see Definition \ref{definition of 
B_G}.}

\nitem{$\deg(\cal G)$}{The degree of the decorated graph $\cal G$, representing the degree of the polynomial $\fra 
V_{xy}(\cal G)$; see p.\ \pageref{page: degree}.}

\nitem{$\fra G_n$}{The set of decorated graphs in $\fra G_\sharp$ of degree $n$; see p.\ \pageref{D_n as a sum over 
graphs}.}

\nitem{$\scr G(G \cup G')$}{The set of lumpings of the edges of $G \cup G'$; see p.\ \pageref{expectation as a sum over 
lumpings}.}

\nitem{$I_n$}{The graph in $\fra W$ that consists of a bare stem with $n$ edges; see p.\ \pageref{page: I_n}.}

\nitem{$\cal I_n$}{The decorated graph in $\fra G_n$ obtained by assigning the tag $(s,0)$ to all edges of $I_n$; see 
p.\ \pageref{page: I_n}.}

\nitem{$\fra G^*_n$}{The set $\fra G_n \setminus \{\cal I_n\}$; see p.\ \pageref{page: G_n^*}.}

\nitem{$\scr G_{n,n'}$}{An abbreviation for $\scr G(I_n \cup I_{n'})$; see p.\ \pageref{page: I_n}.}

\nitem{$\scr P_{n,n'}$}{The subset of $\scr G_{n,n'}$ consisting of pairings; see p.\ \pageref{page: pairing}.}

\nitem{$V_x(\cal G \cup \cal G', \Gamma)$}{The value of the lumping $\Gamma \in \scr G(G \cup G')$ of the edges of the 
decorated graph $\cal G \cup \cal G'$; see \eqref{value of general lumping}.}

\nitem{$\preceq$}{A total order on the edges of $G \cup G'$
describing the order used for summing out bough vertices; see pp.\ \pageref{page: total order 1} and \pageref{page: 
total order 2}.}

\nitem{$A$}{A map that assigns to each bough edge $e$ an edge $A_e \succeq e$, and is used to parametrize the lumping of 
the bough edges; see p.\ \pageref{page: A}.}

\end{tabbing}

\end{document}